\def\isall{1}
\def\isBDVhard{1}
\def\isBDeEhard{1}
\def\islp{1}
\def\islocalratio{1}
\def\iswcolhard{1}
\def\istwhard{1}
\def\isstarforest{1}
\def\istwappx{1}
\def\isstarforesthard{1}
\def\isbde{1}
\def\isgreedy{1}
\newif\ifcomments\commentsfalse
\newif\ifappendix\appendixfalse
\newif\iffull\fulltrue
\newtheorem*{rep@theorem}{\rep@title}
\newcommand{\newreptheorem}[2]{%
\newenvironment{rep#1}[1]{%
 \def\rep@title{#2 \ref{##1}}%
 \begin{rep@theorem}}%
 {\end{rep@theorem}}}
\newcommand{\ali}[1]{{\color{red}/* Ali: #1 */}}
\newcommand{\qq}[1]{{\color{cyan}/* Quanquan: #1 */}}
\newcommand{\tim}[1]{{\color{blue}/* Timothy: #1 */}}
\newcommand{\kyle}[1]{{\color{green}/* Kyle: #1 */}}
\newcommand{\drew}[1]{{\color{pink}/* Drew: #1 */}}
\newcommand{\ali}[1]{\ignorespaces}
\newcommand{\qq}[1]{\ignorespaces}
\newcommand{\tim}[1]{\ignorespaces}
\newcommand{\kyle}[1]{\ignorespaces}
\newcommand{\drew}[1]{\ignorespaces}
\renewcommand\emph[1]{\textbf{\textit{\boldmath #1}}}
\newtheorem{definition}{Definition}[section]
\newtheorem{theorem}{Theorem}[section]
\newtheorem{lemma}[theorem]{Lemma}
\newtheorem{corollary}[theorem]{Corollary}
\newtheorem{claim}[theorem]{Claim}
\let\realbfseries=\bfseries
\def\bfseries{\realbfseries\boldmath}
\let\epsilon=\varepsilon
\def\mC{\mathcal{C}}
\def\mCp{\mathcal{C}_\lambda}
\def\mF{\mathcal{F}}
\def\set#1{\{#1\}}
\newcommand{\arc}[1]{\overrightarrow{#1}}
\newcommand{\eps}{\varepsilon}%
\newcommand{\editnumber}{\ensuremath{k}\xspace}
\newcommand{\editset}{\ensuremath{X}\xspace}
\newcommand{\targetdegree}{\ensuremath{d}\xspace}
\newcommand{\Td}{\targetdegree}
\newcommand{\targetdegeneracy}{\ensuremath{r}\xspace}
\newcommand{\Tr}{\targetdegeneracy}
\newcommand{\targettreewidth}{\ensuremath{w}\xspace}
\newcommand{\Tw}{\targettreewidth}
\newcommand{\targettreedepth}{\ensuremath{p}\xspace}
\newcommand{\Tp}{\targettreedepth}
\newcommand{\targetcliquenumber}{\ensuremath{b}\xspace}
\newcommand{\Tcn}{\targetcliquenumber}
\newcommand{\weakcoloringpath}{\ensuremath{c}\xspace}
\newcommand{\Tc}{\weakcoloringpath}
\newcommand{\targetweakcoloring}{\ensuremath{t}\xspace}
\newcommand{\Tt}{\targetweakcoloring}
\DeclareMathOperator*{\inDeg}{\ensuremath{\deg^-}}%
\DeclareMathOperator*{\outDeg}{\ensuremath{\deg^+}}%
\DeclareMathOperator*{\degener}{degen}%
\newcommand{\core}[2]{\ensuremath{\operatorname{core}_{#1}(#2)}\xspace}
\def\Comment#1{\textsl{$\langle\!\langle$#1\/$\rangle\!\rangle$}}
\def\mypar#1{\smallskip\noindent\textbf{#1}}
\def\sP{\mathcal{P}}
\newcommand{\lpMinEdge}{\Problem{DegenEdgeEdit-LP}}
\newcommand{\lpMinVertex}{\Problem{DegenVertexEdit-LP}}
\newcommand{\optsol}[2]{\opt_{#1}(#2)}
\newcommand{\tw}{\mathsf{tw}}
\newcommand{\pw}{\mathsf{pw}}
\newcommand{\sep}{\mathsf{sep}}
\newcommand{\etal}{et al.}
\newcommand{\Problem}[1]{\textsc{#1}\xspace}
\newcommand{\mcU}{\ensuremath{\mathcal{U}}}
\newcommand{\mcF}{\ensuremath{\mathcal{F}}}
\newcommand{\Editfull}{\Problem{$(\mathcal{C}, \psi)$-Edit}}
\newcommand{\Editfullp}{\Problem{$(\mathcal{C}_\lambda, \psi)$-Edit}}
\newcommand{\kPsieditsG}{\psi_k(\psi_{k-1}( \cdots \psi_2(\psi_1(G))\cdots))}
\DeclareMathOperator*{\optop}{OPT}
\newcommand{\opt}{\ensuremath{\optop}}
\DeclareMathOperator*{\costop}{Cost}
\newcommand{\cost}{{\ensuremath{\costop}}}
\newcommand{\yOpt}{\ensuremath{y^{*}}}
\newcommand{\yCanon}{\ensuremath{\widehat{y}}}
\newcommand{\ycan}{\yCanon}
\newcommand{\yopt}{\yOpt}
\newcommand{\yA}{\ensuremath{y}}
\newcommand{\yB}{\ensuremath{y'}}
\newcommand{\xA}{\ensuremath{x}}
\newcommand{\WLOG}{WLOG\xspace}
\newcommand{\sG}{\ensuremath{{\mathtt G}}\xspace}
\newcommand{\neighb}{\ensuremath{N}}  % standard neighborhood
\newcommand{\splitg}{\ensuremath{\mathtt{split}}\xspace}
\newcommand{\gDS}{\ensuremath{{\mathtt D}}\xspace}
\newcommand{\scg}{\ensuremath{\mathtt{scg}}\xspace} % for "set cover gadget"
\newcommand{\weg}[1]{\texttt{EG}_{#1}}
\newcommand{\wsg}[1]{\texttt{SG}_{#1}}
\newcommand{\wcol}[2]{\ensuremath{\operatorname{wcol}_{#1}(#2)}\xspace}
\newcommand{\wscore}[3]{\ensuremath{\operatorname{wscore}_{#1}(#2, #3)}\xspace}
\newcommand{\wreach}[4]{\ensuremath{\operatorname{wreach}_{#1}(#2, #3, #4)}\xspace}
\newcommand{\floor}[1]{\left \lfloor #1 \right \rfloor}
\newcommand{\ceil}[1]{\left \lceil #1 \right \rceil}
\newcommand{\ADS}{\Problem{ADS}}
\newcommand{\ADSfull}{\Problem{Annotated Dominating Set}}
\newcommand{\ADSmaybefull}{\Problem{Annotated (\DSradius-)Dominating Set}}
\newcommand{\ALDS}{\Problem{ADS}}
\newcommand{\ALDSfull}{\Problem{Annotated \DSradius-Dominating Set}}
\newcommand{\CDS}{\Problem{CDS}}
\newcommand{\CDSfull}{\Problem{Connected Dominating Set}}
\newcommand{\ACDS}{\Problem{ACDS}}
\newcommand{\ACDSfull}{\Problem{Annotated Connected Dominating Set}}
\newcommand{\MC}{\Problem{MC}}
\newcommand{\MCfull}{\Problem{Max-Cut}}
\newcommand{\VC}{\Problem{VC}}
\newcommand{\VCfull}{\Problem{Vertex Cover}}
\newcommand{\FVS}{\Problem{FVS}}
\newcommand{\FVSfull}{\Problem{Feedback Vertex Set}}
\newcommand{\MMM}{\Problem{MMM}}
\newcommand{\MMMfull}{\Problem{Minimum Maximal Matching}}
\newcommand{\CN}{\Problem{CRN}}
\newcommand{\CNfull}{\Problem{Chromatic Number}}
\newcommand{\uSC}{\Problem{$k$-uSC}}
\newcommand{\uSCfull}{\Problem{$k$-uniform Set Cover}}
\newcommand{\xSC}[1]{\Problem{$#1$-uSC}}
\newcommand{\SC}{\Problem{SC}}
\newcommand{\SCfull}{\Problem{Set Cover}}
\newcommand{\DSradius}{\ensuremath{\ell}\xspace}
\newcommand{\ISradius}{\DSradius}
\newcommand{\EDSmaybefull}{\Problem{Edge (\DSradius-)Dominating Set}}
\newcommand{\EDSmaybe}{\Problem{EDS}}
\newcommand{\EDSfullfixed}{\Problem{Edge Dominating Set}}
\newcommand{\EDS}{\Problem{EDS}}
\newcommand{\DS}{\Problem{DS}}
\newcommand{\DSfull}{\Problem{Dominating Set}}
\newcommand{\LDS}{\Problem{\DSradius-DS}}
\newcommand{\LDSfull}{\Problem{\DSradius-Dominating Set}}
\newcommand{\ELDS}{\Problem{\DSradius-EDS}}
\newcommand{\ELDSfull}{\Problem{Edge \DSradius-Dominating Set}}
\newcommand{\DSmaybe}{\Problem{DS}}
\newcommand{\DSmaybefull}{\Problem{(\DSradius-)Dominating Set}}
\newcommand{\DSB}{\Problem{DSB}}
\newcommand{\DSBfull}{\Problem{Minimum Dominating Set-$B$}}
\newcommand{\IS}{\Problem{IS}}
\newcommand{\ISfull}{\Problem{Independent Set}}
\newcommand{\LIS}{\Problem{\ISradius-IS}}
\newcommand{\LISfull}{\Problem{\ISradius-Independent Set}}
\newcommand{\PFDfull}{\Problem{Planar $\mathcal{F}$-Deletion}}
\newcommand{\bHVD}{\Problem{$H$-Vertex-Deletion}}
\newcommand{\bBDDfull}{\Problem{$\targetdegree$-Bounded-Degree Deletion}}
\newcommand{\bBDDV}{\Problem{$\targetdegree$-BDD-V}}
\newcommand{\bBDDE}{\Problem{$\targetdegree$-BDD-E}}
\newcommand{\bBDDEfull}{\Problem{$\targetdegree$-Bounded-Degree Edge Deletion}}
\newcommand{\bbDEV}{\Problem{$(\beta \targetdegeneracy)$-DE-V}}
\newcommand{\bbDEVfull}{\Problem{$(\beta \targetdegeneracy)$-Degenerate Vertex Deletion}}
\newcommand{\bDEV}{\Problem{$\targetdegeneracy$-DE-V}}
\newcommand{\bDEVfull}{\Problem{$\targetdegeneracy$-Degenerate Vertex Deletion}}
\newcommand{\bDEE}{\Problem{$\targetdegeneracy$-DE-E}}
\newcommand{\bDEEfull}{\Problem{$\targetdegeneracy$-Degenerate Edge Deletion}}
\newcommand{\bDE}{\Problem{$\targetdegeneracy$-DE}}
\newcommand{\bDEfull}{\Problem{$\targetdegeneracy$-Degenerate Deletion}}
\newcommand{\bWCN}{\Problem{\targetweakcoloring-BWE-\Tc}}
\newcommand{\bWCNfull}{\Problem{\targetweakcoloring-Bounded Weak $\Tc$-Coloring Number Editing}}
\newcommand{\bWCNV}{\Problem{\targetweakcoloring-BWE-V-\Tc}}
\newcommand{\bWCNVfull}{\Problem{\targetweakcoloring-Bounded Weak $\Tc$-Coloring Number Vertex-Deletion}}
\newcommand{\bWCNE}{\Problem{\targetweakcoloring-BWE-E-\Tc}}
\newcommand{\bWCNC}{\Problem{\targetweakcoloring-BWE-C-\Tc}}
\newcommand{\bTWV}{\Problem{$\targettreewidth$-TW-V}}
\newcommand{\bTWVfull}{\Problem{$\targettreewidth$-Treewidth Vertex Deletion}}
\newcommand{\TWV}{\Problem{TW-V}}
\newcommand{\bPWV}{\Problem{$\targettreewidth$-PW-V}}
\newcommand{\bTWE}{\Problem{$\targettreewidth$-TW-E}}
\newcommand{\bPWE}{\Problem{$\targettreewidth$-PW-E}}
\newcommand{\bCNV}{\Problem{$\targetcliquenumber$-CN-V}}
\newcommand{\bCNVfull}{\Problem{$\targetcliquenumber$-Clique Number Vertex Deletion}}
\newcommand{\CNV}{\Problem{CN-V}}
\newcommand{\SFVfull}{\Problem{Star Forest Vertex Deletion}}
\newcommand{\SFV}{\Problem{SF-V}}
\newcommand{\SFEfull}{\Problem{Star Forest Edge Deletion}}
\newcommand{\SFE}{\Problem{SF-E}}
\newenvironment{tightcenter}
 {\parskip=0pt\par\nopagebreak\centering}
 {\par\noindent\ignorespacesafterend}
\newlength{\RoundedBoxWidth}
\newsavebox{\GrayRoundedBox}
\newenvironment{GrayBox}[1]%
   {\setlength{\RoundedBoxWidth}{\linewidth-4.5ex}
    \def\boxheading{#1}
    \begin{lrbox}{\GrayRoundedBox}
       \begin{minipage}{\RoundedBoxWidth}%
   }{%
       \end{minipage}
    \end{lrbox}%
    \begin{tightcenter}%
    \begin{tikzpicture}%
       \node(Text)[draw=black!20,fill=white,rounded corners,%
             inner sep=2ex,text width=\RoundedBoxWidth]%
             {\usebox{\GrayRoundedBox}};
        \coordinate(x) at (current bounding box.north west);
        \node [draw=white,rectangle,inner sep=3pt,anchor=north west,fill=white]
        at ($(x)+(10.5pt,.75em)$) {\boxheading};
    \end{tikzpicture}
    \end{tightcenter}\vspace{0pt}%
    \ignorespacesafterend
}
\newenvironment{problem}[2][]{\noindent\ignorespaces%
                                \FrameSep=6pt%
                                \parindent=0pt%
                \vspace*{-.5em}
                \ifthenelse{\isempty{#1}}{%
                  \begin{GrayBox}{\textsc{#2}}%
                }{%
                  \begin{GrayBox}{\textsc{#2} parametrised by~{#1}}%
                }
                \newcommand\Prob{Problem:}%
                \newcommand\Input{Input:}%
                \newcommand\Objective{Objective:}%
                \begin{tabular*}{\columnwidth}{@{\hspace{.5em}} >{\itshape} p{1.4cm} p{0.85\columnwidth} @{}}%
            }{
                \end{tabular*}%
                \end{GrayBox}%
                \vspace*{-.5em}
                \ignorespacesafterend
            }
\newenvironment{probbox}[1]{\noindent\ignorespaces%
                                \FrameSep=6pt%
                                \parindent=0pt%
                \vspace*{-.5em}
                \begin{GrayBox}{\textsc{#1}}%
            }{
                \end{GrayBox}%
                \vspace*{-.5em}
                \ignorespacesafterend
            }
 \gdef\xxxmark{%
   \expandafter\ifx\csname @mpargs\endcsname\relax % in minipage?
     \expandafter\ifx\csname @captype\endcsname\relax % in figure/caption?
       \marginpar{xxx}% not in a caption or minipage, can use marginpar
     \else
       xxx % notice trailing space
     \fi
   \else
     xxx % notice trailing space
   \fi}
 \gdef\xxx{\@ifnextchar[\xxx@lab\xxx@nolab}
 \long\gdef\xxx@lab[#1]#2{\textbf{[\xxxmark #2 ---{\sc #1}]}}
 \long\gdef\xxx@nolab#1{\textbf{[\xxxmark #1]}}
   \long\gdef\xxx@lab[#1]#2{}\long\gdef\xxx@nolab#1{}%
\title{\textbf{Structural Rounding:} Approximation Algorithms for Graphs Near an Algorithmically Tractable Class}
\author[1]{Erik~D.~Demaine}
\author[2]{Timothy~D.~Goodrich}
\author[2]{Kyle~Kloster}
\author[2]{Brian~Lavallee}
\author[1]{Quanquan~C.~Liu}
\author[2]{Blair~D.~Sullivan}
\author[1]{Ali~Vakilian}
\author[2]{Andrew van der Poel}
\affil[1]{MIT, Cambridge, MA, USA\\
  \protect\url{{edemaine,quanquan,vakilian}@mit.edu}}
\affil[2]{NC State University, Raleigh, NC, USA\\
  \protect\url{{tdgoodri,kakloste,blavall,blair_sullivan,ajvande4}@ncsu.edu}}
\date{}
\newcommand{\stoptocwriting}{%
  \addtocontents{toc}{\protect\setcounter{tocdepth}{-5}}}
\newcommand{\resumetocwriting}{%
  \addtocontents{toc}{\protect\setcounter{tocdepth}{\arabic{tocdepth}}}}
\begin{document}

\maketitle

%\vspace*{-7ex}

\begin{abstract}

We develop a new framework for generalizing approximation algorithms from
the structural graph algorithm literature so that they apply to graphs
somewhat close to that class
(a scenario we expect is common when working with real-world networks)
while still guaranteeing approximation ratios.
The idea is to \emph{edit} a given graph via vertex- or edge-deletions
to put the graph into an algorithmically tractable class,
apply known approximation algorithms for that class, and then \emph{lift}
the solution to apply to the original graph.
We give a general characterization of when an optimization problem is
amenable to this approach, and show that it includes many
well-studied graph problems, such as \ISfull, \VCfull, \FVSfull,
\MMMfull, \CNfull, \DSmaybefull, \EDSmaybefull, and \CDSfull.

To enable this framework, we develop new editing algorithms that find
the approximately-fewest edits required to bring a given graph into
one of a few important graph classes
(in some cases also approximating the target parameter of the family).
For bounded degeneracy, we obtain a bicriteria $(4,4)$-approximation which also extends to a smoother bicriteria trade-off.
For bounded treewidth, we obtain a bicriteria
$(O(\log^{1.5} n), O(\sqrt{\log w}))$-approximation, and for bounded pathwidth,
we obtain a bicriteria
$(O(\log^{1.5} n), O(\sqrt{\log w} \cdot \log n))$-approximation.
For treedepth~$2$ (related to bounded expansion),
we obtain a $4$-approximation.
We also prove complementary hardness-of-approximation results
assuming $\mathrm{P} \neq \mathrm{NP}$:
in particular, these problems are all log-factor
inapproximable, except the last which is not approximable below some constant factor ($2$ assuming UGC).
\end{abstract}
%\vspace{-5pt}

\noindent\textbf{Keywords:} structural rounding, graph editing, maximum subgraph problem, treewidth, degeneracy, APX-hardness, approximation algorithms

\thispagestyle{empty}
\setcounter{page}0
\tableofcontents
\thispagestyle{empty}
\setcounter{page}0
\clearpage

\renewcommand{\cfttoctitlefont}{\hfill\Large\itshape}

\ifdefined\isall
  \section{Introduction}\label{sec:introduction}
\stoptocwriting

Network science has empirically established that real-world networks
(social, biological, computer, etc.)\ exhibit significant sparse structure.
Theoretical computer science has shown that graphs with certain structural
properties enable significantly better approximation algorithms for hard problems.
Unfortunately, the experimentally observed structures and the theoretically
required structures are generally not the same: mathematical graph classes
are rigidly defined, while real-world data is noisy and full of exceptions.
This paper provides a framework for extending approximation guarantees
from existing rigid classes to broader, more flexible graph families
that are more likely to include real-world networks.

Specifically, we hypothesize that most real-world networks are in fact
\emph{small perturbations of graphs from a structural class}.
Intuitively, these perturbations may be
exceptions caused by unusual/atypical behavior
(e.g., weak links rarely expressing themselves),
natural variation from an underlying model,
or noise caused by measurement error or uncertainty.
Formally, a graph is \emph{$\gamma$-close} to a structural class $\mC$, where $\gamma \in \mathbb{N}$,
if some $\gamma$ edits (e.g., vertex deletions, edge deletions, or
edge contractions) bring the graph into class~$\mC$.

Our goal is to extend existing approximation algorithms for a structural class
$\mC$ to apply more broadly to graphs $\gamma$-close to $\mC$.
To achieve this goal, we need two algorithmic ingredients:
\begin{enumerate}
\item \textbf{Editing algorithms.}
  Given a graph $G$ that is $\gamma$-close to a structural class $\mC$,
  find a sequence of $f(\gamma)$ edits that edit $G$ into~$\mC$.
  When the structural class is parameterized (e.g., treewidth $\leq w$),
  we may also approximate those parameters.
\item \textbf{Structural rounding algorithms.}
  Develop approximation algorithms for optimization problems on
  graphs $\gamma$-close to a structural class~$\mC$ by
  converting $\rho$-approximate solutions on an edited graph in class~$\mC$ into
  $g(\rho,\gamma)$-approximate solutions on the original graph.
\end{enumerate}

\begin{figure}[b]
  \centering
  \includegraphics[width=0.6\textwidth]{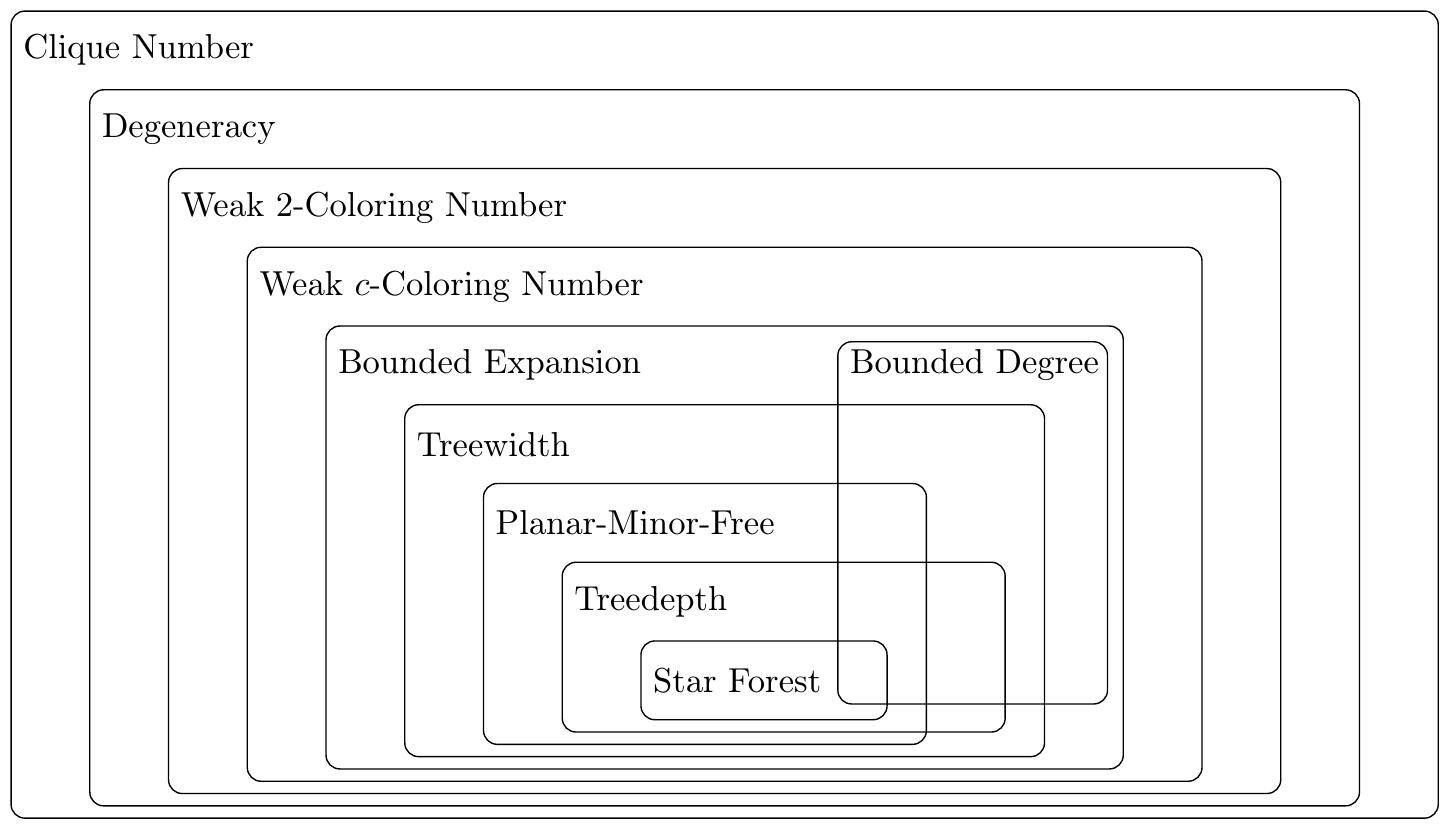}
  \caption{\label{fig:classhierarchy}
    Illustration of hierarchy of structural graph classes used in this paper.
  }
\end{figure}

\subsection{Our Results: Structural Rounding}

In Section~\ref{sec:structural-rounding}, we present a general metatheorem
giving sufficient conditions for an optimization problem to be amenable to
the structural rounding framework.  Specifically, if a problem $\Pi$ has an
approximation algorithm in structural class $\mC$,
the problem and its solutions are ``stable'' under an edit operation,
and there is an $\alpha$-approximate algorithm for editing to $\mC$,
then we get an approximation algorithm for solving $\Pi$ on graphs $\gamma$-close to $\mC$.
The new approximation algorithm incurs an additive error of
$O(\gamma)$, so we preserve PTAS-like $(1+\epsilon)$ approximation factors
provided $\gamma \leq \delta \opt_{\Pi}$ for a suitable constant
$\delta = \delta(\epsilon,\alpha) > 0$.

For example, we obtain $(1+O(\delta \log^{1.5} n))$-approximation algorithms
for \VCfull, \FVSfull, \MMMfull, and \CNfull on graphs
$(\delta \cdot \opt_{\Pi}(G))$-close to having treewidth $\Tw$ via vertex deletions
(generalizing exact algorithms for bounded treewidth graphs);
and we obtain a $(1-4 \delta)/(4\Tr+1)$-approximation algorithm for
\ISfull on graphs $(\delta \cdot \opt_{\Pi}(G))$-close to having degeneracy~$\Tr$
(generalizing a $1/\Tr$-approximation for degeneracy-$\Tr$ graphs).
These results use our new algorithms for editing to treewidth-$\Tw$ and
degeneracy-$\Tr$ graph classes as summarized next.

\subsection{Our Results: Editing}

We develop editing approximation algorithms and/or hardness-of-approximation
results for six well-studied graph classes:
bounded clique number, bounded degeneracy, bounded treewidth and pathwidth, bounded treedepth, bounded weak $\Tc$-coloring number, and bounded degree.
Figure~\ref{fig:classhierarchy} summarizes the relationships among these
classes, and Table~\ref{tab:overview} summarizes our results for each class.

\begin{table}[!h]
\resizebox{\textwidth}{!}{%
\begin{tabular}{ |>{\raggedright}m{23ex}||c|c|  }
 \hline
   & \multicolumn{2}{c|}{\bf Edit Operation $\psi$} \\ \cline{2-3}
 {\bf Graph Family $\mCp$}
   & {\bf Vertex Deletion} & {\bf Edge Deletion} \\
 \hline
 Bounded Degree ($\Td$)  &  \begin{tabular}{@{}c@{}}{\bf \bBDDV} \\ $O(\log \Td)$-approx.\ \cite{ebenlendrapproximation} \\ $(\ln \Td - C\cdot\ln\ln\Td)$-inapprox. \end{tabular}
 							& \begin{tabular}{@{}c@{}}{\bf \bBDDE} \\ Polynomial time \cite{huang17approximate} \end{tabular} \\
 \hline
 Bounded Degeneracy ($\Tr$) & \begin{tabular}{@{}c@{}}{\bf \bDEV} \\ $O(r\log n)$-approx. \\ $\left(\tfrac{4m-\beta \Tr n}{m-\Tr n}, \beta \right)$-approx.\\$\left(\tfrac{1}{\epsilon}, \tfrac{4}{1-2\epsilon}\right)$-approx.\ ($\epsilon < 1/2$)\\ $o(\log (n/\Tr))$-inapprox. \end{tabular}
 						   & \begin{tabular}{@{}c@{}}{\bf \bDEE} \\$O(r\log n)$-approx. \\ -- \\ $\left(\tfrac{1}{\epsilon},\tfrac{4}{1-\epsilon}\right)$-approx.\ ($\epsilon < 1)$\\ $o(\log (n/\Tr))$-inapprox. \end{tabular} \\
 \hline
 Bounded Weak $\Tc$-Coloring Number ($\Tt$)& \begin{tabular}{@{}c@{}}{\bf \bWCNV} \\ -- \\ $o(\Tt)$-inapprox. for $\Tt \in o(\log n)$ \end{tabular}
 										& \begin{tabular}{@{}c@{}}{\bf \bWCNE} \\ -- \\ $o(\Tt)$-inapprox. for $\Tt \in o(\log n)$ \end{tabular} \\
 \hline
 Bounded Treewidth ($\Tw$) & \begin{tabular}{@{}c@{}}{\bf \bTWV} \\ ($O(\log^{1.5}n)$, $O(\sqrt{\log \Tw})$)-\\approx.\\ $o(\log n)$-inapprox. for $\Tw \in \Omega(n^{1/2})$ \end{tabular}
 						 & \begin{tabular}{@{}c@{}}{\bf \bTWE} \\ ($O(\log n \log\log n)$, $O(\log w)$)-\\approx.\ \cite{bansal2017lp}\\ -- \end{tabular} \\
 \hline
 Bounded Pathwidth ($w$) & \begin{tabular}{@{}c@{}}{\bf \bPWV} \\ ($O(\log^{1.5}n)$, $O(\sqrt{\log w} \cdot \log n)$)-\\approx.\\ -- \end{tabular}
 						 & \begin{tabular}{@{}c@{}}{\bf \bPWE} \\ ($O(\log n \log\log n)$, $O(\log w\cdot \log n)$)-\\approx.\ \cite{bansal2017lp}\\ -- \end{tabular} \\
 \hline
 Star Forest & \begin{tabular}{@{}c@{}}{\bf \SFV} \\ $4$-approx.\\ $(2-\epsilon)$-inapprox.\ ({\bf UGC}) \end{tabular}
 			   & \begin{tabular}{@{}c@{}}{\bf \SFE} \\ $3$-approx.\\ APX-complete \end{tabular}\\
 \hline
\end{tabular}}
\caption{\label{tab:overview} Summary of results for \Editfullp problems (including abbreviations and standard parameter notation). ``Approx.''\ denotes a polynomial-time approximation or bicriteria approximation algorithm; ``inapprox.''\ denotes inapproximability assuming $\mathrm{P} \neq \mathrm{NP}$ unless otherwise specified.}
\end{table}

\paragraph{Hardness results.}
We begin by showing that vertex- and edge-deletion to degeneracy $\Tr$ (\bDE) are both $o(\log (n/\Tr) )$-inapproximable.
Furthermore, we prove that \bBDDfull is $(\ln \Td - C\cdot\ln\ln \Td)$-inapproximable for a constant $C>0$; coupled with the $O(\log \Td)$-approximation in~\cite{ebenlendrapproximation}, this establishes that $(\log \Td)$ is nearly tight.
Additionally, we show that editing a graph to have a specified weak $\Tc$-coloring number (a generalization of degeneracy) is $o(\Tt)$-inapproximable.
The problems of vertex editing to treewidth $\Tw$ and clique number $\Tcn$ are each shown to be $o(\log n)$-inapproximable when the target parameter is $\Omega(n^{\delta})$ for a constant $\delta\geq 1/2$.
Each of these results are proven using a strict reduction from \SCfull.

Note that since computing treewidth (similarly, clique number and weak $\Tc$-coloring number) is $\mathrm{NP}$-hard, trivially it is not possible to find any polynomial time algorithm for the problem of editing to treewidth $\Tw$ with finite approximation guarantee.
In particular, this implies that we cannot hope for anything better than bicriteria approximation algorithms in these problems.
Moreover, our hardness results for these problems show that even if the optimal edit set is guaranteed to be large, i.e., polynomial in the size of the input graph, $n$, we cannot achieve an approximation factor better than $\Omega(\log n)$.

Finally, star forest (treedepth 2) is shown to be $(2-\epsilon)$-inapproximable for vertex editing, via an L-reduction from \VCfull assuming the Unique Games Conjecture, and APX-complete for edge editing, via an L-reduction from \DSBfull.

\paragraph{Positive results.}

Complementing our hardness result for \bDE, we present two bicriteria approximation algorithms for \bDEV, one using the local ratio theorem and another using LP-rounding.
While both approximations can be tuned with error values, they yield constant $(4,4)$- and $(6,6)$-approximations for vertex editing, respectively.
Note that the LP-rounding algorithm also gives a $(5,5)$-approximation for \bDEE.
Using vertex separators, we show a $(O(\log^{1.5} n), O(\sqrt{\log w}))$-bicriteria approximation for vertex editing to bounded treewidth and pathwidth.
Finally, using a hitting-set approach to delete forbidden subgraphs, we show 4- and 3-approximations for the vertex and edge deletion variants of editing to star forests.

\subsection{Related Work}

\paragraph{Editing to approximate optimization problems.}

While there is extensive work on editing graphs into a desired graph class
(summarized below), there is little prior work on how editing affects the
quality of approximation algorithms (when applied to the edited graph, but we
desire a solution to the original graph). The most closely related results of
this type are \emph{parameterized approximation} results, meaning that they
run in polynomial time only when the number of edits is very small (constant
or logarithmic input size).
This research direction was initiated by Cai \cite{cai2003parameterized};
see the survey and results of Marx \cite[Section~3.2]{marx2008survey} and
e.g.\ \cite{guo2004structural,marx2006parameterized}.
An example of one such result is a $7 \over 3$-approximation algorithm to
Chromatic Number in graphs that become planar after $\gamma$ vertex edits,
with a running time of $f(\gamma) \cdot O(n^2)$, where $f(\gamma)$ is at least
$2^{2^{2^{2^{\Omega(\gamma)}}}}$ (from the use of Courcelle's Theorem),
% this is actually in \cite{marx2012obtaining} just to find edit set to planar
limiting its polynomial-time use to when the number of edits satisfies
$\gamma = O(\lg \lg \lg \lg n)$. In contrast, our algorithms allow up to $\delta \opt_{\Pi}$ edits.

Another body of related work is the ``noisy setting'' introduced by Magen and
Moharrami~\cite{magen2009robust}, which imagines that the ``true'' graph lies
in the structural graph class that we want, and any extra edges observed in
the given graph are  ``noise'' and thus can be ignored when solving the
optimization problem. This approach effectively avoids analyzing the effect of
the edge edits on the approximation factor, by asking for a solution to the
edited graph instead of the given graph.
In this simpler model, Magen and Moharrami~\cite{magen2009robust} developed a PTAS for estimating the size of \ISfull (\IS) in graphs that are $\delta n$ edits away from a minor-closed graph family (for sufficiently small values of $\delta$).
Later, Chan and Har-Peled~\cite{chan2012approximation} developed a PTAS that returns a $(1+\eps)$-approximation to \IS in noisy planar graphs.
Recently, Bansal et al.~\cite{bansal2017lp} developed an LP-based approach for noisy minor-closed \IS whose runtime and approximation factor achieve better dependence on~$\delta$.
Moreover, they provide a similar guarantee for noisy Max $k$-CSPs.
Unlike our work, none of these algorithms bound the approximation ratio for
a solution on the original graph.

\paragraph{Editing algorithms.}

Editing graphs into a desired graph class is an active field of research and
has various applications outside of graph theory, including computer vision and
pattern matching \cite{gao2010survey}.
In general, the editing problem is to delete a minimum set $X$ of vertices (or edges) in an input graph $G$ such that the result $G[V \setminus X]$ has a specific property.
Previous work studied this problem from the perspective of identifying the maximum induced subgraph of $G$ that satisfies a desired ``nontrivial, hereditary'' property~\cite{krishnamoorthy1979node,lewis1978complexity,lewis1980node,yannakakis1978node}.
A graph property $\pi$ is nontrivial if and only if infinitely many graphs satisfy $\pi$ and infinitely many do not, and $\pi$ is hereditary if $G$ satisfying $\pi$ implies that every induced subgraph of $G$ satisfies $\pi$.
The vertex-deletion problem for any nontrivial, hereditary property has been shown to be NP-complete~\cite{lewis1980node} and even
requires exponential time to solve, assuming the ETH~\cite{komusiewicz2015}.
Approximation algorithms for such problems have also been studied somewhat~\cite{fujito1998unified,lund1993approximation,okun2003new} in this domain, but
in general this problem requires additional restrictions on the input
graph and/or output graph properties in order to develop fast algorithms
\cite{dabrowski2015editing,drange2015parameterized,drange2015threshold,huffner2015editing,kotrbvcik2016edge,mathieson2010parameterized,mathieson2008parameterized,xiao2016degreeedit}.

Much past work on editing is on parameterized algorithms.
For example, Dabrowski et al.~\cite{dabrowski2015editing} found that editing a graph to have a given degree sequence is W[1]-complete, but if one additionally requires that the
final graph be planar, the problem becomes Fixed Parameter Tractable (FPT).
Mathieson~\cite{mathieson2010parameterized} showed that editing to degeneracy
$d$ is W[P]-hard (even if the original graph has degeneracy $d+1$ or maximum
degree $2d+1$), but suggests that classes which offer a balance between the
overly rigid restrictions of bounded degree and the overly global condition of
bounded degeneracy (e.g., structurally sparse classes such as $H$-minor-free
and bounded expansion~\cite{nesetril2012sparsity}) may still be FPT.
Some positive results on the parameterized complexity of editing to classes can be
found in Drange's 2015 PhD thesis~\cite{drange2015parameterized};
in particular, the results mentioned include parameterized algorithms for a variety of NP-complete editing problems such as editing to threshold and chain graphs~\cite{drange2015threshold}, star forests~\cite{drange2015threshold}, multipartite cluster graphs~\cite{fomin14tight}, and $\mathcal{H}$-free graphs given finite $\mathcal{H}$ and bounded indegree~\cite{drange2016compressing}.

Our approach differs from this prior work in that we focus on approximations of edit distance that are \emph{polynomial-time approximation algorithms}.
There are previous results about approximate edit distance by Fomin \etal~\cite{fomin2012planar} and, in a very recent result regarding approximate edit distance
to bounded treewidth graphs, by Gupta \etal~\cite{gupta2018losing}.
Fomin et al.~\cite{fomin2012planar} provided two types of algorithms for vertex editing to planar $\mathcal{F}$-minor-free graphs: a randomized algorithm that runs in $O( f(\mathcal{F}) \cdot mn)$ time with an approximation constant $c_{\mathcal{F}}$ that depends on $\mathcal{F}$, as well as a fixed-parameter algorithm parameterized by the size of the edit set whose running time thus has an exponential dependence on the size of this edit set.

Gupta et al.~\cite{gupta2018losing} strengthen the results in~\cite{fomin2012planar} but only in the context of \emph{parameterized approximation algorithms}.
Namely, they give a deterministic fixed-parameter algorithm for \PFDfull that runs in $f(\mathcal{F}) \cdot n\log{n} + n^{O(1)}$ time and an $O(\log{k})$-approximation where $k$ is the maximum number of vertices in any planar graph in $\mathcal{F}$;
this implies a fixed-parameter $O(\log{w})$-approximation algorithm with running time $2^{O(w^2 \log{w})} \cdot n \log{n} + n^{O(1)}$ for \bTWV and \bPWV. They also show that \bTWE and \bPWE have parameterized algorithms that give an absolute constant factor approximation but with running times parameterized by $\Tw$ and the maximum degree of the graph~\cite{gupta2018losing}. Finally, they show that when $\mathcal{F}$ is the set of all connected graphs with three vertices, deleting the minimum number of edges to exclude $\mathcal{F}$ as a subgraph, minor, or immersion is APX-hard for bounded degree graphs~\cite{gupta2018losing}.
Again, these running times are weaker than our results, which give bicriteria
approximation algorithms that are polynomial without any parameterization on
the treewidth or pathwidth of the target graphs.

In a similar regime, Bansal et al.~\cite{bansal2017lp} studied \bTWE (which implies an algorithm for \bPWE) and designed an LP-based bicriteria approximation for this problem.
For a slightly different set of problems in which the goal is to exclude a single graph $H$ of size $k$ as a subgraph (\bHVD), there exists a simple $k$-approximation algorithm.
On the hardness side, Guruswami and Lee~\cite{guruswami2017inapproximability} proved that whenever $H$ is $2$-vertex-connected, it is NP-hard to approximate \bHVD within a factor of $(|V(H)|-1-\eps)$ for any $\eps>0$ ($|V(H)|-\eps$ assuming UGC).
Moreover, when $H$ is a star or simple path with $k$ vertices, $O(\log k)$-approximation algorithms with running time $2^{O(k^3\log k)}\cdot n^{O(1)}$ are known~\cite{guruswami2017inapproximability,lee2017partitioning}.

An important special case of the problem of editing graphs into a desired class is the \textit{minimum planarization} problem, in which the target class is planar graphs, and the related application is approximating the well-known \textit{crossing number} problem~\cite{chuzhoy2011graph}.
Refer to~\cite{biedl2016crossing,chekuri2013approximation,chuzhoy2011algorithm,jansen2014near,KawarabayashiS17,kawarabayashi2009planarity,marx2012obtaining,schaefer2013graph}
for the recent developments on minimum planarization and crossing number.

\resumetocwriting

  \section{Techniques}
\stoptocwriting
This section summarizes the main techniques, ideas, and contributions
in the rest of the paper.

\subsection{Structural Rounding Framework}

The main contribution of our structural rounding framework
(Section~\ref{sec:structural-rounding}) is establishing the
right definitions that make for a broadly applicable framework with precise
approximation guarantees.  Our framework supports arbitrary graph edit
operations and both minimization and maximization problems, provided they
jointly satisfy two properties: a combinatorial property called ``stability''
and an algorithmic property called ``structural lifting''.
Roughly, these properties bound the amount of change that $\opt$ can undergo
from each edit operation, but they are also parameterized to enable us to
derive tighter bounds when the problem has additional structure.
With the right definitions in place, the framework is simple:
edit to the target class, apply an existing approximation algorithm, and lift.

The rest of Section~\ref{sec:structural-rounding} shows that this framework applies to many different
graph optimization problems.  In particular, we verify the stability and
structural lifting properties, and combine all the necessary pieces,
including our editing algorithms from Section~\ref{section:positive_results}
and existing approximation algorithms for structural graph classes.
We summarize all of these results in Table~\ref{table:structuralrounding}
and formally define the framework in Section~\ref{sec:structural-rounding-framework}.

\begin{table}
\begin{minipage}{\textwidth}
\centering
\small
\tabcolsep=0.5\tabcolsep
\begin{tabular}{|l | l | c | c | l | l | l |}
 \hline
 \textbf{Problem} & \textbf{Edit type \boldmath $\psi$} & \boldmath $c'$ & \boldmath $c$ & \textbf{Class \boldmath $\mCp$} & \boldmath $\rho(\lambda)$ & \textbf{runtime}\\
 \hline
 \ISfull (\IS) & vertex deletion & 1 & 0 & degeneracy $\Tr$ & ${1\over r+1}$ & polytime\\
 \ADSfull (\ADS) & $\text{vertex}^*$ deletion & 0 & 1 & degeneracy $\Tr$ & $O(r)$ & polytime~\cite{bansal2017tight}\footnote{The approximation algorithm of~\cite{bansal2017tight} is analyzed only for \DS; 
however, it is straightforward to show that the same algorithm achieves $O(\Tr)$-approximation for \ADS as well.} \\
 \ISfull (\IS) & vertex deletion & 1 & 0 & treewidth $\Tw$ & 1 & $O(2^w n)$~\cite{alber01improved}\\
 \ADSfull (\ADS) & $\text{vertex}^*$ deletion & 0 & 1 & treewidth $\Tw$ & 1 & $O(3^{w}n)$ \\
 \ADSmaybefull (\ADS) & $\text{vertex}^*$ deletion & 0 & 1 & treewidth $\Tw$ & 1 & $O((2\ell+1)^{w}n)$~\cite{borradaile_et_al:LIPIcs:2017:6919} \\
 \CDSfull (\CDS) & $\text{vertex}^*$ deletion & 0 & 3 & treewidth $\Tw$ & 1 & $O(n^{\Tw})$\footnote{Our rounding framework needs to solve an annotated version of \CDS which can be solved in $O(n^{\Tw})$ by modifying the $O(\Tw^{\Tw} n)$ dynamic-programming approach of \DS.} \\
 \VCfull (\VC) & vertex deletion & 0 & 1 & treewidth $\Tw$ & 1 & $O(2^w n)$~\cite{alber01improved}\\
 \FVSfull (\FVS) & vertex deletion & 0 & 1 & treewidth $\Tw$ & 1 & $2^{O(w)}n^{O(1)}~\cite{cygan2011solving}$\\
 \MMMfull (\MMM) & vertex deletion & 0 & 1 & treewidth $\Tw$ & 1 & $O(3^{w} n)$\footnote{The same dynamic-programming approach of \DS can be modified to solve \ADS and \MMM in $O(3^{\Tw} n)$.}\\
 \CNfull (\CN) & vertex deletion & 0 & 1 & treewidth $\Tw$ & 1 & $\Tw^{O(\Tw)} n^{O(1)}$\\
 \ISfull (\IS) & edge deletion & 0 & 1 & degeneracy $\Tr$ & ${1\over r+1}$ & polytime\\
 %$\ell$-\ISfull ($\ell$-\IS) & edge deletion & 0 & 1 & weak {$\ell$}-coloring $\Tt$ & $O(\Tt^2)$ \cite{dvovrak2013constant} \\
 \DSfull (\DS) & edge deletion & 1 & 0 & degeneracy $\Tr$ & $O(\Tr)$ & polytime~\cite{bansal2017tight} \\
 \DSmaybefull (\DSmaybe) & edge deletion & 1 & 0 & treewidth $w$ & 1 & $O((2\ell+1)^w n)$~\cite{borradaile_et_al:LIPIcs:2017:6919}\\
 %\DSmaybefull (\DSmaybe) & edge deletion & 1 & 0 & weak {$2\ell$}-coloring $\Tt$ & $O(\Tt^2)$ \cite{dvovrak2013constant} \\
 \EDSmaybefull (\EDSmaybe) & edge deletion & 1 & 1 & treewidth $w$ & 1 & $O((2\ell+1)^w n)$~\cite{borradaile_et_al:LIPIcs:2017:6919}\\
 \MCfull (\MC) & edge deletion & 1 & 0 & treewidth $w$ & 1 & $O(2^w n)$~\cite{downey2013fundamentals} \\
 % \DSmaybefull (\DSmaybe) & edge contraction & 0 & 1 & treewidth $w$ & 1 \\
 % \EDSmaybefull (\EDSmaybe) & edge contraction & 0 & 1 & treewidth $w$ & 1 \\
 % \CDSmaybefull & edge contraction & 0 & 1 & treewidth $w$ & 1 \\
 % \CEDSmaybefull & edge contraction & 0 & 1 & treewidth $w$ & 1 \\
 % \Problem{(Weighted) TSP Tour} & edge contraction & 0
 % & 2 & treewidth $w$ & 1 \\
 \hline
\end{tabular}
\caption{Problems for which structural rounding (Theorem~\ref{thm:general-edit-sr})
results in approximation algorithms for graphs near the structural class $\mC$, where
the problem has a $\rho(\lambda)$-approximation algorithm. We also give
the associated stability ($c'$) and lifting ($c$) constants, which are class-independent.  The last column shows the running time of the $\rho(\lambda)$-approximation algorithm for each problem provided an input graph from class $\mCp$. We remark that $\text{vertex}^*$ is used to emphasize the rounding process has to pick the set of annotated vertices in the edited set carefully to achieve the associated stability and lifting constants.} 
%Note that \SDSfull is a generalization of \DSfull.}
\label{table:structuralrounding}
\end{minipage}
\end{table}

\subsection{Editing to Bounded Degeneracy and Degree}

We present two constant-factor bicriteria approximation algorithms for
finding the fewest vertex or edge deletions to reduce the \emph{degeneracy}
to a target threshold~$\Tr$.
The first approach (Section~\ref{section:positive_degeneracy_localratio})
uses the local ratio technique by Bar-Yehuda et al.~\cite{bar2004local} to
establish that good-enough local choices result in a guaranteed approximation.
The second approach (Section~\ref{section:positive_degeneracy_lp})
is based on rounding a linear-programming
relaxation of an integer linear program.
%%The integrality gap of this LP is $\Omega(n)$, which explains why we can
%%hope only for bicriteria approximations via this approach.

On the lower bound side,
we show $o(\log (n/\Tr))$-approximation is impossible for vertex or edge edits
(Section~\ref{section:negative_bounded_degeneracy_editing})
when we forbid bicriteria approximation, i.e., when we must match the
target degeneracy $\Tr$ exactly.
This result is based on a reduction
from \SCfull.
A similar reduction proves $o(\log \Td)$-inapproximability of
editing to maximum degree~$d$, which proves tightness (up to constant factors)
of a known $O(\log \Td)$-approximation algorithm \cite{ebenlendrapproximation}.
This algorithm is also LP-based, employing LLL-based analysis to show that the
standard randomized rounding approach works.
%%Unfortunately, because the integrality gap of our degeneracy LP is $\Omega(n)$,
%%we cannot hope for such a result for editing to bounded degeneracy.

\subsection{Editing to Bounded Weak $c$-Coloring Number}

The \emph{weak $\Tc$-coloring number} can be used to characterize
\emph{bounded expansion}, and nicely generalizes the notion of bounded
degeneracy, which corresponds to $\Tc=1$.  (Note, however, that larger $c$ values
make for smaller graph classes; see Section~\ref{section:weak-coloring-number}
for details.)

In Section~\ref{section:negative_weak_coloring},
we show that (non-bicriteria) $o(\Tt)$-approximation is NP-hard
for any $\Tc \geq 2$, by adapting our proof of lower bound for bounded degeneracy.
This hardness result applies to vertex and edge deletions, as well as edge contractions.
%Note that since the proof requires $\Tc \geq 2$, hardness for the edge contraction operation does not apply to degeneracy.

\subsection{Editing to Bounded Treewidth}

In Section~\ref{section:positive_treewidth},
we present a bicriteria approximation algorithm for
finding the fewest vertex edits to reduce the \emph{treewidth}
to a target threshold~$\Tw$.
Our approach builds on the deep separator structure inherent in treewidth.
We combine ideas from Bodlaender's $O(\log n)$-approximation algorithm for
treewidth with Feige et al.'s $O(\sqrt{\log w})$-approximation algorithm for vertex separators \cite{feige2008improved} (where $w$ is the target treewidth).
In the end, we obtain a bicriteria
$(O(\log^{1.5} n), O(\sqrt{\log w}))$-approximation
that runs in polynomial time on all graphs
(in contrast to many previous treewidth algorithms).
The tree decompositions that we generate are guaranteed to have
$O(\log n)$ height.
As a result, we also show a bicriteria
$(O(\log^{1.5} n), O(\sqrt{\log w} \cdot \log n))$-approximation
result for pathwidth, based on the fact that the \emph{pathwidth} is at most
the width times the height of a tree decomposition.

On the lower bound side (Section~\ref{section:negative_treewidth_and_clique}),
we prove a $o(\log w)$-inapproximability result
by another reduction from \SCfull.  By a small modification,
this lower bound also applies to editing to \emph{bounded clique number}.

\subsection{Editing to Treedepth $2$}

As a warmup to general treedepth results, we study the simple case of
editing to graphs of treedepth $2$, which are \emph{star forests}.
Here we do not need bicriteria approximation, though the problem remains
hard, even to approximate within a constant factor.
Assuming the Unique Games Conjecture (UGC), we show in
Section~\ref{section:negative_starforest_v} that \VCfull reduces to this
editing problem without a loss in $\opt$, implying a factor-$2$ lower bound
for vertex edits.
We also show APX-hardness for edge edits
(Section~\ref{section:negative_starforest_e}).
Our $4-$approximation (Section~\ref{section:positive_starforest})
is based on a reduction to \Problem{Hitting Set}.
\resumetocwriting

\section{Preliminaries}\label{sec:preliminaries}
\stoptocwriting

This section defines several standard notions and graph classes,
and is probably best used as a reference.  The one exception is
Section \ref{section:editing-relatedwork}, which formally defines
the graph-class editing problem \Editfullp introduced in this paper.

\textbf{Graph notation.}
We consider finite, loopless, simple graphs. Unless otherwise specified, we assume that graphs are undirected and unweighted.
We denote a graph by $G = (V,E)$, and set $n = |V|$, $m = |E|$.
Given $G = (V,E)$ and two vertices $u,v \in V$ we denote edges by $e(u,v)$ or $(u,v)$. We write
$N(v) = \{u \,|\, (u,v) \in E\}$ for the set of neighbors of a vertex $v$;
the degree of $v$ is $\deg(v) = |N(v)|$. In digraphs, in-neighbors and out-neighbors of a vertex $v$
are defined using edges of the form $(u,v)$ and $(v,u)$, respectively,
and we denote in- and out-degree by $\inDeg(v), \outDeg(v)$, respectively.
For the maximum degree of $G$ we use $\Delta(G)$, or just $\Delta$ if context is clear.
The clique number of $G$, denoted $\omega(G)$, is the size of the largest clique in $G$.
Given some subset $E'$ of the edges in $G$, we define $G[E']$ to be the subgraph of $G$ induced on the edge set $E'$.
Note that if every edge adjacent to some vertex $v$ is in $E \setminus E'$, then $v$ does not appear in the vertex set of $G[E']$.

\ifappendix
We present below our definitions of editing problems that we consider in this paper. Please refer to Appendix~\ref{appendix-prelim} for complete definitions of the structural graph classes, hardness reduction techniques, and hard optimization problems for which we provide approximation algorithms. 
\fi

\iffull
\subsection{Structural Graph Classes}\label{section:graphclasses}

In this section, we provide the necessary definitions for several structural graph classes
(illustrated in Figure~\ref{fig:classhierarchy}).

\subsubsection{Degeneracy, cores, and shells}
\begin{definition}\label{def:degeneracy}
    A graph $G$ is \emph{$r$-degenerate} if
    every subgraph contains a vertex of degree at most $r$;
    the \emph{degeneracy} is the smallest $r \in \mathbb{N}$ so that
    $G$ is $r$-degenerate, and write $\degener(G) = r$.
\end{definition}

Much of the literature on degeneracy is in the context of the more refined notion of \emph{$k$-cores}.

\begin{definition}\label{def:k-core}\label{def:k-shell}
  For a graph $G$ and positive integer $r$, the \emph{$r$-core} of $G$, \core{r}{G}, is the maximal subgraph of $G$ with minimum degree $r$.
  The \emph{$r$-shell} of $G$ is $\core{r}{G} \setminus \core{r+1}{G}$.
\end{definition}

\begin{lemma}[\cite{chrobak1991planar,lick1970degenerate,matula1983smallest}]\label{lem:degen-properties}
Given a graph $G=(V,E)$, the following are equivalent:
    \begin{enumerate}
      \itemsep0em
        \item{The degeneracy of $G$ is at most $r$.}
        \item{The $(r+1)$-core of $G$ is empty.}
        %\item{There exists an orientation of the edges in $E$ so that it forms a directed acyclic graph and $\outDeg(u)\leq r$ for all $u\in V$.}
        %\item{Every induced subgraph of $G$ contains a vertex with at most $k$ neighbors.}
        \item{There exists an ordering $v_1, \dots, v_n$ of $V$ so the degree of $v_j$ in $G[\{v_j, \dots, v_n\}]$ is at most $r$.}\looseness=-1
    \end{enumerate}
\end{lemma}
\begin{lemma}\label{lem:degen-bounded-outdeg-orientation}
Given a graph $G=(V,E)$, if there exists an orientation of the edges in $E$ so that $\outDeg(u)\leq r$ for all $u\in V$, then the degeneracy of $G$ is at most $2r$.
\end{lemma}
\begin{proof}
It is straightforward to verify that in any induced subgraph $H$ of $G$, the same orientation of edges ensures that out-degree of each vertex $v\in V[H]$, $\outDeg_H(v) \leq r$. This in particular implies that $H$ contains a vertex of degree at most ${|E[H| \over |V[H]|} \leq 2r$. Hence, $\degener(G)\leq 2r$.
\end{proof}

It immediately follows that a graph $G$ has degeneracy $r$ if and only if $r$ is the largest number such that the $r$-core of $G$ is non-empty.
We note that having bounded degeneracy immediately implies bounded clique and chromatic numbers
$\omega(G), \chi(G) \leq \degener(G) + 1$ (the latter follows
from a greedy coloring using the ordering from Lemma~\ref{lem:degen-properties}).

\subsubsection{Weak coloring number}
\label{section:weak-coloring-number}
The \emph{weak $\Tc$-coloring number} was introduced along with the $\Tc$-coloring number by Kierstead and Yang in~\cite{kierstead2003orderings}, and it generalizes the notion of degeneracy in the following sense. As described in Lemma~\ref{lem:degen-properties}, the degeneracy of a graph can be understood as a worst-case bound on the forward degree of a vertex given an optimal ordering of the vertices. The weak $\Tc$-coloring number bounds the number of vertices $v$ reachable from $u$ via a path of length at most $\Tc$ consisting of vertices that occur earlier than $v$ in an ordering of the vertices.  In fact, the weak $1$-coloring number and degeneracy are equivalent notions.

\begin{definition} (Section 2~\cite{kierstead2003orderings}) \label{definition:wcol}
Let $G$ be a finite, simple graph, let $L: V(G) \rightarrow \mathbb{N}$ be an injective function defining an ordering on the vertices of $G$, and let $\Pi(G)$ be the set of all possible such orderings.
A vertex $v$ is \emph{weakly $\Tc$-reachable} from $u$ with respect to $L$ if there exists a $uv$-path $P$ such that $|P| \leq \Tc$ and for all $w \in P$, $L(w) \leq L(v)$;
we use $\wreach{\Tc}{G}{L}{u}$ to denote the set of all such vertices.
Let $\wscore{\Tc}{G}{L}$ be the $\max_{u \in V(G)} |\wreach{\Tc}{G}{L}{u}|$. The
\emph{weak $\Tc$-coloring number} of $G$ is defined as $\wcol{\Tc}{G} = \min_{L \in \Pi(G)} \wscore{\Tc}{G}{L}$.
\end{definition}

The weak $\Tc$-coloring numbers provide useful characterizations for several structural graph classes:

\begin{lemma}[\cite{nesetril2012sparsity,nesetril2011nowhere,zhu2009colouring}] \label{lemma:wcn-characterizations}
The weak $\Tc$-coloring numbers characterize each following class $\mC$:

\begin{enumerate} \itemsep0em
    \item $\mC$ is nowhere dense $\iff$ $\lim\limits_{\Tc \rightarrow \infty} \limsup\limits_{G \in \mC} \frac{\log(\wcol{\Tc}{G})}{\log|G|} = 0$.
    \item $\mC$ has bounded expansion $\iff$ $\exists \textrm{ a function } f, \forall G \in \mC, \forall \Tc \in \mathbb{N}: \wcol{\Tc}{G} \leq f(\Tc)$.
    \item $\mC$ has treedepth bounded by $k$ $\iff$ $\forall G \in \mC, \forall \Tc \in \mathbb{N}: \wcol{\Tc}{G} \leq k$.
\end{enumerate}
\end{lemma}

Additionally, the treewidth of a graph provides an upper bound on its weak coloring numbers.  If a graph $G$ has treewidth $k$, then $\wcol{\Tc}{G} \leq \binom{k + c}{k}$, and there is an infinite family of graphs such that this bound is tight~\cite{grohe2016colouring}.

\subsubsection{Treewidth, pathwidth, and treedepth}
Perhaps the most heavily studied structural graph class is that of \emph{bounded treewidth}; in this subsection
we provide the necessary definitions for treewidth, pathwidth, and treedepth. Bounded treedepth is a stronger structural property than bounded pathwidth, and
intuitively measures how ``shallow'' a tree the graph can be embedded in when edges can only occur between
ancestor-descendent pairs.

\begin{definition}[\cite{robertson1986graph}]\label{definition:treewidth}
    Given a graph $G$, a \emph{tree decomposition} of $G$ consists of a collection $\mathcal{Y}$ of subsets (called \emph{bags}) of vertices in $V(G)$ together with a tree $T = (\mathcal{Y}, \mathcal{E})$ whose nodes $\mathcal{Y}$ correspond to bags which satisfy the following properties:
    \begin{enumerate}
      \itemsep0em
       \item Every $v \in V(G)$ is contained in a bag $B \in \mathcal{Y}$ (i.e.\ $\bigcup_{B \in \mathcal{Y}} B = V$).
       \item For all edges $(u, v) \in E(G)$ there is a bag $B \in \mathcal{Y}$ that contains both endpoints $u,v$.
       \item For each $v \in V(G)$, the set of bags containing $v$ form a connected subtree of $T$ (i.e.\ $\left\{B| v \in B, B \in \mathcal{Y}\right\}$ forms a subtree of $T$).
    \end{enumerate}
    The \emph{width} of a tree decomposition is $\max_{B \in \mathcal{Y}} |B| - 1$, and the \emph{treewidth} of a graph $G$, denoted $\tw(G)$, is the minimum width of any tree decomposition of $G$.
\end{definition}

All graphs that exclude a simple fixed planar minor $H$ have bounded treewidth,
indeed, treewidth $|V(H)|^{O(1)}$ \cite{Chekuri-Chuzhoy-2016}.
Thus, every planar-$H$-minor-free graph class is a subclass of some
bounded treewidth graph class.

\begin{definition}[\cite{robertson1986graph}]\label{definition:pathwidth}
  A \emph{path decomposition} is a tree decomposition in which the tree $T$ is a path. The \emph{pathwidth} of $G$, $\pw(G)$, is the minimum width of any path decomposition of $G$.
\end{definition}

\begin{definition}[\cite{nesetril2006treedepth}] \label{definition:treedepth}
A \emph{treedepth decomposition} of a graph $G$ is an injective mapping $\psi: V(G) \rightarrow V(F)$ to a rooted forest $F$ such that for each edge $(u,v) \in E(G)$, $\psi(u)$ is either an ancestor or a descendant of $\psi(v)$ in $F$.  The \emph{depth} of a treedepth decomposition is the height of the forest $F$.  The \emph{treedepth} of $G$ is the minimum depth of any treedepth decomposition of $G$.
\end{definition}
\fi

\subsection{Editing Problems}\label{section:editing-relatedwork}

This paper is concerned with algorithms that edit graphs into a desired structural class, while guaranteeing an
approximation ratio on the size of the edit set. Besides its own importance, editing graphs into structural classes plays a key role in our structural rounding framework for approximating optimization problems on graphs that are ``close'' to structural graph classes (see Section~\ref{sec:structural-rounding}).
The basic editing problem is defined as follows relative to an edit operation
$\psi$ such as vertex deletion, edge deletion, or edge contraction:

\begin{problem}{\Editfull}
    \Input & An input graph $G=(V,E)$, family $\mC$ of graphs, edit operation $\psi$\\
    \Prob  & Find $k$ edits $\psi_1, \psi_2, \dots, \psi_k$ such that $\kPsieditsG \in \mC$. \\
    \Objective & Minimize $k$
\end{problem}

The literature has limited examples of approximation algorithms for specific edit operations and graph classes.
Most notably, Fomin et al.~\cite{fomin2012planar} studies \Editfull for vertex deletions into the class of \emph{planar-$H$-minor-free graphs}
(graphs excluding a fixed planar graph~$H$).\footnote{More generally, Fomin et al.~\cite{fomin2012planar} consider editing to the class of graphs excluding a finite family $\mF$ of graphs at least one of which is planar, but as we just want the fewest edits to put the graph in some structural class, we focus on the case $|\mF| = 1$.}

In addition to fixed-parameter algorithms (for when $k$ is small),
they give a $c_H$-approximation algorithm for \Editfull where the constant
$c_H = \Omega\left(2^{2^{|V(H)|^3}}\right)$ is rather large.

Most of the graph classes we consider consist of graphs where some
parameter $\lambda$ (clique number, maximum degree, degeneracy, weak $c$-coloring
number, or treewidth) is bounded.  Thus we can think of the graph class $\mC$
as in fact being a parameterized family~$\mCp$.  (For planar-$H$-minor-free,
$\lambda$~could be $|V(H)|$.)  In addition to approximating just the number of edits,
we can also loosen the graph class we are aiming for, and approximate the
parameter value $\lambda$ for the family~$\mCp$.
Thus we obtain a \emph{bicriteria problem} which can be formalized as follows:

\begin{problem}{\Editfullp}
    \Input & An input graph $G=(V,E)$, parameterized family $\mCp$ of graphs, a target parameter value $\lambda^*$, edit operation $\psi$\\
    \Prob  & Find $k$ edits $\psi_1, \psi_2, \dots, \psi_k$ such that $\kPsieditsG \in \mCp$ where $\lambda \geq \lambda^*$. \\
    \Objective & Minimize $k$.
\end{problem}

\begin{definition} \label{definition:bicriteria-approx}
  An algorithm for \Editfullp is a \emph{(bicriteria) $(\alpha,\beta)$-approximation}
  if it guarantees that the number of edits is at most $\alpha$ times the optimal
  number of edits into~$\mCp$, and that $\lambda \leq \beta \cdot \lambda^*$.
\end{definition}

See Table~\ref{tab:overview} for a complete list of the problems considered, along with their abbreviations.

\iffull
\subsection{Hardness and Reductions}\label{sec:prelims-reductions}
One of our contributions in this paper is providing hardness of approximation for
several important instances of \Editfull defined in Section~\ref{section:editing-relatedwork}.
Here, we describe the necessary definitions for approximation-preserving reductions as
well as known approximability hardness results for several key problems.

\subsubsection{Approximation preserving reductions}
A classic tool in proving approximation hardness is the \emph{L-reduction}, which linearly preserves
approximability features~\cite{papadimitriou1991optimization}, and implies PTAS reductions.

\begin{definition}\label{definition:lreduction}
Let $A$ and $B$ be minimization problems with cost functions $\cost_A$ and $\cost_B$, respectively.
An \emph{L-reduction} is a pair of functions $f$ and $g$ such that:
\begin{enumerate}
  \itemsep0em
\item $f$ and $g$ are polynomial time computable,
\item for an instance $\xA$ of $A$, $f(\xA)$ is an instance of $B$,
\item for a feasible solution $\yB$ of $B$, $g(\yB)$ is a feasible solution of $A$,
\item there exists a constant $c_1$ such that
\begin{equation}
\textstyle{\optsol{B}{f(\xA)} \leq c_1 \optsol{A}{\xA}},
\end{equation}
\item and there exists a constant $c_2$ such that
\begin{equation}
\textstyle{\cost_A(g(\yB)) - \opt_A(\xA) \leq c_2\left(\cost_B(\yB) - \opt_B(f(\xA))\right)}.
\end{equation}
\end{enumerate}
\end{definition}

\noindent In many cases, we will establish a stronger form of reduction known as a \emph{strict reduction}, which implies an $L$-reduction~\cite{ko1982computational, orponen1987approximation}.

\begin{definition}\label{def:strict-reduction}
Let $A$ and $B$ be minimization problems with cost functions $\cost_A$ and $\cost_B$, respectively.
A \emph{strict reduction} is a pair of functions $f$ and $g$ such that:
\begin{enumerate}
   \itemsep0em
\item $f$ and $g$ are polynomial time computable,
\item for an instance $x$ of $A$, $f(\xA)$ is an instance of $B$,
\item for a feasible solution $y'$ of $B$, $g(\yB)$ is a feasible solution of $A$,
\item and it holds that
\begin{align}
\frac{\cost_A(g(\yB))}{\opt_A(\xA)} \leq \frac{\cost_B(\yB)}{\opt_B(f(\xA))}.
\end{align}
\end{enumerate}
\end{definition}

We note that to prove a strict reduction, it suffices to demonstrate that $\opt_A(\xA) = \opt_B(f(\xA))$ and $\cost_A(g(\yB)) \leq \cost_B(\yB)$.

\subsubsection{Hard problems}
As mentioned earlier, our approximation hardness results for the instances of \Editfull studied in this paper are via reductions from \SCfull, \VCfull and \DSBfull (\Problem{Minimum Dominating Set} in graphs of maximum degree $B$).
We now formally define each of these, and state the
associated hardness of approximation results used.

%%%%%%% Set Cover

\begin{problem}{\SCfull (\SC)}
    \Input & A universe $\mcU$ of elements and a collection $\mcF$ of subsets of the universe. \\
    \Prob  & Find a minimum size subset $\editset \subseteq \mcF$ that covers $\mcU$: $\bigcup_{S\in\editset}S = \mcU$.
\end{problem}

\begin{theorem}[\cite{dinur2014analytical,feige1998threshold,lund1994hardness,moshkovitz2012projection,moshkovitz2010two}]\label{lem:set-cover-general}
It is NP-hard to approximate $\SCfull(\mcU,\mcF)$ within a factor of $(1-\epsilon)\ln( |\mcU| )$ for any $\epsilon > 0$.
Moreover, this holds for instances where $|\mcF| \leq \textrm{poly}(|\mcU|)$.
\end{theorem}
We remark that this result is tight, due to an $(\ln |\mcU|)$-approximation algorithm for \SC~\cite{johnson1974approximation}.

\begin{theorem}[\cite{trevisan2001nonapproximability}]\label{lem-set-cover-sparse}
  There exists a constant $C > 0$ so that it is NP-hard to approximate $\SCfull(\mcU,\mcF)$ within a factor of
  $\left(\ln \Delta - C\ln\ln \Delta \right)$,
  where $\Delta = \max_{S\in \mcF} |S|$.
  Moreover, in the hard instances, $\Delta\geq f_{\max}$ where $f_{\max}$ is the maximum frequency of an element of $\mcU$ in $\mcF$.
\end{theorem}

%%%%%%% Uniform Set Cover

\begin{problem}{\uSCfull (\uSC)}
    \Input & A universe $\mcU$ of elements and a collection $\mcF$ of subsets of the universe such that every element of $\mcU$ is contained in exactly $k$ sets in $\mcF$.\\
    \Prob  & Find a minimum size subset $\editset \subseteq \mcF$ that covers $\mcU$: $\bigcup_{S\in\editset}S = \mcU$.
\end{problem}

\begin{theorem}[Theorem 1.1~\cite{dinur2005new}]\label{lem-set-cover-frequency}
  For any constant $k \geq 3$,
  it is NP-hard to approximate $\uSC$ within a factor of $(k-1-\epsilon)$ for any $\epsilon >0$.
\end{theorem}

Note that $k$ is assumed to be constant with respect to $|\mcU|$ in this result.
However, the same paper provides a slightly weaker hardness result when $k$ is super-constant with respect to $|\mcU|$.

\begin{theorem}[Theorem 6.2~\cite{dinur2005new}]\label{lem-set-cover-frequency-superconstant}
  There exists a constant $b>0$ so that
  there is no polynomial time algorithm for approximating $\uSC$ within a factor of $(\floor{k/2}-0.01)$ when $4 \leq k \leq (\log |\mcU|)^{1/b}$, unless NP $\subseteq$ DTIME$(n^{O(\log\log n)})$. This holds for instances where $|\mcF| \leq |\mcU|$.

\end{theorem}

%%%%%%% Vertex Cover

\begin{problem}{\VCfull (\VC)}
\Input & A graph $G=(V,E)$.\\
\Prob  & Find a minimum size set of vertices $\editset \subseteq V$ s.t. $G[V\setminus\editset]$ has no edge.
\end{problem}

\begin{theorem}[\cite{dinur2005hardness,khot2008vertex}]\label{lem:vertex-cover}
It is NP-hard to approximate $\VCfull$ within a factor of $1.3606$. Moreover, assuming UGC, \VC has no $(2 - \epsilon)$ approximation for $\epsilon > 0$.
\end{theorem}

\begin{problem}{\DSBfull (\DSB)}
\Input & An undirected graph $G = (V, E)$ with maximum degree at most $B$.\\
\Prob & Find a minimum size set of vertices $C \subseteq V$ such that every vertex in $V$ is either in $C$ or is adjacent to a vertex in $C$.
\end{problem}

\begin{theorem}[\cite{trevisan2001nonapproximability}]\label{thm:ds-hard-approx}
There are constants $C > 0$ and $B_0 \geq 3$ so that for every $B \geq B_0$ it is NP-hard to approximate \DSBfull within a factor of $\ln{B} - C \ln{\ln{B}}$.
\end{theorem}

The best known constants $C$ for small $B$ in Theorem~\ref{thm:ds-hard-approx} are given in~\cite{chlebik2008approximation}.

\subsection{Optimization Problems}\label{section:opt-problems}

We conclude our preliminaries with formal definitions of several additional optimization problems for which
we give new approximation algorithms via structural rounding in Section~\ref{sec:structural-rounding}.

\begin{problem}{\LDSfull (\LDS)}
\Input & An undirected graph $G = (V, E)$ and a positive integer \DSradius.\\
\Prob & Find a minimum size set of vertices $C \subseteq V$ s.t.
every vertex in $V$ is either in $C$ or is connected by a path of length at most \DSradius to a vertex in $C$.
\end{problem}

\begin{problem}{\ELDSfull (\ELDS)}
\Input & An undirected graph $G = (V, E)$ and a positive integer \DSradius.\\
\Prob & Find a minimum size set of edges $C \subseteq E$ s.t.
every edge in $E$ is either in $C$ or is connected by a path of length at most \DSradius to an edge in $C$.
\end{problem}

When $\DSradius = 1$, these are \DSfull (\DS) and \EDSfullfixed (\EDS). 

\begin{problem}{\ADSmaybefull (\ADS)}
\Input & An undirected graph $G = (V, E)$, a subset of vertices $B\subseteq V$ and a positive integer.\\
\Prob & Find a minimum size set of vertices $C \subseteq V$ s.t.
every vertex in $B$ is either in $C$ or is connected by a path of length at most $\ell$ to a vertex in $C$.
\end{problem}

Note that when $B = V$, \ADSmaybefull becomes \DSmaybefull\footnote{The \ADSfull problem has also been studied in the literature as \emph{subset dominating set problem} in~\cite{guha1996approximation,har2017approximation}.}.
%\cite{alber2005refined,alber2004polynomialtime,fomin2011bidimensionality} under a different problem statement.}.

\begin{problem}{\LISfull (\LIS)}
\Input & A graph $G=(V,E)$.\\
\Prob  & Find a maximum size set of vertices $X \subseteq V$ s.t. no two vertices in $X$ are connected by a path of length $\leq \ISradius$.
\end{problem}

When $\ISradius = 1$, we call this \ISfull (\IS).

\begin{problem}{\FVSfull (\FVS)}
\Input & A graph $G=(V,E)$.\\
\Prob  & Find a minimum size set of vertices $X \subseteq V$ s.t. $G\setminus X$ has no cycles.
\end{problem}

\begin{problem}{\MMMfull (\MMM)}
\Input & A graph $G=(V,E)$.\\
\Prob  & Find a minimum size set of edges $X \subseteq E$ s.t. $X$ is a maximal matching.
\end{problem}

\begin{problem}{\CNfull (\CN)}
\Input & A graph $G=(V,E)$.\\
\Prob  & Find a minimum size coloring of $G$ s.t. adjacent vertices are different colors.
\end{problem}

\begin{problem}{\MCfull (\MC)}
\Input & A graph $G=(V,E)$.\\
\Prob  & Find a partition of the nodes of $G$ into sets $S$ and $V\setminus S$ such that the number
of edges from $S$ to $V\setminus S$ is greatest.
\end{problem}
\fi

\resumetocwriting

  \section{Structural Rounding}\label{sec:structural-rounding}

In this section, we show how approximation algorithms for a structural graph
class can be extended to graphs that are near that class, provided we can
find a certificate of being near the class.  These results thus motivate our
results in later sections about editing to structural graph classes.
Our general approach, which we call \emph{structural rounding}, is to apply
existing approximation algorithms on the edited (``rounded'') graph in the
class, then ``lift'' that solution to solve the original graph,
while bounding the loss in solution quality throughout.

\subsection{General Framework}\label{sec:structural-rounding-framework}

First we define our notion of ``closeness'' in terms of a general family
$\psi$ of allowable graph edit operations (e.g., vertex deletion, edge
deletion, edge contraction):

\begin{definition}\label{definition:gamma-close}
  A graph $G'$ is \emph{$\gamma$-editable} from a graph $G$
  under edit operation $\psi$
  if there is a sequence of $k \leq \gamma$ edits
  $\psi_1, \psi_2, \dots, \psi_k$ of type $\psi$ such that
  $G' = \kPsieditsG$.
  A graph $G$ is \emph{$\gamma$-close} to a graph class $\mC$
  under $\psi$ if some $G' \in \mC$ is
  $\gamma$-editable from $G$ under~$\psi$.

\end{definition}

To transform an approximation algorithm for a graph class $\mC$
into an approximation algorithm for graphs $\gamma$-close to $\mC$,
we will need two properties relating the optimization problem
and the type of edits:%
\footnote{These conditions are related to, but significantly generalize,
  the ``separation property'' from the bidimensionality framework for
  PTASs \cite{demaine2005bidimensionality}.}

\begin{definition} \label{definition:stable}
A graph minimization (resp.\ maximization) problem $\Pi$ is \emph{stable}
under an edit operation $\psi$ with constant $c'$
if $\optsol{\Pi}{G'} \leq \optsol{\Pi}{G} + c' \gamma$ (resp.\ $\optsol{\Pi}{G'} \geq \optsol{\Pi}{G} - c' \gamma$)
for any graph $G'$ that is $\gamma$-editable from $G$ under~$\psi$.
In the special case where $c'=0$, we call $\Pi$ \emph{closed} under~$\psi$.
When $\psi$ is vertex deletion, closure is equivalent to
the graph class defined by $\optsol{\Pi}{G} \leq \lambda$ (resp.\ $\optsol{\Pi}{G} \geq \lambda$)
being \emph{hereditary}; we also call $\Pi$ \emph{hereditary}.
\end{definition}

\begin{definition}\label{definition:struct-lift-c-psi}
A minimization (resp.\ maximization) problem $\Pi$ can be \emph{structurally
lifted} with respect to an edit operation $\psi$ with constant $c$ if,
given any graph $G'$ that is $\gamma$-editable from $G$ under~$\psi$,
and given the corresponding edit sequence $\psi_1, \psi_2, \dots, \psi_k$
with $k \leq \gamma$,
a solution $S'$ for $G'$ can be converted in polynomial time to
a solution $S$ for $G$ such that
$\cost_{\Pi}(S) \leq \cost_{\Pi}(S') + c\cdot k$
(resp.\ $\cost_\Pi(S) \geq \cost_\Pi(S') - c \cdot k$).
\end{definition}

Now we can state the main result of structural rounding:

\begin{theorem}[Structural Rounding Approximation]\label{thm:general-edit-sr}
Let $\Pi$ be a minimization (resp.\ maximization) problem that is stable
under the edit operation $\psi$ with constant $c'$ and
that can be structurally lifted with respect to $\psi$ with constant~$c$.
If $\Pi$ has a polynomial-time $\rho(\lambda)$-approximation algorithm in the graph class $\mCp$,
and \Editfullp has a polynomial-time $(\alpha,\beta)$-approximation algorithm,
then there is a polynomial-time
$((1 + c' \alpha \delta) \cdot \rho(\beta \lambda) + c \alpha \delta)$-approximation
(resp.\ $((1 - c' \alpha \delta) \cdot \rho(\beta \lambda) - c \alpha \delta)$-approximation)
algorithm for $\Pi$
on any graph that is $(\delta \cdot \optsol{\Pi}{G})$-close
to the class~$\mCp$.
\end{theorem}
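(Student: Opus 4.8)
The plan is to run the three-stage pipeline promised in the introduction---edit to the class, solve in the class, lift back---and track the loss at each stage. Fix a graph $G$ that is $(\delta \cdot \optsol{\Pi}{G})$-close to $\mCp$, so there is an edit sequence of length $\gamma^\star \le \delta \cdot \optsol{\Pi}{G}$ producing some $G^\star \in \mCp$. First I would invoke the $(\alpha,\beta)$-approximation algorithm for \Editfullp on $G$: in polynomial time it returns an edit sequence $\psi_1,\dots,\psi_k$ with $k \le \alpha \gamma^\star \le \alpha \delta \cdot \optsol{\Pi}{G}$ whose result $G' = \kPsieditsG$ lies in $\mCp$ with parameter at most $\beta\lambda^\star = \beta\lambda$ (I am using $\lambda^\star = \lambda$ as the target). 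Next I would run the $\rho(\cdot)$-approximation for $\Pi$ on $G'$; since $G'$ is in the class with parameter $\le \beta\lambda$, this yields in polynomial time a solution $S'$ with $\cost_\Pi(S') \le \rho(\beta\lambda)\cdot \optsol{\Pi}{G'}$ (minimization case). Finally I would apply structural lifting to $S'$ along the recorded edit sequence, obtaining in polynomial time a feasible solution $S$ for $G$ with $\cost_\Pi(S) \le \cost_\Pi(S') + c\cdot k$.

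The second ingredient is the combinatorial comparison between $\optsol{\Pi}{G'}$ and $\optsol{\Pi}{G}$. Since $G'$ is $k$-editable from $G$ under $\psi$ with $k \le \alpha\delta\cdot\optsol{\Pi}{G}$, stability with constant $c'$ gives $\optsol{\Pi}{G'} \le \optsol{\Pi}{G} + c' k \le (1 + c'\alpha\delta)\optsol{\Pi}{G}$. Now I would just chain the three inequalities:
\begin{align}
\cost_\Pi(S) &\le \cost_\Pi(S') + c\,k
 \le \rho(\beta\lambda)\cdot\optsol{\Pi}{G'} + c\,k\\
 &\le \rho(\beta\lambda)\,(1 + c'\alpha\delta)\,\optsol{\Pi}{G} + c\alpha\delta\,\optsol{\Pi}{G}
 = \bigl((1 + c'\alpha\delta)\,\rho(\beta\lambda) + c\alpha\delta\bigr)\optsol{\Pi}{G},
\end{align}
which is exactly the claimed ratio. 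The maximization case is symmetric: stability now reads $\optsol{\Pi}{G'} \ge (1 - c'\alpha\delta)\optsol{\Pi}{G}$, the in-class approximation gives $\cost_\Pi(S') \ge \rho(\beta\lambda)\optsol{\Pi}{G'}$, lifting gives $\cost_\Pi(S) \ge \cost_\Pi(S') - c\,k$, and chaining (with the inequalities flipped) produces the $((1 - c'\alpha\delta)\rho(\beta\lambda) - c\alpha\delta)$ bound. Polynomial running time is immediate since each of the three stages is polynomial and the edit sequence has polynomial length.

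The only genuinely delicate point is bookkeeping around the bound $k \le \alpha\delta\cdot\optsol{\Pi}{G}$: the editing approximation guarantees $k \le \alpha\cdot\opt_{\text{edit}}(G)$ where $\opt_{\text{edit}}(G)$ is the minimum number of edits into $\mCp$, and closeness gives $\opt_{\text{edit}}(G) \le \gamma^\star \le \delta\cdot\optsol{\Pi}{G}$, so the two combine correctly; one must also make sure the $(\alpha,\beta)$-algorithm actually certifies membership in $\mCp$ at parameter $\le\beta\lambda$ so that the in-class approximation algorithm is applicable with ratio $\rho(\beta\lambda)$. Everything else is a routine three-step chain of inequalities, so I do not anticipate any real obstacle beyond stating the hypotheses cleanly and being careful that $c'$ may be $0$ (the closed case), in which case the first factor collapses to $1$ and the bound reduces to $\rho(\beta\lambda) + c\alpha\delta$.
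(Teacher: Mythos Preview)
Your proposal is correct and follows essentially the same three-stage argument as the paper: apply the $(\alpha,\beta)$-approximate editing algorithm to obtain $G'\in\mC_{\beta\lambda}$ with $k\le\alpha\delta\cdot\optsol{\Pi}{G}$ edits, bound $\optsol{\Pi}{G'}$ via stability, approximate in $G'$, and lift. The chain of inequalities you write out is exactly the one the paper uses, and your treatment of the maximization case is likewise identical.
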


\ifappendix
\begin{proof}
We write $\opt(G)$ for $\optsol{\Pi}{G}$.
Let $G$ be a graph that is $(\delta \cdot \opt(G))$-close to the class~$\mCp$.
By Definition~\ref{definition:bicriteria-approx}, the polynomial-time
$(\alpha,\beta)$-approximation algorithm finds
edit operations $\psi_1, \psi_2, \dots, \psi_k$
where $k \leq \alpha \delta \cdot \opt(G)$
such that $G' = \kPsieditsG \in \mC_{\beta \lambda}$.%
\footnote{We assume that $C_i \subseteq C_j$ for $i \leq j$,
  or equivalently, that $\rho(\lambda)$ is monotonically increasing in~$\lambda$.}
Let $\rho = \rho(\beta \lambda)$ be the approximation factor we can attain
on the graph $G' \in \mC_{\beta \lambda}$.

We prove the case when $\Pi$ is a minimization problem. The proof of the maximization case
can be found in Appendix~\ref{appendix-structural-rounding-proofs}.
Because $\Pi$ has a $\rho$-approximation in $\mC_{\beta \lambda}$
(where $\rho > 1$), we can obtain a solution $S'$ with cost at most
$\rho \cdot \opt(G')$ in polynomial time.
Applying structural lifting (Definition~\ref{definition:struct-lift-c-psi}), we can use $S'$ to obtain a solution
$S$ for $G$ with $\cost(S) \leq \cost(S') + c k \leq \cost(S') + c \alpha \delta \cdot \opt(G)$ in polynomial time.
Because $\Pi$ is stable under $\psi$ with constant $c'$,
\begin{align*}
\opt(G') &\leq \opt(G) + c' k
\leq \opt(G) + c' \alpha \delta \cdot \opt(G)
= (1 + c' \alpha \delta) \opt(G),
\end{align*}
and we have
\begin{align*}
\cost(S) &\leq \rho \cdot \opt(G') + c \alpha \delta \cdot \opt(G)%\\
%&\leq \rho (1 + c' \alpha \delta) \opt(G) + c \alpha \delta \cdot \opt(G)\\
%&
= (\rho + \rho c' \alpha \delta + c \alpha \delta) \opt(G),
\end{align*}
proving that we have a polynomial time $(\rho + (c + c' \rho) \alpha \delta)$-approximation algorithm as required.
\end{proof}
\fi

\iffull
\begin{proof}
We write $\opt(G)$ for $\optsol{\Pi}{G}$.
Let $G$ be a graph that is $(\delta \cdot \opt(G))$-close to the class~$\mCp$.
By Definition~\ref{definition:bicriteria-approx}, the polynomial-time
$(\alpha,\beta)$-approximation algorithm finds
edit operations $\psi_1, \psi_2, \dots, \psi_k$
where $k \leq \alpha \delta \cdot \opt(G)$
such that $G' = \kPsieditsG \in \mC_{\beta \lambda}$.%
\footnote{We assume that $C_i \subseteq C_j$ for $i \leq j$,
  or equivalently, that $\rho(\lambda)$ is monotonically increasing in~$\lambda$.}
Let $\rho = \rho(\beta \lambda)$ be the approximation factor we can attain
on the graph $G' \in \mC_{\beta \lambda}$.

First we prove the case when $\Pi$ is a minimization problem.
Because $\Pi$ has a $\rho$-approximation in $\mC_{\beta \lambda}$
(where $\rho > 1$), we can obtain a solution $S'$ with cost at most
$\rho \cdot \opt(G')$ in polynomial time.
Applying structural lifting (Definition~\ref{definition:struct-lift-c-psi}), we can use $S'$ to obtain a solution
$S$ for $G$ with $\cost(S) \leq \cost(S') + c k \leq \cost(S') + c \alpha \delta \cdot \opt(G)$ in polynomial time.
Because $\Pi$ is stable under $\psi$ with constant $c'$,
\begin{align*}
\opt(G') &\leq \opt(G) + c' k
\leq \opt(G) + c' \alpha \delta \cdot \opt(G)
= (1 + c' \alpha \delta) \opt(G),
\end{align*}
and we have
\begin{align*}
\cost(S) &\leq \rho \cdot \opt(G') + c \alpha \delta \cdot \opt(G)\\
&\leq \rho (1 + c' \alpha \delta) \opt(G) + c \alpha \delta \cdot \opt(G)\\
&= (\rho + \rho c' \alpha \delta + c \alpha \delta) \opt(G),
\end{align*}
proving that we have a polynomial time $(\rho + (c + c' \rho) \alpha \delta)$-approximation algorithm as required.

Next we prove the case when $\Pi$ is a maximization problem.
Because $\Pi$ has a $\rho$-approximation in $\mC$ (where $\rho < 1$),
we can obtain a solution $S'$ with cost at least
$\rho \cdot \opt(G')$ in polynomial time.
Applying structural lifting (Definition~\ref{definition:struct-lift-c-psi}), we can use $S'$ to obtain a solution
$S$ for $G$ with $\cost(S) \geq \cost(S') - c k \geq \cost(S') - c \alpha \delta \cdot \opt(G)$ in polynomial time.
Because $\Pi$ is stable under $\psi$ with constant~$c'$,
\begin{align*}
\opt(G') &\geq \opt(G) - c' k
\geq \opt(G) - c' \alpha \delta \cdot \opt(G)
= (1 - c' \alpha \delta) \opt(G),
\end{align*}
and we have
\begin{align*}
\cost(S) &\geq \rho \cdot \opt(G') - c \alpha \delta \cdot \opt(G) \\
&\geq \rho (1 - c' \alpha \delta) \opt(G) - c \alpha \delta \cdot \opt(G) \\
&= (\rho - (c + c' \rho) \alpha \delta) \opt(G),
\end{align*}
proving that we have a polynomial-time $(\rho - (c + c'\rho) \alpha \delta)$-approximation algorithm as required.
Note that this approximation is meaningful only when $\rho > (c + c'\rho) \alpha \delta$.
\end{proof}
\fi

To apply Theorem~\ref{thm:general-edit-sr}, we need four ingredients:
(a)~a proof that the problem of interest is stable under some edit operation (Definition~\ref{definition:stable});
(b)~a polynomial-time $(\alpha,\beta)$-approximation algorithm for editing
under this operation (Definition~\ref{definition:bicriteria-approx});
(c)~a structural lifting algorithm
(Definition~\ref{definition:struct-lift-c-psi}); and
(d)~an approximation algorithm for the target class~$\mC$.

In the remainder of this section, we show how this framework applies to many
problems and graph classes, as summarized in Table~\ref{table:structuralrounding}
on page~\pageref{table:structuralrounding}.
Most of our approximation algorithms depend on
our editing algorithms described in Section~\ref{section:positive_results}.
We present the problems ordered by edit type, as listed in
Table~\ref{table:structuralrounding}.

\paragraph{Structural rounding for annotated problems.}
We refer to graph optimization problems where the input consists of both a graph and subset of annotated vertices/edges as \emph{annotated} problems (see \ADSfull in Section~\ref{section:opt-problems}). Hence, in our rounding framework, we have to carefully choose the set of annotated vertices/edges in the edited graph to guarantee small {\em lifting} and {\em stability} constants. To emphasize the difference compared to ``standard'' structural rounding, we denote the edit operations as $\text{vertex}^*$ and $\text{edge}^*$ in the annotated cases.
Moreover, we show that we can further leverage the flexibility of annotated rounding to solve {\em non-annotated} problems that cannot normally be solved via structural rounding. In Section~\ref{sec:annotated-vertex-deletion}, we consider applications of annotated rounding for both annotated problems such as \ADSfull and non-annotated problems such as \CDSfull.
\iffull
\subsection{Vertex Deletions}
\fi

\ifappendix
\subsection{Applications: Vertex and Edge Deletions}
\fi

For each problem, we show stability and structural liftability,
and use these to conclude approximation algorithms.
\iffull
Because \IS is the only maximization problem we first consider in this section, we consider it separately.
\fi
\ifappendix
Using our structural rounding framework above, we obtain the following results on a broad set of hard-to-solve problems in
general graphs. Table~\ref{table:structuralrounding} shows a summary of the set of problems we can obtain efficient approximation algorithms
using structural rounding, and Appendix~\ref{appendix-structural-rounding-proofs} contains the stability and structural liftability proofs used to obtain the corresponding results stated below.

We first use our structural rounding framework with vertex deletions to obtain the following approximation results.

\begin{theorem}\label{thm:vertex-edits-approx}
 For graphs $(\delta \cdot \opt(G))$-close to degeneracy $\Tr$ via vertex deletions, we show that:
 \begin{itemize}
 \item \ISfull has a $(1 - 4 \delta)/(4\Tr+1)$-approximation.
 \item \ADSfull has $O(\Tr+\delta)$-approximation.
 \end{itemize}
 \noindent For graphs $(\delta \cdot \opt(G))$-close to treewidth $\Tw$ via vertex deletions:
 \begin{itemize}
 \item \ADSmaybefull has a $(1 + O(\delta \log^{1.5}n))$-approximation when $\Tw\sqrt{\log \Tw} = O(\log_\ell n)$.
 \item \ISfull has a $(1 - O(\delta \log^{1.5} n))$-approximation when $\Tw\sqrt{\log \Tw} = O(\log n)$.
 \item \VCfull, \CNfull, and \FVSfull have $(1 + O(\delta \log^{1.5} n))$-approximations when $\Tw\sqrt{\log \Tw} = O(\log n)$.
 \item \MMMfull has a $(1 + O(\delta \log^{1.5} n))$-approximation when $\Tw\log^{1.5} \Tw = O(\log n)$.
 \item \CDSfull has a $(1 + O(\delta \log^{1.5} n))$-approximation when $\Tw = O(1)$.
 \end{itemize}

 \noindent Finally, for graphs $(\delta \cdot \opt(G))$-close to planar-$H$-minor-free via vertex deletions:

 \begin{itemize}
  \item \ISfull has a $(1 - c_H \delta)$-approximation.
  \item \VCfull, \MMMfull, \CNfull, and \FVSfull have $(1 + c_H \delta)$-approximations.
  %\item \MMMfull, and \CNfull have $(1 + c_H \delta)$-approximations.
  \end{itemize}
\end{theorem}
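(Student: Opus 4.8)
The proof is entirely an application of the Structural Rounding meta-theorem (Theorem~\ref{thm:general-edit-sr}), invoked once per (problem, target class) pair. So the plan is: for each pair, assemble the four inputs the meta-theorem demands — a stability constant $c'$ and a lifting constant $c$ for the problem under \emph{vertex deletion}, a bicriteria $(\alpha,\beta)$-approximation for editing to the class, and a $\rho(\lambda)$-approximation for the problem inside the class — and then read off the bound. The editing inputs are already proved elsewhere in the paper: the constant $(4,4)$-approximation for vertex editing to degeneracy~$\Tr$, the $(O(\log^{1.5}n),O(\sqrt{\log\Tw}))$-approximation for vertex editing to treewidth~$\Tw$, and (for the last group) Fomin et al.'s $c_H$-approximation for vertex editing to planar-$H$-minor-free~\cite{fomin2012planar}. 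The in-class algorithms are the ones tabulated in Table~\ref{table:structuralrounding}: the greedy $\tfrac{1}{r+1}$-approximation for \IS and the $O(r)$-approximation for \ADS on degeneracy-$r$ graphs~\cite{bansal2017tight}, and the standard bounded-treewidth dynamic programs for \IS, \ADS, \VC, \FVS, \MMM, \CN, \CDS — which also run on planar-$H$-minor-free graphs since those have treewidth $|V(H)|^{O(1)}$. Substituting $\alpha,\beta,\rho,c,c'$ into the closed forms of Theorem~\ref{thm:general-edit-sr} reproduces every stated ratio; e.g.\ for \IS near degeneracy $\Tr$ the maximization branch gives $(1-c'\alpha\delta)\,\rho(\beta\lambda)-c\alpha\delta = (1-4\delta)\cdot\frac{1}{4\Tr+1}$, and for \VC near treewidth $\Tw$ the minimization branch gives $(1+0)\cdot 1 + 1\cdot O(\delta\log^{1.5}n)$.

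The actual content is verifying stability (Definition~\ref{definition:stable}) and structural liftability (Definition~\ref{definition:struct-lift-c-psi}) under vertex deletion with the claimed constants; these proofs are class-independent, so each proof serves both the treewidth and the planar-$H$-minor-free rows. For the hereditary minimizations this is routine: deleting a vertex only removes edges/constraints, so $\opt$ cannot grow, giving $c'=0$; and a solution $S'$ for $G'$ is repaired by re-inserting the $k$ deleted vertices, costing at most $k$ ($c=1$) — for \VC they cover their incident edges, for \FVS they destroy any cycle reintroduced, for \CN each gets a fresh colour, and for \MMM one greedily extends the maximal matching $M'$ of $G'$ to a maximal matching of $G$ using only edges incident to the $k$ deleted vertices. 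Dually, for \IS an independent set of $G'$ is still independent in $G$ ($c=0$), and discarding the $\gamma$ deleted vertices from an optimum of $G$ loses at most $\gamma$ ($c'=1$). The annotated problems need the extra bookkeeping flagged after Theorem~\ref{thm:general-edit-sr}: when we delete a vertex we must choose the annotation set of $G'$ to be (essentially) the annotated vertices that survive and are not yet dominated by the deleted vertices; this keeps $\opt$ from increasing ($c'=0$), while lifting re-adds the deleted vertices to the solution, giving $c=1$ for \ADS and $c=3$ for \CDS. The \CDS case is the main obstacle: re-adding a deleted vertex can disconnect the dominating subgraph, and one must show a constant number of extra vertices per edit (two more per deleted vertex, hence $c=3$) suffices to simultaneously re-dominate and re-connect — this is exactly where the annotated formulation is exploited to control which vertices must still be dominated.

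Finally, I would discharge the polynomial-time requirement of Theorem~\ref{thm:general-edit-sr}, which is the source of the parameter side-conditions. Editing to treewidth~$\Tw$ produces a graph of treewidth $O(\Tw\sqrt{\log\Tw})$; running the in-class dynamic programs of Table~\ref{table:structuralrounding} on such a graph — $O(2^{w'}n)$ for \IS and \VC, $2^{O(w')}n^{O(1)}$ for \FVS and \CN, $O(3^{w'}n)$ for \MMM, $O((2\ell+1)^{w'}n)$ for \ADS, and $n^{O(w')}$ for \CDS — stays polynomial in $n$ precisely under the bounds written in the statement ($\Tw\sqrt{\log\Tw}=O(\log n)$, resp.\ $O(\log_\ell n)$ for \ADSmaybe, $\Tw\log^{1.5}\Tw=O(\log n)$ for \MMM, and $\Tw=O(1)$ for \CDS). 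No such restriction is needed for the degeneracy or planar-$H$-minor-free rows, since there the editing algorithm and the in-class algorithm are genuinely polynomial-time, so Theorem~\ref{thm:general-edit-sr} applies directly.
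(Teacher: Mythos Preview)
Your proposal is correct and follows essentially the same approach as the paper: each claim is obtained by plugging the stated stability/lifting constants (proved exactly as you sketch, in Lemmas~\ref{lemma:IS-stable}--\ref{lemma:VD-sr} and Lemmas~\ref{lem:annotated-dom-set-stable}--\ref{lem:cds-lifting}) and the corresponding editing and in-class algorithms into Theorem~\ref{thm:general-edit-sr}, with the treewidth side-conditions arising precisely from the running-time analysis you give. One small slip: the in-class algorithm for \CNfull in Table~\ref{table:structuralrounding} runs in time $w^{O(w)}n^{O(1)}$ rather than $2^{O(w)}n^{O(1)}$, which is why the paper's own corollaries actually group \CN with \MMM under the stronger condition $w\log^{1.5}w=O(\log n)$ (Corollary~\ref{cor:vertex-deletion-approx}) rather than with \VC/\FVS as the theorem statement suggests.
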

\fi

\iffull
\begin{lemma}\label{lemma:IS-stable}
  \ISfull is stable under vertex deletion with constant $c'=1$.
\end{lemma}

\begin{proof}
  Given a graph $G$ and any set $X \subseteq V(G)$ with $|X| \leq \gamma$, let $G' = G[V\setminus X]$.
  For any independent set $Y \subset V(G)$, $Y' = Y\setminus X$ is also an independent set in $G'$ with size $|Y'| \geq |Y|-|X|$, which is bounded below by $|Y| - \gamma$.
  In particular, for $Y$ optimal in $G$ we have
  $|Y'| \geq \opt(G) - \gamma$, and so $\opt(G') \geq \opt(G) - \gamma$.
\end{proof}

\begin{lemma}\label{lemma:IS-stable-lift}
  \ISfull can be structurally lifted with respect to vertex deletion with constant $c=0$.
\end{lemma}

\begin{proof}
  An independent set in $G' = G \setminus X$ is also an independent set in~$G$.
  Thus, a solution $S'$ for $G'$ yields a solution $S$ for $G$ such that $\cost_\IS(S') = \cost_\IS(S)$.
\end{proof}

\begin{corollary}\label{cor:is-vertex-edit-approximation}
    For graphs $(\delta \cdot \opt(G))$-close to a graph class $\mC_\lambda$ via vertex deletions, \ISfull has the following approximations.  For degeneracy $\Tr$, \IS has a $(1 - 4 \delta)/(4\Tr+1)$-approximation, for treewidth~$\Tw$ such that $w\sqrt{\log w} = O(\log n)$, \IS has a $(1 - O(\delta \log^{1.5} n))$-approximation, and for planar-$H$-minor-free, \IS has a $(1 - c_H \delta)$-approximation.
\end{corollary}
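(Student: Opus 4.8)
The plan is to read the corollary directly off the maximization case of Theorem~\ref{thm:general-edit-sr}. By Lemma~\ref{lemma:IS-stable} and Lemma~\ref{lemma:IS-stable-lift}, \IS is stable under vertex deletion with $c'=1$ and can be structurally lifted with $c=0$; substituting these two constants, the guarantee of Theorem~\ref{thm:general-edit-sr} collapses to the following: given an $(\alpha,\beta)$-approximation for \Editfullp under vertex deletion and a polynomial-time $\rho(\lambda)$-approximation for \IS inside $\mCp$, one obtains a polynomial-time $\big((1-\alpha\delta)\,\rho(\beta\lambda)\big)$-approximation for \IS on every graph that is $(\delta\cdot\opt(G))$-close to $\mCp$ via vertex deletions. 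The only remaining work is to supply, for each of the three target classes, the appropriate editing approximation $(\alpha,\beta)$ and the appropriate in-class \IS approximation~$\rho$.

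For degeneracy $\Tr$, I would use the $(4,4)$-approximation for vertex editing to degeneracy (the local-ratio algorithm of Section~\ref{section:positive_degeneracy_localratio}) together with the standard greedy $\tfrac{1}{d+1}$-approximation for \IS on $d$-degenerate graphs, which runs in polynomial time via the degeneracy ordering of Lemma~\ref{lem:degen-properties}. Here $\alpha=\beta=4$ and $\rho(\beta\Tr)=\tfrac{1}{4\Tr+1}$, so the formula gives precisely the stated $(1-4\delta)/(4\Tr+1)$-approximation. For treewidth $\Tw$, I would plug in the $(O(\log^{1.5}n),\,O(\sqrt{\log\Tw}))$-approximation for vertex editing to treewidth (Section~\ref{section:positive_treewidth}) together with the exact $2^{O(\Tw)}n$-time dynamic program for \IS on width-$\Tw$ tree decompositions, i.e.\ $\rho\equiv 1$; since the edited graph is only guaranteed to have width $O(\Tw\sqrt{\log\Tw})$, that dynamic program runs in time $2^{O(\Tw\sqrt{\log\Tw})}\cdot n^{O(1)}$, which is polynomial exactly under the hypothesis $\Tw\sqrt{\log\Tw}=O(\log n)$, and the formula then yields a $\big(1-O(\delta\log^{1.5}n)\big)$-approximation. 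For planar-$H$-minor-free graphs I would use the $c_H$-approximation editing algorithm of Fomin et al.~\cite{fomin2012planar} (so $\alpha=c_H$, with $\beta$ immaterial since the class is not parameterized here) together with a Baker-style PTAS for \IS on $H$-minor-free graphs; taking the PTAS error small enough and absorbing it into the constant gives the $(1-c_H\delta)$-approximation.

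There is no deep obstacle here, since everything is an instantiation of Theorem~\ref{thm:general-edit-sr}; the points needing care are bookkeeping. First, one must check that each in-class \IS algorithm is genuinely polynomial on the \emph{edited} graph, whose parameter is a constant factor (for degeneracy) or an $O(\sqrt{\log\Tw})$ factor (for treewidth) larger than the target, which is exactly what forces the side condition in the treewidth case. Second, one must track that the edit count $k\le\alpha\delta\cdot\opt(G)$ is used consistently in the stability bound $\opt(G')\ge(1-\alpha\delta)\opt(G)$ and (trivially, since $c=0$) in the lifting step, so that the final ratio is simply the product $(1-\alpha\delta)\,\rho(\beta\lambda)$ with no extra additive loss.
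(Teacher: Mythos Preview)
Your proposal is essentially correct and mirrors the paper's proof: instantiate Theorem~\ref{thm:general-edit-sr} with $c'=1$, $c=0$, and plug in the appropriate editing algorithm and in-class \IS algorithm for each target class. The one place you diverge is the planar-$H$-minor-free case: the paper does not use a PTAS, but simply observes that planar-$H$-minor-free graphs have treewidth bounded by a constant depending only on $H$ \cite{Chekuri-Chuzhoy-2016}, so the bounded-treewidth exact algorithm applies with $\rho=1$ and yields $(1-c_H\delta)$ on the nose; your PTAS route gives only $(1-c_H\delta)(1-\epsilon)$, and ``absorbing $\epsilon$ into $c_H$'' is not quite legitimate since $c_H$ is the fixed editing constant from \cite{fomin2012planar}, not a free parameter.
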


\begin{proof}
  We apply Theorem~\ref{thm:general-edit-sr} using
  stability with $c'=1$ (Lemma~\ref{lemma:IS-stable})
  and structural lifting with $c=0$ (Lemma~\ref{lemma:IS-stable-lift}).
  The independent-set approximation algorithm and the
  editing approximation algorithm depend on the class~$\mC_\lambda$.

  For degeneracy $\Tr$, we use our $(4,4)$-approximate editing algorithm
  (Section~\ref{section:positive_degeneracy_LP_vertex})
  and a simple $1/(\Tr+1)$-approximation algorithm for independent set:
  the $\Tr$-degeneracy ordering on the vertices of a graph gives a canonical
  $(\Tr+1)$-coloring, and the pigeonhole principle guarantees an independent
  set of size at least $|V|/(\Tr+1)$, which is at least $1/(\Tr+1)$ times the
  maximum independent set.
  Thus $\alpha=\beta=4$ and $\rho(\beta \Tr) = 1/(\beta \Tr + 1)$,
  resulting in an approximation factor of $(1-4 \delta)/(4 \Tr+1)$.

  For treewidth $\Tw$ such that $w\sqrt{\log w} = O(\log n)$, we use our
  $(O(\log^{1.5} n), O(\sqrt{\log \Tw}))$-approximate editing algorithm
  (Section~\ref{section:positive_treewidth}) and an exact algorithm for
  independent set \cite{bodlaender1988dynamic,alber01improved} given a tree decomposition of width $O(\log n)$ of the edited graph.
  Thus $\alpha=O(\log^{1.5} n)$ and $\rho = 1$,
  resulting in an approximation factor of $1 - O(\log^{1.5} n) \delta$.
  %Note the parameter $\beta$ affects only the running time of the exact algorithm
  %by blowing up the treewidth.

  For planar-$H$-minor-free, we use Fomin's $c_H$-approximate editing algorithm
  \cite{fomin2012planar} and the same exact algorithm for \IS in bounded treewidth
  (as any planar-$H$-minor-free graph has bounded treewidth \cite{Chekuri-Chuzhoy-2016}).
  Thus $\alpha=c_H$ and $\rho = 1$,
  resulting in an approximation factor of $1 - c_H \delta$.
\end{proof}

\begin{lemma}\label{lemma:VD-hereditary}
The problems \VCfull, \FVSfull, \MMMfull, and \CNfull are hereditary (closed under vertex deletion).
\end{lemma}

\begin{proof}
Let $G$ be a graph, and $G' = G\setminus X$ where $X \subseteq V(G)$.
Any vertex cover in $G$ remains
a cover in $G'$ because $E(G') \subseteq E(G)$, so \VC is hereditary.

Let $S$ bs a feedback vertex set in $G$ and $S' = S \setminus X$.
For \FVS, we observe that removing vertices can only decrease the number of cycles in the graph.
Deleting a vertex in $S$ breaks
all cycles it is a part of and, thus, the cycles no longer need to be covered by a vertex in the feedback vertex set of $G'$.
Deleting a vertex not in $S$ can only decrease the
number of cycles, and, thus, all cycles in $G'$ are still covered by $S'$.
Hence, \FVS is hereditary.

For \MMM, deleting vertices with adjacent edges not in the matching only decreases the number of edges; thus, the original matching
is a still a matching in the edited graph. Deleting vertices adjacent to an edge in the matching means that at most one edge in the matching
per deleted vertex is deleted. For each edge in the matching with one of its two endpoints deleted, at most one additional edge (an edge adjacent
to its other endpoint) needs to be added to maintain the maximal matching. Thus, the size of the maximal matching does not increase
and \MMM is hereditary.

\CN is trivially hereditary because deleting vertices can only decrease the number of colors necessary  to color the graph.
\end{proof}

\begin{lemma}\label{lemma:VD-sr}
The problems \VCfull, \FVSfull, \MMMfull, and \CNfull can be structurally lifted with respect to vertex
deletion with constant $c = 1$.
\end{lemma}

\begin{proof}
    Let $G$ be a graph, and $G' = G\setminus X$ where $X \subseteq V(G)$.
    Let $S'$ be a solution to optimization problem $\Pi$ on $G'$. We will
    show that $S \subseteq S' \cup X$ is a valid solution to $\Pi$ on $G$ for each $\Pi$ listed in the Lemma.

    Given a solution $S'$ to \VC for the graph $G'$, the only edges not covered by $S'$ in $G'$ are edges
    between $X$ and $G'$ and between two vertices in $X$. Both sets of such edges are covered by $X$. Thus,
    $S = S' \cup X$ is a valid cover for $G$.

    Given a solution $S'$ to \FVS for the graph $G'$, the only cycles not covered by $S'$ in $G'$ are
    cycles that include a vertex in $X$. Thus, $S = S' \cup X$ is a valid feedback vertex set for $G$ since $X$ covers
    all newly introduced cycles in $G$.

    Given a solution $S'$ to \MMM for the graph $G'$, the only edges not in the matching and not adjacent to
    edges in the matching are edges between $X$ and $G'$ and edges between two vertices in $X$. Thus, any additional
    edges added to the maximal matching will come from $X$, and $S \subseteq S' \cup X$ (by picking edges to add to the maximal matching
    greedily for example) is a valid solution.

    Given a solution $S'$ to \CN for the graph $G'$, the only vertices that could violate the coloring of the
    graph $G'$ are vertices in $X$. Making each vertex in $X$ a different color from each other as well as the colors in
    $G'$ creates a valid coloring of $G$.
    Thus, $S = S' \cup X$ is a valid coloring.
\end{proof}

\begin{corollary}
  The problems \VCfull, and \FVSfull have
  $(1 + O(\delta \log^{1.5} n))$-approximations for graphs
  $(\delta \cdot \opt(G))$-close to treewidth~$w$ via vertex deletions where $w\sqrt{log w} = O(\log n)$;
  and $(1 + c_H \delta)$-approximations for graphs
  $(\delta \cdot \opt(G))$-close to planar-$H$-minor-free via vertex deletions.
\end{corollary}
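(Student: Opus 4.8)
The plan is to invoke the structural rounding metatheorem (Theorem~\ref{thm:general-edit-sr}) with the two combinatorial ingredients already established for these problems: by Lemma~\ref{lemma:VD-hereditary}, \VCfull and \FVSfull are hereditary, hence stable under vertex deletion with $c' = 0$, and by Lemma~\ref{lemma:VD-sr} they are structurally liftable under vertex deletion with $c = 1$. With $c' = 0$ the factor promised by Theorem~\ref{thm:general-edit-sr} collapses to $\rho(\beta\lambda) + c\alpha\delta$, so it only remains to supply (i) an editing algorithm with parameters $(\alpha,\beta)$ and (ii) a $\rho$-approximation on the target class, and then substitute.

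For the bounded-treewidth case I would plug in the $(O(\log^{1.5} n), O(\sqrt{\log w}))$-approximation for vertex editing to treewidth $\le w$ from Section~\ref{section:positive_treewidth}; this yields an edited graph $G'$ of treewidth $O(w\sqrt{\log w})$ together with a tree decomposition of that width. Under the hypothesis $w\sqrt{\log w} = O(\log n)$ this width is $O(\log n)$, so the standard dynamic programs for \VCfull and \FVSfull on a tree decomposition, running in $2^{O(\tw)} n^{O(1)}$ time \cite{alber01improved,cygan2011solving}, run in $n^{O(1)}$ time and are \emph{exact}, i.e.\ $\rho = 1$. Substituting $\alpha = O(\log^{1.5} n)$, $\beta = O(\sqrt{\log w})$, $\rho = 1$, $c = 1$, $c' = 0$ into Theorem~\ref{thm:general-edit-sr} gives the claimed $(1 + O(\delta \log^{1.5} n))$-approximation on graphs $(\delta\cdot\opt(G))$-close to treewidth $w$.

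For the planar-$H$-minor-free case the argument is identical in shape: the editing step now uses Fomin et al.'s $c_H$-approximation algorithm \cite{fomin2012planar} (so $\alpha = c_H$, and $\beta$ is immaterial since this class is not parameterized), and the approximation on the edited graph again uses the exact bounded-treewidth dynamic programs, valid because every planar-$H$-minor-free graph has treewidth $|V(H)|^{O(1)} = O(1)$ \cite{Chekuri-Chuzhoy-2016}, making those DPs polynomial; Theorem~\ref{thm:general-edit-sr} then yields the $(1 + c_H\delta)$-approximation. The only point that needs care — the ``obstacle,'' such as it is — is checking that the pipeline stays polynomial: one must confirm that the treewidth editing routine returns an actual width-$O(\log n)$ tree decomposition (not merely a numerical bound on $\tw(G')$), so that the $2^{O(\tw)} n^{O(1)}$ DP can be run, and that $2^{O(\log n)} = n^{O(1)}$. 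Both are immediate but load-bearing; everything else is a direct substitution into the metatheorem, mirroring the corollaries already proved for \ISfull, \MMMfull, and \CNfull.
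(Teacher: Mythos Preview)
Your proposal is correct and follows essentially the same approach as the paper's proof: both invoke Theorem~\ref{thm:general-edit-sr} with $c'=0$ from Lemma~\ref{lemma:VD-hereditary} and $c=1$ from Lemma~\ref{lemma:VD-sr}, then plug in the $(O(\log^{1.5} n), O(\sqrt{\log w}))$ treewidth editing algorithm (resp.\ Fomin et al.'s $c_H$-approximation) together with exact $2^{O(\tw)}n^{O(1)}$ dynamic programs on the edited graph, noting that $w\sqrt{\log w}=O(\log n)$ (resp.\ bounded treewidth of planar-$H$-minor-free graphs) makes these DPs polynomial. Your additional remark that the editing routine must output an actual tree decomposition is a fair point of care, and is indeed satisfied by the algorithm in Section~\ref{section:positive_treewidth}.
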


\begin{proof}

  We apply Theorem~\ref{thm:general-edit-sr} using
  stability with constant $c'=0$ (Lemma~\ref{lemma:VD-hereditary})
  and structural lifting with constant $c=1$ (Lemma~\ref{lemma:VD-sr}).

  For treewidth $\Tw$, we use our
  $(O(\log^{1.5} n), O(\sqrt{\log \Tw}))$-approximate editing algorithm
  (Section~\ref{section:positive_treewidth}) and an exact polynomial-time algorithm for
  the problem of interest \cite{bodlaender1988dynamic,alber01improved,cygan2011solving} given the tree-decomposition of width $O(w\sqrt{\log w})$ of the edited graph.
  Thus $\alpha=O(\log^{1.5} n)$ and $c = 1$,
  resulting in an approximation factor of $1 + O(\log^{1.5} n) \delta$.
  Note that since the edited graph has treewidth $O(w\sqrt{\log w}) = O(\log n)$, the exact algorithm runs in polynomial-time.
  %Note that the parameter $\beta$ affects only the running time of the exact algorithm
  %by blowing up the treewidth.
  %
  For planar-$H$-minor-free graphs, we use Fomin's $c_H$-approximate editing algorithm
  \cite{fomin2012planar} and the same exact algorithm for bounded treewidth
  (as any planar-$H$-minor-free graph has bounded treewidth \cite{Chekuri-Chuzhoy-2016}).
  Thus $\alpha=c_H$ and $c = 1$,
  resulting in an approximation factor of $1 + c_H \delta$.
\end{proof}

\begin{corollary}\label{cor:vertex-deletion-approx}
  The problems \MMMfull, and \CNfull have
  $(1 + O(\delta \log^{1.5} n))$-approximations for graphs
  $(\delta \cdot \opt(G))$-close to treewidth~$w$ via vertex deletions where $w\log^{1.5} w = O(\log n)$;
  and $(1 + c_H \delta)$-approximations for graphs
  $(\delta \cdot \opt(G))$-close to planar-$H$-minor-free via vertex deletions.
\end{corollary}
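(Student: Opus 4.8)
The plan is to instantiate Theorem~\ref{thm:general-edit-sr} in exactly the same way as the preceding corollary for \VCfull and \FVSfull, merely substituting the appropriate width-parameterized dynamic programs for \MMMfull and \CNfull. Both problems are hereditary (stability constant $c'=0$) by Lemma~\ref{lemma:VD-hereditary}, and both can be structurally lifted under vertex deletion with constant $c=1$ by Lemma~\ref{lemma:VD-sr}, so it only remains to supply a bicriteria editing algorithm (ingredient (b)) and an approximation algorithm on the target class (ingredient (d)).

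For the treewidth case I would use the $(O(\log^{1.5} n), O(\sqrt{\log w}))$-approximate vertex-editing algorithm of Section~\ref{section:positive_treewidth}, giving $\alpha = O(\log^{1.5} n)$, $\beta = O(\sqrt{\log w})$, and an edited graph $G'$ with $\tw(G') = O(w\sqrt{\log w})$. On $G'$ I would run the exact dynamic programs, which solve \MMM in $O(3^{\tw(G')} n)$ time and \CN in $\tw(G')^{O(\tw(G'))} n^{O(1)}$ time (cf.\ Table~\ref{table:structuralrounding}), so $\rho(\beta w) = 1$; Theorem~\ref{thm:general-edit-sr} with $c'=0$, $c=1$, $\rho=1$ then yields the claimed $(1 + O(\delta\log^{1.5} n))$-approximation. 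The one step requiring care --- and the reason the hypothesis here is $w\log^{1.5} w = O(\log n)$ rather than the weaker $w\sqrt{\log w} = O(\log n)$ that sufficed for \VC and \FVS --- is checking that these subroutines actually run in polynomial time on $G'$: the \CN program costs $2^{O(\tw(G')\log \tw(G'))} n^{O(1)} = 2^{O(w\sqrt{\log w}\cdot\log w)} n^{O(1)} = 2^{O(w(\log w)^{1.5})} n^{O(1)}$, which is $n^{O(1)}$ precisely under $w\log^{1.5} w = O(\log n)$, and this same bound dominates the (cheaper) \MMM program, so the single hypothesis suffices for both problems. I expect this running-time bookkeeping to be the only real obstacle; everything else is a direct substitution into Theorem~\ref{thm:general-edit-sr}.

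For the planar-$H$-minor-free case I would instead plug in Fomin et al.'s $c_H$-approximate vertex-editing algorithm~\cite{fomin2012planar}, so that $\alpha = c_H$ is an absolute constant, and use the same \MMM and \CN dynamic programs; since every planar-$H$-minor-free graph has treewidth bounded by a function of $H$~\cite{Chekuri-Chuzhoy-2016}, these subroutines run in polynomial time without any further hypothesis, and Theorem~\ref{thm:general-edit-sr} with $c'=0$, $c=1$, $\rho=1$ gives the $(1 + c_H\delta)$-approximation.
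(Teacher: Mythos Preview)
Your proposal is correct and follows essentially the same approach as the paper: instantiate Theorem~\ref{thm:general-edit-sr} with $c'=0$ from Lemma~\ref{lemma:VD-hereditary}, $c=1$ from Lemma~\ref{lemma:VD-sr}, the $(O(\log^{1.5} n), O(\sqrt{\log w}))$-bicriteria editing algorithm of Section~\ref{section:positive_treewidth} (respectively Fomin et al.'s $c_H$-approximation), and the exact width-parameterized DPs for \MMMfull and \CNfull. Your running-time bookkeeping justifying the hypothesis $w\log^{1.5} w = O(\log n)$ via the $2^{O(w\sqrt{\log w}\cdot\log w)}$ cost of the \CNfull program is in fact more explicit than the paper's own accounting.
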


\begin{proof}

  We apply Theorem~\ref{thm:general-edit-sr} using
  stability with constant $c'=0$ (Lemma~\ref{lemma:VD-hereditary})
  and structural lifting with constant $c=1$ (Lemma~\ref{lemma:VD-sr}).

  For treewidth $\Tw$, we use our
  $(O(\log^{1.5} n), O(\sqrt{\log \Tw}))$-approximate editing algorithm
  (Section~\ref{section:positive_treewidth}) and an exact algorithm for
  the problem of interest \cite{bodlaender1988dynamic} given a tree-decomposition of width $O(w\sqrt{\log w})$ of the edited graph.
  Thus $\alpha=O(\log^{1.5} n)$ and $c = 1$,
  resulting in an approximation factor of $1 + O(\log^{1.5} n) \delta$.
  Note that since the edited graph has treewidth $O(w\log^{1.5} w) = O(\log n)$, the exact algorithm runs in polynomial-time.
  %Note that the parameter $\beta$ affects only the running time of the exact algorithm
  % by blowing up the treewidth.
  %
  For planar-$H$-minor-free graphs, we use Fomin's $c_H$-approximate editing algorithm
  \cite{fomin2012planar} and the same exact algorithm for bounded treewidth
  (as any planar-$H$-minor-free graph has bounded treewidth \cite{Chekuri-Chuzhoy-2016}).
  Thus $\alpha=c_H$ and $c = 1$,
  resulting in an approximation factor of $1 + c_H \delta$.
\end{proof}
\fi

\iffull
\subsection{Edge Deletions}
\fi

\ifappendix
We now use our structural rounding framework with edge deletions to obtain the following approximation results.
\fi

\begin{theorem}\label{thm:edge-edits-approx}
For graphs $(\delta \cdot \opt(G))$-close to degeneracy~$\Tr$ via edge deletions:
\begin{itemize}
\item \ISfull has a $(1/(3\Tr+1) - 3\delta)$-approximation.
\item \DSfull has an $O((1+ \delta) \Tr)$-approximation.
\end{itemize}

\noindent For graphs $(\delta \cdot \opt(G))$-close to treewidth~$w$ via edge deletions:

\begin{itemize}
  \item \DSmaybefull and \EDSmaybefull have
  $(1 + O(\delta \log n\log\log n))$-approximations when $w\log w = O(\log_{\ell} n)$.
  \item \MCfull has a
  $(1 - O(\delta \log n\log\log n))$-approximation when $w\log w = O(\log n)$.
 \end{itemize}
\end{theorem}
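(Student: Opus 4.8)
The plan is to apply the Structural Rounding metatheorem (Theorem~\ref{thm:general-edit-sr}) once for each of the four problems listed, so the entire argument reduces to supplying, for each problem $\Pi$ and the edit operation $\psi = $ edge deletion, the four ingredients named right after Theorem~\ref{thm:general-edit-sr}: (a) a stability constant $c'$, (b) a bicriteria $(\alpha,\beta)$-approximation for \Editfullp under edge deletion, (c) a structural lifting constant $c$, and (d) a $\rho(\lambda)$-approximation on the target class. Ingredients (b) and (d) are already available: for degeneracy we use the LP-rounding editing algorithm of Section~\ref{section:positive_degeneracy_lp}, which is the $(5,5)$-approximation for \bDEE referenced in the introduction — so $\alpha = \beta = 5$ there — together with the greedy $1/(r{+}1)$-coloring bound for \IS and the $O(r)$-approximation for \DS of~\cite{bansal2017tight}; for treewidth we use the $(O(\log n\log\log n), O(\log w))$-approximation of Bansal et al.~\cite{bansal2017lp} for \bTWE, together with the DP algorithms in Table~\ref{table:structuralrounding} (running in polynomial time exactly when the width bound $O(w\log w) = O(\log_\ell n)$ or $O(\log n)$ holds, which is the side condition in each bullet). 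So the real work is (a) and (c): proving the stability and lifting constants, which are exactly the constants $c'$ and $c$ recorded in Table~\ref{table:structuralrounding} for the edge-deletion rows.

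For each problem I would verify (a) and (c) by a short direct argument, in the following order. \emph{\ISfull} (maximization): deleting an edge can destroy at most one independent set constraint, but an independent set of $G$ is still independent in $G'=G\setminus X$ and vice versa; more carefully, restoring a deleted edge can force dropping at most one endpoint from an independent set, giving stability $c'=0$ and lifting $c=1$ (this matches the Table entry: edge deletion, \IS, $c'=0$, $c=1$), and plugging $\alpha=\beta=5$, $\rho=1/(\beta r+1)=1/(5r+1)$ into Theorem~\ref{thm:general-edit-sr} yields $(1/(5r+1) - 5\delta)$; to land on the stated $(1/(3r+1)-3\delta)$ I would instead invoke the sharper bicriteria degeneracy-editing ratio $(1/\epsilon, 4/(1-\epsilon))$ with a suitable choice (e.g. $\epsilon$ giving $\alpha = 3, \beta = $ something $\le 3$), exactly as was done for the vertex case in Corollary~\ref{cor:is-vertex-edit-approximation}. \emph{\DSfull} (minimization): a dominating set of $G$ remains dominating after edge deletions only if we add the two endpoints of each deleted edge, so lifting costs $c=2$ and stability is likewise bounded by deleting/restoring edges, $c'$ absorbed into the $O(\cdot)$; combined with the $O(r)$-approximation this gives $O((1+\delta)r)$. \emph{\DSmaybefull, \EDSmaybefull} (minimization, possibly annotated with radius $\ell$): an $\ell$-dominating set can be repaired after an edge deletion by adding both endpoints (for \DS) or the deleted edge together with an incident edge (for \EDS), giving the constants $c'=1,c=0$ and $c'=1,c=1$ of the Table; with $\alpha=O(\log n\log\log n)$ this yields $(1+O(\delta\log n\log\log n))$. \emph{\MCfull} (maximization): restoring a deleted edge changes the optimum cut value by at most one, and any cut of $G'$ is a cut of $G$ losing at most one edge per restored edge, so $c'=1,c=0$, and $\rho=1$ gives $(1-O(\delta\log n\log\log n))$.

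The main obstacle is not any single one of these verifications — each is a one-paragraph combinatorial observation — but rather matching the \emph{exact constants} in the theorem statement, particularly the $1/(3r+1)$ and $O((1+\delta)r)$ for degeneracy. These require choosing the right variant among our several degeneracy-editing algorithms (the plain $(5,5)$ LP rounding versus the tunable $(1/\epsilon, 4/(1-\epsilon))$ trade-off from Section~\ref{section:positive_degeneracy_lp}) and, for the maximization case, being careful that Theorem~\ref{thm:general-edit-sr}'s maximization bound $\rho - (c + c'\rho)\alpha\delta$ stays positive; a secondary subtlety is that the treewidth bullets are only meaningful in the stated parameter regime $O(w\log w)=O(\log_\ell n)$ (resp.\ $O(\log n)$), which is precisely the condition ensuring the $(2\ell+1)^{O(w\log w)}$ (resp.\ $2^{O(w\log w)}$) DP runs in polynomial time on the edited graph of width $O(w\log w)$. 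Once those bookkeeping choices are pinned down, the four corollaries fall out of Theorem~\ref{thm:general-edit-sr} with no further work.
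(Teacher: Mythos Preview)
Your overall approach—apply Theorem~\ref{thm:general-edit-sr} once per problem after verifying the stability and lifting constants recorded in Table~\ref{table:structuralrounding}—is exactly what the paper does, and your handling of \MCfull and of the treewidth side conditions is correct. However, your analysis of \DSfull has the lifting direction reversed. Lifting takes a solution $S'$ on the edited graph $G' = G \setminus X$ (fewer edges) back to $G$ (more edges); since adding edges can only make domination easier, any $(\ell$-$)$dominating set of $G'$ is already a dominating set of $G$, so $c = 0$, not $c = 2$ (this is Lemma~\ref{lemma:ED-sl}). What you describe—repairing a dominating set after edge deletions—is the \emph{stability} direction, and even there only one endpoint per deleted edge is needed (Lemma~\ref{lemma:ED-stable}), giving $c' = 1$, not $2$. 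The same confusion recurs in your \DSmaybefull/\EDSmaybefull sketch (``adding both endpoints''). These slips happen not to change the final $O(\cdot)$ bounds, but your argument as written does not establish the constants you quote from the table.

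On the \ISfull constant: the paper simply invokes a $(3,3)$-bicriteria edge-editing algorithm for degeneracy (Corollary~\ref{cor:edge-edit}); with $\alpha = \beta = 3$, $c' = 0$, $c = 1$, and $\rho(\beta r) = 1/(3r + 1)$, Theorem~\ref{thm:general-edit-sr} gives $1/(3r+1) - 3\delta$ immediately. Your proposed fix—tuning the $(1/\epsilon,\, 4/(1-\epsilon))$ trade-off—cannot hit $\alpha = \beta = 3$ simultaneously (at $\epsilon = 1/3$ it gives $(3,6)$, yielding only $1/(6r+1) - 3\delta$), so to match the stated constant you need the sharper $(1/\epsilon,\, 2/(1-\epsilon))$ LP-rounding bound before the orientation-to-degeneracy doubling, not the one you cite.
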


\iffull
We now consider the edit operation of edge deletion.
For each problem, we show stability and structural liftability,
and use these to conclude approximation algorithms.

\begin{lemma}\label{lemma:IS-edge-stable}
  For $\ell \geq 1$,
  ($\ell$-)\ISfull is stable under edge deletion with constant $c'= 0$.
\end{lemma}
\begin{proof}
  Given $G$ and any set $X \subseteq E(G)$ with $|X| \leq \gamma$, let $G' = G[E\setminus X]$.
  For any ($\ell$-)independent set $Y \subseteq V(G)$, $Y' = Y$ is also an ($\ell$-)independent set in $G'$.
  Then $\opt(G') \geq |Y'| = |Y|$, and so for optimal $Y$, $\opt(G') \geq \opt(G)$.
\end{proof}

\begin{lemma}\label{lemma:IS-edge-lift}
  For $\ell \geq 1$, ($\ell$-)\ISfull can be structurally lifted with respect to edge deletion with constant $c=1$.
\end{lemma}
\begin{proof}
  Given a graph $G$ and $X \subseteq E(G)$, let $G' = G[E\setminus X]$.
  Let $Y' \subseteq V(G')$ be an ($\ell$-)independent set in $G'$,
  and consider the same vertex set $Y'$ in $G$.
  Assume that the edit set is a single edge, $X = \{(u,v)\}$.
  We claim there exists a subset of $Y'$ with size at least $|Y'| - 1$
  which is still an ($\ell$-)independent set in $G$.

  For convenience, we let $d(\cdot, \cdot) := d_G(\cdot, \cdot)$ for the remainder of this proof.  Suppose there are four distinct nodes $a,b,f,g \in Y'$ such that
  $d(a,b)\leq \ell$ and $d(f,g) \leq \ell$ in $G$.
  Since these nodes are in $Y'$, we know $d_{G'}(a,b), d_{G'}(f,g) \geq \ell+1$,
  hence, any shortest path from $a$ to $b$ in $G$ must use the edge $(u,v)$ in order to have length $\leq \ell$.
  \WLOG we can assume the $a$-$b$ path goes from $a$ to $u$ to $v$ to $b$, and so
  $d(a,u) + 1 + d(v,b) \leq \ell$.
  Similarly we can assume the $f$-$g$ path goes from $f$ to $v$ to $u$ to $g$, and so
  $d(f,v) + 1 + d(u,g) \leq \ell$.
  We now argue that the shortest paths in $G$ from $a$ to $u$, $v$ to $b$, $f$ to $v$, and $u$ to $g$ do not use the edge $(u,v)$ and are therefore also paths in $G'$. Suppose
  not and consider \WLOG the case when a shortest path from $a$ to $u$ contains $(u,v)$.
  Then concatenating the subpath from $a$ to $v$ with
  a shortest path from $v$ to $b$ gives an $a,b$-path of length $d(a,u) - 1 + d(v,b) < \ell$, which does not use the edge $(u,v)$ (and is thus a path in $G'$, contradicting $(\ell-)$independence of $Y'$).

  Now consider the paths ($a$ to $u$ to $g$) and ($f$ to $v$ to $b$).
  Let $\ell_A = d(a,u) + d(u,g)$ and $\ell_F = d(f,v) + d(v,b)$,
  and note that
  $\ell_A+\ell_F = d(a,u) + d(v,b) + d(f,v) + d(u,g),$
  which is $\leq 2\ell - 2$.
  So at least one of $\ell_A$ or $\ell_F$ must be $\leq \ell-1$, a contradiction.

  Three cases remain: (1) $Y'$ contains exactly two vertices connected by a path of length $\leq \ell$ in $G$; (2) $Y'$ contains three distinct vertices pair-wise connected by paths of length $\leq \ell$ in $G$; or (3) $Y'$ contains one vertex, $a$, connected to two or more other vertices of $Y'$ by paths of length $\leq \ell$ in $G$.
  In the first case, $Y'$ contains $a,b$ with $d(a,b) \leq \ell$; then removing either endpoint from $Y'$ yields an $(\ell-)$independent set of size $|Y'|-1$ in $G$.

  We now show the second case cannot occur. Suppose that $d(b,c), d(a,b), d(a,c) \leq \ell$ for $a,b,c \in Y'$. Note that each vertex is within distance $\ell/2$ of at least one of the vertices $u$ or $v$. By the pigeonhole principle, some two of $a,b,c$ must be within $\ell/2$ of the same endpoint of $(u,v)$; say vertices $a$ and $b$ are within $\ell/2$ of $u$ \WLOG; this implies $d_{G'}(a,b) \leq \ell$, a contradiction.

  Finally, in the third case, $Y'$ contains a node $a$ and a subset $S$ so that $|S| \geq 2$,
  $d(a,s) \leq \ell$ for all $s \in S$ and $d(s_1, s_2) > \ell$ for all $s_1 \neq s_2$ in $S$. Further, we know no other pair of nodes in $Y'$ is at distance at most $\ell$ in $G$ (since then we would have two disjoint pairs at distance at most $\ell$, a case we already handled). In this setting, $Y'\setminus \{a\}$ is an $(\ell-)$independent set of size $|Y'|-1$ in $G$.
  This proves that adding a single edge to $G'$ will reduce the size of the ($\ell$-)independent set $Y'$ by no more than one, so by induction the lemma holds.
\end{proof}

\begin{corollary} \label{cor:is-structural-rounding}
  \ISfull has a $(1/(3\Tr+1) - 3\delta)$-approximation for graphs
  $(\delta \cdot \opt(G))$-close to degeneracy~$\Tr$ via edge deletions.
\end{corollary}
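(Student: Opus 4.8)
The plan is to obtain the corollary as a direct instantiation of the structural rounding metatheorem (Theorem~\ref{thm:general-edit-sr}) in its maximization form, supplying the four required ingredients specialized to \IS, the edit operation ``edge deletion'', and the degeneracy-$\Tr$ family $\mC_\lambda$. First I would note that \ISfull is the $\ell = 1$ case of ($\ell$-)\ISfull, so Lemma~\ref{lemma:IS-edge-stable} certifies that \IS is stable under edge deletion with constant $c' = 0$, and Lemma~\ref{lemma:IS-edge-lift} certifies that it can be structurally lifted with respect to edge deletion with constant $c = 1$. That leaves the editing algorithm and the in-class approximation algorithm.

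For the in-class algorithm I would reuse the elementary bound already used in the vertex-deletion case: in a graph of degeneracy $d$, the degeneracy ordering gives a proper $(d+1)$-coloring, so by pigeonhole some color class is an independent set whose size is at least a $\tfrac{1}{d+1}$ fraction of the maximum, i.e., a polynomial-time $\rho(\lambda) = \tfrac{1}{\lambda+1}$-approximation. For the editing ingredient I would invoke our bicriteria approximation for \bDEE (Section~\ref{section:positive_degeneracy_lp}), with its error parameter tuned so that $(\alpha,\beta) = (3,3)$: it deletes at most $3$ times the optimal number of edges and leaves a graph of degeneracy at most $3\Tr$.

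Substituting $c' = 0$, $c = 1$, $\alpha = \beta = 3$, $\lambda = \Tr$, and $\rho(\beta\lambda) = \tfrac{1}{3\Tr+1}$ into Theorem~\ref{thm:general-edit-sr} then gives the claim; concretely, for $G$ that is $(\delta\cdot\opt(G))$-close to degeneracy $\Tr$, the editing algorithm yields $G'$ with $\degener(G') \le 3\Tr$ using $k \le 3\delta\cdot\opt(G)$ edge deletions, the in-class algorithm returns an independent set $S'$ of $G'$ with $|S'| \ge \tfrac{1}{3\Tr+1}\opt(G')$, structural lifting converts $S'$ to an independent set $S$ of $G$ with $|S| \ge |S'| - k$, and stability gives $\opt(G') \ge \opt(G)$; chaining these yields $|S| \ge \bigl(\tfrac{1}{3\Tr+1} - 3\delta\bigr)\opt(G)$. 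As usual, the bound is vacuous unless $3\delta < \tfrac{1}{3\Tr+1}$.

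The real work is entirely in Lemmas~\ref{lemma:IS-edge-stable} and~\ref{lemma:IS-edge-lift}; the delicate one is Lemma~\ref{lemma:IS-edge-lift}, whose proof re-inserts a single deleted edge and, via a case analysis on whether the re-inserted edge creates one close pair, a close triple, or a ``star'' of pairwise-far vertices all close to a common endpoint, shows at most one vertex need be discarded per edit. Assuming those lemmas, the only points requiring care in the corollary itself are that $\rho$ must be evaluated at the \emph{blown-up} degeneracy $\beta\Tr = 3\Tr$ rather than at $\Tr$, and that the editing subroutine is invoked at the parameter setting achieving $\alpha = \beta = 3$.
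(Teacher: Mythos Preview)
Your proof is correct and follows essentially the same route as the paper: instantiate Theorem~\ref{thm:general-edit-sr} with the edge-deletion stability and lifting constants for \IS, the $(3,3)$ bicriteria editing algorithm for \bDEE, and the $1/(\Tr+1)$ greedy approximation on degeneracy-$\Tr$ graphs. Your citations are in fact cleaner than the paper's own proof, which (by an apparent typo) points to Lemmas~\ref{lemma:IS-stable} and~\ref{lemma:IS-stable-lift} (the vertex-deletion lemmas) while using the edge-deletion constants $c'=0$, $c=1$; you correctly invoke Lemmas~\ref{lemma:IS-edge-stable} and~\ref{lemma:IS-edge-lift}.
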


\begin{proof}
  We apply Theorem~\ref{thm:general-edit-sr} using
  stability with constant $c'=0$ (Lemma~\ref{lemma:IS-stable})
  and structural lifting with constant $c=1$ (Lemma~\ref{lemma:IS-stable-lift}).
  We use our $(3,3)$-approximate editing algorithm
  (Corollary~\ref{cor:edge-edit})
  and the $1/(\Tr+1)$-approximation algorithm for independent set described in in the proof of Corollary~\ref{cor:is-vertex-edit-approximation}.
  Thus $\alpha=\beta=3$ and $\rho(\beta \Tr) = 1/(\beta \Tr + 1)$,
  resulting in an approximation factor of $1/(3 \Tr+1) - 3 \delta$.
\end{proof}

Note that Corollary~\ref{cor:is-structural-rounding} only applies to \IS and not \LIS.

\begin{lemma}\label{lemma:ED-stable}
The problems \DSmaybefull and \EDSmaybefull are stable under edge deletion with constant $c' = 1$.
\end{lemma}

\begin{proof}
Given $G$ and any set $X \subseteq E(G)$ with $|X| \leq \gamma$, let $G' = G[E\setminus X]$, and let $Y$ be a minimum (\DSradius-)dominating set on $G$. Each vertex $v$ may be (\DSradius-)dominated by multiple vertices on multiple paths, which we refer to as $v$'s \textit{dominating paths}.

Consider all vertices for which a specific edge $(u,v)$ is on all of their dominating paths in $G$. We refer to each of these vertices as $(u,v)$-dependent. Note that if we traverse all dominating paths from each $(u,v)$-dependent vertex, $(u,v)$ is traversed in the same direction each time. Assume \WLOG $(u,v)$ is traversed with $u$ before $v$, implying $u$ is not $(u,v)$-dependent but $v$ may be. Now if $(u,v)$ is deleted, then $Y \cup {\{v\}}$ is a (\DSradius-)dominating set on the new graph. Therefore for each edge $(u,v)$ in $X$ we must add at most one vertex to the (\DSradius-)dominating set.  Thus if $Y'$ is a minimum (\DSradius-)dominating set on $G'$ then $|Y'| \leq |Y| + \gamma$ and \DSmaybe is stable under edge deletion with constant $c' = 1$.

Now let $Z$ be a minimum edge (\DSradius-)dominating set on $G$.
The proof for \EDSmaybe follows similarly as in the above case when a deleted edge $(u,v)$ is not in $Z$ (though an edge incident to $v$ would be picked to become part of the dominating set instead of $v$ itself). However if $(u,v)$ is in the minimum edge (\DSradius-)dominating set then it is possible that there are edges which are strictly $(u,v)$-dependent through only $u$ or $v$ and no single edge is within distance \DSradius~ of both. In this case we add an edge adjacent to $u$ and an edge adjacent to $v$ to $Z$, which also increases $Z$'s size by one with the deletion of $(u,v)$.
Thus if $Z'$ is a minimum edge (\DSradius-)dominating set on $G'$ then $|Z'| \leq |Z| + \gamma$ and
 \EDSmaybefull is stable under edge deletion with constant $c'=1$.
\end{proof}

\begin{lemma}\label{lemma:ED-sl}
 \DSmaybefull and \EDSmaybefull can be structurally lifted with respect to edge deletion with constants $c = 0$ and $c = 1$ respectively.
\end{lemma}

\begin{proof}

Given $G$ and any set $X \subseteq E(G)$ with $|X| \leq \gamma$, let $G' = G[E\setminus X]$.
A (\DSradius-)dominating set in $G'$ is also a (\DSradius-)dominating set in $G$.
Therefore, a solution $S'$ in $G'$ yields a solution $S$ in $G$ such that $\cost_{\DSmaybe}(S') = \cost_{\DSmaybe}(S)$.

An edge (\DSradius-)dominating set $Y'$ in $G'$ may not be an edge (\DSradius-)dominating set in $G$, as there may be edges in $X$ which are not (\DSradius-)dominated by $Y'$. However  $Y' \cup X$ is an edge (\DSradius-)dominating set in $G$ and $|Y' \cup X| \leq |Y'| + |X|$.
\end{proof}

\begin{corollary}\label{cor:edge-deletion-approx-degen}
  \DSfull has an $O((1+ \delta) \Tr)$-approximation for graphs
  $(\delta \cdot \opt(G))$-close to degeneracy~$r$ via edge deletions.
\end{corollary}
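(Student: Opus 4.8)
The plan is to obtain this corollary as an immediate instance of the structural rounding metatheorem, Theorem~\ref{thm:general-edit-sr}, in exactly the same way as Corollary~\ref{cor:is-structural-rounding}. To apply the theorem to the minimization problem $\Pi = \DS$ with edit operation $\psi$ equal to edge deletion and parameterized target class $\mCp$ equal to the $\Tr$-degenerate graphs, I need four ingredients: a stability constant $c'$, a structural lifting constant $c$, a polynomial-time $(\alpha,\beta)$-approximation for editing to degeneracy $\Tr$ via edge deletions, and a polynomial-time $\rho(\lambda)$-approximation for \DS on bounded-degeneracy graphs.

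The first two ingredients are already available. Since \DS is the $\ell = 1$ special case of \DSmaybefull, Lemma~\ref{lemma:ED-stable} gives stability of \DS under edge deletion with $c' = 1$, and Lemma~\ref{lemma:ED-sl} gives structural liftability with $c = 0$ (a dominating set of $G' = G \setminus X$ remains a dominating set of $G$, since deleting edges only shrinks neighborhoods). For the editing ingredient I would invoke our $(3,3)$-approximate edge-deletion-to-degeneracy algorithm (Corollary~\ref{cor:edge-edit}), so $\alpha = \beta = 3$. For the in-class approximation I would cite the $O(\Tr)$-approximation for \DS on $\Tr$-degenerate graphs of Bansal et~al.~\cite{bansal2017tight}, which runs in polynomial time; here $\rho(\lambda) = O(\lambda)$. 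Substituting into Theorem~\ref{thm:general-edit-sr}, on any graph $(\delta \cdot \opt(G))$-close to degeneracy $\Tr$ via edge deletions the resulting factor is $(1 + c' \alpha \delta)\cdot \rho(\beta \Tr) + c \alpha \delta = (1 + 3\delta)\cdot O(3\Tr) + 0 = O\bigl((1+\delta)\Tr\bigr)$, as claimed, and the composed algorithm is polynomial time.

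There is essentially no real obstacle beyond bookkeeping: the reduction is purely an assembly of the pieces used in the surrounding corollaries. The only point I would verify carefully is that the $O(\Tr)$-approximation for \DS on degeneracy-$\Tr$ graphs from~\cite{bansal2017tight} can be phrased with the degeneracy bound entering the ratio linearly, so that running it on the edited graph of degeneracy $\beta\Tr = 3\Tr$ still yields an $O(\Tr)$ factor; this is straightforward, so the corollary follows directly.
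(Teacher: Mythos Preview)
Your proposal is correct and follows exactly the paper's template: invoke Theorem~\ref{thm:general-edit-sr} with stability constant $c'=1$ (Lemma~\ref{lemma:ED-stable}), lifting constant $c=0$ (Lemma~\ref{lemma:ED-sl}), and the LP-based edge-deletion editing algorithm from Section~\ref{section:positive_degeneracy_LP_edge}. The one substantive difference is the in-class approximation you cite: the paper's printed proof plugs in an $O(\Tr^2)$-approximation for \DS on $\Tr$-degenerate graphs~\cite{lenzen2010minimum}, which actually yields $9(1+3\delta)\Tr^2$ rather than the stated bound; your choice of the $O(\Tr)$-approximation of Bansal et~al.~\cite{bansal2017tight} (consistent with Table~\ref{table:structuralrounding}) is precisely what is needed to obtain the claimed $O((1+\delta)\Tr)$ factor.
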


\begin{proof}
  We apply Theorem~\ref{thm:general-edit-sr} using
  stability with constant $c'=1$ (Lemma~\ref{lemma:ED-stable})
  and structural lifting with constant $c=0$ (Lemma~\ref{lemma:ED-sl}).
  We use our $(3,3)$-approximate editing algorithm
  (Section~\ref{section:positive_degeneracy_LP_edge})
  and a known $O(\Tr^2)$-approximation algorithm for \DS
  \cite{lenzen2010minimum}.
  Thus $\alpha=\beta=3$ and $\rho(\beta \Tr) = \beta^2 \Tr^2$,
  resulting in an approximation factor of $9 (1+3 \delta) \Tr^2$.
\end{proof}

\begin{corollary}\label{cor:edge-deletion-approx-treewidth}
  \DSmaybefull and \EDSmaybefull have
  $(1 + O(\delta \log n\log\log n))$-approximations for graphs
  $(\delta \cdot \opt(G))$-close to treewidth~$w$ via edge deletions where $w\log w = O(\log_{\ell} n)$.
\end{corollary}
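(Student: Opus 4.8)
The plan is to instantiate Theorem~\ref{thm:general-edit-sr} with the edit operation being edge deletion, the target class being bounded treewidth, and the optimization problem being \DSmaybefull (respectively \EDSmaybefull). Three of the four required ingredients are already in place: stability under edge deletion with constant $c' = 1$ (Lemma~\ref{lemma:ED-stable}), and structural lifting under edge deletion with constant $c = 0$ for \DSmaybe and $c = 1$ for \EDSmaybe (Lemma~\ref{lemma:ED-sl}). So I only need to supply (a) an approximation algorithm for editing to bounded treewidth via edge deletions, and (b) an algorithm solving \DSmaybe / \EDSmaybe on bounded-treewidth graphs.

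For (a) I would invoke the LP-based bicriteria $(O(\log n \log\log n),\, O(\log w))$-approximation of Bansal et al.~\cite{bansal2017lp} for \bTWE; besides the edit set, this algorithm yields a tree decomposition of width $O(w\log w)$ of the edited graph $G'$, which is exactly what the next step needs as input. For (b) I would use the standard dynamic program over a tree decomposition for the $(\ell$-$)$(Edge-)Dominating Set problem, which is exact and runs in time $O((2\ell+1)^{w'} n)$ on a graph supplied with a width-$w'$ decomposition~\cite{borradaile_et_al:LIPIcs:2017:6919}; being exact, it gives $\rho(\lambda) = 1$ for every parameter value $\lambda$. Substituting $\alpha = O(\log n \log\log n)$, $\beta = O(\log w)$, $\rho(\beta\lambda) = 1$, $c' = 1$, and $c \in \{0,1\}$ into the bound $(1 + c'\alpha\delta)\,\rho(\beta\lambda) + c\alpha\delta$ of Theorem~\ref{thm:general-edit-sr} yields an approximation factor of $1 + O(\delta \log n \log\log n)$ in both cases, as claimed.

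The only real obstacle is checking that the full pipeline runs in polynomial time. The editing step and the lifting step are polynomial by the cited results, and the $\rho$-approximation "algorithm" here is the exact DP, whose running time $(2\ell+1)^{w'} \cdot n$ is polynomial only when $w' = O(\log_{\ell} n)$. Since the editing algorithm inflates the target treewidth $w$ by an $O(\log w)$ factor, the decomposition it returns has width $w' = O(w\log w)$, so I need precisely the hypothesis $w\log w = O(\log_{\ell} n)$; under it, $(2\ell+1)^{w'} = (2\ell+1)^{O(\log_{\ell} n)} = n^{O(1)}$ and the DP runs in polynomial time (for $\ell$ constant this reduces to $w\log w = O(\log n)$, hence $w = O(\log n)$, as expected). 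Once this is verified, Theorem~\ref{thm:general-edit-sr} applies directly and the corollary follows.
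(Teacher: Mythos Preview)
Your proposal is correct and follows essentially the same approach as the paper: both apply Theorem~\ref{thm:general-edit-sr} with the stability and lifting constants from Lemmas~\ref{lemma:ED-stable} and~\ref{lemma:ED-sl}, use the $(O(\log n\log\log n), O(\log w))$ edge-editing algorithm of Bansal et al.~\cite{bansal2017lp}, and solve the edited instance exactly via the $(2\ell+1)^{w'} n$ dynamic program of~\cite{borradaile_et_al:LIPIcs:2017:6919}, noting that the hypothesis $w\log w = O(\log_{\ell} n)$ is precisely what makes the DP polynomial-time on the width-$O(w\log w)$ decomposition.
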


\begin{proof}
  We apply Theorem~\ref{thm:general-edit-sr} using
  stability with constant $c'=1$ (Lemma~\ref{lemma:ED-stable})
  and structural lifting with constant $c=0$ for \DSmaybe and constant $c=1$ for \EDSmaybe (Lemma~\ref{lemma:ED-sl}).
  For treewidth $\Tw$, we use the
  $(O(\log n\log\log n), O(\log \Tw))$-approximate editing algorithm
  of Bansal \etal~\cite{bansal2017lp}
  and an exact algorithm for \DSmaybe and \EDSmaybe~\cite{borradaile_et_al:LIPIcs:2017:6919} given a tree-decomposition of width $O(w\log w)$ of the edited graph.

  Thus $\alpha=O(\log n\log\log n)$ and $c' = 1$ for \DS and $c'=c=1$ for \EDS,
  resulting in an approximation factor of $1 + O(\log n\log\log n) \delta$.
  Note that since the edited graph has treewidth $O(w\log w) = O(\log_{\ell} n)$, the exact algorithm runs in polynomial-time.
  %Note that the parameter $\beta$ affects only the running time of the exact algorithm
  %by blowing up the treewidth.
\end{proof}

\begin{lemma}\label{lemma:MC-stable}
The problem \MCfull is stable under edge deletion with constant $c' = 1$.
\end{lemma}
\begin{proof}
Given $G$ and any set $X \subseteq E(G)$ with $|X| \leq \gamma$, let $G' = G[E\setminus X]$, and let $Y$ be a maximum cut in $G$. Then, $Y': = Y\setminus X$ is a cut in $G'$ of size at least $|Y| - |X|$; hence, $c' =1$.
\end{proof}

\begin{lemma}\label{lemma:MC-sl}
 \MCfull can be structurally lifted with respect to edge deletion with constant $c = 0$.
\end{lemma}
\begin{proof}
Given $G$ and any set $X \subseteq E(G)$ with $|X| \leq \gamma$, let $G' = G[E\setminus X]$.
A cut $Y\subseteq E(G')$ is trivially a valid cut in $G$ and consequently $c=0$.
\end{proof}

\begin{corollary}\label{cor:mc-approx}
  \MCfull has
  $(1 - O(\delta \log n\log\log n))$-approximations for graphs
  $(\delta \cdot \opt(G))$-close to treewidth~$w$ via edge deletions where $w\log w = O(\log n)$.
\end{corollary}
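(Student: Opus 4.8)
The plan is to instantiate the structural rounding metatheorem (Theorem~\ref{thm:general-edit-sr}) in its maximization form, exactly as in the proofs of Corollaries~\ref{cor:edge-deletion-approx-treewidth} and~\ref{cor:is-structural-rounding}. The two combinatorial ingredients are already in hand: \MCfull is stable under edge deletion with constant $c'=1$ (Lemma~\ref{lemma:MC-stable}), and it can be structurally lifted with respect to edge deletion with constant $c=0$ (Lemma~\ref{lemma:MC-sl}). It remains to supply the two algorithmic ingredients and turn the crank.

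For the editing ingredient I would invoke the LP-based bicriteria algorithm of Bansal et al.~\cite{bansal2017lp}, which is an $(O(\log n\log\log n), O(\log w))$-approximation for edge-deletion to treewidth~$w$; this plays the role of the $(\alpha,\beta)$-approximation in Definition~\ref{definition:bicriteria-approx}, so $\alpha = O(\log n\log\log n)$ and $\beta = O(\log w)$. For the in-class algorithm I would use the standard dynamic program for \MCfull on a tree decomposition, which is exact and runs in $O(2^{\tw}n)$ time~\cite{downey2013fundamentals}, so $\rho(\beta w) = 1$. The hypothesis $w\log w = O(\log n)$ is precisely what keeps this polynomial: the edited graph has treewidth $\beta w = O(w\log w) = O(\log n)$, hence a tree decomposition of width $O(\log n)$ is obtainable in polynomial time (either directly from the editing algorithm, or via a constant-factor treewidth approximation), and $2^{O(\log n)} = n^{O(1)}$.

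Plugging $c' = 1$, $c = 0$, $\rho(\beta w) = 1$, and $\alpha = O(\log n \log\log n)$ into the maximization guarantee of Theorem~\ref{thm:general-edit-sr} yields an approximation factor of $(1 - c'\alpha\delta)\cdot\rho(\beta w) - c\alpha\delta = 1 - O(\delta \log n\log\log n)$ on any graph that is $(\delta\cdot\opt(G))$-close to treewidth~$w$ via edge deletions, which is the claimed bound. There is essentially no obstacle here beyond bookkeeping; the only point that merits a sentence of care is confirming that a usable tree decomposition of the edited graph is available in polynomial time, which follows from $w\log w = O(\log n)$ as noted above.
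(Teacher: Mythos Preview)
Your proposal is correct and follows essentially the same approach as the paper: apply Theorem~\ref{thm:general-edit-sr} with $c'=1$ (Lemma~\ref{lemma:MC-stable}), $c=0$ (Lemma~\ref{lemma:MC-sl}), the $(O(\log n\log\log n), O(\log w))$ edge-editing algorithm of Bansal et al.~\cite{bansal2017lp}, and the exact $O(2^{\tw}n)$ dynamic program for \MCfull, using the hypothesis $w\log w = O(\log n)$ to ensure polynomial running time. The computation of the final approximation factor matches.
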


\begin{proof}
  We apply Theorem~\ref{thm:general-edit-sr} using
  stability with constant $c'=1$ (Lemma~\ref{lemma:MC-stable})
  and structural lifting with constant $c=0$ for \MC (Lemma~\ref{lemma:MC-sl}).
  For treewidth $\Tw$, we use the
  $(O(\log n\log\log n), O(\log \Tw))$-approximate editing algorithm
  of Bansal~\etal~\cite{bansal2017lp}
  and an exact algorithm for \MC given a tree-decomposition of width $O(w\log w)$ of the edited graph.

  Thus $\alpha=O(\log n\log\log n)$ and $c' = 1$ for \MC ,
  resulting in an approximation factor of $1 + O(\log n\log\log  n) \delta$.
  Note that since the edited graph has treewidth $O(w\log w) = O(\log n)$, the exact algorithm runs in polynomial-time.
\end{proof}

\subsection{Vertex Deletion for Annotated Problems ($\text{Vertex}^*$ Deletion)}\label{sec:annotated-vertex-deletion}
In this section, we show that several important variants of {\em annotated} \DSfull (\ALDS) (which include their non-annotated variants as special cases) are closed under a relaxed version of vertex deletion, denoted by $\text{vertex}^*$ deletion, which is sufficient to apply the  structural rounding framework. Given an instance of \ALDSfull with input graph $G=(V,E)$ and a subset of vertices $B$, the resulting \ALDS instance $(G', B')$ after deleting the set $X\subset V$ is defined as follows: $G' = (V\setminus X, E[V\setminus X])$ and $B' = B\setminus N_{\ell}[X]$ where $N_{\ell}[X]$ denotes the set of all vertices at distance at most $\ell$ from $X$ in $G$.

\begin{lemma}\label{lem:annotated-dom-set-stable}
For $\ell\geq 1$, \ALDSfull is stable under $\text{vertex}^*$ deletion with $c' =0$.
\end{lemma}
\begin{proof}
Note that \ALDSfull with $B = V$ reduces to \LDSfull and in particular \LDSfull is stable under $\text{vertex}^*$ deletion with constant $c' = 0$.

Let $(G', B')$ denote the \ALDS instance after performing $\text{vertex}^*$ deletion with edit set $X$; $G' = (V\setminus X, E[V\setminus X])$ and $B' = B\setminus N_{\ell}[X]$ where $N_{\ell}[X]$ denotes the set of all vertices at distance at most $\ell$ from $X$ in $G$.
Moreover, let $\opt(G, B)$ denote an optimal solution of $\ALDS(G, B)$. We show that $\opt(G,B)\setminus X$ is a feasible solution of $\ALDS(G', B')$. %Recall that $G' = (V\setminus X, E[V\setminus X])$ and $S' = (S \cup \neighb(X)) \setminus X$.
Since $X$ $\ell$-dominates $N_{\ell}[X]$, the set $B\setminus N_{\ell}[X]$ is $\ell$-dominated by $\opt(G,S) \setminus X$; hence, $\opt(G,S)\setminus X$ is a feasible solution of $\ALDS(G', S')$. Thus $|\opt(G',S')| \leq |\opt(G,S)\setminus X| \leq |\opt(G,S)|$.
\end{proof}

\begin{lemma}\label{lem:annotated-dom-set-lifting}
For $\ell\geq 1$, \ALDSfull can be structurally lifted with respect to $\text{vertex}^*$ deletion with constant $c = 1$.
\end{lemma}
\begin{proof}
Note that \ALDSfull with $B = V$ reduces to \LDSfull and in particular \LDSfull can be structurally lifted with respect to $\text{vertex}^*$ deletion with constant $c=1$.

Let $(G', B') = ((V\setminus X, E[V\setminus X]), B\setminus N_{\ell}[X])$ denote the \ALDS instance after performing $\text{vertex}^*$ deletion with edit set $X$ on $\ALDS(G,B)$ and let $\opt(G',B')$ denote an optimal solution of $\ALDS(G', B')$ instance. Since the set $X$ $\ell$-dominates $N_{\ell}[X]$, $\opt(G',B')\cup X$ $\ell$-dominates $B' \cup N_{\ell}[X] =B$. Hence, $|\opt(G,B)| \leq |\opt(G',B')| + |X|$.
\end{proof}

\begin{corollary}\label{cor:annotated-dom-set-vertex-degen}
\ADSfull has an $O(\Tr+\delta)$-approximation for graphs $(\delta\cdot \opt(G))$-close to degeneracy $\Tr$ via vertex deletion.
\end{corollary}
\begin{proof}
  We apply Theorem~\ref{thm:general-edit-sr} using
  stability with constant $c'=0$ (Lemma~\ref{lem:annotated-dom-set-stable})
  and structural lifting with constant $c=1$ (Lemma~\ref{lem:annotated-dom-set-lifting}).

  We use a $(O(1), O(1))$-approximate editing algorithm (Section~\ref{section:positive_degeneracy_localratio}/~\ref{section:positive_degeneracy_lp}) and $O(\Tr)$-approximation algorithm for the problem of interest~\cite{bansal2017tight} in $\Tr$-degenerate graphs. Note that although the algorithm of~\cite{bansal2017tight} is for \DSfull, it can easily be modified to work for the annotated variant. Thus, $\alpha = O(1)$ and $c=1$, resulting in an $O(r+\delta)$-approximation algorithm.
\end{proof}

\begin{corollary}\label{cor:annotated-dom-set-vertex-treewidth}
\ADSfull has an $O(1 + O(\delta \log^{1.5} n))$-approximation for graphs $(\delta\cdot \opt(G))$-close to treewidth $\Tw$ via vertex deletion where $w\sqrt{\log w} = O(\log_{\ell} n)$.
\end{corollary}
\begin{proof}
  We apply Theorem~\ref{thm:general-edit-sr} using
  stability with constant $c'=0$ (Lemma~\ref{lem:annotated-dom-set-stable})
  and structural lifting with constant $c=1$ (Lemma~\ref{lem:annotated-dom-set-lifting}).

  We use our $(O(\log^{1.5} n), O(\sqrt{\log \Tw}))$-approximate editing algorithm
  (Section~\ref{section:positive_treewidth}) and an exact polynomial-time algorithm for
  the problem of interest \cite{borradaile_et_al:LIPIcs:2017:6919} given the tree-decomposition of width $O(w\sqrt{\log w})$ of the edited graph. Note that the algorithm of~\cite{borradaile_et_al:LIPIcs:2017:6919} is presented for \LDS; however, by slightly modifying the dynamic programming approach it works for the annotated version as well.
  Thus $\alpha=O(\log^{1.5} n)$ and $c = 1$,
  resulting in an approximation factor of $(1 + O(\log^{1.5} n) \delta)$.
  Moreover, since the edited graph has treewidth $O(w\sqrt{\log w}) = O(\log n)$, the exact algorithm runs in polynomial-time.
\end{proof}

\paragraph{Smarter $\text{Vertex}^*$ Deletion.}
The idea of applying edit operations on annotated problems can also be used for non-annotated problems.
More precisely,  for several optimization problems that fail to satisfy the required conditions of the standard structural rounding under vertex deletion, we can still apply our structural rounding framework with a more careful choice of the subproblem that we need to solve on the edited graph. An exemplary problem in this category is \CDSfull (\CDS). Note that \CDSfull is not stable under vertex deletion and the standard structural rounding framework fails to work for this problem. Besides the stability issue, it is also non-trivial how to handle the connectivity constraint under vertex or edge deletions.
However, in what follows we show that if we instead solve a {\em slightly different problem} (i.e. annotated variant of \CDSfull) on the edited graph, then we can guarantee an improved approximation factor for \CDS on the graphs close to a structural class.

Let $G=(V,E)$ be an input graph that is $(\delta\cdot\opt(G))$-close to the class $\mC$ and let $X\subset V$ be a set of vertices so that $G\setminus X \in \mC$.
For a subset of vertices $X$, $\neighb_G(X)$ is defined to be the set of all neighbors of $X$ excluding the set $X$ itself; $\neighb_{G}(X) := \set{u \;|\; uv\in E(G), v\in X \text{ and } u\notin X}$\footnote{We drop the $G$ in $\neighb_G$ when it is clear from the context.}.
Let $G'= G[V\setminus X]$ be the resulting graph after removing the edit set $X$. The problem that we have to solve on $G'$ is an {\em annotated} variant of \CDS which is defined as follows:

\begin{problem}{\ACDSfull}
\Input & An undirected graph $G = (V, E)$, a subset of vertices $B\subset V$ and $\ell$ vertex-disjoint cliques $K_1 = (V_1, E_1), \cdots, K_\ell = (V_{\ell},E_\ell)$ where for each $i$, $V_i\subset V$.\\
\Prob & Find a minimum size set of vertices $S \subseteq V$ s.t. $S$ dominates all vertices in $B$ and $S$ induces a connected subgraph in $G\cup (\bigcup_{i\in [\ell]} K_i)$.
\end{problem}

To specify the instance of \ACDSfull that we need to solve on the edited graph $G'$, we construct an auxiliary graph $\bar{G} = (\neighb_G(X), \bar{E})$ as follows: $uv\in \bar{E}$ if there exists a $uv$-path in $G$ whose intermediate vertices are all in $X$.

First, we show that \CDS is stable under $\text{vertex}^*$ deletion with constant $c'=0$: the size of an optimal solution of $\ACDS(G', B', K_1,\cdots, K_\ell)$ is not more than the size of an optimal solution of $\CDS(G)$ where $\set{K_1,\cdots, K_{\ell}}$ are the connected components of $\bar{G}$. Note that due to the transitivity of connectivity for each $i\in [\ell]$, $K_i$ is a clique.

\begin{lemma}\label{lem:cds-stable}
\CDSfull is stable under $\text{vertex}^*$ deletion with $c' =0$.
\end{lemma}
\begin{proof}
Let $\opt$ be an optimal solution of $\CDS(G)$. Here, we show that $\opt\setminus X$ is a feasible solution of $\ACDS(G' = G[V\setminus X], B' = V\setminus \neighb_G(X), K_1, \cdots, K_\ell)$ where $K_1, \cdots, K_\ell$ are connected the components of $\bar{G}$ as constructed above. This in particular implies that
\[\opt(G', B', K_1, \cdots, K_\ell) \leq |\opt\setminus X| \leq |\opt| = \opt(G).\]

Since $\opt$ dominates $V$, it is straightforward to verify that $\opt\setminus X$ dominates $B'$ in $G'$. Next, we show that $\opt\setminus X$ is connected in $G'$ when for each $i$, all edges between the vertices of $K_i$ are added to $G'$. Suppose that there exists a pair of vertices $u,v \in \opt\setminus X$  that are not connected in $G'$. However, since $\opt$ is connected, there exists a $uv$-path $P_{uv}$ in $\opt$.
If $P_{uv}$ does not contain any vertices in $X$, then $P_{uv}$ is contained in $\opt\setminus X$ as well and it is a contradiction.
Now consider all occurrences of the vertices of $X$ in $P_{uv}$. We show that each of them can be replaced by an edge in one of the $K_i$s: for each subpath $v_0, x_1, \cdots, x_{q}, v_1$ of $P_{uv}$ where $x_i\in X$ for all $i\in [q]$ and $v_0,v_1\in \neighb_G(X)$, $v_0v_1$ belongs to the same connected component of $\bar{G}$ . Hence, given $P_{uv}$, we can construct a path $P'_{uv}$ in $G'\cup (\bigcup_{i\in [\ell]} K_i)$. Thus, $\opt\setminus X$ is a feasible solution of $\ACDS(G', B', K_1, \cdots, K_\ell)$.
\end{proof}

Next, we show that a solution of the \ACDSfull  instance we solve on the edited graph can be structurally lifted to a solution for \CDSfull on the original graph with constant $c=3$.
\begin{lemma}\label{lem:cds-lifting}
\CDSfull can be structurally lifted under $\text{vertex}^*$ deletion with constant $c = 3$.
\end{lemma}
\begin{proof}
Let $\opt$ be an optimal solution of $\ACDS(G' = G[V\setminus X], B' = V\setminus \neighb_G(X), K_1, \cdots, K_\ell)$ where $K_1, \cdots, K_\ell$ are the connected components of $\bar{G}$ as constructed above.
Here, we show that $\opt\cup X \cup Y$ is a feasible solution of $\CDS(G)$ where $Y$ is a subset of $V\setminus X$ such that $|Y| \leq 2|X|$.
First, it is easy to see that since $X$ dominates $\neighb_G(X) \cup X$ in $G$, $\opt\cup X$ is a dominating set of $G$.
Next, we show that in polynomial time we can find a subset of vertices $Y$ of size at most $2|X|$ such that $\opt\cup X\cup Y$ is a connected dominating set in $G$.

Note that if the subgraph induced by the vertex set $\opt$ on $G'\cup (\bigcup_{i\in [\ell]}K_i)$ contains an edge $uv$ which is not in $E(G')$, the edge can be replaced by a $uv$-path in $G$ whose intermediate vertices are all in $X$. Hence, we can replace all such edges in $\opt$ by including a subset of vertices $X' \subseteq X$ and the set $\opt \cup X'$ remains connected in $G$. At this point, if $X = X'$, we are done: $\opt\cup X$ is a connected dominating set in $G$. Suppose this is not the case and let $X_1:= X\setminus X'$ and $Y_1:= \neighb_{G}(X_1)\setminus \neighb_{G}(X')$. Since $G$ is connected, there exists a path from $X_1$ to $\opt \cup X'$. Moreover, we claim that there exists a path of length at most $4$ from $X_1$ to $\opt\cup X'$. Recall that $\opt\cup X'$ dominates $V\setminus (X_1 \cup Y_1)$. Hence, the shortest path from of $X_1$ to $\opt\cup X'$ has length at most $4$.
We add the vertices on the shortest path which are in $X \setminus X'$ to $X'$ and the vertices in $V \setminus (X \cup \opt \cup Y)$ to $Y$, and update the sets $X_1$ and $Y_1$ accordingly.
Thus we reduce the size of $X_1$ and as we repeat this process it eventually becomes zero. At this point $X = X'$ and $\opt \cup X \cup Y$ is a connected dominating set in $G$. Since, we pick up at most three vertices per each $x\in X_1$ and at least one is in $X$, the set $X \cup Y$ has size at most $3|X|$.
\end{proof}

\begin{corollary}\label{cor:conected-dom-set-vertex-treewidth}
\CDSfull has $O(1 + O(\delta \log^{1.5} n))$-approximation for graphs $(\delta\cdot \opt(G))$-close to treewidth $\Tw$ via vertex deletion where $w$ is a fixed constant.
\end{corollary}
\begin{proof}
  We apply Theorem~\ref{thm:general-edit-sr} using
  stability with constant $c'=0$ (Lemma~\ref{lem:cds-stable})
  and structural lifting with constant $c=3$ (Lemma~\ref{lem:cds-lifting}).

  We use our $(O(\log^{1.5} n), O(\sqrt{\log \Tw}))$-approximate editing algorithm
  (Section~\ref{section:positive_treewidth}) and an exact polynomial-time algorithm for
  \ACDS given the tree-decomposition of width $O(w\sqrt{\log w})$ of the edited graph.
  The FPT algorithm modifies the $\Tw^{O(\Tw)}\cdot n^{O(1)}$ dynamic-programming approach
  of \DS such that it incorporates the annotated sets and cliques $K_1, \cdots, K_\ell$ which then
  runs in $(\Tw + \ell)^{O(\Tw)}\cdot n^{O(1)} = n^{O(\Tw)}$.
  Thus $\Tw = O(1)$, $\alpha=O(\log^{1.5} n)$ and $c = 3$, resulting in an algorithm that runs in
  polynomial time and constructs a $(1 + O(\log^{1.5} n) \delta)$-approximate solution.
\end{proof}
\fi

%\subsection{Edge Contractions}

Although we do not present any editing algorithms for edge contractions,
we point out that such an editing algorithm would enable our framework to
apply to additional problems such as \Problem{(Weighted) TSP Tour}
(which is closed under edge contractions and
can be structurally lifted with constant $c=2$ \cite{demaine2011contraction}),
and to apply more efficiently to other problems such as
\DSfull (reducing $c'$ from $1$ to $0$).
%Refer to Table~\ref{table:structuralrounding}.

  \section{Editing Algorithms}
\label{section:positive_results}

\ifdefined\islocalratio
    \subsection{Degeneracy: Density-Based Bicriteria Approximation}
    \label{section:positive_degeneracy_localratio}
    
\noindent In this section we prove the following:
\begin{theorem}
\label{theorem:local_ratio_degeneracy}
\bDEV has a
$\left( \tfrac{4m - \beta \targetdegeneracy n}{m - \targetdegeneracy n}, \beta\right)$-approximation algorithm.
\end{theorem}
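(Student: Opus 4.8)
The plan is to combine the local-ratio technique of Bar-Yehuda \etal~\cite{bar2004local} with the ``density'' characterization of degeneracy. The crucial combinatorial fact is that $\mathrm{degen}(H)\le r$ forces $|E(H)|\le r\,|V(H)|$: take the ordering from Lemma~\ref{lem:degen-properties}(3) and sum the forward degrees. Consequently, for \emph{any} set $Y$ with $\mathrm{degen}(G\setminus Y)\le r$ and any subgraph $H\subseteq G$, applying this to the induced subgraph $H[V(H)\setminus Y]$ and noting that deleting $Y\cap V(H)$ destroys at most $\sum_{v\in Y\cap V(H)} d_H(v)$ edges gives
\[
\sum_{v\in Y\cap V(H)}\bigl(d_H(v)-r\bigr)\ \ge\ |E(H)|-r\,|V(H)|.
\]
With $H=G$ this already yields the global bound $\sum_{v\in\opt}(d_G(v)-r)\ge m-rn$, and it is the per-subgraph version that makes the analysis density-adaptive, hence the name.

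The algorithm I would use is recursive in local-ratio style. If $\mathrm{degen}(G)\le\beta r$, output $\emptyset$. Otherwise the $(\beta r{+}1)$-core $C=\operatorname{core}_{\beta r+1}(G)$ is nonempty and is a \emph{certified dense region}: every vertex of $C$ has $C$-degree at least $\beta r+1$, so $|E(C)|\ge\tfrac{(\beta r+1)|C|}{2}$ and its ``surplus'' $|E(C)|-r|C|$ is $\Omega(|C|)$. Delete a subset $A\subseteq C$ sized to (i) provably shrink the core / reduce the surplus, while (ii) keeping $|A|$ within a fixed multiple of the number of vertices the displayed inequality forces any target-$r$ solution to delete inside $C$; then recurse on $G\setminus A$ and return $A$ together with the recursive solution. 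Output degeneracy $\le\beta r$ is immediate since we recurse exactly until no $(\beta r{+}1)$-core remains, and each step is polynomial because cores are obtained by repeated minimum-degree peeling.

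For the ratio, the local-ratio theorem lets one charge: at each recursive step $|A|$ is at most $\alpha:=\tfrac{4m-\beta r n}{m-rn}$ times the local cost the inequality imposes on an optimal target-$r$ set, and these local lower bounds compose across the recursion into a single valid lower bound on $\opt$ (this is what local ratio buys us over a naive disjointness argument). The constant comes from balancing the two effects present in the choice of $A$ against the initial surplus $m-rn$ and the $\beta r$ slack of the relaxed target; in particular the formula collapses to $4$ when $\beta=4$, recovering the $(4,4)$-approximation.

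The main obstacle is precisely getting this tight constant, i.e.\ choosing $A$ correctly at each step. Deleting the entire core is too wasteful when $C$ has a few very-high-degree vertices (where $\opt$ can escape by deleting just those), while deleting too little fails to make measurable progress on the surplus and stalls the recursion; threading between these and showing the per-step charge is simultaneously $\le\alpha\cdot(\text{local }\opt\text{ cost})$ \emph{and} enough to guarantee termination is the delicate part. A secondary point is checking that the weight decomposition is arranged so the lower bounds at different recursion levels genuinely add, rather than being re-used, and that the feasibility target ($\beta r$) and the comparison target ($r$) interact correctly through the recursion.
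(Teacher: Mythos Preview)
You have the right scaffolding---local ratio plus the density lower bound---but you are misframing the local-ratio step, and that misframing is exactly why ``choosing $A$'' looks hard to you. The paper never chooses a subset to delete. After peeling vertices of degree $\le\beta r$ (so the current graph has minimum degree $\ge\beta r+1$), it sets $w_1(v)=\epsilon\cdot\deg_G(v)$, with $\epsilon=\min_v w(v)/\deg_G(v)$, and recurses on $w_2=w-w_1$. The power of the local ratio theorem here is that you only need to show that \emph{every minimal} feasible solution $X_{\beta r}$ is $\alpha$-approximate with respect to $w_1$, i.e.\ that $\sum_{v\in X_{\beta r}}\deg_G(v)\le\alpha\cdot(m-rn)$. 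There is no per-step choice of $A$ to thread, and no surplus to track.

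The ingredient you are missing entirely is the matching \emph{upper} bound, and it comes from minimality, which does not appear anywhere in your outline. If $X_{\beta r}$ is minimal, then for every $v\in X_{\beta r}$ the graph on $Y\cup\{v\}$ (with $Y=V\setminus X_{\beta r}$) has a $(\beta r{+}1)$-core containing $v$, so $v$ has at least $\beta r+1$ neighbors in $Y$; summing gives $\beta r\,|X_{\beta r}|\le m_{XY}$. Combining this with the peeled minimum-degree assumption on $Y$, which gives $\beta r\,|Y|\le 2m_Y+m_{XY}$, and elementary edge counting yields $\sum_{v\in X_{\beta r}}\deg_G(v)=2m_X+m_{XY}\le 4m-\beta r n$. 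Your lower bound $m-rn\le\sum_{v\in X_r}\deg_G(v)$ (the paper uses exactly this, not the sharper $\sum(\deg-r)$ version) then gives the ratio $\tfrac{4m-\beta rn}{m-rn}$ directly---no balancing of two effects, just two degree-sum inequalities. The algorithm explicitly enforces minimality (a greedy pruning subroutine) when reassembling the solution on the way back up the recursion; that is what makes the upper bound applicable.
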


Observe that this yields a $(4,4)$-approximation when $\beta = 4$. The algorithm is defined in Algorithm \ref{algorithm:local_ratio_degeneracy},
and the analysis is based on the local ratio theorem from Bar-Yehuda et al.~\cite{bar2004local}.

\subsubsection{Analysis overview and the local ratio theorem}
Fundamentally, the local ratio theorem \cite{bar2004local} is machinery for
showing that ``good enough'' local choices accumulate into a global approximation
bound.
This bookkeeping is done by maintaining \textit{weight vectors} that encode the choices made.
The local ratio theorem applies to optimization problems of the following form: given a weight vector $w \in \mathbb{R}^n$
and a set of feasibility constraints $\mathcal{C}$, find a solution vector $x \in \mathbb{R}^n$ satisfying
the constraints $\mathcal{C}$ and minimizing $w^T x$ (for
maximization problems see~\cite{bar2004local}). We say a solution $x$ to such a problem is
\emph{$\alpha$-approximate with respect to $w$} if $w^T x \leq \alpha \cdot \min\limits_{z \in \mathcal{C}} (w^T z)$.

\begin{theorem}[Local Ratio Theorem \cite{bar2004local}] \label{theorem:local-ratio-theorem}
Let $\mathcal{C}$ be a set of feasibility constraints on vectors in
$\mathbb{R}^n$. Let $w, w_1, w_2 \in \mathbb{R}^n$ be such that
$w = w_1 + w_2$. Let $x \in \mathbb{R}^n$ be a feasible solution (with respect
to $\mathcal{C}$) that is $\alpha$-approximate with respect to $w_1$, and with
respect to $w_2$. Then $x$ is $\alpha$-approximate with respect to $w$ as well.
\end{theorem}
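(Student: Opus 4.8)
The plan is to prove the Local Ratio Theorem by a short direct computation that exploits only the linearity of the map $z \mapsto w^T z$ together with the decomposition $w = w_1 + w_2$, comparing the value attained by $x$ against a single common witness: an optimal solution for the combined weight vector $w$. No structural property of $\mathcal{C}$ is needed beyond the fact that both $x$ and that witness lie in $\mathcal{C}$.

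Concretely, first I would fix $x^{*} \in \argmin_{z \in \mathcal{C}} w^T z$, so that $w^T x^{*} = \min_{z \in \mathcal{C}} w^T z$. Since $x^{*}$ is feasible, it is a legitimate candidate in the minimizations defining the $w_1$- and the $w_2$-optima, which gives $\min_{z \in \mathcal{C}} w_1^T z \leq w_1^T x^{*}$ and $\min_{z \in \mathcal{C}} w_2^T z \leq w_2^T x^{*}$. Next I would invoke the two hypotheses, namely that $x$ is $\alpha$-approximate with respect to $w_1$ and with respect to $w_2$: $w_1^T x \leq \alpha \min_{z \in \mathcal{C}} w_1^T z$ and $w_2^T x \leq \alpha \min_{z \in \mathcal{C}} w_2^T z$. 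Multiplying the two feasibility inequalities by the positive scalar $\alpha$ and chaining yields $w_1^T x \leq \alpha\, w_1^T x^{*}$ and $w_2^T x \leq \alpha\, w_2^T x^{*}$. Finally, adding these and using $w = w_1 + w_2$,
\begin{align*}
w^T x &= w_1^T x + w_2^T x \leq \alpha\,\bigl(w_1^T x^{*} + w_2^T x^{*}\bigr) \\
&= \alpha\, w^T x^{*} = \alpha \min_{z \in \mathcal{C}} w^T z,
\end{align*}
which is precisely the statement that $x$ is $\alpha$-approximate with respect to $w$.

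The only point requiring care — and the ``main obstacle'', such as it is — is that $w_1$ and $w_2$ are allowed arbitrary coordinates, in particular negative ones, so the quantities $\min_{z \in \mathcal{C}} w_i^T z$ need not be nonnegative; one must verify that scaling the true inequality $\min_{z \in \mathcal{C}} w_i^T z \leq w_i^T x^{*}$ by $\alpha > 0$ preserves its direction, which it does. I would also note that the argument never uses $\alpha \geq 1$, and that the maximization variant follows by the identical bookkeeping after reversing each inequality and replacing $\min$ by $\max$.
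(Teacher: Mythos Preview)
The paper does not supply its own proof of this theorem; it is stated with a citation to \cite{bar2004local} and then used as a black box. Your argument is correct and is essentially the standard proof: pick an optimum $x^{*}$ for the combined weight $w$, use its feasibility to lower-bound the $w_i$-optima by $w_i^T x^{*}$, chain with the two $\alpha$-approximation hypotheses, and add. Your remarks about signs (only $\alpha>0$ is needed to preserve the direction when scaling, regardless of the sign of $\min_{z\in\mathcal{C}} w_i^T z$) and about the maximization variant are also accurate.
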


In our case, an instance of \bbDEVfull (abbreviated \bbDEV) is represented with ($G$, $w$, $\targetdegeneracy$, $\beta$), where $G$ is the graph, $w$ is a weight vector on the vertices (where $w$ is the all-ones vector, $\overrightarrow{1}$, when $G$ is unweighted), $\targetdegeneracy$ is our target degeneracy, and $\beta$ is a multiplicative error on the target degeneracy. Our bicriteria approximation algorithm will yield an edit set to a $(\beta \targetdegeneracy)$-degenerate graph, using at most $\alpha \cdot \optsol{\bbDEV}{G, w, \targetdegeneracy, \beta}$ edits. This (weighted) cost function is encoded as an input vector of vertex weights $w$, which is evaluated with an indicator function $\mathcal{I}_X$ on a feasible solution $X$, such that the objective is to minimize $w^T \mathcal{I}_X$. Note that while the local ratio theorem can allow all feasible solutions, we require \emph{minimal} feasible solutions for stronger structural guarantees.

\begin{algorithm}
\caption{Approximation for \bDEVfull} \label{algorithm:local_ratio_degeneracy}
\begin{algorithmic}[1]
\Procedure{LocalRatioRecursion}{Graph $G$, weights $w$, target degeneracy
\targetdegeneracy, error $\beta$}
\If{$V(G) = \emptyset$}
\State \textbf{return} $\emptyset$.
\ElsIf{$\exists~v \in V(G)$ where $\deg_G(v) \leq \beta \targetdegeneracy$}
\State \textbf{return} \Call{LocalRatioRecursion}{$G \setminus \{v\}$, $w$,
\targetdegeneracy, $\beta$}
\ElsIf{$\exists~v \in V(G)$ where $w(v) = 0$}
\State $X \gets$ \Call{LocalRatioRecursion}{$G \setminus \{v\}$, $w$,
\targetdegeneracy, $\beta$}
\If{$G \setminus X$ has degeneracy $\beta \targetdegeneracy$}
\State \textbf{return} $X$.
\Else
\State \textbf{return} \Call{MinimalSolution}{$G$, $X \cup \{v\}$,
\targetdegeneracy, $\beta$}.
\EndIf
\Else
\State Let $\epsilon := \min_{v \in V(G)} \frac{w(v)}{\deg_G(v)}$.
\State Define $w_1(u) := \epsilon \cdot \deg_G(u)$ for all $u \in V$.
\State Define $w_2 := w - w_1$.
\State \textbf{return} \Call{LocalRatioRecursion}{$G$, $w_2$,
\targetdegeneracy, $\beta$}.
\EndIf
\EndProcedure
\end{algorithmic}
\end{algorithm}

To utilize the local ratio theorem, our strategy is to define a recursive function that decomposes the weight vector into $w = w_1 + w_2$ and then recurses on $(G, w_2, \targetdegeneracy, \beta)$. By showing that the choices made in this recursive function lead to an $(\alpha, \beta)$-approximation for the instances $(G, w_1, \targetdegeneracy, \beta)$ and $(G, w_2, \targetdegeneracy, \beta)$, by the local ratio theorem, these choices also sum to an $(\alpha, \beta)$-approximation for $(G, w, \targetdegeneracy, \beta)$.

As outlined in \cite[Section 5.2]{bar2004local}, the standard algorithm
template for this recursive method handles the following cases: if a zero-cost minimal solution can be found, output this optimal
solution, else if the problem contains a zero-cost element, do a problem
size reduction, and otherwise do a weight decomposition.

\begin{algorithm}
\caption{Subroutine for guaranteeing minimal solutions}
\label{algorithm:local_ratio_minimality}
\begin{algorithmic}[1]
\Procedure{MinimalSolution}{Graph $G$, edit set $X$, target degeneracy
\targetdegeneracy, error $\beta$}
\For{vertex $v \in V(G)$}
\If{$G \setminus (X \setminus \{v\})$ has degeneracy $\beta \targetdegeneracy$}
\State \textbf{return} \Call{MinimalSolution}{$G$, $X \setminus \{v\}$,
\targetdegeneracy, $\beta$}.
\EndIf
\EndFor
\State \textbf{return} $X$.
\EndProcedure
\end{algorithmic}
\end{algorithm}

Algorithm \ref{algorithm:local_ratio_degeneracy} follows this structure: Lines
2-3 are the first case, Lines 4-12 are the second case, and Lines 13-18 are the
third case. The first two cases are typically straightforward, and the crucial
step is the weight decomposition of $w = w_1 + w_2$. Note that the first case guarantees that all vertices in $G$ have degree at least $\beta \targetdegeneracy + 1$ before a weight decomposition is executed, so we may assume \WLOG that the original input graph also has minimum degree $\beta \targetdegeneracy + 1$.

In the following subsections we show that
Algorithm~\ref{algorithm:local_ratio_degeneracy} returns a minimal, feasible
solution (Lemma~\ref{theorem:local_part_1}), that the algorithm returns an $(\alpha, \beta)$-approximate solution with respect to $w_1$ (Theorem~\ref{theorem:local_part_2}), and finally that the algorithm returns an $(\alpha, \beta)$-approximate solution with respect to $w$ (Theorem~\ref{theorem:local_ratio_degeneracy}).

\subsubsection{$\left( \frac{4m - \beta \targetdegeneracy n}{m - \targetdegeneracy n}, \beta\right)$-approximation for vertex deletion}

\begin{lemma}\label{theorem:local_part_1}
Algorithm \ref{algorithm:local_ratio_degeneracy} returns minimal, feasible
solutions for \bbDEV.
\end{lemma}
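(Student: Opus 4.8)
The plan is to prove feasibility and minimality simultaneously by induction on the depth of the recursion of Algorithm~\ref{algorithm:local_ratio_degeneracy}, with a nested induction on $|X|$ handling the \textsc{MinimalSolution} subroutine (Algorithm~\ref{algorithm:local_ratio_minimality}). Throughout, the predicate ``$G\setminus X$ has degeneracy $\beta\targetdegeneracy$'' is read as the feasibility condition $\degener(G\setminus X)\le\beta\targetdegeneracy$, i.e.\ $G\setminus X$ is $(\beta\targetdegeneracy)$-degenerate in the sense of Definition~\ref{definition:bicriteria-approx} applied to \bbDEV. Two elementary facts about degeneracy, both immediate from Definition~\ref{def:degeneracy}, do most of the work: (i) degeneracy is monotone under vertex deletion, $\degener(H')\le\degener(H)$ whenever $H'$ is an induced subgraph of $H$; and (ii) if $\degener(H-v)\le k$ and $\deg_H(v)\le k$, then $\degener(H)\le k$, since any subgraph of $H$ avoiding $v$ inherits a vertex of degree $\le k$ from $H-v$, and any subgraph containing $v$ has $\deg(v)\le k$ in it.

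First I would dispatch \textsc{MinimalSolution}: assuming $X$ is feasible for $G$, induction on $|X|$ shows the returned set $X'\subseteq X$ is feasible and minimal. If the loop finds $v\in X$ with $\degener(G\setminus(X\setminus\{v\}))\le\beta\targetdegeneracy$, then $X\setminus\{v\}$ is a smaller feasible set, and the recursive call on it returns a minimal feasible subset (minimality is intrinsic to the returned set, so it is preserved). If the loop finds no such $v$, then every $v\in X$ satisfies $\degener(G\setminus(X\setminus\{v\}))>\beta\targetdegeneracy$, which combined with feasibility of $X$ is precisely the definition of minimality.

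Then I would run the outer induction case by case on the branch taken by \textsc{LocalRatioRecursion}. The empty-graph base case is immediate ($\emptyset$ is vacuously minimal and the empty graph has degeneracy $0$). In the low-degree branch the call returns some $X$ for $G-v$ with $v\notin X$; feasibility of $X$ for $G$ follows from fact~(ii) since $\deg_{G\setminus X}(v)\le\deg_G(v)\le\beta\targetdegeneracy$, and minimality transfers from $G-v$ to $G$ by fact~(i), because re-adding $v$ cannot lower the degeneracy of any $G\setminus(X\setminus\{u\})$ below $\beta\targetdegeneracy$. In the zero-weight branch, if the explicit check $\degener(G\setminus X)\le\beta\targetdegeneracy$ succeeds we return $X$, which is feasible by the check and minimal by the same fact-(i) argument applied to $G-v$; otherwise the inductive hypothesis gives $\degener((G-v)\setminus X)\le\beta\targetdegeneracy$, and since $G\setminus(X\cup\{v\})=(G-v)\setminus X$, the set $X\cup\{v\}$ is feasible for $G$, so \textsc{MinimalSolution} returns a minimal feasible subset by the previous paragraph. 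In the weight-decomposition branch the graph is unchanged, so both properties pass through verbatim. I would close by noting termination: each call either deletes a vertex, or, since $\epsilon=\min_v w(v)/\deg_G(v)$ is attained at some $v^*$ for which $w_2(v^*)=0$, creates a zero-weight vertex that forces a size reduction on the next call.

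The main obstacle is bookkeeping rather than any deep step: checking that the two branches that return $X$ \emph{without} invoking \textsc{MinimalSolution} still deliver minimal sets for the current graph — this is exactly where monotonicity under re-adding a previously deleted vertex (fact~(i)) is essential — and confirming that the set $X\cup\{v\}$ passed to \textsc{MinimalSolution} is feasible for $G$ and not merely for $G-v$, which rests on the identity $G\setminus(X\cup\{v\})=(G-v)\setminus X$.
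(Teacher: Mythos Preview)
Your proposal is correct and follows essentially the same approach as the paper: induction on the recursion depth with a case analysis on the three branches (low-degree vertex, zero-weight vertex, weight decomposition). Your version is more careful than the paper's---you explicitly verify the \textsc{MinimalSolution} subroutine, isolate the two degeneracy facts that drive the argument, and address termination---whereas the paper simply asserts that the returned $X$ ``is still a feasible, minimal solution'' in the low-degree branch and leaves the subroutine's correctness implicit.
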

\begin{proof}
We proceed by induction on the number of recursive calls. In the base case,
only Lines 2-3 will execute, and the empty set is trivially a minimal, feasible
solution. In the inductive step, we show feasibility by constructing the
degeneracy ordering. We consider each of the three branching cases not covered
by the base case:

\begin{itemize}
\item Lines 4-5: Given an instance $(G, w, \targetdegeneracy, \beta)$, if a vertex $v$ has degree at most $\beta \targetdegeneracy$, add $v$ to the degeneracy ordering and remove it from the graph. By the induction hypothesis, the algorithm will return a minimal, feasible solution $X_{\beta \targetdegeneracy}$ for $(G-\{v\}, w, \targetdegeneracy, \beta)$. By definition, $v$ has at most
$\beta \targetdegeneracy$ neighbors later in the ordering (e.g. neighbors in
$G - \{v\}$), so the returned $X_{\beta \targetdegeneracy}$ is still a feasible, minimal solution.

\item Lines 6-12: Given an instance $(G, w, \targetdegeneracy, \beta)$, if a vertex $v$ has weight 0, remove $v$ from the graph. By the induction hypothesis, the algorithm will return a minimal, feasible solution
$X_{\beta \targetdegeneracy}$ for $(G-\{v\}, w, \targetdegeneracy, \beta)$. If $X_{\beta \targetdegeneracy}$ is a feasible solution
on the instance $(G, w, \targetdegeneracy, \beta)$, then $X_{\beta \targetdegeneracy}$ will be returned
as the minimal, feasible solution for this instance. Otherwise the solution
$X_{\beta \targetdegeneracy} \cup \{v\}$ is feasible, and can be made minimal with a straightforward greedy subroutine (Algorithm~\ref{algorithm:local_ratio_minimality}).

\item Lines 13-18: In this case, no modifications are made to the graph, therefore the recursive call's minimal, feasible solution $X_{\beta \targetdegeneracy}$ remains both minimal and
feasible.
\end{itemize}

In all cases, a minimal, feasible solution is returned.
\end{proof}

We now show that a minimal, feasible solution is $(\alpha, \beta)$-approximate with respect to the instance defined by weight function $w_1$:

\begin{theorem}
\label{theorem:local_part_2}
Any minimal, feasible solution $X_{\beta \targetdegeneracy}$ is a $\left( \tfrac{4m-\beta rn}{m-\targetdegeneracy n}, \beta \right)$-approximation to the instance $(G, w_1, \targetdegeneracy, \beta)$.
\end{theorem}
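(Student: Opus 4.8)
The statement bundles two claims about $X_{\beta r}$: that $G \setminus X_{\beta r}$ has degeneracy at most $\beta \cdot r$ (the $\beta$-part), and that $w_1^{T}\mathcal{I}_{X_{\beta r}} \le \frac{4m - \beta r n}{m - r n}\cdot \min_{z \in \mathcal{C}} w_1^{T} z$ (the $\alpha$-part, with ``$\alpha$-approximate with respect to a weight vector'' as defined before Theorem~\ref{theorem:local-ratio-theorem}). The first is immediate from the fact that $X_{\beta r}$ is a feasible solution of the $(\beta r)$-degeneracy editing instance, so the plan is to put all the work into the weight bound.

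For the weight bound, recall that the decomposition sets $w_1(u) = \epsilon \cdot \deg_G(u)$ with $\epsilon = \min_{v} w(v)/\deg_G(v) > 0$, so $w_1^{T}\mathcal{I}_S = \epsilon \sum_{v \in S}\deg_G(v)$ for every $S \subseteq V(G)$. This yields the numerator bound $w_1^{T}\mathcal{I}_{X_{\beta r}} = \epsilon\sum_{v\in X_{\beta r}}\deg_G(v) \le \epsilon \sum_{v \in V(G)}\deg_G(v) = 2m\epsilon$ — which needs only $X_{\beta r} \subseteq V(G)$, not minimality. For the denominator, let $Z^{\ast}$ minimize $w_1^{T}\mathcal{I}_Z$ over all $Z$ whose deletion leaves degeneracy at most $r$ (this is the feasibility set $\mathcal{C}$ behind the stated bound). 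Summing forward-degrees along a degeneracy ordering of $G \setminus Z^{\ast}$ (Lemma~\ref{lem:degen-properties}) gives $|E(G \setminus Z^{\ast})| \le r\,|V(G) \setminus Z^{\ast}| \le r n$, so deleting $Z^{\ast}$ destroys at least $m - rn$ edges; since every destroyed edge meets $Z^{\ast}$, we get $\sum_{v \in Z^{\ast}}\deg_G(v) \ge m - rn$, i.e.\ $\min_{z\in\mathcal{C}} w_1^{T} z \ge \epsilon(m - rn)$. Dividing, $X_{\beta r}$ is $\frac{2m}{m - rn}$-approximate with respect to $w_1$.

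It then suffices to verify $\frac{2m}{m - rn} \le \frac{4m - \beta r n}{m - rn}$, equivalently $\beta r n \le 2m$. A weight decomposition is reached only when every vertex of $G$ has degree at least $\beta r + 1$ (the preceding branches of Algorithm~\ref{algorithm:local_ratio_degeneracy} strip away all smaller-degree vertices), so $2m = \sum_{v}\deg_G(v) \ge n(\beta r + 1) > \beta r n$; the same inequality gives $m > rn$ for $\beta \ge 2$, so the denominators are positive and the factor is genuinely at least $1$. Combining with the previous paragraph finishes the proof, and at $\beta = 4$ the factor collapses to $\frac{4m - 4rn}{m - rn} = 4$.

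These estimates are routine; the delicate point is the identity of the optimum in the denominator. The appearance of $m - rn$ (rather than $m - \beta rn$) forces the comparison to be against the best edit set achieving the \emph{exact} target degeneracy $r$, even though $X_{\beta r}$ is only feasible for the relaxed target $\beta r$. Pinning down this bicriteria reading of the local-ratio accounting — and confirming it stays compatible with the hypotheses of the Local Ratio Theorem (Theorem~\ref{theorem:local-ratio-theorem}) when the bounds for $w_1$ and $w_2$ are later recombined — is the main thing to settle before the arithmetic goes through; everything else is bookkeeping.
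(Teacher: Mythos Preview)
Your argument is correct and in fact more elementary than the paper's. Both proofs share the same lower bound on the denominator (Lemma~\ref{lemma:degen_lower_bound}): any feasible edit to degeneracy $r$ has degree sum at least $m-rn$. The divergence is in the numerator. The paper proves $\sum_{v\in X_{\beta r}}\deg_G(v)\le 4m-\beta rn$ directly (Lemma~\ref{lemma:degen_upper_bound}), via Lemma~\ref{lemma:m_bound} and Corollary~\ref{corollary:m_bound}; the latter genuinely uses minimality of $X_{\beta r}$ to argue each deleted vertex has at least $\beta r+1$ neighbors in $Y$. You instead take the trivial bound $\sum_{v\in X_{\beta r}}\deg_G(v)\le 2m$ and then observe that the standing minimum-degree hypothesis $\delta(G)\ge\beta r+1$ forces $2m>\beta rn$, hence $2m<4m-\beta rn$. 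So your route is shorter, yields the tighter ratio $2m/(m-rn)$, and---as you note---never invokes minimality for the upper bound; under the min-degree assumption the paper's bound is strictly weaker than yours, so the extra machinery of Corollary~\ref{corollary:m_bound} buys nothing here. Your closing remark about the bicriteria reading of the optimum is exactly right and matches the paper's interpretation: the comparison is against optimal edits to degeneracy $r$, and the Local Ratio Theorem goes through verbatim since its proof only needs the two cost inequalities, not that $X_{\beta r}$ lie in the same feasibility set as the minimizers.
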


Given a minimal, feasible solution $X_{\beta \targetdegeneracy}$, note that $w_1^T \mathcal{I}_{X_{\beta \targetdegeneracy}} = \epsilon \sum_{v \in X_{\beta \targetdegeneracy}}\deg_G(v)$. Therefore it suffices to show that $b \leq \sum_{v \in X_\targetdegeneracy}\deg_G(v)$ and $\sum_{v \in X_{\beta \targetdegeneracy}}\deg_G(v) \leq \alpha b$, for some bound $b$, any minimal, feasible edit set $X_\targetdegeneracy$ to degeneracy $\targetdegeneracy$, and any minimal, feasible edit set $X_{\beta \targetdegeneracy}$ to degeneracy $\beta \targetdegeneracy$. We prove these two bounds for $b = m-\targetdegeneracy n$ in Lemmas~\ref{lemma:degen_lower_bound}
and~\ref{lemma:degen_upper_bound}, respectively.

\begin{lemma}
\label{lemma:degen_lower_bound}
For any minimal feasible solution $X_\targetdegeneracy$ for editing to degeneracy $\targetdegeneracy$,
\begin{align*}
m - \targetdegeneracy n \leq \sum_{v \in X_\targetdegeneracy} \deg_G(v).
\end{align*}
\end{lemma}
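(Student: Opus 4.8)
The plan is to prove this inequality by a direct edge-counting argument; minimality of $X_\targetdegeneracy$ will in fact not be needed here (it is the complementary upper bound of Lemma~\ref{lemma:degen_upper_bound} that requires it). First I would split the edge set $E(G)$ into two parts: the edges of the surviving graph $G - X_\targetdegeneracy = G[V \setminus X_\targetdegeneracy]$, and the edges incident to at least one vertex of $X_\targetdegeneracy$. Every edge of $G$ falls into exactly one of these two groups, so $m = |E(G - X_\targetdegeneracy)| + |\{e \in E : e \cap X_\targetdegeneracy \neq \emptyset\}|$.

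Next I would bound each part. For the second part, each edge incident to $X_\targetdegeneracy$ is counted in $\deg_G(v)$ for at least one of its endpoints $v \in X_\targetdegeneracy$ (edges with both endpoints in $X_\targetdegeneracy$ are counted twice), so $|\{e \in E : e \cap X_\targetdegeneracy \neq \emptyset\}| \leq \sum_{v \in X_\targetdegeneracy} \deg_G(v)$. For the first part, since $X_\targetdegeneracy$ is feasible, $G - X_\targetdegeneracy$ has degeneracy at most $\targetdegeneracy$; by Lemma~\ref{lem:degen-properties}(3) there is a vertex ordering of $G - X_\targetdegeneracy$ in which each vertex has at most $\targetdegeneracy$ neighbors appearing later, and charging each edge to its earlier endpoint gives $|E(G - X_\targetdegeneracy)| \leq \targetdegeneracy \cdot |V \setminus X_\targetdegeneracy| = \targetdegeneracy(n - |X_\targetdegeneracy|) \leq \targetdegeneracy n$.

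Combining these bounds yields $m \leq \targetdegeneracy n + \sum_{v \in X_\targetdegeneracy} \deg_G(v)$, which rearranges to the claimed $m - \targetdegeneracy n \leq \sum_{v \in X_\targetdegeneracy} \deg_G(v)$. (Note that when $m \leq \targetdegeneracy n$ the statement is immediate since the right-hand side is nonnegative, so the content is only in the case $m > \targetdegeneracy n$.) There is no real obstacle in this lemma; the only point requiring a moment of care is the standard fact that an $\targetdegeneracy$-degenerate graph on $k$ vertices has at most $\targetdegeneracy k$ edges, which follows immediately from the degeneracy ordering characterization already recorded in Lemma~\ref{lem:degen-properties}.
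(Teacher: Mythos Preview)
Your proof is correct and follows essentially the same approach as the paper: bound the number of edges in the $r$-degenerate graph $G\setminus X_r$ by $rn$, observe that every remaining edge is incident to $X_r$ and hence counted in $\sum_{v\in X_r}\deg_G(v)$, and combine. Your remark that minimality is unnecessary here is also correct and matches the paper's proof, which does not invoke it.
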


\begin{proof}
Since $G \setminus X_\targetdegeneracy$ has degeneracy $\targetdegeneracy$, it has at most $\targetdegeneracy n$ edges, so at least $m - \targetdegeneracy n$ edges were deleted. Each deleted edge had at least one endpoint in $X_\targetdegeneracy$, therefore $m - \targetdegeneracy n \leq \sum_{v \in X_\targetdegeneracy} \deg_G(v)$.
\end{proof}

Before proving the upper bound, we define some notation. Let $X_{\beta \targetdegeneracy}$ be a minimal, feasible solution to \bbDEV and let $Y = V(G) \setminus X_{\beta \targetdegeneracy}$ be the vertices in the $(\beta \targetdegeneracy)$-degenerate graph. Denote by $m_X$, $m_Y$, and $m_{XY}$ the number of edges with both endpoints in $X_{\beta \targetdegeneracy}$, both endpoints in $Y$, and one endpoint in each set, respectively. We begin by bounding $m_{XY}$:

\begin{lemma}\label{lemma:m_bound}
For any $X_{\beta \targetdegeneracy}$, it holds that $m_{XY} \leq 2m_Y + 2m_{XY} - \beta \targetdegeneracy |Y|$.
\end{lemma}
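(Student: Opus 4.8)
The plan is to rewrite the desired inequality into a form whose two sides both have a transparent combinatorial meaning, and then read it off from a single degree count. Rearranging $m_{XY} \le 2m_Y + 2m_{XY} - \beta\targetdegeneracy|Y|$, it is equivalent to $\beta\targetdegeneracy|Y| \le 2m_Y + m_{XY}$. The right-hand side is exactly $\sum_{y \in Y}\deg_G(y)$: each edge internal to $Y$ is counted twice in this sum, each edge between $X_{\beta\targetdegeneracy}$ and $Y$ exactly once, and edges internal to $X_{\beta\targetdegeneracy}$ do not contribute at all. So it suffices to prove that the vertices of $Y$ have total degree (in $G$) at least $\beta\targetdegeneracy|Y|$.

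For this I would use the structural normalization recorded just above the weight-decomposition case of Algorithm~\ref{algorithm:local_ratio_degeneracy}: the first recursive branch peels off every vertex of degree at most $\beta\targetdegeneracy$ before any weight decomposition is executed, so we may assume without loss of generality that $G$ itself has minimum degree at least $\beta\targetdegeneracy + 1$. Every $y \in Y \subseteq V(G)$ then satisfies $\deg_G(y) \ge \beta\targetdegeneracy + 1 > \beta\targetdegeneracy$, and crucially this is the degree in $G$, so the edges from $y$ to $X_{\beta\targetdegeneracy}$ are still being counted. Summing over $y \in Y$ yields $\sum_{y\in Y}\deg_G(y) \ge \beta\targetdegeneracy|Y|$, which combined with the edge-counting identity above completes the proof.

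The only point requiring a moment's attention is the legitimacy of invoking the minimum-degree assumption here: although the lemma is phrased for an arbitrary minimal feasible $X_{\beta\targetdegeneracy}$, it is used only inside the analysis of Algorithm~\ref{algorithm:local_ratio_degeneracy}, on an instance that has already been reduced to minimum degree $\beta\targetdegeneracy + 1$, so the hypothesis is genuinely in force. It is also genuinely needed: an isolated vertex would lie in $Y$ with $G$-degree $0 < \beta\targetdegeneracy$, violating the bound, which is why minimality by itself (a constraint on $X_{\beta\targetdegeneracy}$, not on $Y$) does not suffice. Beyond that the argument is a one-line edge count, so I do not anticipate any real obstacle.
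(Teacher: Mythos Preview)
Your proof is correct and matches the paper's own argument essentially line for line: both rearrange to $\beta\targetdegeneracy|Y| \le 2m_Y + m_{XY}$, identify the right-hand side with $\sum_{v\in Y}\deg_G(v)$, and invoke the WLOG minimum-degree-$(\beta\targetdegeneracy+1)$ assumption to finish. Your added commentary on why that assumption is both legitimate and necessary is a nice touch but not something the paper spells out.
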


\begin{proof}
Recall that we may assume \WLOG that every vertex in $G$ has degree at least $\beta \targetdegeneracy + 1$. Therefore $\beta \targetdegeneracy |Y| \leq \sum_{v \in Y} \deg_G(v) \leq 2m_Y + m_{XY}$, and so $m_{XY} \leq 2m_Y + 2m_{XY} - \beta \targetdegeneracy |Y|$.
\end{proof}

\begin{corollary}
\label{corollary:m_bound}
For any $X_{\beta \targetdegeneracy}$, it holds that $-\beta \targetdegeneracy |X_{\beta \targetdegeneracy}| \geq -2m_Y - 2m_{XY} + \beta \targetdegeneracy |Y|$.
\end{corollary}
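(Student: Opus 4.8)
The plan is to get Corollary~\ref{corollary:m_bound} by adding Lemma~\ref{lemma:m_bound} to a matching lower bound on $m_{XY}$ and doing a line of algebra. First observe that Lemma~\ref{lemma:m_bound} is equivalent to $\beta\targetdegeneracy|Y| \leq 2m_Y + m_{XY}$. So it suffices to establish the companion inequality $\beta\targetdegeneracy|X_{\beta\targetdegeneracy}| \leq m_{XY}$: adding the two gives $\beta\targetdegeneracy(|Y| + |X_{\beta\targetdegeneracy}|) = \beta\targetdegeneracy n \leq 2m_Y + 2m_{XY}$, and subtracting $\beta\targetdegeneracy|Y|$ from both sides and negating yields exactly $-\beta\targetdegeneracy|X_{\beta\targetdegeneracy}| \geq -2m_Y - 2m_{XY} + \beta\targetdegeneracy|Y|$. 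Moreover, since every edge counted by $m_{XY}$ has exactly one endpoint in $X_{\beta\targetdegeneracy}$, we have $m_{XY} = \sum_{v \in X_{\beta\targetdegeneracy}} |N_G(v) \cap Y|$, so the companion inequality reduces to showing that \emph{every} $v \in X_{\beta\targetdegeneracy}$ has at least $\beta\targetdegeneracy$ neighbors in $Y$.

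To prove that per-vertex degree bound I would invoke the minimality of $X_{\beta\targetdegeneracy}$. Fix $v \in X_{\beta\targetdegeneracy}$ and suppose for contradiction that $|N_G(v) \cap Y| \leq \beta\targetdegeneracy$. Since $G[Y] = G \setminus X_{\beta\targetdegeneracy}$ is $(\beta\targetdegeneracy)$-degenerate, by Lemma~\ref{lem:degen-properties} it admits a vertex ordering in which every vertex has at most $\beta\targetdegeneracy$ neighbors later in the order. Prepending $v$ to this ordering gives an ordering of $G[Y \cup \{v\}] = G \setminus (X_{\beta\targetdegeneracy} \setminus \{v\})$ in which $v$ has at most $\beta\targetdegeneracy$ later neighbors (precisely its neighbors in $Y$) and each vertex of $Y$ still has at most $\beta\targetdegeneracy$ later neighbors (the vertex $v$ precedes all of them, so it is never counted). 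Hence $G \setminus (X_{\beta\targetdegeneracy} \setminus \{v\})$ is $(\beta\targetdegeneracy)$-degenerate, so $X_{\beta\targetdegeneracy} \setminus \{v\}$ is a feasible edit set, contradicting minimality of $X_{\beta\targetdegeneracy}$. Therefore $|N_G(v) \cap Y| \geq \beta\targetdegeneracy + 1 \geq \beta\targetdegeneracy$, as needed.

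The only step that needs care is this last one — checking that inserting $v$ at the front of a good degeneracy ordering of $G[Y]$ cannot spoil the later-neighbor counts of the vertices of $Y$; this works exactly because "later neighbors" are counted forward in the ordering and $v$ sits before everything in $Y$. (Equivalently, one can phrase it with cores via Lemma~\ref{lem:degen-properties}: deleting $v$ from the edit set cannot create a $(\beta\targetdegeneracy+1)$-core, since such a core would avoid $v$ whenever $v$ has at most $\beta\targetdegeneracy$ neighbors in $Y$ and would then already be a $(\beta\targetdegeneracy+1)$-core inside $G[Y]$, contradicting $(\beta\targetdegeneracy)$-degeneracy of $G[Y]$.) Everything else is the elementary rearrangement above, so I expect the write-up to be short.
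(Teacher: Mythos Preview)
Your proposal is correct and follows essentially the same approach as the paper: establish $\beta\targetdegeneracy|X_{\beta\targetdegeneracy}| \leq m_{XY}$ from minimality (each $v\in X_{\beta\targetdegeneracy}$ must have at least $\beta\targetdegeneracy+1$ neighbors in $Y$, else it could be dropped), then combine with Lemma~\ref{lemma:m_bound}. The paper phrases the minimality step via the $(\beta\targetdegeneracy+1)$-core (which you also mention as an alternative), while you spell it out with the degeneracy ordering; these are equivalent, and your write-up simply gives more detail.
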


\begin{proof}
Because $X_{\beta \targetdegeneracy}$ is minimal, every vertex in $X_{\beta \targetdegeneracy}$ will induce a $(\beta \targetdegeneracy + 1)$-core with vertices in $Y$ if not removed. Therefore each such vertex has at least $(\beta \targetdegeneracy + 1)$-neighbors in $Y$, and
$\beta \targetdegeneracy|X_{\beta \targetdegeneracy}| \leq m_{XY}$. Substituting into Lemma \ref{lemma:m_bound}, we find that
$-\beta \targetdegeneracy |X_{\beta \targetdegeneracy}| \geq -2m_Y - m_{XY} + \beta \targetdegeneracy |Y|$.
\end{proof}

\noindent We now prove the upper bound:
\begin{lemma}\label{lemma:degen_upper_bound}
For any minimal, feasible solution $X_{\beta \targetdegeneracy}$ to \bbDEV,
\begin{align*}
\sum_{v \in X_{\beta \targetdegeneracy}} \deg_G(v) \leq 4m - \beta \targetdegeneracy n.
\end{align*}
\end{lemma}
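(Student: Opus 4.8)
The plan is to work with the cut $(X_{\beta r}, Y)$ where $Y = V(G)\setminus X_{\beta r}$, exactly as set up for Lemma~\ref{lemma:m_bound} and Corollary~\ref{corollary:m_bound}: write $m_X$, $m_Y$, $m_{XY}$ for the number of edges inside $X_{\beta r}$, inside $Y$, and across the cut, so that $m = m_X + m_Y + m_{XY}$ and $n = |X_{\beta r}| + |Y|$. The first ingredient I would record is the identity $\sum_{v\in X_{\beta r}}\deg_G(v) = 2m_X + m_{XY}$, which simply counts each edge incident to $X_{\beta r}$ with its correct multiplicity.

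Next I would substitute these two identities into the inequality to be proved and observe that it is \emph{equivalent} to $\beta r(|X_{\beta r}| + |Y|) \le 2m_X + 4m_Y + 3m_{XY}$. Indeed, $4m - \beta r n - \sum_{v\in X_{\beta r}}\deg_G(v) = (4m_X + 4m_Y + 4m_{XY}) - \beta r(|X_{\beta r}| + |Y|) - (2m_X + m_{XY}) = 2m_X + 4m_Y + 3m_{XY} - \beta r(|X_{\beta r}| + |Y|)$, so the claimed bound holds if and only if this last expression is nonnegative.

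The remainder is immediate from the bounds already established. Corollary~\ref{corollary:m_bound} states $-\beta r|X_{\beta r}| \ge -2m_Y - 2m_{XY} + \beta r|Y|$, i.e.\ $\beta r(|X_{\beta r}| + |Y|) \le 2m_Y + 2m_{XY}$; combining this with the trivial inequality $2m_Y + 2m_{XY} \le 2m_X + 4m_Y + 3m_{XY}$ (valid since $m_X, m_Y, m_{XY} \ge 0$) gives precisely the equivalent form above, and hence the lemma.

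I do not anticipate a genuine obstacle here: the whole argument is a short algebraic rearrangement once Corollary~\ref{corollary:m_bound} is in hand. The only points needing a moment of care are (i) tracking the coefficients correctly when rewriting $4m - \beta r n$ in the cut variables, and (ii) remembering that Corollary~\ref{corollary:m_bound} already folds in both the minimality of $X_{\beta r}$ (each vertex of $X_{\beta r}$ has at least $\beta r + 1$ neighbours in $Y$, giving $\beta r|X_{\beta r}| \le m_{XY}$) and the \WLOG assumption, established by the degree-reduction case of Algorithm~\ref{algorithm:local_ratio_degeneracy}, that every vertex of $G$ has degree at least $\beta r + 1$ (which is what drives Lemma~\ref{lemma:m_bound}).
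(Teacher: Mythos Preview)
Your proof is correct and follows essentially the same approach as the paper: both arguments rewrite $\sum_{v\in X_{\beta r}}\deg_G(v)$ and $4m-\beta r n$ in the cut variables $m_X,m_Y,m_{XY}$ and then apply the previously established bounds. Your version is in fact slightly more streamlined---the paper invokes Lemma~\ref{lemma:m_bound} and Corollary~\ref{corollary:m_bound} in sequence, whereas you observe that the (statement of) Corollary~\ref{corollary:m_bound} together with a trivial nonnegativity bound already suffices once the inequality is put in the equivalent form $\beta r n \le 2m_X + 4m_Y + 3m_{XY}$.
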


\begin{proof}
By using substitutions from Lemmas \ref{lemma:m_bound} and Corollary \ref{corollary:m_bound}, we know that
\begin{align*}
\sum_{v \in X_{\beta \targetdegeneracy}} \deg_G(v) &= 2m_{X} + m_{XY}\\
&\leq 2m_X + 2m_Y + 2m_{XY} - \beta \targetdegeneracy |Y|\\
&= 2m - \beta \targetdegeneracy |Y|\\
&= 2m + 2m_Y + 2m_{XY} - 2m_Y - 2m_{XY} + \beta \targetdegeneracy |Y| - 2 \beta \targetdegeneracy |Y|\\
&\leq 2m + 2m_Y + 2m_{XY} - \beta \targetdegeneracy |X_{\beta \targetdegeneracy}| - 2 \beta \targetdegeneracy |Y|\\
&\leq 4m - \beta \targetdegeneracy n.
\end{align*}
\end{proof}

\begin{proof}[Proof of Theorem~\ref{theorem:local_part_2}]
Let $X_{\beta \targetdegeneracy}$ be any minimal, feasible solution for editing to a graph of degeneracy $\beta \targetdegeneracy$. By definition of $w_1$ in Algorithm~\ref{algorithm:local_ratio_degeneracy}, it holds that $w_1^T \mathcal{I}_{X_{\beta \targetdegeneracy}} = \epsilon \sum_{v \in X_{\beta \targetdegeneracy}}\deg_G(v)$, and because $\epsilon$ is a constant computed independently of the optimal solution, it suffices to show that $\sum_{v \in X_{\beta \targetdegeneracy}}\deg_G(v)$ has an $\alpha$-approximation.

By Lemma~\ref{lemma:degen_lower_bound}, any minimal, feasible edit set to a degeneracy-$r$ graph has a degree sum of at least $m - \targetdegeneracy n$. If an edit set is allowed to leave a degeneracy-($\beta \targetdegeneracy$) graph, then by Lemma~\ref{lemma:degen_upper_bound}, at most $4m - \beta \targetdegeneracy n$ degrees are added to the degree sum of $X_{\beta \targetdegeneracy}$. Therefore $X_{\beta \targetdegeneracy}$ is $\left( \tfrac{4m-\beta \targetdegeneracy n}{m-\targetdegeneracy n}, \beta \right)$
-approximate with respect to $(G, w_1, \targetdegeneracy, \beta)$.
\end{proof}

We now prove the main result stated at the beginning of this section,
Theorem~\ref{theorem:local_ratio_degeneracy}.

\begin{proof}
For clarity, let
$\alpha := \left( \tfrac{4m-\beta \targetdegeneracy n}{m - \targetdegeneracy n}\right)$;
we prove that Algorithm \ref{algorithm:local_ratio_degeneracy} is an
$(\alpha, \beta)$-approximation. We proceed by induction on the number of recursive
calls to Algorithm \ref{algorithm:local_ratio_degeneracy}. In the base case
(Lines 2-3), the solution returned is the empty set, which is trivially
optimal. In the induction step, we examine the three recursive calls:

\begin{itemize}
\item Lines 4-5: Given an instance $(G, w, \targetdegeneracy, \beta)$, if a
vertex $v$ has degree at most $\beta \targetdegeneracy$, add $v$ to the
degeneracy ordering and remove it from the graph. By the induction hypothesis,
the algorithm will return an $(\alpha, \beta)$-approximate solution $X_{\beta \targetdegeneracy}$ for
$(G-\{v\}, w, \targetdegeneracy, \beta)$. Since $v$ will not be added to $X_{\beta \targetdegeneracy}$,
then $X_{\beta \targetdegeneracy}$ is also an $(\alpha, \beta)$-approximation for $(G, w, \targetdegeneracy, \beta)$.

\item Line 6-12: Given an instance $(G, w, \targetdegeneracy, \beta)$, if a vertex
$v$ has weight 0, remove $v$ from the graph. By the induction hypothesis, the
algorithm returns an $(\alpha, \beta)$-approximate solution $X_{\beta \targetdegeneracy}$ for
$(G-\{v\}, w, \targetdegeneracy, \beta)$. Regardless of whether $v$ is added to
$X_{\beta \targetdegeneracy}$ or not, it contributes exactly zero to the cost of the solution, therefore an $(\alpha, \beta)$-approximation is returned.

\item Line 13-18: In this case, the weight vector is decomposed into $w_1$ and
$w_2 = w - w_1$. By induction, the algorithm will return an $(\alpha, \beta)$-approximate solution $X_{\beta \targetdegeneracy}$ for $(G, w-w_1, \targetdegeneracy, \beta)$. By Theorem
\ref{theorem:local_part_2}, $w_1^T \mathcal{I}_{X_{\beta \targetdegeneracy}}$ is also $(\alpha, \beta)$-approximate. Therefore, by Theorem~\ref{theorem:local-ratio-theorem}, $w^T \mathcal{I}_{X_{\beta \targetdegeneracy}}$ must be $(\alpha, \beta)$-approximate.
\end{itemize}
\end{proof}

\fi

\ifdefined\islp
    \subsection{Degeneracy: LP-based Bicriteria Approximation}
    \label{section:positive_degeneracy_lp}
    
In this section, we design a bicriteria approximation for the problem of minimizing the number of required edits (edge/vertex deletions) to the family of  $\Tr$-degenerate graphs. Consider an instance of \iffull
$\bDEEfull(G, \Tr)$
\fi
\ifappendix
$\bDEVfull(G, \Tr)$
\fi and let $\opt$ denote an optimal solution.
The algorithm we describe here works even when the input graph is {\em weighted} (both vertices and edges are weighted) and the goal is to minimize the total weight of the edit set.

Our approach is based on Lemma~\ref{lem:degen-bounded-outdeg-orientation} which we restate here for convenience.
\begin{lemma}\label{thm:k-degenerate}
A graph $G=(V, E)$ is $(2\Tr)$-degenerate if there exists an orientation of the edges in $E$ such that the out-degree of each vertex $v$ is at most $r$.
\end{lemma}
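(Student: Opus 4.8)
The plan is simply to observe that this is a restatement of Lemma~\ref{lem:degen-bounded-outdeg-orientation} (with target degeneracy $2r$), so nothing new is needed; for completeness I would reprove it by a direct edge-counting argument. First I would reduce to induced subgraphs: by Definition~\ref{def:degeneracy}, $G$ is $2r$-degenerate precisely when every subgraph of $G$ contains a vertex of degree at most $2r$, and since deleting edges only lowers vertex degrees, it suffices to verify this for every \emph{induced} subgraph $H = G[S]$ with $S \subseteq V$.

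Next I would fix the hypothesized orientation of $E$ in which every vertex has out-degree at most $r$, and restrict it to $H$. Every edge of $H$ has both endpoints in $S$, so the restricted orientation satisfies $\outDeg_H(v) \le \outDeg_G(v) \le r$ for each $v \in S$. Summing out-degrees over $S$ counts each edge of $H$ exactly once (at its tail), hence $|E(H)| = \sum_{v \in S} \outDeg_H(v) \le r\,|S|$.

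Finally, by the handshake identity the average degree of $H$ is $2|E(H)|/|S| \le 2r$, so $H$ contains a vertex of degree at most $2r$; as $H$ was an arbitrary induced subgraph, $G$ is $2r$-degenerate, i.e.\ $\degener(G) \le 2r$. There is essentially no obstacle here: the only minor points are the reduction from arbitrary subgraphs to induced subgraphs and the vacuous case $S = \emptyset$, neither of which causes any difficulty. (An alternative route is to peel off a minimum-out-degree vertex repeatedly to obtain an ordering with forward-degree at most $2r$ and then invoke the equivalence in Lemma~\ref{lem:degen-properties}, but the direct averaging argument above is cleaner.)
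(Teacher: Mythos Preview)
Your proof is correct and follows essentially the same route as the paper: restrict the orientation to any induced subgraph $H$, bound $|E(H)| \le r|V(H)|$ via the out-degree condition, and conclude by averaging that $H$ has a vertex of degree at most $2r$. The only cosmetic difference is that you spell out the reduction from arbitrary subgraphs to induced subgraphs, which the paper leaves implicit.
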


\ifappendix
Because our LP-relaxation for $\bDEEfull(G, \Tr)$ is similar to our LP-relaxation for $\bDEVfull(G, \Tr)$, we relegate the full description to Appendix~\ref{appendix-degen-lp}. 
%%Furthermore, we present proofs of an $\Omega(n)$ integrality gap for both the LPs used to obtain our bicriteria results, $\lpMinVertex$ and $\lpMinEdge$, also in Appendix~\ref{appendix-degen-lp}. This shows that the existing LP-relaxations of editing to bounded degeneracy cannot obtain ``purely multiplicative'' $o(n)$-approximations. 
\fi

\subsubsection{$(6,6)$-approximation for vertex deletion}
\label{section:positive_degeneracy_LP_vertex}

In what follows we formulate an LP-relaxation for the problem of minimizing the number of required vertex deletions to the family of $\Tr$-degenerate graphs. For each edge $uv\in E$, $x_{uv}$ variable denotes the orientation of $uv$; $x_{\arc{uv}}=1$, $x_{\arc{vu}} =0$ if $uv$ is oriented from $u$ to $v$ and $x_{\arc{vu}}=1$, $x_{\arc{uv}} =0$ otherwise. Moreover, for each vertex $v\in V$ we define $y_{v}$ to denote whether $v$ is part of the edit set $\editset$ ($y_{v}=1$ if $v \in \editset$ and zero otherwise).

\begin{center}
\begin{minipage}{0.75\linewidth}
\begin{probbox}{\lpMinVertex}
	\emph{Input:} \hskip1em $G=(V,E), w, \Tr$
\begin{align*}
	\text{Minimize} & 	&\sum_{v \in V} y_{v} w_{v} &						 &\\
	\text{s.t.}         & 	&x_{\arc{vu}} + x_{\arc{uv}} 			&\geq 1-y_{u}-y_{v}  &\forall uv \in E\\
						    &  &\sum_{u \in N(v)}x_{\arc{vu}} 		 	&\leq \Tr 			&\forall v \in V\\
							&  &x_{\arc{uv}} 								&\geq 0			&\forall uv\in E
\end{align*}
\end{probbox}
\end{minipage}
\end{center}

The first set of constraints in the LP-relaxation \lpMinVertex guarantees that for each edge $uv$ whose none of its endpoints is in $X$, it is oriented either from $v$ to $u$ or from $u$ to $v$. The third set of the constraints ensure that for all $v\in V$, $\outDeg(v)\leq \Tr$. Note that if $v\in \editset$ and thus $y_{v}=1$, then \WLOG we can assume that both $x_{\arc{uv}}$ and $x_{\arc{vu}}$ are set to zero.

\begin{lemma}\label{lem:lp-relaxation-vertex}
$\lpMinVertex(G, w, \Tr)$ is a valid LP-relaxation of $\bDEV(G,w)$.
\end{lemma}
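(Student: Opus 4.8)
The plan is to verify the two ingredients of a \emph{valid LP-relaxation} of $\bDEVfull(G,w)$: first, that $\lpMinVertex$ has polynomial size (hence is solvable in polynomial time), and second, that its optimal value is at most $\opt$, the minimum weight of a vertex-deletion set bringing $G$ to degeneracy at most $\Tr$. The first point is immediate, since there are $O(m)$ orientation variables $x_{\arc{uv}},x_{\arc{vu}}$ (one pair per edge), one variable $y_v$ per vertex, and $O(n+m)$ constraints. So the whole argument reduces to exhibiting, for an arbitrary feasible vertex-deletion set, an integral feasible point of $\lpMinVertex$ of the same weighted cost.

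Concretely, let $\editset \subseteq V$ be any set with $G[V \setminus \editset]$ of degeneracy at most $\Tr$; I will construct $(x,y)$. Set $y_v = 1$ if $v \in \editset$ and $y_v = 0$ otherwise, so that $\sum_{v \in V} y_v w_v = \sum_{v \in \editset} w_v$. By Lemma~\ref{lem:degen-properties} applied to $G[V\setminus \editset]$, there is an ordering of $V \setminus \editset$ in which every vertex has at most $\Tr$ neighbours appearing later; orient each edge of $G[V\setminus \editset]$ from its earlier endpoint to its later endpoint, and for such an edge $uv$ oriented from $u$ to $v$ set $x_{\arc{uv}} = 1$, $x_{\arc{vu}} = 0$. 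For every edge $uv$ with at least one endpoint in $\editset$, set $x_{\arc{uv}} = x_{\arc{vu}} = 0$. Feasibility is then routine: nonnegativity is clear; for an edge $uv$ with an endpoint in $\editset$ the right-hand side $1 - y_u - y_v$ of the first constraint is at most $0$, and for an edge with both endpoints in $V \setminus \editset$ exactly one of $x_{\arc{uv}}, x_{\arc{vu}}$ is $1$, matching the right-hand side $1 - 0 - 0$; finally, a vertex $v \in \editset$ has all incident edges unoriented so $\sum_{u \in N(v)} x_{\arc{vu}} = 0 \le \Tr$, while a vertex $v \in V \setminus \editset$ has $x_{\arc{vu}} = 1$ only for the at most $\Tr$ neighbours after it in the ordering. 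Thus $(x,y)$ is feasible with cost $\sum_{v \in \editset} w_v$, and taking $\editset$ optimal shows the optimum of $\lpMinVertex(G,w,\Tr)$ is at most $\opt$.

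I do not expect a genuine obstacle here; the only step needing a moment's care is the out-degree constraint at vertices of $\editset$, which is handled by the observation that edges incident to $\editset$ may be left unoriented — legitimate precisely because deleting either endpoint already makes the corresponding first constraint slack. (The reverse direction, needed later for rounding rather than for the relaxation claim, would instead invoke Lemma~\ref{lem:degen-bounded-outdeg-orientation} to argue that an integral LP solution of cost $W$ yields an edit set leaving degeneracy at most $2\Tr$; this is not required for the present lemma.)
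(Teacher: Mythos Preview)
Your proposal is correct and follows essentially the same route as the paper: take a feasible edit set $\editset$, use a degeneracy ordering of $G[V\setminus\editset]$ to orient the surviving edges, leave edges touching $\editset$ unoriented, and verify the constraints. Your write-up is in fact slightly more careful than the paper's (you explicitly check the out-degree constraint at deleted vertices and note polynomial size), and your closing parenthetical about the reverse direction is accurate but, as you say, not needed for this lemma.
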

\begin{proof}
Let $X$ be a feasible edit set of $\bDEV(G,w)$. Let $D$ be an $\Tr$-degenerate ordering of $V\setminus X$. We define vectors $(x,y)$ corresponding to $X$ as follows: for each $v\in V$, $y(v) = 0$ if $v\in X$ and zero otherwise. Moreover, $x_{\arc{uv}} =1$ if $u,v \in V\setminus X$ and $u$ comes before $v$ in the ordering $D$; otherwise, $x_{\arc{uv}}$ is set to zero.  

Next, we show that the constructed solution $(x,y)$ satisfies all constraints in \lpMinVertex. Since $x$ only obtains non-negative values, for the first set of constraints we can only consider the set of survived edges after removing set $X$, $E[V\setminus X]$. For these edges, since one of $u$ and $v$ comes first in $D$, exactly one of $x_{\arc{uv}}, x_{\arc{vu}}$ is one and the constraint is satisfied. Lastly, since $D$ is an $\Tr$-degenerate ordering of $V\setminus X$, for each vertex $v\in V\setminus X$, the out-degree is at most $\Tr$. Moreover, for each $v\in X$, the LHS in the second set of constraints is zero. 
\end{proof}
\ifappendix
First, 
\fi
\iffull
Similarly to our approach for \bDEEfull, first
\fi we find an optimal solution $(x,y)$ of \lpMinVertex in polynomial time.
\subparagraph{Rounding scheme.} We prove that the following rounding scheme of \lpMinVertex gives a $({1\over \eps},{4\over 1-2\eps})$-bicriteria approximation for \bDEV.
\begin{align}\label{rule:vertex_round_first}
\hat{y}_{v} =
\left\{
	\begin{array}{ll}
		1  & \mbox{if } y_{v} \geq \eps, \\
		0  & \mbox{otherwise.}
	\end{array}
\right.
\end{align}

\begin{align}\label{rule:vertex_round_second}
\hat{x}_{uv} =
\left\{
	\begin{array}{ll}
		1  & \mbox{if } x_{uv} \geq (1 - 2\eps)/2, \\
		0  & \mbox{otherwise.}
	\end{array}
\right.
\end{align}

\begin{lemma}\label{lem:vertex-edit}
If $(x, y)$ is an optimal solution to \lpMinVertex, then $(\hat{x}, \hat{y})$ as given by Equations \ref{rule:vertex_round_first} and \ref{rule:vertex_round_second} is
 an integral $({1\over \eps},{2\over 1-2\eps})$-bicriteria approximate solution of $\lpMinVertex(G,w,r)$.
\end{lemma}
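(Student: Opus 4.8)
The plan is to verify the three defining properties of an integral $(\alpha,\beta)$-bicriteria approximate solution of $\lpMinVertex$ one at a time, mirroring the two-phase analysis used for the edge-deletion variant. First I would bound the weight of the chosen edit set $X := \{v \in V : \hat y_v = 1\}$: by the first-phase rule~\eqref{rule:vertex_round_first}, $\hat y_v = 1$ only when $y_v \geq \eps$, so $\hat y_v \leq y_v/\eps$ for every $v$, hence $\sum_{v\in V}\hat y_v w_v \leq \tfrac1\eps\sum_{v\in V}y_v w_v$. Since $(x,y)$ is an optimal LP solution and $\lpMinVertex$ is a valid relaxation of $\bDEV$ (Lemma~\ref{lem:lp-relaxation-vertex}), its value is at most $\opt$, so the total weight of $X$ is at most $\tfrac1\eps\,\opt$; this gives the factor $1/\eps$ on the edits.

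Next I would check that $\hat x$ validly orients the surviving graph $G[V\setminus X]$. For an edge $uv$ with $\hat y_u = \hat y_v = 0$ we have $y_u, y_v < \eps$, so the first LP constraint yields $x_{\arc{uv}} + x_{\arc{vu}} \geq 1 - y_u - y_v > 1 - 2\eps$, whence $\max(x_{\arc{uv}}, x_{\arc{vu}}) \geq (1-2\eps)/2$ and the second-phase rule~\eqref{rule:vertex_round_second} sets at least one of $\hat x_{\arc{uv}}, \hat x_{\arc{vu}}$ to $1$; ties are broken by keeping exactly one direction, which only helps. Equivalently, $(\hat x,\hat y)$ satisfies the first LP constraint for every edge --- trivially when an endpoint is in $X$, since the right-hand side is then nonpositive.

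Finally I would bound the out-degrees: $\hat x_{\arc{vu}} = 1$ forces $x_{\arc{vu}} \geq (1-2\eps)/2$, i.e.\ $\hat x_{\arc{vu}} \leq \tfrac{2}{1-2\eps}\,x_{\arc{vu}}$, so $\sum_{u\in N(v)}\hat x_{\arc{vu}} \leq \tfrac{2}{1-2\eps}\sum_{u\in N(v)}x_{\arc{vu}} \leq \tfrac{2}{1-2\eps}\Tr$ by the degeneracy constraint; restricting the sum to $u \notin X$ only decreases it, so every vertex of $G[V\setminus X]$ has out-degree at most $\tfrac{2}{1-2\eps}\Tr$ under $\hat x$. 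Combining the three facts, $(\hat x,\hat y)$ is an integral $(\tfrac1\eps, \tfrac{2}{1-2\eps})$-bicriteria approximate solution of $\lpMinVertex(G,w,\Tr)$, as claimed.

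I do not expect a genuine obstacle here; the one point that needs care is that the orientation has to be --- and is --- valid exactly on the edges whose \emph{both} endpoints survive the first phase, so the rounding threshold in~\eqref{rule:vertex_round_second} must absorb a two-sided loss (hence $1-2\eps$) rather than the one-sided loss ($1-\eps$) appearing in the edge-deletion analysis. Combining this lemma with Lemma~\ref{thm:k-degenerate} (an out-degree-$r'$ orientation certifies degeneracy $\leq 2r'$), so that the degeneracy factor becomes $\tfrac{4}{1-2\eps}$, and then taking $\eps = 1/6$, yields the $(6,6)$-approximation for \bDEV stated at the start of the section.
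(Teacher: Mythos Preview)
Your proposal is correct and follows essentially the same approach as the paper: verify the first LP constraint by case-splitting on whether an endpoint is in $X$ (using $y_u,y_v<\eps$ on the surviving edge to get $x_{\arc{uv}}+x_{\arc{vu}}>1-2\eps$), bound out-degrees via $\hat x_{\arc{vu}}\le \tfrac{2}{1-2\eps}x_{\arc{vu}}$, and bound the edit weight via $\hat y_v\le y_v/\eps$ together with LP optimality and Lemma~\ref{lem:lp-relaxation-vertex}. The only cosmetic difference is the order in which the three properties are checked; your remark about the two-sided $1-2\eps$ loss versus the edge-deletion $1-\eps$ is exactly the point, and your concluding step to the $(6,6)$ corollary via Lemma~\ref{thm:k-degenerate} matches the paper.
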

\begin{proof}
First we show that $(\hat{x}, \hat{y})$ satisfies the first set of constraints: for each $uv \in E$, $\hat{x}_{\arc{vu}} + \hat{x}_{\arc{uv}}\geq 1-\hat{y}_{v} - \hat{y}_{u}$. Note that if either $\hat{y}_v$ or $\hat{y}_u$ is one then the constraint trivially holds. Hence, we assume that both $\hat{y}_v$ and $\hat{y}_u$ are zero.
By Equation~\eqref{rule:vertex_round_first}, this implies that both $y_v$ and $y_u$ have value less than $\eps$. Hence, by feasibility of $(x,y)$,
\begin{align*}
x_{\arc{vu}} + x_{\arc{uv}}\geq 1- y_v - y_u \geq 1-2\eps,
\end{align*}
and in particular, $\max(x_{\arc{vu}}, x_{\arc{vu}}) \geq {1 - 2\eps \over 2}$. Then, by Equation~\eqref{rule:vertex_round_second}, $\max(\hat{x}_{\arc{uv}}, \hat{x}_{\arc{vu}}) = 1$ and the constraint is satisfied: $\hat{x}_{uv} + \hat{x}_{vu} \geq 1 \geq 1-\hat{y}_v-\hat{y}_u$. Note that if both of $\hat{x}_{uv}$ and $\hat{x}_{vu}$ are set to one, we can arbitrarily set one of them to zero.

Moreover, since for each arc $\arc{uv}$, $\hat{x}_{\arc{uv}} \leq \frac{2}{1-2\eps} \cdot  x_{\arc{uv}}$, for each $v\in V$:
\begin{align*}
\sum_{u\in N(v)} \hat{x}_{\arc{vu}} ~{\leq}~ {2\over 1-2\eps}\cdot \sum_{u\in N(v)} x_{\arc{vu}} ~{\leq}~ {2\Tr\over 1-2\eps},
\end{align*}
where the first inequality follows from Equation~\eqref{rule:vertex_round_second} and the second from the feasibility of $(x,z)$.

Finally, since for each $v\in V$, $\hat{y}_v \leq y_v/\eps$, the cost of the rounded solution $(\hat{x},\hat{y})$ is at most
\begin{align*}
\sum_{u\in V} \hat{y}_{v} w_v~ {\leq} ~\frac{1}{\eps}\cdot \sum_{u\in V} y_{v} w_v~ {\leq} ~{\textstyle{\optsol{\bDEV}{G, w, \targetdegeneracy}}}/\eps,
\end{align*}
where the first inequality follows from Equation~\eqref{rule:vertex_round_first} and the second directly from the optimality of $(x,y)$.
Hence, $(\hat{x}, \hat{y})$ is an integral $({1\over \eps},{2\over 1-2\eps})$-bicriteria approximate solution of $\lpMinVertex(G, w,\Tr)$.
\end{proof}
Note that, the integral solution $(\hat{x}, \hat{y})$ specifies an edit set $X:= \set{v\in V | \hat{y}(v) =1}$ and orientation of edges $D:=\set{\arc{uv} | x_{\arc{uv} = 1}}$ such that for each $v\in V\setminus X$, $\outDeg(v) \leq {2r \over 1-2\eps}$. Hence, together with Lemma~\ref{thm:k-degenerate}, we have the following result.
\begin{corollary}\label{cor:vertex-edit}
There exists a $({1\over \eps}, {4\over 1-2\eps})$-bicriteria approximation for $\bDEV$. 

In particular, by setting $\eps = {1/6}$, there exists a $(6,6)$-bicriteria approximation algorithm for the \bDEVfull problem.
\end{corollary}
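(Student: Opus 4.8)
The plan is to compose three ingredients that are already in hand: the validity of the relaxation \lpMinVertex (Lemma~\ref{lem:lp-relaxation-vertex}), the two-threshold rounding scheme and its guarantee (Lemma~\ref{lem:vertex-edit}), and the degeneracy characterization via bounded-out-degree orientations (Lemma~\ref{thm:k-degenerate}, i.e.\ Lemma~\ref{lem:degen-bounded-outdeg-orientation}). There is no new combinatorics to do; the corollary is the arithmetic of picking $\eps$ to balance the two criteria.

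First I would fix any $\eps \in (0, 1/2)$. The requirement $\eps < 1/2$ is precisely what makes the rounding threshold $(1-2\eps)/2$ in Equation~\eqref{rule:vertex_round_second} positive, so that a valid orientation can be recovered. Solve \lpMinVertex optimally in polynomial time (it has $O(n+m)$ variables and constraints, so an LP solver suffices), obtaining a fractional solution $(x,y)$; by Lemma~\ref{lem:lp-relaxation-vertex} its objective value $\sum_{v} y_v w_v$ is at most $\optsol{\bDEV}{G,w,\Tr}$. Apply the rounding of Equations~\eqref{rule:vertex_round_first}--\eqref{rule:vertex_round_second} to produce the integral pair $(\hat x, \hat y)$. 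By Lemma~\ref{lem:vertex-edit}, $(\hat x, \hat y)$ is an integral $(\tfrac1\eps, \tfrac{2}{1-2\eps})$-bicriteria approximate solution of \lpMinVertex: the edit set $X := \set{v \in V \mid \hat y_v = 1}$ has total weight at most $\tfrac1\eps \sum_v y_v w_v \leq \tfrac1\eps \optsol{\bDEV}{G,w,\Tr}$, and the surviving edges (those with both endpoints outside $X$) carry a valid orientation $D$ under which $\outDeg(v) \leq \tfrac{2\Tr}{1-2\eps}$ for every $v \in V \setminus X$.

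Finally I would invoke Lemma~\ref{thm:k-degenerate} on $G[V\setminus X]$ with out-degree bound $\Tr' = \tfrac{2\Tr}{1-2\eps}$: admitting an orientation of maximum out-degree $\Tr'$ forces $\degener(G[V\setminus X]) \leq 2\Tr' = \tfrac{4\Tr}{1-2\eps}$. Thus $X$ is computed in polynomial time, has weight at most $\tfrac1\eps$ times the optimum, and its deletion yields a graph of degeneracy at most $\tfrac{4}{1-2\eps}\Tr$, i.e.\ a $\bigl(\tfrac1\eps, \tfrac{4}{1-2\eps}\bigr)$-bicriteria approximation for \bDEV. Substituting $\eps = 1/6$ gives $\tfrac1\eps = 6$ and $\tfrac{4}{1-2/6} = \tfrac{4}{2/3} = 6$, yielding the claimed $(6,6)$-bicriteria approximation.

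\textbf{Main obstacle.} There is essentially none beyond bookkeeping; the only step deserving an explicit sentence is that restricting the orientation $\hat x$ to $G[V\setminus X]$ remains valid and cannot increase any out-degree, since deleting vertices only removes out-arcs — so the bound $\outDeg(v) \leq \tfrac{2\Tr}{1-2\eps}$ proved for the full orientation carries over verbatim to the edited graph. All of the real work was done in Lemmas~\ref{lem:vertex-edit} and~\ref{thm:k-degenerate}.
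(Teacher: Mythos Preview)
Your proposal is correct and follows exactly the paper's route: combine the LP rounding guarantee of Lemma~\ref{lem:vertex-edit} (cost blowup $1/\eps$, out-degree bound $2\Tr/(1-2\eps)$) with Lemma~\ref{thm:k-degenerate} to double the out-degree bound into a degeneracy bound, then plug in $\eps=1/6$. The paper's own justification is the single sentence preceding the corollary and is identical in content; your extra remark about restricting the orientation to $G[V\setminus X]$ is a harmless clarification.
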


\ifappendix
\subsubsection{$(5,5)$-approximation for edge deletion}
\label{section:positive_degeneracy_LP_edge}

In this section, we only state our constant factor briciteria approximation result for edge deletion. Please refer to Appendix~\ref{appendix-degen-lp} for a full description and proofs of our LP-relaxation for our bicriteria approximation for $\bDEEfull(G, \Tr)$.

\begin{lemma}%\label{cor:edge-edit}
There exists a $({1\over \eps},{4\over 1-\eps})$-bicriteria approximation algorithm for \bDEE.

In particular, by setting $\eps = 1/5$, there exists a $(5,5)$-bicriteria approximation algorithm for the \bDEEfull problem.
\end{lemma}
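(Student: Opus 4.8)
The plan is to mirror the LP-rounding proof for \bDEV in Section~\ref{section:positive_degeneracy_LP_vertex}, replacing the vertex-indicator variables by edge-indicator variables. First I would write the natural relaxation $\lpMinEdge$: for each edge $uv \in E$ keep orientation variables $x_{\arc{uv}}, x_{\arc{vu}} \geq 0$, introduce a variable $z_{uv}$ indicating that $uv$ is deleted, minimize $\sum_{uv \in E} z_{uv} w_{uv}$, and impose $x_{\arc{uv}} + x_{\arc{vu}} \geq 1 - z_{uv}$ for every edge together with $\sum_{u \in N(v)} x_{\arc{vu}} \leq \Tr$ for every vertex. Validity of this relaxation is the easy direction, exactly as in Lemma~\ref{lem:lp-relaxation-vertex}: given a genuine edit set $\editset$, set $z_{uv} = 1$ precisely on $\editset$ and orient the surviving edges by an $\Tr$-degeneracy ordering of $G[E \setminus \editset]$; all constraints are then satisfied. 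Since the LP has polynomially many variables and constraints, I would solve it optimally in polynomial time and call the optimum $(x, z)$.

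Next I would carry out the same two-phase rounding used for the vertex case. In the first phase I threshold the deletion variables: set $\hat{z}_{uv} = 1$ if $z_{uv} \geq \eps$ and $\hat{z}_{uv} = 0$ otherwise; this inflates the edit weight by at most a factor $1/\eps$, and for every surviving edge it leaves $x_{\arc{uv}} + x_{\arc{vu}} \geq 1 - z_{uv} > 1 - \eps$. In the second phase I threshold the orientation variables on the surviving edges: set $\hat{x}_{\arc{uv}} = 1$ if $x_{\arc{uv}} \geq (1 - \eps)/2$, else $0$, breaking ties so each surviving edge ends up oriented one way. Two one-line checks finish this phase: (i) for a surviving edge $\max(x_{\arc{uv}}, x_{\arc{vu}}) \geq (1-\eps)/2$, so $\hat{x}$ is a valid orientation of the surviving edges; and (ii) $\hat{x}_{\arc{uv}} \leq \frac{2}{1-\eps}\, x_{\arc{uv}}$, so each out-degree under $\hat{x}$ is at most $\frac{2}{1-\eps} \sum_{u \in N(v)} x_{\arc{vu}} \leq \frac{2\Tr}{1-\eps}$.

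Finally I would invoke Lemma~\ref{thm:k-degenerate}: the surviving graph has an orientation of out-degree at most $\frac{2\Tr}{1-\eps}$, hence it is $\bigl(\tfrac{4\Tr}{1-\eps}\bigr)$-degenerate; together with the first-phase cost bound this gives a $\bigl(\tfrac{1}{\eps}, \tfrac{4}{1-\eps}\bigr)$-bicriteria approximation for \bDEE, and $\eps = 1/5$ yields the stated $(5,5)$ guarantee. I do not expect a genuine obstacle — the argument is a routine adaptation of the vertex-deletion proof. The only point to watch is the bookkeeping of the two thresholds: here the surviving-edge slack after phase one is $1 - \eps$ rather than $1 - 2\eps$, because an edge participates in just one deletion variable instead of two endpoint variables, which is exactly why the degeneracy factor is $4/(1-\eps)$ and not $4/(1-2\eps)$.
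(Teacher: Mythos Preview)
Your proposal is correct and essentially identical to the paper's own proof: the same LP relaxation \lpMinEdge, the same two-phase thresholding at $\eps$ and $(1-\eps)/2$, the same out-degree bound $2\Tr/(1-\eps)$, and the same appeal to the orientation-to-degeneracy lemma. Your closing remark about why the slack is $1-\eps$ rather than $1-2\eps$ is exactly the right observation distinguishing this case from the vertex-deletion argument.
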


\fi

\iffull
\subsection{$(5,5)$-approximation for edge deletion}
\label{section:positive_degeneracy_LP_edge}
In what follows we formulate an LP-relaxation for the problem of minimizing the number of required edge edits (deletions) to the family of $\Tr$-degenerate graphs. For each edge $uv\in E$, $x$ variables denote the orientation of $uv$; $x_{\arc{uv}}=1$, $x_{\arc{vu}} =0$ if $uv$ is oriented from $u$ to $v$ and $x_{\arc{vu}}=1$, $x_{\arc{uv}} =0$ if $e$ is oriented from $v$ to $u$. Moreover, for each $uv$ we define $z_{uv}$ to denote whether the edge $uv$ is part of the edit set $\editset$ ($z_{uv}=1$ if the edge $uv \in \editset$ and zero otherwise).

\begin{center}
\begin{minipage}{0.75\linewidth}
\begin{probbox}{\lpMinEdge}
	\emph{Input:} \hskip1em $G=(V,E), w, \Tr$
\begin{align*}
	\text{Minimize} &  & \sum_{uv \in E} z_{uv} w_{uv} & &\\
	\text{s.t.}     &  & x_{\arc{vu}} + x_{\arc{uv}}   &\geq 1-z_{uv}  &\forall uv \in E\\
						      &  & \sum_{u \in N(v)}x_{\arc{vu}} 		 &\leq \Tr 			&\forall v \in V\\
						 	    &  & x_{\arc{uv}} 								 &\geq 0			&\forall uv\in V\times V
\end{align*}
\end{probbox}
\end{minipage}
\end{center}

The first set of constraints in the LP-relaxation \lpMinEdge guarantee that for each edge $uv\notin \editset$, it is oriented either from $v$ to $u$ or from $u$ to $v$. The second set of the constraints ensure that for all $v\in V$, $\outDeg(v)\leq \Tr$. Note that if an edge $uv\in \editset$ and thus $z_{uv}=1$, then \WLOG we can assume that both $x_{\arc{uv}}$ and $x_{\arc{vu}}$ are set to zero.
\begin{lemma}\label{lem:lp-relaxation-edge}
$\lpMinEdge(G, w, \Tr)$ is a valid LP-relaxation of $\bDEE(G,w)$.
\end{lemma}

Next, we propose a {\em two-phase} rounding scheme for the \lpMinEdge.
%TODO We should describe the overall strategy here.
%first rounding the $z_{uv}$'s, then pruning the resulting semi-integral solution.
%\drew{address overall strategy comment/fix below sentence}

%$x_{\arc{uv}}$ for edges which are not edited, constructing an \emph{integral} approximate \emph{nearly feasible} solution.

\mypar{First phase.} Let $(x, z)$ be an optimal solution of \lpMinEdge. Note that since the \lpMinEdge has polynomial size, we can find its optimal solution efficiently.
Consider the following {\em semi-integral} solution $(x, \hat{z})$ of \lpMinEdge:

\begin{align}\label{rule:round_first}
\hat{z}_{uv} =
\left\{
	\begin{array}{ll}
		1  & \mbox{if } z_{uv} \geq \eps, \\
		0  & \mbox{otherwise.}
	\end{array}
\right.
\end{align}

\begin{claim}\label{clm:first-round}
$({x \over 1- \eps}, \hat{z})$ as given by Equation~\eqref{rule:round_first} is a $({1\over \eps},{1\over 1-\eps})$-bicriteria approximate solution of $\lpMinEdge(G, w, \Tr)$.
\end{claim}
\begin{proof}
First, we show that $({1\over 1- \eps}x,\hat{z})$ satisfies the first set of constraints. For each edge $uv$,
\begin{align*}
{x_{\arc{uv}}\over 1- \eps} + {x_{\arc{vu}}\over 1- \eps} = {1\over 1- \eps}(x_{\arc{uv}} + x_{\arc{vu}}) \geq {1\over 1- \eps}(1-z_{uv}) \geq 1-\hat{z}_{uv},
\end{align*}
where the first inequality follows from the feasibility of $(x,z)$ and the second inequality follows from Equation~\eqref{rule:round_first}. Moreover, it is straightforward to check that as we multiply each $x_{vu}$ by a factor of $1/(1-\eps)$, the second set of constraints are off by the same factor; that is, $\forall v\in V, \sum_{u \in V}x_{\arc{vu}}/(1-\eps)\leq \Tr/(1-\eps)$. Finally, since for each edge ${uv}$, $\hat{z}_{uv} \leq z_{uv}/\eps$, the cost of the edit set increases by at most a factor of $1/\eps$; that is, $\sum_{uv\in E} \hat{z}_{uv} w_{uv} \leq \frac{1}{\eps}\sum_{uv\in E} z_{uv} w_{uv}$.
\end{proof}

\mypar{Second phase.} Next, we prune the fractional solution further to get an {\em integral} approximate {\em nearly feasible} solution of \lpMinEdge. Let $\hat{x}$ denote the orientation of the surviving edges (edges $uv$ such that $\hat{z}_{uv}=0$) given by:
\begin{align}\label{rule:round-second}
\hat{x}_{\arc{uv}} =
\left\{
	\begin{array}{ll}
		1  & \mbox{if } x_{\arc{uv}} \geq (1-\eps)/2, \\
		0  & \mbox{otherwise.}
	\end{array}
\right.
\end{align}

We say an orientation is \emph{valid} if each surviving edge $(u,v)$ is oriented from $u$ to $v$ or $v$ to $u$.

\begin{lemma}\label{lem:valid-orient}
$\hat{x}$ as given by Equation~\eqref{rule:round-second} is a valid orientation of the set of surviving edges.
\end{lemma}
\begin{proof}
We need to show that for each $uv\in E$ with $\hat{z}_{uv} =0$ at least one of $\hat{x}_{\arc{uv}}$ or $\hat{x}_{\arc{vu}}$ is one. Note that if both are one, we can arbitrarily set one of them to zero.

For an edge $uv$, by Equation~\eqref{rule:round_first}, $\hat{z}_{uv} = 0$ iff $z_{uv}\leq \eps$. Then, using the fact that $(x,z)$ is a feasible solution of \lpMinEdge, $x_{\arc{uv}} + x_{\arc{vu}} \geq 1-z_{uv} \geq 1-\eps$. Hence, $\max(x_{\arc{uv}}, x_{\arc{vu}})\geq (1-\eps)/2$ which implies that $\max(\hat{x}_{\arc{uv}}, \hat{x}_{\arc{vu}}) =1$. Hence, for any surviving edge $uv$, at least one of $\hat{x}_{\arc{uv}}$ or $\hat{x}_{\arc{vu}}$ will be set to one.
\end{proof}
\begin{lemma}\label{lem:edge-edit}
$(\hat{x}, \hat{z})$ as given by Equations \ref{rule:round_first} and \ref{rule:round-second} is an integral $({1\over \eps},{2\over 1-\eps})$-bicriteria approximate solution of $\lpMinEdge(G,w,r)$.
\end{lemma}
\begin{proof}
As we showed in Lemma~\ref{lem:valid-orient}, $\hat{x}$ is a valid orientation of the surviving edges with respect to $\hat{z}$. Moreover, by Equation~\eqref{rule:round-second}, for each $uv \in E$, $\hat{x}_{\arc{uv}} \leq 2 x_{\arc{uv}} / (1-\eps) $. Hence, for each vertex $v\in V$, $\outDeg(v) \leq 2\Tr/(1-\eps)$. Finally, as we proved in Claim~\ref{clm:first-round}, the total weight of the edit set defined by $\hat{z}$ is at most $\frac{1}{\eps}$ times the total weight of the optimal solution $(x,z)$.
\end{proof}

Hence, together with Lemma~\ref{lem:degen-bounded-outdeg-orientation}, we have the following result.
\begin{corollary}\label{cor:edge-edit}
There exists a $({1\over \eps},{4\over 1-\eps})$-bicriteria approximation algorithm for \bDEE.

In particular, by setting $\eps = 1/5$, there exists a $(5,5)$-bicriteria approximation algorithm for the \bDEEfull problem.
\end{corollary}
\fi

We note that our approach also works in the general setting
when both vertices and edges are weighted, and we consider an edit operation
which includes both vertex and edge deletion.

\iffull
\subsubsection{Integrality gap of \lpMinEdge and \lpMinVertex}
A natural open question is if we can obtain ``purely multiplicative'' approximation guarantees for $\bDEE$ and $\bDEV$ via LP-based approaches. In this section, we show that the existing LP-relaxation of editing to bounded degeneracy cannot achieve $o(n)$-approximation. These results are particularly important because they show that the best we can hope for are bicriteria approximations.

\begin{theorem}\label{thm:integrality-gap-edge}
The integrality gap of \lpMinEdge is $\Omega(n)$.
\end{theorem}
\begin{proof}
Consider an instance of $\bDEE(G)$ where $G$ is an unweighted complete graph of size $2n$ and $\Tr = n-2$. First, we show that $\lpMinEdge(G, r)$ admits a fractional solution of cost/size $O(n)$ and then we show that the size of any feasible edit set of $\bDEE(G)$ is $\Omega(n^2)$.

Consider the following fractional solution of $\lpMinEdge(G, r)$: for all $uv\in V\times V$ and $u\neq v$, $x_{\arc{uv}} = 1/2 - 1/n$ and for all edges $uv\in E$, $z_{uv} = 2/n$. Note that $x$ and $z$ satisfy the first set of constraints in $\lpMinEdge(G, \Tr)$:
\begin{align*}
\forall uv\in E, \quad x_{\arc{uv}} + x_{\arc{vu}} = 1- 2/n = 1-z_{uv}.
\end{align*}
Moreover, $x$ satisfies the second set of the constraints in $\lpMinEdge(G, \Tr)$
\begin{align*}
\forall v \in V, \quad \sum_{u\in V} x_{\arc{vu}} = (2n-1)(1/2 -1/n) < n-2.
\end{align*}
Finally, $\cost(x, z) = \sum_{uv\in E} z_{uv} = n(2n-1) \cdot (2/n) = 4n-2$ which implies that the cost of an optimal solution of $\lpMinEdge(G, \Tr)$ is $O(n)$.

Next, we show that any integral solution of $\bDEE(G, \Tr)$ has size $\Omega(n^2)$. Let $\editset$ be a solution of $\bDEE(G, \Tr)$. Then, there exits an ordering of the vertices in $G$, $v_1, \cdots, v_{2n}$ such that $\deg(v_i)$ in $G[v_i,\dots,v_{2n}]$ is at most $r\leq n-2$. This implies that for $i\leq n-2$,
$|\delta(v_i) \cap \editset| \geq n+2-i$, where $\delta(v)$ denotes the set of edges incident to a vertex
$v$. Thus,
\begin{align*}
|\editset| \geq {1\over 2}\sum_{i\leq n-2} |\delta(v_i) \cap \editset| \geq {1\over 2}\sum_{i\leq n-2} n+2-i \geq {1\over 2} (n^2 -4 - {(n-2)(n-3) \over 2}) \geq {n^2 / 4}.
\end{align*}
Hence, the integrality gap of \lpMinEdge is $\Omega(n)$.
\end{proof}

\begin{theorem}\label{thm:integrality-gap-vertex}
The integrality gap of \lpMinVertex is $\Omega(n)$.
\end{theorem}
\begin{proof}
Consider an instance of $\bDEV(G)$ where $G$ is an unweighted complete graph of size $2n$ and $\Tr = n-2$. First, we show that $\lpMinVertex(G, r)$ admits a constant size fractional solution and then we show that the size of any feasible edit set of $\bDEE(G)$ is $\Omega(n)$.

Consider the following fractional solution of $\lpMinVertex(G, r)$: for each $uv\in V^2$ and $u\neq v$, $x_{\arc{uv}} = 1/2 - 1/n$ and for each vertex $v\in V$, $z_{v} = 1/n$. First, we show that $x$ and $z$ satisfy the first set of constraints in $\lpMinVertex(G, \Tr)$:
\begin{align*}
\forall uv\in E, \quad x_{\arc{uv}} + x_{\arc{vu}} = 1- 2/n = 1-z_{u} - z_{v}.
\end{align*}
Moreover, $x$ satisfies the second set of the constraints in $\lpMinVertex(G, \Tr)$
\begin{align*}
\forall v \in V, \quad \sum_{u\in V} x_{\arc{vu}} = (2n-1)(1/2 -1/n) < n-2.
\end{align*}
Finally, $\cost(x, z) = \sum_{uv\in E} z_{uv} = 2n \cdot (1/n) = 2$ which implies that the cost of an optimal solution of $\lpMinVertex(G, \Tr)$ is at most $2$.

Next, we show that any integral solution of $\bDEV(G, \Tr)$ has size $\Omega(n)$.
Let $\editset$ be a solution of $\bDEE(G, \Tr)$. Since $G\setminus \editset$ is a complete graph of size $2n-|\editset|$, in order to get degeneracy $n-2$, $|\editset| \geq n+2$.

Hence, the integrality gap of \lpMinVertex is $\Omega(n)$.
\end{proof}
\fi

\fi

\ifdefined\isgreedy
    \subsection{Degeneracy: $O(\log n)$ Greedy Approximation}
    \label{section:positive_degeneracy_one}
    
In this section, we give a polytime $O(\log n)$-approximation for reducing the degeneracy of a graph by one using either vertex deletions or edge deletions.
More specifically, given a graph $G = (V, E)$ with degeneracy $\targetdegeneracy$, we produce an edit set $X$ such that $G' = G \setminus X$ has degeneracy $\targetdegeneracy - 1$ and $|X|$ is at most $O(\log |V|)$ times the size of an optimal edit set.
Note that this complements an $O(\log \frac{n}{\targetdegeneracy})$-approximation hardness result for the same problem.

In general, the algorithm works by computing a vertex ordering and greedily choosing an edit to perform based on that ordering.
In our algorithm, we use the \emph{min-degree ordering} of a graph.
The \emph{min-degree ordering} is computed via the classic greedy algorithm given by Matula and Beck~\cite{matula1983smallest} that computes the degeneracy of the graph by repeatedly removing a minimum degree vertex from the graph.
The degeneracy of $G$, $\degener(G)$, is the maximum degree of a vertex when it is removed.
In the following proofs, we make use of the observation that given a min-degree ordering $L$ of the vertices in $G = (V, E)$ and assuming the edges are oriented from smaller to larger indices in $L$, $\outDeg(u) \leq \degener(G)$\footnote{For notational reminders for $\outDeg(u)$, please refer to Section~\ref{sec:preliminaries}.} for any $u \in L$.

The first ordering $L_0$ is constructed by taking a min-degree ordering on the vertices of $G$ where ties may be broken arbitrarily.
Using $L_0$, an edit is greedily chosen to be added to $X$.
Each subsequent ordering $L_i$ is constructed by taking a min-degree ordering on the vertices of $G \setminus X$ where ties are broken based on $L_{i - 1}$.
Specifically, if the vertices $u$ and $v$ have equal degree at the time of removal in the process of computing $L_i$, then $L_i(u) < L_i(v)$ if and only if $L_{i - 1}(u) < L_{i - 1}(v)$.
The algorithm terminates when the min-degree ordering $L_j$ produces a witness that the degeneracy of $G \setminus X$ is $\targetdegeneracy - 1$.

In order to determine which edit to make at step $i$, the algorithm first computes the forward degree of each vertex $u$ based on the ordering $L_i$ (equivalently, $\outDeg(u)$ when edges are oriented from smaller to larger index in $L_i$).
Each vertex with forward degree $\targetdegeneracy$ is marked, and similarly, each edge that has a marked left endpoint is also marked.
The algorithm selects the edit that \emph{resolves} the largest number of marked edges.
We say that a marked edge is \emph{resolved} if it will not be marked in the subsequent ordering $L_{i + 1}$.

We observe that given an optimal edit set (of size $k$), removing the elements of the set in any order will resolve every marked edge after $k$ rounds (assuming that at most one element from the optimal edit set is removed in each round).
If it does not, then the final ordering $L_k$ must have a vertex with forward degree $\targetdegeneracy$, a contradiction.
Let $m_i$ be the number of marked edges based on the ordering $L_i$.
We show that we can always resolve at least $\frac{m_i}{k}$ marked edges in each round, giving our desired approximation.

\begin{lemma} \label{lemma:unmarked-vertex}
    A vertex that is unmarked in $L_i$ cannot become marked in $L_{j}$ for any $j > i$.
\end{lemma}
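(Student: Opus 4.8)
The plan is to prove the stronger statement that forward degrees never increase from one round to the next, and then chain it along the sequence $L_i, L_{i+1},\dots,L_j$. First I would reduce to the consecutive case: it suffices to show that for every vertex $u$ surviving into $G\setminus X_{i+1}$ one has $\outDeg_{L_{i+1}}(u)\le \outDeg_{L_i}(u)$, where orientations are from smaller to larger index in the respective ordering. Indeed, if a vertex is deleted in some later round it is vacuously unmarked, and otherwise induction on $j-i$ propagates the bound $\outDeg_{L_j}(u)\le \outDeg_{L_i}(u)\le \Tr-1$; since every subgraph of $G\setminus X_\ell$ is $\Tr$-degenerate, and the degeneracy of $G\setminus X_\ell$ equals $\Tr$ until the algorithm halts, ``forward degree at most $\Tr-1$'' is exactly ``unmarked'' (the claim being vacuous in the final round, where the degeneracy has already dropped below $\Tr$). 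So the whole lemma reduces to the one-step monotonicity claim.

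For that claim, write $H=G\setminus X_i$ and let $H'=G\setminus X_{i+1}$, which is $H$ with a single vertex or edge removed; recall that $L_{i+1}$ is obtained from the min-degree peeling of $H'$ with ties broken so that the vertex of smaller $L_i$-value is removed first. Fix a surviving vertex $u$. Because $H'\subseteq H$ we have $N_{H'}(u)\subseteq N_{H}(u)$, so it is enough to show that the relative order of $u$ and its neighbours is preserved in the one direction needed: if $w\in N_{H'}(u)$ and $L_{i+1}(u)<L_{i+1}(w)$, then $L_i(u)<L_i(w)$, i.e.\ every $L_{i+1}$-forward neighbour of $u$ was already an $L_i$-forward neighbour, whence $\outDeg_{L_{i+1}}(u)\le\outDeg_{L_i}(u)$. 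Suppose not, so $L_i(w)<L_i(u)$. Look at the peeling of $H'$ at the moment $u$ is removed: $w$ is still present and adjacent to $u$, and $u$ is chosen as a minimum-degree vertex in preference to $w$; since the $L_i$-tie-break would have favoured $w$, it must be that $u$ has strictly smaller residual degree than $w$ at that point. The goal is then to contradict the way $L_i$ was built, using that $w$ was removed before $u$ in the peeling of $H$ (which is precisely what $L_i(w)<L_i(u)$ records).

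The main obstacle is exactly this last step: the two peelings run on different graphs ($H$ versus $H$ with one element removed) and use different tie-break rules ($L_{i-1}$ versus $L_i$), so their residual graphs cannot be compared naively. I would handle it with a coupling/invariant argument running the two peelings in lockstep and maintaining that the missing element can only ever make a vertex eligible for removal earlier, never later — deleting it only lowers degrees, and using $L_i$ itself as the tie-break prevents any spurious reordering — so that ``$u$ removed before $w$ in $H'$ while $w$ is removed before $u$ in $H$, with $uw$ still an edge'' is impossible, yielding the contradiction. An alternative route, if the coupling proves awkward, is to argue directly that $\mathrm{marked}(L_{i+1})\subseteq\mathrm{marked}(L_i)$: a min-degree peeling first removes all vertices outside $\core{\Tr}{\cdot}$ (each with residual degree $<\Tr$, hence unmarked), and the marked vertices then form a nested chain inside $\core{\Tr}{\cdot}$ determined by the tie-break; combined with $\core{\Tr}{H'}\subseteq\core{\Tr}{H}$ this can be pushed through to compare the two chains, at the cost of carefully matching them up.
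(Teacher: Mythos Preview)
Your reduction to the consecutive case $j=i+1$ is sound, and with the tie-break by $L_i$ in hand it is cleaner than arguing directly about $L_i$ versus $L_j$. The gap is that the one-step claim you try to prove---full monotonicity $\outDeg_{L_{i+1}}(u)\le\outDeg_{L_i}(u)$ for every surviving $u$---is \emph{false}, so neither the coupling nor the core-based plan can succeed. Take $H$ to be two disjoint copies of $K_5$ on $\{a_1,\dots,e_1\}$ and $\{a_2,\dots,e_2\}$ (so $r=4$), with tie-break the natural order. Then $L_i=(a_1,b_1,c_1,d_1,e_1,a_2,\dots,e_2)$ and $\outDeg_{L_i}(d_1)=1$. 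Deleting the edge $d_1e_1$ resolves all four marked edges in the first copy and none in the second, so it is a legitimate greedy choice (tied for best); re-peeling gives $L_{i+1}=(d_1,a_1,b_1,c_1,e_1,a_2,\dots,e_2)$, and now $\outDeg_{L_{i+1}}(d_1)=3$. The pair $(d_1,a_1)$ is exactly the adjacent inversion your argument declares impossible, and $a_1$ moved \emph{later} in the order, refuting the ``earlier, never later'' invariant. Note $d_1$ is still unmarked ($3<r$), so the lemma itself survives; only the stronger monotonicity fails.

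The paper's proof keeps the weaker target and exploits the specific threshold $r$. If an unmarked $u$ becomes marked, then at the moment $u$ is removed in the later peeling its residual degree equals $r$; any inversion partner $v$ still present must then have residual degree strictly greater than $r$. Since every forward degree in $L_i$ is at most $r$ (the degeneracy bound), $v$ has strictly more neighbours present than it had $L_i$-forward neighbours, so $v$ itself must have a positive inversion with some earlier $w$; chasing this eventually runs out of vertices, a contradiction. Your consecutive-case reduction can be kept, but the one-step statement must be ``unmarked stays unmarked'' rather than monotonicity, and its proof needs both $\outDeg(u)=r$ and the global cap $\outDeg_{L_i}(\cdot)\le r$ to drive the chase.
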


\begin{proof}
    For an unmarked vertex $v$ to become marked, its forward degree must increase from $d \leq \targetdegeneracy - 1$ to $\targetdegeneracy$ when going from $L_i$ to $L_j$ for some $j > i$.
    In other words, $\outDeg_{L_i}(v) < \targetdegeneracy$ whereas $\outDeg_{L_j}(v) = \targetdegeneracy$.
    Since edges are not added to $G$, this can only occur if a backward neighbor $u$ of $v$ becomes a forward neighbor.
    Let $\{u, v\}$ be an \emph{inversion} if $L_i(u) > L_i(v)$ but $L_j(u) < L_j(v)$\footnote{Note that $u$ and $v$ do \emph{not} have to be connected by an edge.}. An inversion can occur between neighbors 
    $u'$ and $v'$, in which case $u'$ and $v'$ are connected by an edge.
    We call this a \emph{positive inversion} for $u$ and a \emph{negative inversion} for $v$.
    If the number of positive inversions for $u$ of $u$'s neighbors is greater than the number of negative inversions of $u$'s neighbors between $L_i$ and $L_j$, then $\outDeg_{L_j}(u) > \outDeg_{L_i}(u)$.

    By our previous observation, an unmarked vertex can only become a marked vertex through inversions.
    Let $u$ be a vertex that was unmarked in $L_i$ but becomes marked in $L_j$.
    Let $u$ and $v$ be the first positive inversion for $u$ in $L_i$ (i.e.\ there is not a $w$ such that $u$ and $w$ form a positive inversion for $u$ and $L_i(w) < L_i(v)$).
    Because the algorithm breaks ties when constructing $L_j$ based on $L_{j-1}$, if $u$ and $v$ form a positive inversion for $u$ and $u$ becomes marked in $L_j$, then $\outDeg_{L_j}(u) < \deg_{L_j[i_u, n]}(v)$\footnote{Let $\deg_{L_j[i_u, n]}(v)$ be the degree of $v$ restricted to vertices between indices $i_u$ and $n$ in $L_j$.  Here, $i_u$ is the index of $u$.} and $\outDeg_{L_j}(u) = r$.
    (If, instead, $\outDeg_{L_j}(u) > \deg_{L_j[i_u, n]}(v)$, then $v$ would have been removed first according to $L_j$.)
    Then, either (1) $\deg_{L_j[i_u, n]}(v) \leq \outDeg_{L_i}(v)$ or (2) $\deg_{L_j[i_u, n]}(v) > \outDeg_{L_i}(v)$.

    If (1) occurs, then $\outDeg_{L_j}(u)$ cannot be $\targetdegeneracy$ since this would imply $\outDeg_{L_i}(v) > \targetdegeneracy$, a contradiction.
    However, (2) can only occur through positive inversions of $v$ (in fact, through positive inversions of $v$'s neighbors) since we chose $v$ to be the first positive inversion of $u$. (Hence, $v$ cannot gain additional edges in the range $L_j[i_u, i_v]$ when going from $L_i$ to $L_j$.)
    Let $w$ be the first positive inversion of $v$ in $L_j$. Given that $v$ must have at least one positive inversion with one of its neighbors, $w$ must exist (it can either be the neighbor of $v$ or another node.)
    The same case analysis applies to $v$ and $w$, implying that $w$ has a positive inversion and so on, eventually leading to a contradiction due to a lack of additional vertices to form a positive inversion.
\end{proof}

Using Lemma~\ref{lemma:unmarked-vertex}, we are able to prove a similar statement about marked edges.

\begin{lemma} \label{lemma:unmarked-edge}
    An edge that is unmarked in $L_i$ cannot become marked in $L_{j}$ for any $j > i$.
\end{lemma}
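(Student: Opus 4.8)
The plan is to reduce to a single step and then mirror the case analysis in the proof of Lemma~\ref{lemma:unmarked-vertex}. The key observation is that an edge $\{u,v\}$ is marked in an ordering $L$ exactly when the endpoint of $\{u,v\}$ appearing earlier in $L$ is marked in $L$. Since establishing ``unmarked in $L_i$ implies unmarked in $L_{i+1}$'' for every $i$ yields, by induction along the chain of orderings, ``unmarked in $L_i$ implies unmarked in $L_j$'' for all $j>i$, it suffices to prove the one-step statement. This is convenient because in passing from $L_i$ to $L_{i+1}$ exactly one edit (a vertex or an edge) is removed; if that edit is incident to $\{u,v\}$ then $\{u,v\}$ disappears and cannot be marked, so we may assume $\{u,v\}$ survives the step.

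So suppose $e=\{u,v\}$ is unmarked in $L_i$ but marked in $L_{i+1}$, and let $a\in\{u,v\}$ be the earlier endpoint of $e$ in $L_{i+1}$, so that $a$ is marked in $L_{i+1}$. If $a$ is also the earlier endpoint of $e$ in $L_i$, then $a$ is unmarked in $L_i$ (because $e$ is), and Lemma~\ref{lemma:unmarked-vertex} (applied to the step $i\to i+1$) forbids $a$ from being marked in $L_{i+1}$, a contradiction. Hence the orientation of $e$ flips: $a$ is the later endpoint of $e$ in $L_i$. Applying Lemma~\ref{lemma:unmarked-vertex} again in contrapositive form, if $a$ were unmarked in $L_i$ it would stay unmarked in $L_{i+1}$; so $a$ must already be marked in $L_i$, i.e.\ $\outDeg_{L_i}(a)=r$.

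The remaining — and only hard — case is therefore: $a$ is marked in both $L_i$ and $L_{i+1}$ while $e$ turns from a backward edge of $a$ into a forward edge of $a$, which is a positive inversion for $a$ (with its neighbor, the other endpoint of $e$). Here I would reuse the inversion-chain bookkeeping of Lemma~\ref{lemma:unmarked-vertex} rather than its statement. Since the process never adds edges and the current graph still has degeneracy $r$ (the algorithm halts only once it drops to $r-1$, at which point nothing is marked and the claim is trivial), every forward degree is at most $r$ in every ordering; hence if all of $a$'s forward neighbors in $L_i$ remained forward neighbors in $L_{i+1}$ we would get $\outDeg_{L_{i+1}}(a)\ge r+1$, impossible. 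Thus some forward neighbor $c$ of $a$ in $L_i$ fails to be a forward neighbor of $a$ in $L_{i+1}$: either the edge $\{a,c\}$ was the edit removed at this step — in which case $a$ lost a forward edge, so to stay marked \emph{yet another} backward neighbor of $a$ must have flipped to forward, and one recurses on that neighbor — or $\{a,c\}$ survives and $c$ flipped across $a$ via a positive inversion of its own, at which point the tie-breaking rule used to build $L_{i+1}$ forces the relevant neighbor of $c$ to flip, and so on. In every branch this is the same infinite descent over the finite vertex set that drives the proof of Lemma~\ref{lemma:unmarked-vertex}, a contradiction. The main obstacle is exactly this orientation-flip case: the orientation-preserving case is a one-line appeal to Lemma~\ref{lemma:unmarked-vertex}, but when the edge reverses direction one has to re-run that lemma's positive/negative-inversion argument, now seeded from an already-marked vertex, rather than invoke it as a black box.
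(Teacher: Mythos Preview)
Your setup mirrors the paper's: both use Lemma~\ref{lemma:unmarked-vertex} to conclude that the $L_i$-earlier endpoint $u$ stays unmarked forever, reducing to the orientation-flip case where $v$ precedes $u$ in $L_j$ and one must rule out $v$ being marked there. Where you diverge is in how that case is handled. The paper does not reduce to a single step and does not re-run the inversion-chain bookkeeping; it argues directly from the min-degree property of $L_j$: since $u$ is unmarked in $L_j$ we have $\outDeg_{L_j}(u)\le r-1$, and because $v$ was removed before $u$ while $\{u,v\}$ is an inversion, the forward degree of $v$ (which counts $u$ among its forward neighbors) is forced below $r$, so $v$ is unmarked too. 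No descent is needed.

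Your descent-style argument, by contrast, has a concrete gap in the branch where the single edit deletes a forward neighbor $c$ of $a$ in $L_i$. You claim that for $a$ to remain marked in $L_{i+1}$, \emph{yet another} backward neighbor of $a$ (beyond $b$) must have flipped forward, and you recurse on it. But this is not forced: $a$ had forward degree exactly $r$ in $L_i$, so losing $c$ (via the edit) and gaining $b$ (via the one flip you already know about) is a net change of zero, and $a$ can be marked in $L_{i+1}$ with only that single flip. There is then no ``next neighbor'' to recurse on, and the infinite descent stalls in this branch. Fixing it would require a different argument here --- e.g., comparing degrees at the moment $a$ is removed in $L_{i+1}$ and using that $b$, being permanently unmarked by Lemma~\ref{lemma:unmarked-vertex}, has small forward degree --- which is essentially the paper's direct route.
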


\begin{proof}
    Suppose \WLOG that the edge $e = (u, v)$ is unmarked in $L_i$ and that $L_i(u) < L_i(v)$.
    This implies that $u$ is unmarked in $L_i$.
    By Lemma~\ref{lemma:unmarked-vertex}, $u$ cannot become marked in $L_{j}$.
    Thus, in order for $e$ to become marked, $L_{j}(v) < L_{j}(u)$ and $\outDeg_{L_j}(v) = \targetdegeneracy$.
    Since $u$ is unmarked in $L_j$, we know that $\outDeg_{L_j}(u) \leq \targetdegeneracy - 1$.
    So for $L_{j}(v) < L_{j}(u)$ where $u$ and $v$ are an inversion, the forward degree of $v$ including $u$ must be less than $\targetdegeneracy - 1$.
    Thus, $v$ must be unmarked in $L_{j}$, and so $e$ is also unmarked.
\end{proof}

Lemmas~\ref{lemma:unmarked-vertex} and~\ref{lemma:unmarked-edge} allow us to make a claim about the number of marked edges that any one edit resolves.

\begin{lemma} \label{lemma:resolve-noninc}
    For a given edit $x$, the number of marked edges that it resolves is monotonically non-increasing from $L_0$.
    In other words, for any $i < j$, the number of marked edges $x$ resolves in $L_i$ is at least as many as the number of marked edges $x$ resolves in $L_j$.
\end{lemma}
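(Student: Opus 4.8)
The plan is to reduce the statement to consecutive orderings and then to a purely combinatorial inequality about how the marked-edge set behaves under one extra edit, which I would then attack by an inversion/tie-break analysis in the spirit of Lemmas~\ref{lemma:unmarked-vertex} and~\ref{lemma:unmarked-edge}.

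First I would observe that Lemmas~\ref{lemma:unmarked-vertex} and~\ref{lemma:unmarked-edge} never really use that the orderings come from the greedy run itself: they hold for any ordering obtained by appending edits to the current edit set and recomputing the min-degree ordering with the standard tie-break rule. In particular, if $M_i$ denotes the marked-edge set of $L_i$ and $M_i(x)$ the marked-edge set of the ordering obtained by additionally deleting $x$, then $M_i(x)\subseteq M_i$; so ``resolving'' is literally set removal and the number of marked edges $x$ resolves at step $i$ equals $|M_i|-|M_i(x)|$. Also $M_{i+1}\subseteq M_i$, so the marked sets shrink along the run. It then suffices to prove $|M_i|-|M_i(x)|\ge |M_{i+1}|-|M_{i+1}(x)|$ for each $i$ and chain.

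Next, let $y$ be the edit the greedy algorithm actually performs at step $i$, so that $L_{i+1}$ is obtained from $L_i$ by deleting $y$ and re-ordering. The relevant marked-edge sets are nested: $M_{i+1}\subseteq M_i$, $M_i(x)\subseteq M_i$, and the set $M_{i+1}(x)$ of edges still marked after deleting both $y$ and $x$ is contained in $M_{i+1}\cap M_i(x)$ (apply the observation once along ``first $y$, then $x$'' and once along ``first $x$, then $y$''). A short inclusion–exclusion on these four sets then rewrites the desired inequality as
\[
\bigl|\,M_i\setminus(M_{i+1}\cup M_i(x))\,\bigr| \;\ge\; \bigl|\,(M_{i+1}\cap M_i(x))\setminus M_{i+1}(x)\,\bigr|,
\]
i.e.\ the number of edges that are unmarked by $x$ alone \emph{and} by $y$ alone is at least the number of edges that survive $x$ alone and survive $y$ alone but become unmarked once both edits are applied. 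Informally, two edits cannot ``cooperate'' to unmark an already-marked edge strictly more often than they help separately.

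The hard part is this last inequality. An edge $(u,v)$ with $u$ before $v$ is marked exactly when $\outDeg(u)=\targetdegeneracy$, so it becomes unmarked only if the edit deletes it, deletes $u$, drops $\outDeg(u)$ below $\targetdegeneracy$, or the re-ordering flips the edge so that $v$ becomes the (now low-out-degree) left endpoint. I would prove the inequality by a charging argument that follows chains of \emph{first positive inversions}, exactly as in the proof of Lemma~\ref{lemma:unmarked-vertex}: whenever deleting both $x$ and $y$ creates at a marked vertex $u$ a new backward-to-forward inversion that neither deletion creates alone, the consistent tie-break forces some vertex earlier in the order to have lost out-degree, and walking this chain exhibits a distinct edge that $x$ or $y$ already unmarks on its own. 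Controlling how the global min-degree re-ordering composes under two deletions, and ruling out a net gain of cooperative unmarkings, is the crux of the argument, and it is precisely where the tie-breaking hypothesis (each $L_{i+1}$ breaks ties according to $L_i$) does the work, just as in the two preceding lemmas.
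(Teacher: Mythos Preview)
Your reduction to the consecutive-step inequality $|M_i|-|M_i(x)|\ge |M_{i+1}|-|M_{i+1}(x)|$ is fine, but the inclusion--exclusion reformulation rests on a containment that is not justified and in fact need not hold. You claim $M_{i+1}(x)\subseteq M_i(x)$ by ``applying the observation along first $x$, then $y$''. But the ordering obtained by first deleting $x$ (tie-breaking by $L_i$) and then deleting $y$ (tie-breaking by that new order) is \emph{not} the ordering that defines $M_{i+1}(x)$: the latter is built by first deleting $y$ (tie-break $L_i$) to reach $L_{i+1}$, and then deleting $x$ with tie-break $L_{i+1}$. The marked-edge set of a min-degree ordering genuinely depends on the tie-break --- on a $4$-cycle, for instance, different tie-breaks yield different two-edge marked sets --- so you cannot swap the order of deletions and get the same marked set. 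Lemma~\ref{lemma:unmarked-edge} requires a chain of orderings in which each breaks ties by its immediate predecessor; there is no such chain from the ordering underlying $M_i(x)$ to that underlying $M_{i+1}(x)$. Without $M_{i+1}(x)\subseteq M_i(x)$ your rewritten inequality does not follow, and the ``cooperation'' charging argument is attacking a statement that may not be equivalent to the lemma.

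The paper avoids this by never branching to the ``$x$ first'' world. It stays inside the single greedy chain $L_i\to L_{i+1}$ and asks directly which marked edges become \emph{newly} resolvable by $x$ after $y$ is deleted. Since an edit resolves a marked edge only by deleting it or by dropping the forward degree of its left endpoint, any gain must come from an inversion in the neighborhood of $x$; the proof is then a case analysis (a forward neighbor of $x$ inverting past $x$; a back neighbor of $x$ inverting with some third vertex; etc.) in which the tie-break forces the relevant forward degrees to be $\targetdegeneracy{-}1$ or $\targetdegeneracy$, so each bundle of $\le\targetdegeneracy$ newly resolvable edges is matched by an equal-sized bundle that $x$ previously resolved and that $y$'s deletion already unmarked. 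Your inversion-chain intuition is the right engine, but you should run it along the one chain the algorithm actually produces rather than through a two-path set identity that the tie-breaking does not support.
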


\begin{proof}
    By Lemma~\ref{lemma:unmarked-edge}, we know that an unmarked edge cannot become marked.
    Thus, for the number of marked edges that an edit resolves in $L_i$ to increase in $L_{i + 1}$, an existing marked edge must become resolvable by making a different edit.
    Note that edits resolve edges by either deleting them or reducing the forward degree of marked vertices.
    Since the back neighbors of an edit $x$ cannot become marked by Lemma~\ref{lemma:unmarked-vertex} and any vertices that form a negative inversion with $x$ via another edit must have forward degree at most $\targetdegeneracy - 1$, it is not possible for $x$ to gain marked edges that are resolvable by deletion.

    Instead, any new resolvable marked edges must be resolved by the deletion of $x$ reducing the degree of a back neighbor by one.
    Note that changes to the set of resolvable edges can only occur if the relative ordering of the neighbors of $x$ changes.
    First, we will consider the case where a forward neighbor $v$ forms an inversion with neighbor $x$.
    After the inversion, the forward degree of $v$ will be one less than the original forward degree of $x$.
    Furthermore, we note that $v$ can now only form an inversion with its last back neighbor $b$ assuming that the forward degree of $b$ is exactly one greater than the forward degree of $v$.
    Thus, the forward edges of $b$ are resolvable by $x$ if they are marked.
    Note that there are exactly $\targetdegeneracy$ of these edges if this is the case.
    However, this implies that the forward degree of $v$ is $\targetdegeneracy - 1$, and so the original forward degree of $x$ must have been $\targetdegeneracy$.
    These $\targetdegeneracy$ edges must have been resolved by the inversion of $v$ and $x$, so $x$ resolves at most the same number of edges as it did originally.
    Note that if multiple forward neighbors form inversions with $x$ simultaneously, only one of them can have forward degree $\targetdegeneracy - 1$.

    Next, we consider the case where a back neighbor $b$ of $x$ forms an inversion with some other neighboring vertex $v$ (potentially of no relation to $x$).
    Again, we rely on the fact that $b$ must have forward degree $\targetdegeneracy - 1$ after the inversion in order to be able to make a second inversion that resolves additional marked edges.
    However, this implies that $v$ originally had forward degree $\targetdegeneracy$, and so when $b$ inverted with $v$, $\targetdegeneracy$ marked edges were resolved.
    Since $b$ can only form an inversion with one of its back neighbors per edit, there are at most $\targetdegeneracy$ new marked edges that could be resolved by editing $x$.
    Thus, $x$ resolves at most the same number of edges as it did originally.

    Finally, we consider the case where $v$ forms an inversion with a back neighbor $b$.
    In order for $b$ to be able to form another inversion with a different marked neighbor, it must have forward degree $\targetdegeneracy - 1$.
    However, it must first form an inversion with $v$ again which has already been considered.
\end{proof}

\begin{theorem} \label{theorem:degen-by-one}
    There exists an $O(\log n)$-approximation for finding the minimum size edit set to reduce the degeneracy of a graph from $\targetdegeneracy$ to $\targetdegeneracy - 1$.
\end{theorem}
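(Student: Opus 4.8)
The plan is to run the standard greedy set-cover potential argument on the quantity $m_i$, the number of marked edges in the ordering $L_i$ produced at round $i$. Write $k = \opt$ for the size of an optimal edit set. The heart of the matter is to show that in every round the edit chosen by the algorithm resolves at least $m_i/k$ marked edges, so that $m_{i+1} \le (1 - 1/k)\, m_i$; since $m_0 \le n^2$, this forces $m_t = 0$ after $t = O(k\log n)$ rounds, at which point $L_t$ has no marked vertex and hence (by Lemma~\ref{lem:degen-properties}) certifies degeneracy at most $\Tr - 1$. The edit set produced then has size $O(k\log n) = O(\log n)\cdot\opt$, and each round only requires computing a min-degree ordering and evaluating each candidate edit, so the whole algorithm runs in polynomial time.

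To establish the per-round bound, fix a round $i$ and let $X_i$ be the set of edits made so far; if $G \setminus X_i$ already has degeneracy $\Tr - 1$ we are done, so assume not. Let $X^\star$ be an optimal edit set. Since degeneracy is monotone under vertex and edge deletion, $G \setminus (X_i \cup X^\star)$ has degeneracy at most $\Tr - 1$, so $X^\star \setminus X_i$ is a valid edit set for $G\setminus X_i$ of size at most $k$. Now consider the auxiliary sequence of orderings obtained by starting from $L_i$ and deleting the elements of $X^\star\setminus X_i$ one per round, recomputing the min-degree ordering with ties broken from the previous one; this is exactly the min-degree-with-tie-breaking process, so Lemmas~\ref{lemma:unmarked-vertex}, \ref{lemma:unmarked-edge}, and~\ref{lemma:resolve-noninc} apply to it verbatim. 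By the observation that deleting an optimal edit set one edit per round resolves every marked edge within $k$ rounds (applied to the graph $G\setminus X_i$ and ordering $L_i$), after all of these at most $k$ deletions every edge marked in $L_i$ has been resolved; and by Lemma~\ref{lemma:unmarked-edge} an edge that is resolved never becomes marked again, so each of the $m_i$ marked edges is resolved at exactly one of these rounds. By pigeonhole, some optimal edit $x^\star \in X^\star\setminus X_i$ resolves at least $m_i/k$ marked edges at the round in which it is deleted. Finally, by Lemma~\ref{lemma:resolve-noninc} the number of marked edges that deleting the fixed edit $x^\star$ resolves is non-increasing along this sequence, so $x^\star$ resolves at least $m_i/k$ marked edges already when evaluated against $L_i$. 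Since the algorithm picks the edit resolving the most marked edges in $L_i$, it resolves at least $m_i/k$ of them, giving $m_{i+1}\le(1-1/k)m_i$ (because marked edges of $L_{i+1}$ are, by Lemma~\ref{lemma:unmarked-edge}, a subset of those of $L_i$ minus the resolved ones).

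To conclude, $m_{i+1} \le (1-1/k) m_i$ together with $m_0 \le n^2$ gives $m_t \le n^2 e^{-t/k} < 1$, hence $m_t = 0$, as soon as $t \ge 2k\ln n + 1$. When $m_t = 0$ there is no marked vertex either (a marked vertex has forward degree $\Tr \ge 1$, hence a marked forward edge), so every vertex has forward degree at most $\Tr - 1$ in $L_t$, and $G\setminus X_t$ has degeneracy at most $\Tr - 1$ by Lemma~\ref{lem:degen-properties}; the algorithm therefore halts with $|X_t| \le t = O(k\log n) = O(\log n)\cdot\opt$. The one delicate point is the middle paragraph: one must argue that the three structural lemmas — proved with the greedy sequence in mind — in fact hold for the auxiliary ordering sequence generated by the optimal edits, and that the ``remove all of $X^\star$, one per round, and everything is resolved'' observation transfers from the initial graph to each intermediate graph $G\setminus X_i$ (which it does, since $X^\star\setminus X_i$ remains a valid edit set of size at most $k$ by deletion-monotonicity of degeneracy). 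Everything else is the routine $\ln$-factor greedy analysis.
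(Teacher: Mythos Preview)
Your proof is correct and follows essentially the same approach as the paper: the standard greedy set-cover potential argument on the number of marked edges, using the pigeonhole over an optimal edit sequence together with Lemma~\ref{lemma:resolve-noninc} to guarantee a $1/k$ fraction is resolved each round, and Lemma~\ref{lemma:unmarked-edge} to ensure no backsliding. Your write-up is arguably a bit more careful than the paper's in two places: you explicitly note that the three structural lemmas apply to the auxiliary sequence generated by the optimal edits (not only to the greedy sequence), and you spell out why the optimal remains of size at most $k$ at each intermediate graph via deletion-monotonicity of degeneracy.
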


\begin{proof}
    Let $m_0$ be the number of marked edges in the first min-degree ordering.
    Since the optimal edit set (of size $k$) resolves every marked edge in $k$ steps, the $k$ edits must on average resolve $\frac{m_0}{k}$ marked edges.
    Thus, the largest number of marked edges resolved by an edit from the optimal edit set must be at least $\frac{m_0}{k}$.
    Fix a sequence of the optimal edit set.
    By Lemma~\ref{lemma:resolve-noninc}, the number of marked edges that this edit resolves at the current step must be at least as large as when it appears in the optimal sequence.
    Thus, there must exist an edit which resolves at least $\frac{m_0}{k}$ marked edges, and so the edit that resolves the most marked edges must resolve at least $\frac{m_0}{k}$ as well.
    After one iteration then, there are $m_1 \leq m_0(1 - \frac{1}{k})$ marked edges remaining.
    Since the edited graph is a subgraph of the original, the optimal sized edit set must still be of size at most $k$ and so the same analysis applies.
    Thus, after $t$ steps, there are at most $m_0(1 - \frac{1}{k})^t$ marked edges remaining.
    If we set $t = k \ln m_0$, then there at most $m_0(1 - \frac{1}{k})^{k \ln m_0} \leq m_0 \cdot \frac{1}{e}^{\ln m_0} = 1$ marked edges remaining.
    Thus, we need at most $k \ln m_0 + 1$ iterations to resolve every marked edge.
    Since we add one edit at each iteration, we produce an edit set of size at most $k \ln m_0 + 1$.
    Because $m_0$ is at most $n^2$, the size of the edit set is $k \ln m_0 + 1 \leq k \ln n^2 + 1 = O(k  \log n)$.
\end{proof}

\begin{corollary} \label{corollary:degen-by-d}
    There exists an $O(d \cdot \log n)$-approximation for finding the minimum size edit set to reduce the degeneracy of a graph from $\targetdegeneracy$ to $\targetdegeneracy - d$.
\end{corollary}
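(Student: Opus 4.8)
The plan is to reduce the degeneracy by one a total of $d$ times, each time invoking the $O(\log n)$-approximation of Theorem~\ref{theorem:degen-by-one}, and to charge the cost of each individual step against the optimum of the whole problem. Fix an optimal edit set $X^{*}$ for reducing the degeneracy of $G$ from $r$ to $r-d$ (equivalently, to degeneracy at most $r-d$), and write $k = |X^{*}|$; note $k$ is finite since deleting all vertices/edges reaches degeneracy $0$. Starting from $G_{0} = G$, I would apply Theorem~\ref{theorem:degen-by-one} in round $i$ to $G_{i-1}$, obtaining an edit set $X_{i}$ such that $G_{i} := G_{i-1}\setminus X_{i}$ has degeneracy exactly $(r-(i-1))-1 = r-i$, and then output $X := X_{1}\cup\cdots\cup X_{d}$. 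By construction $G\setminus X = G_{d}$ has degeneracy $r-d$, so $X$ is feasible; it remains to bound $|X|$.

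The key step is the claim that the reduce-by-one instance solved in round $i$ has optimum at most $k$. Each $G_{i-1}$ is obtained from $G$ by deleting $X_{1}\cup\cdots\cup X_{i-1}$, hence is a subgraph of $G$ (a vertex-induced subgraph, or an edge-subgraph, depending on the edit operation); consequently $G_{i-1}\setminus X^{*}$ is a subgraph of $G\setminus X^{*}$. Since degeneracy is monotone under taking subgraphs (Definition~\ref{def:degeneracy}), $G_{i-1}\setminus X^{*}$ has degeneracy at most $r-d$, and because $i\le d$ we have $r-d \le (r-(i-1))-1$. Thus $X^{*}$ is a feasible edit set witnessing that the degeneracy of $G_{i-1}$ can be reduced by (at least) one, so the optimum of the round-$i$ instance is at most $k$. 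Theorem~\ref{theorem:degen-by-one} therefore returns $X_{i}$ with $|X_{i}| = O(\log |V(G_{i-1})|)\cdot k = O(\log n)\cdot k$.

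Summing over the $d$ rounds yields $|X| \le \sum_{i=1}^{d}|X_{i}| = O(d\log n)\cdot k$, which is an $O(d\log n)$-approximation to the minimum edit set reducing the degeneracy from $r$ to $r-d$ (the degenerate case where $G$ already has degeneracy at most $r-d$ is handled by outputting the empty set). I expect the only real subtlety to be the charging argument of the middle paragraph --- specifically, confirming that each intermediate reduce-by-one subproblem really has optimum at most $k$, which hinges on monotonicity of degeneracy under the relevant subgraph operation together with the slack $i\le d$; the remainder is bookkeeping, and the same argument applies verbatim to vertex and edge deletions since Theorem~\ref{theorem:degen-by-one} covers both.
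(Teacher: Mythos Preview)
Your proposal is correct and follows essentially the same approach as the paper: iterate the reduce-by-one algorithm $d$ times and charge each round's optimum against the global optimum $k$ via monotonicity of degeneracy under subgraphs. Your justification of the charging step (that $G_{i-1}\setminus X^{*}$ is a subgraph of $G\setminus X^{*}$ and hence has degeneracy at most $r-d\le r-i$) is in fact more explicit than the paper's, which simply asserts the inequality between the per-round optimum and $k$ without spelling out the monotonicity argument.
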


\begin{proof}
    Apply the above algorithm $d$ times.
    Let $G_i$ be the edited graph after $i$ applications.
    Note that $G_i$ has degeneracy $\targetdegeneracy - i$.
    The optimally sized edit set $\opt$ to reduce the degeneracy of $G_i$ from $\targetdegeneracy - i$ to $\targetdegeneracy - i - 1$ is at most the size of the smallest edit set to reduce the degeneracy of $G_0$ from $\targetdegeneracy$ to $\targetdegeneracy - d$.
    Thus, at each iteration, we add at most $O(|\opt|\log n)$ edits to our edit set.
    After $d$ iterations then, we have a graph with degeneracy $\targetdegeneracy - d$ and an edit set of size $O(d|\opt|\log n)$.
\end{proof}

\begin{corollary} \label{corollary:degen-to-r}
    There exists an $O(r \cdot \log n)$-approximation for finding the minimum size edit set to reduce the degeneracy of a graph to $\targetdegeneracy$.
\end{corollary}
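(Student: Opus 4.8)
The plan is to combine the iterated ``reduce-by-one'' greedy of Corollary~\ref{corollary:degen-by-d} with one of the constant-factor bicriteria degeneracy-editing algorithms of Section~\ref{section:positive_degeneracy_lp} (Corollary~\ref{cor:vertex-edit} for vertex deletions and Corollary~\ref{cor:edge-edit} for edge deletions), run in two phases. The reason a direct application of Corollary~\ref{corollary:degen-by-d} with $d = \degener(G) - \targetdegeneracy$ does not suffice is that it only yields an $O\big((\degener(G)-\targetdegeneracy)\log n\big)$ ratio, which is not bounded in terms of the target $\targetdegeneracy$; Phase~1 first collapses the degeneracy to $O(\targetdegeneracy)$ while spending only $O(1)$ times the optimum, so that the residual gap handled by the greedy is only $O(\targetdegeneracy)$.

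First I would dispose of the trivial case: if $\degener(G) \le \targetdegeneracy$, output $\emptyset$. Otherwise let $\opt$ be the minimum number of edits (vertex deletions, or edge deletions, depending on the variant) bringing $G$ to degeneracy at most $\targetdegeneracy$. In Phase~1, run the constant-factor bicriteria algorithm with target parameter $\lambda^{*} = \targetdegeneracy$; it returns an edit set $X_1$ with $|X_1| \le \alpha\cdot\opt$ for an absolute constant $\alpha$ (at most $6$) such that $G_1 := G \setminus X_1$ has degeneracy at most $\beta\targetdegeneracy$ for an absolute constant $\beta$ (at most $6$). Here it matters that the first coordinate of the bicriteria guarantee is measured against the optimum for target exactly $\targetdegeneracy$, which is $\opt$.

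In Phase~2, apply Corollary~\ref{corollary:degen-by-d} to $G_1$ with $d := \degener(G_1) - \targetdegeneracy \le (\beta-1)\targetdegeneracy = O(\targetdegeneracy)$, obtaining an edit set $X_2$ with $|X_2| \le O(d\log n)\cdot\opt' = O(\targetdegeneracy\log n)\cdot\opt'$, where $\opt'$ is the minimum number of edits reducing the degeneracy of $G_1$ to at most $\targetdegeneracy$. Since $G_1$ is obtained from $G$ using only deletions, the restriction of an optimal solution $S$ for $G$ to the vertices/edges still present in $G_1$ is feasible for $G_1$: indeed $G_1 \setminus (S \cap G_1)$ is a subgraph of $G \setminus S$, which has degeneracy at most $\targetdegeneracy$, and bounded degeneracy is hereditary. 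Hence $\opt' \le \opt$, so $|X_2| \le O(\targetdegeneracy\log n)\cdot\opt$. Finally output $X_1 \cup X_2$: the resulting graph $G \setminus (X_1 \cup X_2) = G_1 \setminus X_2$ has degeneracy at most $\targetdegeneracy$, and $|X_1 \cup X_2| \le \alpha\opt + O(\targetdegeneracy\log n)\opt = O(\targetdegeneracy\log n)\cdot\opt$, which is the claimed ratio.

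The main obstacle is exactly the one motivating the two-phase structure: controlling the case where $\degener(G)$ greatly exceeds $\targetdegeneracy$, where the ``reduce-by-one'' greedy alone is too weak. The only nontrivial bookkeeping is making both phases' guarantees refer to the single quantity $\opt$; this needs the monotonicity-under-deletion observation $\opt' \le \opt$ above, together with the fact that the bicriteria algorithm's cost is already compared against the target-$\targetdegeneracy$ optimum.
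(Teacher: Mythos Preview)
Your proposal is correct and matches the paper's own proof essentially verbatim: first apply a constant-factor bicriteria algorithm from Section~\ref{section:positive_degeneracy_localratio} or~\ref{section:positive_degeneracy_lp} to bring the degeneracy down to $O(\targetdegeneracy)$ using $O(\opt)$ edits, then invoke Corollary~\ref{corollary:degen-by-d} on the resulting graph for the remaining $O(\targetdegeneracy)$ steps. Your write-up is in fact more careful than the paper's, explicitly justifying $\opt' \le \opt$ via hereditariness of bounded degeneracy, a point the paper leaves implicit.
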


\begin{proof}
    Apply an algorithm from Section~\ref{section:positive_degeneracy_localratio} or~\ref{section:positive_degeneracy_lp}.
    This yields a graph with degeneracy $O(\targetdegeneracy)$ and an edit set of size $O(|\opt|)$.
    Apply the algorithm from Corollary~\ref{corollary:degen-by-d} to reduce the degeneracy by the remaining $O(\targetdegeneracy)$ steps.
    The final size of the edit set is $O(|\opt|) + O(\targetdegeneracy|\opt|\log n) = O(\targetdegeneracy|\opt|\log n)$.
\end{proof}

\fi

\ifdefined\istwappx
    \subsection{Treewidth/Pathwidth: Bicriteria Approximation for Vertex Editing}\label{section:positive_treewidth}
    In this section, we design a polynomial-time algorithm that constructs a $(O(\log^{1.5} n), O(\sqrt{\log w}))$-bicriteria approximate solution to \bTWVfull: the size of the edit set is at most $O(\log^{1.5} n)$ times the optimum ($\optsol{\bTWV}{G}$) and the resulting subgraph has treewidth $O(\Tw\sqrt{\log \Tw})$. We also give a $(O(\log^{1.5} n), O(\sqrt{\log w}\cdot \log n))$-bicriteria approximation for editing to
pathwidth $w$.

Our approach relies on known results for {\em vertex $c$-separators},  structures which are used extensively in many other algorithms for finding an approximate tree decomposition.

\begin{definition}\label{def:node-separator}
For a subset of vertices $W$, a set of vertices $S\subseteq V(G)$ is a \emph{vertex $c$-separator} of $W$ in $G$ if each component of $G[V \setminus S]$ contains at most $c|W|$ vertices of $W$.
The minimum size vertex $c$-separator of a graph, denoted $\sep_c(G)$, is the minimum integer $k$ such that for any subset $W \subseteq V$ there exists a vertex $c$-separator of $W$ in $G$ of size $k$.
\end{definition}
The size of a minimum size vertex $c$-separator of a graph is a parameter of interest and has applications in bounding treewidth and finding an approximate tree decomposition. Our algorithms in this section use vertex $\left({3\over 4}\right)$-separators.
\begin{lemma}[{\cite[Section 6.2]{feige2008improved}}]\label{lem:approx-node-sep}
There exist polynomial time algorithms that find a vertex $\left({3\over 4}\right)$-separator of a graph $G$ of size $c_1\cdot\sep_{2/3}(G)\sqrt{\log \sep_{2/3}(G)}$, for a sufficiently large constant $c_1$.
\end{lemma}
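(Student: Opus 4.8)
This statement is quoted verbatim (up to the balance constants) from Feige, Hajiaghayi, and Lee~\cite{feige2008improved}, so the plan is to reproduce their SDP-rounding argument specialized to the $(2/3)\!\to\!(3/4)$ balance regime. Fix $G$ and a terminal set $W$; we want a set $S$ with $|S| = O\!\left(\sep_{2/3}(G)\sqrt{\log \sep_{2/3}(G)}\right)$ such that every component of $G[V\setminus S]$ meets $W$ in at most $\tfrac34|W|$ vertices. First I would write the standard vectorial relaxation of the minimum $(2/3)$-vertex-separator of $W$: assign each $v\in V$ a unit vector $\bar v$ together with ``side'' and ``separator'' variables, charge $1$ to each vertex placed in the separator, and impose the $\ell_2^2$ (negative-type) triangle inequalities on all triples of vertices, together with a spreading constraint forcing $\sum_{u,v\in W}\|\bar u-\bar v\|^2 = \Omega(|W|^2)$. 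Since an integral $(2/3)$-separator of $W$ of size $\sep_{2/3}(G)$ gives a feasible $0/1$ solution of the same value (such a separator exists for \emph{every} $W$ by the definition of $\sep_{2/3}$), the SDP optimum is at most $\sep_{2/3}(G)$.

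Second, I would run the rounding of~\cite{feige2008improved}. Using the spreading constraint and the $\ell_2^2$ inequalities, their structure theorem (an ARV-type measure-concentration argument) produces two terminal sets $A,B\subseteq W$ with $|A|,|B|=\Omega(|W|)$ that are $\Omega(1)$-separated in the $\ell_2^2$ metric; a region-growing ``sweep'' over concentric balls around $A$ then outputs a set of boundary vertices that separates a $\Theta(1)$-fraction of the terminals and has size $O(\sqrt{\log|W|})$ times the SDP value. Iterating on the heavier side a constant number of times and merging the resulting small pieces yields a single set $S$ such that every component of $G[V\setminus S]$ holds at most $\tfrac34|W|$ terminals; the slack from $2/3$-balance (the regime of the comparison $\sep_{2/3}(G)$) to $3/4$-balance (the output) is precisely what makes this pseudo-approximation go through. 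A standard refinement that treats terminals already lying in small balls separately replaces $\log|W|$ by $\log \sep_{2/3}(G)$ in the final bound. Every step is polynomial time, since the SDP has polynomial size and is solvable to sufficient accuracy and the rounding is combinatorial; and the per-$W$ guarantee above already yields the uniform bound claimed, because $\sep_{2/3}(G)$ is itself the maximum over all $W$.

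The main obstacle is the ARV/FHL structure theorem: showing that a negative-type metric satisfying a spreading constraint must contain two $\Omega(|W|)$-size terminal masses at $\ell_2^2$-distance $\Omega(1)$, which is exactly what upgrades a naive $O(\log\cdot)$ region-growing bound to the $O(\sqrt{\log\cdot})$ stated here. This is the measure-concentration heart of the argument and is supplied in full by~\cite{feige2008improved}, which is why in the paper this lemma is simply invoked rather than reproved; the remaining work — setting up the relaxation, the sweep, and the balance bookkeeping that turns $2/3$ into $3/4$ — is routine once that theorem is in hand.
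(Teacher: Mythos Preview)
Your proposal is correct and matches the paper's approach: the paper does not prove this lemma at all but simply cites it from \cite{feige2008improved}, and you have accurately sketched the SDP-plus-ARV-rounding argument from that reference, including the pseudo-approximation trick that trades $2/3$-balance in the comparison for $3/4$-balance in the output and the refinement that replaces $\log |W|$ by $\log \sep_{2/3}(G)$. You even explicitly note that the paper merely invokes the result, which is exactly right.
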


The following bounds relating the treewidth of $G$, $\tw(G)$, and the minimum size vertex $({2\over 3})$-separator of $G$, $\sep_{2/3}(G)$, are useful in the analysis of our proposed algorithm for the problem of editing to treewidth \Tw.
\begin{lemma}[{\cite[Lemma 7]{robertson1986graph, robertson1995graph, harvey2017parameters}}]\label{lem:tw-sep}
For any graph $G$, $\sep_{2/3}(G) \leq \tw(G) + 1 \leq 4 \sep_{2/3}(G)$.
\end{lemma}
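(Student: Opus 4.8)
The plan is to prove the two inequalities by the two classical arguments that underlie the equivalence of treewidth and separation number.

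For $\sep_{2/3}(G)\le \tw(G)+1$ I would start from an optimal tree decomposition $(T,\{B_t\})$ of $G$ and fix an arbitrary $W\subseteq V(G)$; the goal is to show a single bag already separates $W$. Root $T$, and for a node $t$ let $W_t$ be the set of vertices of $W$ lying in bags of the subtree $T_t$ but not in $B_t$. Walking down from the root, always move into the (necessarily unique) child whose $W$-share exceeds $\tfrac12|W|$; this walk follows a root-to-node path of $T$, so it terminates at a node $t^\star$ for which every child-piece, and also the ``outside'' part of $W$, has size at most $\tfrac12|W|$ (here one uses that the nodes whose bag contains a fixed vertex of $W$ form a connected subtree, so each vertex of $W\setminus B_{t^\star}$ is associated to exactly one of these pieces). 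Since adjacent vertices share a bag, every component of $G[V\setminus B_{t^\star}]$ lies entirely in one piece, so $B_{t^\star}$ is a $\tfrac12$-separator of $W$ of size at most $\tw(G)+1$. As a $\tfrac12$-separator is in particular a $\tfrac23$-separator, this gives $\sep_{2/3}(G)\le \sep_{1/2}(G)\le \tw(G)+1$.

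For $\tw(G)+1\le 4\sep_{2/3}(G)$, write $k=\sep_{2/3}(G)$ and prove by induction on $|V(G)|$ the stronger claim that for every $W\subseteq V(G)$ with $|W|\le 3k$, $G$ has a tree decomposition of width at most $4k-1$ in which some bag contains $W$ (then $W=\emptyset$ gives the lemma). The base case is $|V(G)|\le 4k$, handled by the single bag $V(G)$. Otherwise, enlarge $W$ to exactly $3k$ vertices, apply the definition of $\sep_{2/3}$ to $W$, and (if needed) regroup the resulting components into two sides by a simple bin-packing step to obtain a separation $(A,B)$ with $|A\cap B|\le k$ and with $A\setminus B$, $B\setminus A$ each containing at most $\tfrac23|W|\le 2k$ vertices of $W$. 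Recurse on $G[A]$ with boundary $W_A:=(W\cap A)\cup(A\cap B)$ and on $G[B]$ with $W_B:=(W\cap B)\cup(A\cap B)$, both of size at most $3k$, and glue the two decompositions with one new bag $W\cup(A\cap B)$ (size at most $4k$) joined to the bag containing $W_A$ on one side and to the bag containing $W_B$ on the other. The three tree-decomposition axioms are then checked routinely; the only nonobvious point, connectivity, holds because $A\cap B$ sits in the new bag and in both $W_A$ and $W_B$.

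The main obstacle is making this recursion terminate: one must ensure the separation $(A,B)$ is \emph{nontrivial}, i.e.\ $A\ne V(G)\ne B$, so that $G[A]$ and $G[B]$ are strictly smaller. This is precisely the delicate point in the treewidth/separation-number equivalence, and I would handle it either by arguing that once $|V(G)|>4k$ the trivial separation of order at most $k$ cannot meet the balance requirement after $W$ has been grown to $3k$ vertices, or, more robustly, by phrasing the whole argument in terms of balanced separations from the outset and invoking the standard fact that a graph all of whose order-$\le k$ balanced separations are trivial has at most $O(k)$ vertices. The remaining ingredients — the bin-packing regrouping of components into two balanced sides, the bag-size bookkeeping giving width $\le 4k-1$, and the verification of the tree-decomposition axioms after gluing — are routine and I would not expect them to cause difficulty.
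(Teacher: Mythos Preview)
The paper does not prove this lemma; it is quoted from Robertson--Seymour and the survey of Harvey--Wood and used as a black box in the analysis of the treewidth-editing algorithm. Your outline is exactly the classical argument found in those references: a centroid bag of an optimal tree decomposition is a $\tfrac12$-separator (hence a $\tfrac23$-separator) of any $W$, and the recursive separate--recurse--glue construction with boundaries of size $\le 3k$ yields width $\le 4k-1$.

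Two small remarks on your sketch. First, the quantity $W_t$ you define (vertices of $W$ in bags of $T_t$ but not in $B_t$) is not quite the right one to drive the walk; the clean version orients each tree edge toward the side whose $W$-mass outside the shared part of the two endpoint bags exceeds $|W|/2$, and takes a sink. Second, your first proposed fix for termination does not work as stated: when $|W|=3k$, a separator $S\subseteq W$ of size exactly $k$ with $G-S$ connected \emph{does} satisfy the $\tfrac23$-balance condition, regardless of how large $|V(G)|$ is. Your second suggestion---phrasing things via balanced separations, or equivalently ensuring (by a suitable choice of the enlargement of $W$) that at least two components appear---is how the cited sources handle it.
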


\begin{lemma}\label{lem:edit-lower-bound}
For any graph $G=(V,E)$, and integer
$w\leq {3\over 4}\cdot \tw(G)$, $\sep_{2/3}(G)\leq 6\cdot\optsol{\bTWV}{G}$.
\end{lemma}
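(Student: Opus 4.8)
The plan is to exploit two facts: vertex deletions can only reduce treewidth slowly, and treewidth is sandwiched by a constant factor around the minimum vertex $({2\over 3})$-separator size via Lemma~\ref{lem:tw-sep}. The whole argument is a short chain of inequalities; the statement's constant $6$ is deliberately loose, which suggests no delicate optimization is needed.

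First I would fix an optimal edit set $X\subseteq V$ for $\bTWV$ on $G$, so that $|X| = \optsol{\bTWV}{G}$ and $\tw(G\setminus X)\leq w$. The key structural observation is that treewidth drops by at most one per deleted vertex: starting from an optimal tree decomposition of $G\setminus X$ and inserting all of $X$ into every bag yields a valid tree decomposition of $G$ (every edge incident to $X$ is now covered, and each $v\in X$ occupies a connected — indeed the whole — subtree), of width at most $\tw(G\setminus X)+|X|\leq w+|X|$. Hence $\tw(G)\leq w+|X|$.

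Next I would feed in the hypothesis $w\leq \tfrac{3}{4}\tw(G)$, obtaining $\tw(G)\leq \tfrac{3}{4}\tw(G)+|X|$, i.e. $\tfrac{1}{4}\tw(G)\leq |X|$ and thus $\tw(G)\leq 4\optsol{\bTWV}{G}$. Then Lemma~\ref{lem:tw-sep} gives $\sep_{2/3}(G)\leq \tw(G)+1\leq 4\optsol{\bTWV}{G}+1$. As long as $\optsol{\bTWV}{G}\geq 1$, this is at most $6\optsol{\bTWV}{G}$, which is the claim. The remaining degenerate case $\optsol{\bTWV}{G}=0$ is immediate: then $\tw(G)\leq w\leq \tfrac{3}{4}\tw(G)$ forces $\tw(G)=0$, so $G$ is edgeless and $\sep_{2/3}(G)=0=6\optsol{\bTWV}{G}$.

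I do not anticipate a real obstacle here. The only points requiring a word of care are (i) justifying the ``treewidth decreases by at most one per deletion'' step via the bag-augmentation argument above (standard), and (ii) noting that the additive ``$+1$'' coming from Lemma~\ref{lem:tw-sep} is what forces us to state the bound with constant $6$ instead of $4$, together with disposing of the trivial $\optsol{\bTWV}{G}=0$ case.
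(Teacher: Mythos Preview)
Your proposal is correct and follows essentially the same approach as the paper: both argue that adding the deleted set $X$ back into every bag of a tree decomposition of $G\setminus X$ shows $\tw(G)\le w+|X|$, then combine $w\le\tfrac{3}{4}\tw(G)$ with Lemma~\ref{lem:tw-sep}. The only cosmetic difference is that the paper writes $\sep_{2/3}(G)\le\tfrac{3}{2}\tw(G)$ and multiplies, whereas you keep the additive ``$+1$'' from Lemma~\ref{lem:tw-sep} and absorb it via $\optsol{\bTWV}{G}\ge 1$ (handling the $\optsol{\bTWV}{G}=0$ case separately); your version is arguably cleaner about this edge case.
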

\begin{proof}
It is straightforward to verify that for any $X \subseteq V$, if $\tw(G[V\setminus X]) = w$, then
$|X| \geq \tw(G)-w$. Suppose not. Then, we can add $X$ to all bags in the tree decomposition of $G[V\setminus X]$ and the resulting tree decomposition of $G$ has treewidth less than $\tw(G)$ which is a contradiction.

If we assume $\Tw \leq \frac{3}{4} \cdot \tw(G)$, then $\optsol{\bTWV}{G} \geq \tw(G) - w \geq \tw(G)/4$, which, together with Lemma~\ref{lem:tw-sep}, implies that $\sep_{2/3}(G) \leq {3\over 2}\cdot \tw(G) \leq 6\cdot\optsol{\bTWV}{G}$.
\end{proof}

\subsubsection{Treewidth: $(O(\log^{1.5} n), O(\sqrt{\log w}))$-approximation for vertex deletion}

Our method exploits the general recursive approach of the approximation algorithms for constructing a tree decomposition~\cite{amir2010approximation, bodlaender1995approximating, feige2008improved, reed1992finding}. Our algorithm
iteratively subdivides the graph, considering $G[V_i]$ in iteration $i$. We first apply
the result of~\cite{bodlaender1995approximating,feige2008improved} to determine if $G[V_i]$ has a tree decomposition with ``small'' width; if
yes, the algorithm removes nothing and terminates.
Otherwise, we compute an approximate vertex $(3/4)$-separator $S$ of $G[V_i]$ (applying the algorithm of~\cite{feige2008improved}),
remove it from the graph, and recurse on the connected components of $G[V_i\setminus S]$.
We show that the total number of vertices removed from $G$ in our algorithm is not more than $O(\optsol{\bTWV}{G}\log^{1.5}n)$ in Theorem~\ref{thm:edit-size} and the treewidth of the resulting graph is $O(w\sqrt{\log w})$ in Theorem~\ref{thm:final-width}.

\begin{algorithm}
\caption{Approximation for Vertex Editing to Bounded Treewidth Graphs}\label{alg:treewidth-node-edit}
\begin{algorithmic}[1]
\Procedure{TreeWidthNodeEdit}{$G=(V, E)$, $w$}
\State $t \leftarrow$ compute $\tw(G)$ by invoking the algorithm of~\cite{bodlaender1995approximating} together with~\cite{feige2008improved} \label{alg:compute-tw}
\If{$t \leq 32c_1 \cdot w\sqrt{\log w}$} \label{if:cond-tw}
	\State \Return $\emptyset$ \label{alg:small-width}
\Else
	\State $S \leftarrow$ compute a vertex $({3\over 4})$-separator of $G$ by invoking the algorithm of~\cite{feige2008improved}\label{alg:sep}
	\State {\bf let} $G[V_1], \cdots, G[V_\ell]$ be the connected components of $G[V\setminus S]$.
    \State \Return $\big(\bigcup_{i\leq \ell} \Call{TreeWidthNodeEdit}{G[V_i], w}\big) \cup S$
\EndIf
\EndProcedure
\end{algorithmic}
\end{algorithm}

The key observation in our approach is the following lemma:

\begin{lemma}\label{lem:disjoint-sep-tw}
Suppose that the vertex set of $G=(V,E)$ is partitioned into $V_1, \cdots, V_\ell$.
The minimum edit set of $G$ to treewidth $w$ has size at least $\sum_{i\leq \ell} \max(0,\tw(G[V_i]) - w)$.
\end{lemma}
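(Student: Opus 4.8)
The plan is to show that an edit set for the whole graph must, when restricted to any single part $V_i$, already be an edit set that brings $G[V_i]$ down to treewidth $w$; summing the resulting lower bounds over the parts then yields the claim. Concretely, let $X \subseteq V$ be any optimal edit set, so $\tw(G[V \setminus X]) \le w$, and write $X_i = X \cap V_i$. First I would observe that $G[V_i \setminus X_i]$ is an induced subgraph of $G[V \setminus X]$ (since $V_1, \dots, V_\ell$ partition $V$, we have $V_i \setminus X_i = V_i \cap (V\setminus X)$), and treewidth is monotone under taking induced subgraphs, so $\tw(G[V_i \setminus X_i]) \le \tw(G[V\setminus X]) \le w$. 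Hence $X_i$ is a feasible edit set witnessing that $G[V_i]$ can be brought to treewidth $w$ by deleting $|X_i|$ vertices.

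Next I would lower-bound $|X_i|$ using exactly the argument already given in the proof of Lemma~\ref{lem:edit-lower-bound}: if $\tw(G[V_i \setminus X_i]) \le w$ then $|X_i| \ge \tw(G[V_i]) - w$, because otherwise adding $X_i$ to every bag of an optimal tree decomposition of $G[V_i \setminus X_i]$ would give a tree decomposition of $G[V_i]$ of width strictly less than $\tw(G[V_i])$, a contradiction. Since $|X_i| \ge 0$ always holds, we get $|X_i| \ge \max(0, \tw(G[V_i]) - w)$ for each $i$.

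Finally, because the $V_i$ partition $V$, the sets $X_i$ are pairwise disjoint subsets of $X$, so $|X| \ge \sum_{i\le \ell} |X_i| \ge \sum_{i \le \ell} \max(0, \tw(G[V_i]) - w)$. Taking $X$ to be a minimum edit set gives $\optsol{\bTWV}{G} \ge \sum_{i\le\ell}\max(0,\tw(G[V_i])-w)$, as desired.

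There is no real obstacle here — the statement is essentially a bookkeeping consequence of induced-subgraph monotonicity of treewidth plus the "add the deleted set to every bag" trick, both of which are already in hand. The only point to state carefully is that the parts form a partition (so the restrictions $X_i$ are disjoint and their union is contained in $X$); if one instead only assumed the $V_i$ were vertex-disjoint but not covering, the same proof still works verbatim, and it is worth a one-line remark that the edges between different parts play no role, since deleting a vertex in $V_j$ cannot help reduce the treewidth of $G[V_i]$ for $i \ne j$.
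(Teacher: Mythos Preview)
Your proof is correct and follows essentially the same approach as the paper: restrict the edit set to each part, use monotonicity of treewidth under induced subgraphs together with the ``add the deleted vertices to every bag'' trick (exactly the observation from Lemma~\ref{lem:edit-lower-bound}), and sum over the disjoint parts. The paper's own proof is a terse two-sentence version of precisely this argument.
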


\begin{proof}
This directly follows from the straightforward observation that if $\tw(G[V \setminus X]) = w$, then $|X| \geq \tw(G) - \Tw$. Since the sets of vertices are disjoint, then the lower bound on the number of vertices that must be deleted is the summation of the lower bound of the number of vertices that must be deleted in each disjoint set.
\end{proof}

In Algorithm~\ref{alg:treewidth-node-edit}, we use the approach of Bodlaender~\etal~\cite{bodlaender1995approximating} together with the $O(\sqrt{\log \tw(G)})$-approximation algorithm of~\cite{feige2008improved} for computing treewidth of $G$ in Line~\ref{alg:compute-tw}.
%More formally, we use the following result on the existence (and construction) of a tree decomposition for $G$.

%\begin{lemma}[\cite{bodlaender1995approximating}]\label{lem:node-sep-treewidth}
%If there exists an algorithm that returns a vertex $(3/4)$-separator of $G$ of size $N$, then there exists a polynomial time algorithm for constructing a tree decomposition of an input graph $G$ with width $O(N)$ and height $O\left(\log |V(G)|\right)$.
%\end{lemma}

%Using Lemmas~\ref{lem:approx-node-sep},~\ref{lem:tw-sep}, and~\ref{lem:node-sep-treewidth}, we immediately obtain the following corollary.
%Using Lemmas~\ref{lem:approx-node-sep} and~\ref{lem:node-sep-treewidth}, we immediately obtain the following corollary.

\begin{theorem}[\cite{bodlaender1995approximating,feige2008improved}]\label{thm:tree decomp-alg}
There exists an algorithm that, given an input graph $G$, in polynomial time returns a tree decomposition of $G$ of width at most $c_2\cdot \tw(G)\sqrt{\log \tw(G)}$ and height $O\left(\log |V(G)|\right)$ for a sufficiently large constant $c_2$.
\end{theorem}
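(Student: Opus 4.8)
The plan is to follow the recursive balanced-separator framework of Bodlaender et al., using the vertex $(3/4)$-separator approximation of Feige et al.\ (Lemma~\ref{lem:approx-node-sep}) as the only nontrivial subroutine. Set $s := c_1\,(\tw(G)+1)\sqrt{\log(\tw(G)+1)}$. Since by Lemma~\ref{lem:tw-sep} every induced subgraph $H$ of $G$ has $\sep_{2/3}(H)\le \tw(H)+1\le \tw(G)+1$, Lemma~\ref{lem:approx-node-sep} guarantees that for any $H$ and any $W\subseteq V(H)$ we can compute in polynomial time a vertex $(3/4)$-separator of $W$ in $H$ of size at most $s$. I would then define a recursive procedure $\textsc{Decompose}(H,W)$ that takes an induced subgraph $H$ together with a ``boundary'' set $W\subseteq V(H)$ kept at size $O(s)$, and returns a tree decomposition of $H$ whose root bag contains all of $W$ and whose width is $O(s)$: the base case handles $|V(H)|=O(s)$ by emitting the single bag $V(H)$, and the recursive case finds a $(3/4)$-separator $S$ (of size $\le s$) of an appropriately chosen witness set that simultaneously makes the pieces of $H[V(H)\setminus S]$ geometrically smaller, forms the root bag $W\cup S$, and recurses on each component $C_i$ of $H[V(H)\setminus S]$ with boundary derived from $(W\cup S)\cap(C_i\cup N(C_i))$.

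\noindent Given this structure, the two promised bounds follow from two separate features of the recursion. The \emph{width} bound holds because every bag is the union of a boundary set of size $O(s)$ with one separator of size $\le s$, hence has size $O(s)=O(\tw(G)\sqrt{\log\tw(G)})$; one must check the three tree-decomposition axioms, the only nonroutine point being that edges with one endpoint in $S$ are covered, which is exactly why $S$ must be passed down into the boundary of every child. The \emph{height} bound holds because the chosen separators are $(3/4)$-balanced, so the vertex count of the subgraph decreases by a constant factor along any root-to-leaf path (interleaved with at most $O(\log s)$ ``boundary-only'' steps), giving recursion depth $O(\log n)$ and therefore a decomposition of height $O(\log n)$. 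Polynomial running time is immediate since each node of the recursion tree performs one separator computation and the recursion tree has $O(n)$ nodes.

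\noindent The main obstacle is the bookkeeping that keeps $|W|=O(s)$ invariant through the recursion while \emph{also} guaranteeing the constant-factor shrinkage needed for logarithmic depth: when $S$ is added to the boundary the set $W\cup S$ can grow, so one must simultaneously argue that enough of the old boundary is dropped (because it lies outside $C_i\cup N(C_i)$) and choose the separator's witness set so that the balance condition applies to the quantity that actually governs the depth. Reconciling these two requirements with a single separator call — and selecting the base-case threshold without exact knowledge of $\tw(G)$ (using the $O(\sqrt{\log})$-approximation of Bodlaender et al.\ and Feige et al.\ for treewidth, or equivalently approximating $\sep_{2/3}(H)$ on the fly) — is the delicate part; everything else is the standard verification that the resulting labelled tree is a valid tree decomposition.
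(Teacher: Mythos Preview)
Your framework is right and you have correctly isolated the only real difficulty: keeping the boundary $W$ at size $O(s)$ while simultaneously getting a constant-factor drop in the quantity that controls the recursion depth. But you leave this unresolved (``reconciling these two requirements with a single separator call \ldots\ is the delicate part''), and the hint about interleaving $O(\log s)$ boundary-only steps is not the route the paper takes.

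The paper's resolution is simpler than what you anticipate: it does \emph{not} use a single separator. At each call on a pair $(Z,W)$ (with $Z\cap W=\emptyset$ and the invariant $|W|\le 32s$), it computes \emph{two} $(3/4)$-separators in $G[Z\cup W]$ --- one, $S$, of the set $W$, and another, $T$, of the full set $Z\cup W$ --- and removes $S\cup T$. The role of $S$ is to give $|W_i|\le\tfrac{3}{4}|W|$, so the new boundary $W_i\cup S\cup T$ has size at most $\tfrac{3}{4}\cdot 32s + 2s \le 32s$ and the invariant persists; the role of $T$ is to give $|Z_i|\le\tfrac{3}{4}|Z\cup W|$, hence $|Z_i|<|Z|$ and in fact $|Z_i|\le\tfrac34|Z|$, which yields the $O(\log n)$ height directly, with no alternation of step types. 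This two-separator trick is exactly Bodlaender et al.'s device.

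A second point: the base case in the paper is the purely relative condition $8|Z|\le|W|$ (then return the single bag $Z\cup W$), so no a~priori estimate of $\tw(G)$ is needed to decide when to stop; the width bound is carried entirely by the inductive bound on $|W|$. Your worry about ``selecting the base-case threshold without exact knowledge of $\tw(G)$'' therefore disappears once you adopt the two-separator recursion.
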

For the sake of completeness, we provide the proof of Theorem~\ref{thm:tree decomp-alg} in Section~\ref{sec:proof-tree-decomposition}.
Next, we analyze the performance of Algorithm~\ref{alg:treewidth-node-edit}.
%The main underlying observation is that the algorithm has $O(\log n)$ levels of recursion and  Lemmas~\ref{lem:edit-lower-bound} and ~\ref{lem:disjoint-sep-tw} bound the number of deleted vertices in each level.

\begin{theorem}\label{thm:edit-size}
Algorithm~\ref{alg:treewidth-node-edit} removes at most $O(\log^{1.5} n)\optsol{\bTWV}{G}$ vertices from
any $n$-vertex graph $G$.
\end{theorem}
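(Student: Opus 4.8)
The plan is to organize the execution of Algorithm~\ref{alg:treewidth-node-edit} as a recursion tree and charge the vertices deleted at each level to the optimum $\optsol{\bTWV}{G}$. Let $\mathcal{T}$ be the tree whose root is the top-level call on $G$ and in which the node for a call $\textsc{TreeWidthNodeEdit}(G[V_i],w)$ has as children the recursive calls on the connected components of $G[V_i\setminus S]$, where $S$ is the separator chosen on Line~\ref{alg:sep}. Since $S$ is a $(3/4)$-separator of $G[V_i]$ (Lemma~\ref{lem:approx-node-sep}), every such component has at most $\tfrac34|V_i|$ vertices, so $|V_i|$ decreases by a constant factor along every root-to-leaf path and $\mathcal{T}$ has depth $O(\log n)$. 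The total number of vertices returned by the algorithm is exactly $\sum_{i}|S_i|$ over the internal nodes $i$ of $\mathcal{T}$ (those that reach Line~\ref{alg:sep}), so it suffices to bound the contribution of each level of $\mathcal{T}$ by $O(\sqrt{\log n})\,\optsol{\bTWV}{G}$ and multiply by the number of levels.

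First I would bound $|S_i|$ for a single internal node. If the call on $G[V_i]$ reaches Line~\ref{alg:sep}, the estimate $t$ from Line~\ref{alg:compute-tw} satisfies $t>32c_1 w\sqrt{\log w}$; since $t\le c_2\,\tw(G[V_i])\sqrt{\log\tw(G[V_i])}$ by Theorem~\ref{thm:tree decomp-alg}, an appropriate relation between the constants $c_1$ and $c_2$ (with $c_1$ large enough, as is permitted in Lemma~\ref{lem:approx-node-sep}) forces $\tw(G[V_i])\ge\tfrac43 w$, i.e. $w\le\tfrac34\,\tw(G[V_i])$. Then Lemma~\ref{lem:edit-lower-bound} applies to $G[V_i]$ and gives $\sep_{2/3}(G[V_i])\le 6\,\optsol{\bTWV}{G[V_i]}$. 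Combining this with the size guarantee of Lemma~\ref{lem:approx-node-sep} and the trivial bound $\sep_{2/3}(G[V_i])\le|V_i|\le n$, we obtain $|S_i|\le c_1\,\sep_{2/3}(G[V_i])\sqrt{\log n}\le 6c_1\sqrt{\log n}\cdot\optsol{\bTWV}{G[V_i]}$.

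Next I would sum over a fixed level $d$ of $\mathcal{T}$. The vertex sets $V_i$ of the nodes present at level $d$ are pairwise disjoint subsets of $V$, because each branching passes from $V_i$ to disjoint subsets of $V_i\setminus S_i$. Restricting an optimal vertex-edit set of $G$ to $V_i$ yields a valid edit set for $G[V_i]$ (treewidth does not increase under taking induced subgraphs), so $\sum_{i\text{ at level }d}\optsol{\bTWV}{G[V_i]}\le\optsol{\bTWV}{G}$ (equivalently, this is the content of Lemma~\ref{lem:disjoint-sep-tw} together with $\optsol{\bTWV}{H}\ge\tw(H)-w$). Hence the vertices deleted at level $d$ total at most $6c_1\sqrt{\log n}\cdot\optsol{\bTWV}{G}$, and summing over the $O(\log n)$ levels yields a total of $O(\log^{1.5}n)\,\optsol{\bTWV}{G}$, as claimed.

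The main obstacle is the deduction, from the threshold test on Line~\ref{if:cond-tw}, that whenever the algorithm actually separates a subgraph $G[V_i]$ its treewidth already satisfies $w\le\tfrac34\,\tw(G[V_i])$ — this is exactly what makes Lemma~\ref{lem:edit-lower-bound} usable, and it requires keeping the constants $c_1$ (from Lemma~\ref{lem:approx-node-sep}) and $c_2$ (from Theorem~\ref{thm:tree decomp-alg}) consistent, since $t$ is only an approximation, to within an $O(\sqrt{\log n})$ factor, of $\tw(G[V_i])$ rather than its exact value. The remaining points — the $O(\log n)$ depth bound and the disjointness of the per-level subgraphs that legitimizes charging each level to $\optsol{\bTWV}{G}$ — are routine bookkeeping over $\mathcal{T}$.
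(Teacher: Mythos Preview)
Your proposal is correct and follows essentially the same approach as the paper: bound each separator $|S_i|$ by $O(\sqrt{\log n})\cdot\optsol{\bTWV}{G[V_i]}$ via the threshold test together with Lemma~\ref{lem:edit-lower-bound}, and then exploit that the subgraphs at any stage are vertex-disjoint so their optima sum to at most $\optsol{\bTWV}{G}$. The only difference is presentational---the paper proves the bound by induction on $|V'|$ (obtaining $(c\log_{4/3}|V'|\sqrt{\log n})\,\optsol{\bTWV}{G[V']}$ for every call), whereas you sum level by level over the recursion tree and charge each of the $O(\log n)$ levels directly to $\optsol{\bTWV}{G}$; the paper also derives the slightly stronger $\tw(G[V_i])\ge 4w$ from the threshold (using $t\le c_1\sep_{2/3}\sqrt{\log\sep_{2/3}}$ and Lemma~\ref{lem:tw-sep}) rather than your $\tw(G[V_i])\ge\tfrac43 w$, but either suffices for Lemma~\ref{lem:edit-lower-bound}.
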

\begin{proof}
The proof is by induction on the number of vertices in the given induced subgraph of $G$: we show that for any subset $V' \subseteq V$, the number of vertices removed by Algorithm~\ref{alg:treewidth-node-edit} is at most $(c \cdot \log_{4/3} |V'| \cdot \sqrt{\log n})\optsol{\bTWV}{G[V']}$ where $c\geq 6$ is a fixed constant. We remark that $c$ must also be greater than the constant $c_2$ in Theorem~\ref{thm:tree decomp-alg} for the width guarantee.

By the condition in Line~\ref{if:cond-tw} of the algorithm, the claim trivially holds for the case $|V| = O(w\sqrt{\log w})$.
Assume that the claim holds for all induced subgraphs of $G$ containing at most $n'-1$ vertices.
Next, we show that the claim holds for any $n'$-vertex induced subgraph of $G$, $G[V']$, too.
If $t \leq 32 c_1 \cdot w\sqrt{\log w}$, then no vertex is deleted and the claim holds.
Otherwise,
\begin{align*}
32c_1 \cdot w\sqrt{\log w}< t
&\leq c_1\cdot\sep_{2/3}(G[V'])\sqrt{\log \sep_{2/3}(G[V'])}  & \text{by Lemma~\ref{lem:approx-node-sep},}\\
&\leq c_1\cdot(2\tw(G[V']))\sqrt{2\log \tw(G[V'])}  & \text{by Lemma~\ref{lem:tw-sep}},
\end{align*}
which implies that $\tw(G[V']) \geq 4w$.
By Lemma~\ref{lem:edit-lower-bound}, $\sep_{2/3}(G[V']) \leq 6\optsol{\bTWV}{G[V']}$ which implies that $|S'| \leq c_1 \sqrt{\log \sep_{2/3}(G[V'])} \cdot \sep_{2/3}(G[V']) \leq 6c_1 \sqrt{\log n}\cdot \optsol{\bTWV}{G[V']}$ where $S'$ is a vertex $({3 \over 4})$-separator of $V'$ computed in Line~\ref{alg:sep} of the algorithm.

Let $V'_1,\cdots, V'_{\ell}$ be the disjoint components in $G[V'\setminus S']$.
Then, $\optsol{\bTWV}{G[V']} \geq \sum_{i\leq \ell} \optsol{\bTWV}{G[V'_i]}$, by Lemma~\ref{lem:disjoint-sep-tw}.
Further, by the induction assumption for each $i\leq \ell$, the number of vertices removed by
$\Call{TreeWidthNodeEdit}{G[V'_i], w}$ is at most $(c\log_{4/3} |V'_i| \cdot \sqrt{\log n})\optsol{\bTWV}{G[V'_i]}$.
Hence, the vertices removed by $\Call{TreeWidthNodeEdit}{G[V'],w}$, satisfy
\begin{align*}
|X| &\leq 6c_1\sqrt{\log n} \cdot \optsol{\bTWV}{G[V']} + \sum_{i\leq \ell} (c\log_{4/3} |V'_i| \cdot \sqrt{\log n})\cdot \optsol{\bTWV}{G[V'_i]}  \\
&\leq 6c_1 \sqrt{\log n} \cdot \optsol{\bTWV}{G[V']} + (c\log_{4/3} {3|V'|\over 4} \cdot \sqrt{\log n})\sum_{i\leq \ell} \optsol{\bTWV}{G[V'_i]} \\
&\leq c\sqrt{\log n} \cdot (1 + \log_{4/3} |V'| -1) \optsol{\bTWV}{G[V']} & \text{with } c>6c_1. \\
&\leq (c\log_{4/3} |V'| \cdot \sqrt{\log n}) \optsol{\bTWV}{G[V']}.
\end{align*}
\end{proof}
\begin{theorem}\label{thm:final-width}
The treewidth of the subgraph of $G$ returned by Algorithm~\ref{alg:treewidth-node-edit} is $O(w\cdot \sqrt{\log w})$.
\end{theorem}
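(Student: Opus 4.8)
The plan is to analyze the recursion tree $T$ of Algorithm~\ref{alg:treewidth-node-edit} on input $G$: its nodes correspond to the recursive calls $\textsc{TreeWidthNodeEdit}(G[V_\nu],w)$ (the root having $V_{\text{root}}=V$), an internal node $\nu$ carries the vertex $(3/4)$-separator $S_\nu$ computed in Line~\ref{alg:sep}, and its children are the connected components of $G[V_\nu\setminus S_\nu]$; a leaf $\mu$ is a call that returns $\emptyset$ because the value $t$ of Line~\ref{alg:compute-tw} satisfies $t\le 32c_1\cdot w\sqrt{\log w}$. Writing $X=\bigcup_{\nu\text{ internal}}S_\nu$ for the total edit set, the goal is to show that $G[V\setminus X]$ breaks into pieces of treewidth $O(w\sqrt{\log w})$.

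First I would establish three structural facts about $T$. (i) For two distinct children $\nu_i\neq\nu_j$ of the same internal node $\nu$, the sets $V_{\nu_i},V_{\nu_j}$ are disjoint and $G$ contains no edge between them, since they are distinct connected components of $G[V_\nu\setminus S_\nu]$. (ii) The leaf vertex sets $\{V_\mu:\mu\text{ a leaf of }T\}$ partition $V\setminus X$: any $v\in V\setminus X$ misses every separator, so starting from the root it lies in exactly one child at each internal node, eventually reaching a unique leaf. (iii) For distinct leaves $\mu_1,\mu_2$, let $\nu$ be their lowest common ancestor; since a leaf has no descendants, $\nu$ is internal and a proper ancestor of both, so $V_{\mu_1},V_{\mu_2}$ are contained in the vertex sets of two distinct children of $\nu$, whence by (i) $G$ has no edge between $V_{\mu_1}$ and $V_{\mu_2}$.

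Combining (ii) and (iii), $G[V\setminus X]$ is the disjoint union, with no edges across the parts, of the induced subgraphs $G[V_\mu]$ over the leaves $\mu$ of $T$. Next I would bound each $\tw(G[V_\mu])$: a leaf is reached exactly when the computed value $t$ satisfies $t\le 32c_1\cdot w\sqrt{\log w}$, and by Theorem~\ref{thm:tree decomp-alg} this $t$ is the width of an actual tree decomposition of the current subgraph, so $\tw(G[V_\mu])\le t\le 32c_1\cdot w\sqrt{\log w}$. Finally, since the treewidth of a graph equals the maximum treewidth over its connected components (join tree decompositions of the components arbitrarily), $\tw(G[V\setminus X])\le \max_\mu \tw(G[V_\mu])\le 32c_1\cdot w\sqrt{\log w}=O(w\sqrt{\log w})$.

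The work here is bookkeeping rather than a genuine obstacle: the only points requiring care are that $X$ is precisely the union of the separators over internal nodes (leaves contribute nothing), that the lowest-common-ancestor argument in (iii) is legitimate (leaves are never ancestors of other leaves), and that the quantity $t$ returned in Line~\ref{alg:compute-tw} is an upper bound on the true treewidth of the current induced subgraph — which is exactly what Theorem~\ref{thm:tree decomp-alg} guarantees.
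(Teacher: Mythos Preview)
Your proof is correct and is essentially what the paper has in mind, just spelled out in far more detail: the paper's proof is the single sentence ``This follows immediately from the condition in Line~\ref{if:cond-tw} of the algorithm,'' relying on the reader to see that the surviving graph $G[V\setminus X]$ is precisely the disjoint union of the leaf subgraphs, each of which satisfies the threshold $t\le 32c_1\,w\sqrt{\log w}$ (and hence has treewidth at most that). Your partition/LCA bookkeeping and your observation that the computed $t$ upper-bounds the true treewidth via Theorem~\ref{thm:tree decomp-alg} are exactly the points a careful reader would fill in.
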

\begin{proof}
This follows immediately from the condition in Line~\ref{if:cond-tw} of the algorithm.
\end{proof}

\subsubsection{Pathwidth: $(O(\log^{1.5} n), O(\sqrt{\log w}\cdot \log n))$-approximation for vertex deletion}\label{sec:pathwidth}
Our algorithm in this section builds on the reduction of Bodleander \etal~\cite{bodlaender1995approximating} from tree decomposition to path decomposition and Algorithm~\ref{alg:treewidth-node-edit} described in Section~\ref{section:positive_treewidth} for finding a minimum size edit set to treewidth $w$. The main component of the reduction approach of~\cite{bodlaender1995approximating} is the following.
\begin{lemma}[\cite{bodlaender1995approximating}]\label{lem:tree-to-path}
Given a tree decomposition of $G$ with width at most $\Tw$ and height at most $h$, we can find a path decomposition of $G$ with width at most $w\cdot h$ efficiently.
\end{lemma}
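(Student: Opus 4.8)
The plan is to use the standard \emph{path-straightening} of a rooted tree decomposition. Root the given tree decomposition $(T,\{B_t\}_{t\in V(T)})$ at an arbitrary node $r$, and for each $t\in V(T)$ define the \emph{augmented bag} $\widehat B_t=\bigcup_{s\in P(r,t)}B_s$, where $P(r,t)$ is the unique root-to-$t$ path in $T$. Since $P(r,t)$ contains at most $h$ bags, each of size at most $w+1$, we get $|\widehat B_t|\le (w+1)h$, so every augmented bag has width $O(wh)$ (equal to $w\cdot h$ with the natural conventions on ``width'' and ``height''). The candidate path decomposition is the sequence $\widehat B_{t_1},\widehat B_{t_2},\dots,\widehat B_{t_N}$ obtained by listing all nodes of $T$ in the order of a pre-order depth-first traversal.

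I would then verify the three defining properties. Vertex and edge coverage are immediate, since every vertex of $G$ and both endpoints of every edge of $G$ already lie together in some original bag $B_t\subseteq\widehat B_t$. The substantive step is the interval (connectivity) property. For a vertex $v$, the set $T_v=\{t: v\in B_t\}$ is a connected subtree of $T$; let $a_v$ be the node of $T_v$ closest to $r$. Then $v\in\widehat B_t$ exactly when $P(r,t)$ meets $T_v$, and because $T_v$ is connected with topmost node $a_v$ (so every node of $T_v$ is a descendant of $a_v$), a root-downward path meets $T_v$ if and only if it passes through $a_v$, i.e.\ if and only if $t$ lies in the subtree of $T$ rooted at $a_v$. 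In a pre-order DFS listing the subtree rooted at any fixed node occupies a contiguous block, so $\{\,i: v\in\widehat B_{t_i}\,\}$ is an interval, which is precisely what a path decomposition requires. Efficiency is clear: a single DFS of $T$, with each $\widehat B_t$ computed as a union of at most $h$ bags along a path, all in polynomial time.

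The only mildly delicate point is the equivalence used in the connectivity argument — that hitting the connected subtree $T_v$ along a root-downward path is the same as passing through its unique topmost node $a_v$ — combined with the standard fact that subtrees form contiguous intervals under a pre-order traversal; everything else is bookkeeping. A secondary, purely cosmetic issue is reconciling the exact constant $w\cdot h$ with the $(w+1)h$ that the naive count gives, which is absorbed by the conventions for width and height.
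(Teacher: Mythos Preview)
The paper does not give its own proof of this lemma; it simply cites \cite{bodlaender1995approximating} and uses the statement as a black box. Your argument is correct and is exactly the standard construction from that reference: augment each bag by the bags on its root path and list the augmented bags in pre-order DFS, so there is nothing to compare.
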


\begin{corollary}
Given an input graph $G=(V,E)$ and a target pathwidth $w$, Algorithm~\ref{alg:treewidth-node-edit} removes $O(\log^{1.5} n)\cdot \optsol{\bPWV}{G}$ vertices $\editset$ such that $\pw(G[V\setminus \editset]) \leq (\sqrt{\log w} \cdot \log n) \cdot w$.
\end{corollary}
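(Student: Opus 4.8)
Run Algorithm~\ref{alg:treewidth-node-edit} on the pair $(G,w)$, let $\editset$ be the vertex set it returns, and set $H = G[V\setminus\editset]$. Two things must be checked: that $|\editset| = O(\log^{1.5} n)\cdot\optsol{\bPWV}{G}$, and that $\pw(H) \le (\sqrt{\log w}\cdot\log n)\cdot w$. Both follow by composing guarantees already proved for the algorithm with the tree-decomposition-to-path-decomposition conversion of Lemma~\ref{lem:tree-to-path}.

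The first claim is essentially a change of yardstick. Since a path decomposition is in particular a tree decomposition (Definition~\ref{definition:pathwidth}), we have $\tw(H') \le \pw(H')$ for every graph $H'$; hence any $Z\subseteq V$ with $\pw(G[V\setminus Z])\le w$ also satisfies $\tw(G[V\setminus Z])\le w$, so it is feasible for \bTWV and therefore $\optsol{\bTWV}{G} \le \optsol{\bPWV}{G}$. Theorem~\ref{thm:edit-size} already gives $|\editset| \le O(\log^{1.5} n)\cdot\optsol{\bTWV}{G}$, so chaining the two inequalities yields $|\editset| \le O(\log^{1.5} n)\cdot\optsol{\bPWV}{G}$, as required.

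For the pathwidth of $H$, I would start from Theorem~\ref{thm:final-width}, which (via the stopping condition in Line~\ref{if:cond-tw}) gives $\tw(H) = O(w\sqrt{\log w})$. Next, invoke the algorithm of Theorem~\ref{thm:tree decomp-alg} on $H$ to obtain, in polynomial time, a tree decomposition of $H$ of width $O(w\sqrt{\log w})$ and—crucially—height $O(\log|V(H)|) = O(\log n)$. Finally, apply Lemma~\ref{lem:tree-to-path}, which converts a width-$\omega$, height-$h$ tree decomposition into a path decomposition of width at most $\omega\cdot h$; plugging in $\omega = O(w\sqrt{\log w})$ and $h = O(\log n)$ gives $\pw(H) = O(w\sqrt{\log w}\cdot\log n)$, matching the stated bound up to constants. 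The polylogarithmic bookkeeping in this last step is the only computation needed.

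I do not expect a serious obstacle: the substantive work is already done in Theorems~\ref{thm:edit-size} and~\ref{thm:final-width}, and this corollary is a short composition. The one point that genuinely requires attention is that the tree decomposition handed to Lemma~\ref{lem:tree-to-path} must have $O(\log n)$ height—an arbitrary-height decomposition of $H$ would blow the width up uncontrollably under that lemma—but the logarithmic-height guarantee is exactly what Theorem~\ref{thm:tree decomp-alg} provides, since its decomposition is built by a balanced-separator recursion. The secondary point to get right is the direction of the inequality $\optsol{\bTWV}{G}\le\optsol{\bPWV}{G}$, which is what lets us reuse the edit-size bound of Theorem~\ref{thm:edit-size} verbatim.
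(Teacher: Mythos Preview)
Your plan mirrors the paper's proof almost exactly: the edit-size bound via Theorem~\ref{thm:edit-size} together with $\optsol{\bTWV}{G}\le\optsol{\bPWV}{G}$, and the pathwidth bound via a log-height tree decomposition fed into Lemma~\ref{lem:tree-to-path}. That is the right architecture.

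There is one technical slip in the second half. You first use Theorem~\ref{thm:final-width} to get $\tw(H)=O(w\sqrt{\log w})$ and then propose to \emph{re-invoke} Theorem~\ref{thm:tree decomp-alg} on $H$ to obtain a decomposition of width $O(w\sqrt{\log w})$. But Theorem~\ref{thm:tree decomp-alg} only guarantees width $c_2\cdot\tw(H)\sqrt{\log\tw(H)}$; plugging in $\tw(H)=O(w\sqrt{\log w})$ yields width $O(w\log w)$, not $O(w\sqrt{\log w})$, so after Lemma~\ref{lem:tree-to-path} you would land at $\pw(H)=O(w\log w\cdot\log n)$, which is off by a $\sqrt{\log w}$ factor from the stated bound. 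The fix is not to re-run anything: when Algorithm~\ref{alg:treewidth-node-edit} hits the stopping condition at Line~\ref{if:cond-tw}, Line~\ref{alg:compute-tw} has \emph{already} produced, via Theorem~\ref{thm:tree decomp-alg}, a tree decomposition of that component with width $t\le 32c_1\,w\sqrt{\log w}$ and height $O(\log n)$. Use those decompositions directly (one per component of $H$) and apply Lemma~\ref{lem:tree-to-path}; then the width bound is $O(w\sqrt{\log w}\cdot\log n)$ as claimed. This is exactly how the paper's proof reads its citation of Theorem~\ref{thm:tree decomp-alg}.
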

\begin{proof}
By Theorem~\ref{thm:edit-size}, $|\editset| \leq (\log^{1.5} n) \cdot \optsol{\bTWV}{G}$. Since $\tw(G) \leq \pw(G)$ for all $G$, we have $\optsol{\bTWV}{G} \leq \optsol{\bPWV}{G}$. Hence, $|\editset| \leq \log^{1.5} n \cdot \optsol{\bPWV}{G}$.
Further, by Lemma~\ref{lem:tree-to-path} and Theorem~\ref{thm:tree decomp-alg}, $\pw(G[V\setminus X]) \leq (w\sqrt{\log w})\cdot \log n$.
\end{proof}

\stoptocwriting
\subsubsection{Proof of Theorem~\ref{thm:tree decomp-alg}}\label{sec:proof-tree-decomposition}
\resumetocwriting
In this section, we provide the proof of Theorem~\ref{thm:tree decomp-alg} which is essentially via the tree decomposition of Bodleander \etal~\cite{bodlaender1995approximating} by plugging in the $O(\sqrt{\log \opt})$-approximation algorithm of~\cite{feige2008improved} for vertex separators. Algorithm~\ref{alg:tree decomp} is the recursive approach of~\cite{bodlaender1995approximating} for approximating treewidth (and constructing its tree decomposition).

\begin{algorithm}
\caption{Approximation Algorithm for Tree Decomposition (From~\cite{bodlaender1995approximating,feige2008improved})}\label{alg:tree decomp}
\begin{algorithmic}[1]
\Procedure{TreeDecomposition}{$G, Z, W$} \Comment{$Z\cap W = \emptyset$, output contains $W$ in root bag}
\If{$8|Z| \leq |W|$}
	\State \Return a tree decomposition with a single node containing $Z\cap W$ \label{alg:base-case}
\Else
	\State $S \leftarrow$ a vertex $({3\over 4})$-separator of $W$ in $G[Z\cup W]$ by invoking the algorithm of~\cite{feige2008improved}\label{alg:sep-w}
	\State $T \leftarrow$ a vertex $({3\over 4})$-separator of $Z\cup W$ in $G[Z\cup W]$ by invoking the algorithm of~\cite{feige2008improved}\label{alg:sep-all}
	\State {\bf let} $G[V_1], \cdots, G[V_\ell]$ be the connected components of $G[(W\cup Z)\setminus (S\cup T)]$.
\EndIf
\For{$i=1$ to $\ell$}
	\State $Z_i \leftarrow Z \cap V_i$
	\State $W_i \leftarrow W \cap V_i$
	\State $T_i \leftarrow \Call{TreeDecomposition}{G, Z_i, W_i\cup S \cup T}$
\EndFor
\State \Return the tree decomposition with $(W \cup S\cup T)$ as its root and $T_1, \cdots, T_\ell$ as its children
\EndProcedure
\end{algorithmic}
\end{algorithm}

\begin{claim}\label{clm:w-size}
	If $W$ and $Z$ are disjoint sets of vertices of $G$ and $|W| \leq 32c_1\cdot \tw(G)\sqrt{\log\tw(G)}$, then the solution produced by the algorithm is a tree decomposition of $G[W\cup Z]$ of width at most $36c_1\cdot \tw(G)\sqrt{\log\tw(G)}$.
\end{claim}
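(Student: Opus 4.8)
The plan is to prove, by induction on the recursion, a slightly strengthened statement: for every pair of disjoint vertex sets $Z, W$ of $G$ with $|W| \le 32c_1 \cdot \tw(G)\sqrt{\log\tw(G)}$, the call $\textsc{TreeDecomposition}(G,Z,W)$ returns a valid tree decomposition of $G[W\cup Z]$ of width at most $36c_1 \cdot \tw(G)\sqrt{\log\tw(G)}$ whose root bag contains $W$. Write $t = \tw(G)$ and $f(t) = t\sqrt{\log t}$. The first auxiliary fact I would record is that $\sep_{2/3}$ is monotone under taking induced subgraphs: intersecting a $(2/3)$-separator of $W$ in $G$ with $V(H)$ yields a $(2/3)$-separator of $W$ in any induced subgraph $H$, so $\sep_{2/3}(G[Z\cup W]) \le \sep_{2/3}(G) \le t+1$ by Lemma~\ref{lem:tw-sep}. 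Since $x\sqrt{\log x}$ is increasing, Lemma~\ref{lem:approx-node-sep} then gives $|S|,|T| \le c_1(t+1)\sqrt{\log(t+1)} \le 2c_1 f(t)$ once $t$ exceeds a fixed absolute constant (for bounded $t$ the whole statement is handled by a direct argument, exactly as in Theorem~\ref{thm:tree decomp-alg}).

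\textbf{Base case.} When $8|Z|\le|W|$ the algorithm outputs the single bag $W\cup Z$ (the $Z\cap W$ in the pseudocode is a transcription of $Z\cup W$). This is trivially a valid tree decomposition of $G[W\cup Z]$ containing $W$, and its width is $|W|+|Z|-1 \le \tfrac98|W| - 1 \le 36c_1 f(t)$.

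\textbf{Inductive step.} In the recursive case the output has root bag $R = W\cup S\cup T$ with children $T_1,\dots,T_\ell$, where $T_i = \textsc{TreeDecomposition}(G, Z_i, W_i\cup S\cup T)$, $Z_i = Z\cap V_i$, $W_i = W\cap V_i$, and $V_1,\dots,V_\ell$ are the components of $G[(W\cup Z)\setminus(S\cup T)]$. First I would check the recursive preconditions: disjointness holds since $Z\cap W=\emptyset$ and each $V_i$ avoids $S\cup T$; and because $S$ is a $(3/4)$-separator of $W$ and $V_i$ lies inside a single component of $G[(W\cup Z)\setminus S]$, we get $|W_i| = |W\cap V_i| \le \tfrac34|W|$, hence $|W_i\cup S\cup T| \le \tfrac34|W| + 2c_1 f(t) \le 24c_1 f(t) + 2c_1 f(t) \le 32c_1 f(t)$, so the induction hypothesis applies to every child. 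Validity of the assembled decomposition is then the routine check: every vertex lies in $R$ or in some $V_i$ (hence in $T_i$); every edge of $G[W\cup Z]$ has both endpoints in $S\cup T\subseteq R$, or has an endpoint in some $V_i$ and then both endpoints belong to $G[V_i\cup S\cup T]$ (no edge joins distinct components), so it is covered inside $T_i$; and the subtree condition holds because $S\cup T$ sits in the root bag of every child, while any vertex of $W\cup Z$ outside $S\cup T$ lies in exactly one $V_i$ and therefore only in $T_i$. For the width bound, the children's bags are at most $36c_1 f(t)$ by induction, and $|R| \le |W| + |S| + |T| \le 32c_1 f(t) + 2c_1 f(t) = 34c_1 f(t) \le 36c_1 f(t)$.

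Finally I would note that $T$ (a $(3/4)$-separator of all of $W\cup Z$) forces $|V_i| \le \tfrac34|W\cup Z|$, which together with the base-case test $8|Z|\le|W|$ bounds the recursion depth by $O(\log|V(G)|)$ and hence guarantees termination; $T$ plays no role in the width estimate beyond contributing to $R$ and to the $W$-arguments of the recursive calls. I expect the only real friction to be the constant bookkeeping in the inductive step — specifically, confirming that the $\tfrac34$ contraction of the $W$-part genuinely absorbs the additive $|S|+|T|\approx 2c_1 f(t)$, so that $|W| \le 32c_1 f(t)$ is a true invariant rather than one that degrades with depth; the margins here ($26c_1 f(t)$ versus $32c_1 f(t)$ for the recursive $W$, and $34c_1 f(t)$ versus $36c_1 f(t)$ for the root bag) are exactly what makes the induction close once $t$ is above the fixed threshold.
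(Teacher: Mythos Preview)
Your argument is essentially the paper's own proof: induction on the recursion (the paper phrases it as induction on $|Z|$), validity via the standard three tree-decomposition conditions, and the width bound by showing $|W_i\cup S\cup T|\le 32c_1 f(t)$ as an invariant and $|W\cup S\cup T|\le 36c_1 f(t)$ for the root. One arithmetic slip to fix: having bounded each of $|S|,|T|\le 2c_1 f(t)$, you should write $|S\cup T|\le 4c_1 f(t)$ (not $2c_1 f(t)$) in both the recursive-$W$ and root-bag estimates; the margins you already noted ($24+4\le 32$ and $32+4=36$) then close exactly, and in fact your tighter $2c_1 f(t)$ per separator is what makes the claimed constant $36c_1$ work (the paper's coarser $4c_1 f(t)$ per separator gives $40c_1$ for the root bag).
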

\begin{proof}
First we show that the output is a valid tree decomposition of $G[W\cup Z]$.
\begin{itemize}
	\item{\bf All edges of $G[Z\cup W]$ are covered in $T$.} The proof is by an induction on the recursive structure of the algorithm. For a leaf bag in the tree decomposition ($|Z| \leq |W|/8$), a single bag contains all vertices of $Z \cup W$; hence, the claim holds. Now, suppose that this property holds for all subtrees rooted at children of the tree decomposition constructed by \Call{TreeDecomposition}{$G, Z, W$}.
	Consider an edge $uv\in E$. If $u,v\in W$, then $u$ and $v$ are both contained in the root bag and it is covered in the tree decomposition. Otherwise, $v$ and $u$ both belong to $V_i\cup S\cup T$ for an $i\in[\ell]$ and by the induction hypothesis, $uv$ is covered in the subtree $T_i$ corresponding to \Call{TreeDecomposition}{$G, Z_i, W_i\cup S \cup T$}. Thus, the property holds for $T$ as well.
	\item{\bf Bags containing each vertex are connected in tree structure $T$.} The proof is by induction on the recursive structure of \Call{TreeDecomposition}{$\cdot$}.
	%For a bag $X\in V(T)$, we define $T(X)$ to denote the subtree of $T$ rooted at $X$.
	More precisely, we show that for each pair $(Z', W')$, the bags containing a vertex $v \in Z' \cup W'$ are connected in $\Call{TreeDecomposition}{G, Z', W'}$. The property trivially holds for the leaves of $T$ (the case $|Z'| \leq |W'|/8$). Suppose that this property holds for all subtrees rooted at children of the tree decomposition $T = $ \Call{TreeDecomposition}{$G, Z, W$}.
	Then, we show that the property holds for $T$ as well. If $v\in W\cup S\cup T$, then by the induction hypothesis on children of $T$, the property holds for $T$ as well. Otherwise, $v$ is only contained in one of the children of $T$ (i.e., $v\in Z_i \cup W_i$) and by the induction hypothesis the property holds.
\end{itemize}

Next, we show that the width of the tree decomposition constructed by \Call{TreeDecomposition}{$G, Z, W$} is at most $36c_1\cdot \tw(G)\sqrt{\log\tw(G)}$. By induction on the size of $Z$, it suffices to show that $|W\cup S \cup T| \leq 36c_1\cdot \tw(G)\sqrt{\log\tw(G)}$ and for each $i$, $|W_i\cup S \cup T| \leq 32c_1\cdot \tw(G)\sqrt{\log\tw(G)}$.
Note that, by Line~\ref{alg:base-case}, if $|Z|\leq |W|/8$, then the returned tree decomposition has width at most $36c_1\cdot \tw(G)\sqrt{\log\tw(G)}$.
Suppose that the claim holds for all $(Z', W')$ where $|Z'| < |Z|$. We bound the size of $S$ and $T$ as follows:
\begin{align*}
	|S|, |T|
	&\leq c_1 \sep_{2/3}(G)\sqrt{\log \sep_{2/3}(G)} & \text{by Lemma~\ref{lem:approx-node-sep},}\\
	&< {4c_1} \cdot \tw(G)\sqrt{\log \tw(G)} & \text{by Lemma~\ref{lem:tw-sep}.}
\end{align*}
Since $|W| \leq 32c_1\cdot \tw(G)\sqrt{\log\tw(G)}$, the root bag has size at most $(32c_1 + 2\cdot 4c_1) \cdot \tw(G)\sqrt{\log \tw(G)}$. Moreover, since $S$ and $T$ are respectively a $({3\over 4})$-vertex separator of $W$ and $W\cup Z$ in $G[W\cup Z]$, for each $i\in[\ell]$,
\begin{align*}
	|Z_i| &\leq {3\over 4} |Z| < |Z|, \text{ and} \\
	|W_i \cup S \cup T| &\leq \left({3\over 4} 32c_1+8c_1\right)\cdot \tw(G)\sqrt{\log \tw(G)} \leq 32c_1 \cdot \tw(G)\sqrt{\log \tw(G)}.
\end{align*}
Thus, it follows from the induction hypothesis that the width of each subtree $T_i$ is at most $36c_1 \cdot \tw(G)\sqrt{\log \tw(G)}$.
\end{proof}

\begin{proof}[Proof of Theorem~\ref{thm:tree decomp-alg}]
It follows from Claim~\ref{clm:w-size} that \Call{TreeDecomposition}{$G=(V,E), \emptyset, V$} constructs a tree decomposition of $G$ of width at most $36c_1\cdot \tw(G)\sqrt{\log \tw(G)}$. Moreover, since for each $i\in[\ell]$, $|Z_i| \leq 3|Z|/4$, the returned tree decomposition has height $O(\log |V(G)|)$.
\end{proof}

\fi

\ifdefined\isbde
	\subsection{Bounded Degree: Polynomial Time Algorithm for Edge Editing}
	\label{section:bounded_degree_edge_edit}
 	We show a polynomial time algorithm for \bBDDEfull by a reduction to the problem of finding a minimum size $f$-edge cover of the graph.

\begin{definition}
\label{def:f-edge-cover}
Given a function $f: V \rightarrow \mathbb{Z}^+$, an \emph{$f$-edge cover} of a graph $G = (V, E)$ is a subset $F \subseteq E$ such that $\deg_{G[F]}(v) \geq f(v)$ for all $v \in V$.
\end{definition}

We use the following result to show a polynomial time algorithm for \bBDDE.

\begin{theorem}[\cite{gabow83efficient,huang17approximate}]\label{thm:editing-bounded-degree}
Given a graph $G=(V, E)$, and function $f:f: V \rightarrow \mathbb{Z}^+$, there exists a polynomial time algorithm for finding a minimum size $f$-edge cover of $G$.
\end{theorem}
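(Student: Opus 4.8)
The plan is to obtain this as a consequence of polynomial-time algorithms for degree-constrained subgraph problems, essentially repackaging the technique of Gabow~\cite{gabow83efficient} (as later used by~\cite{huang17approximate}). First I would pass to the complement: a set $F\subseteq E$ is an $f$-edge cover of $G$ if and only if $D:=E\setminus F$ satisfies $\deg_{G[D]}(v)\le b(v)$ for every $v\in V$, where $b(v):=\deg_G(v)-f(v)$, and since $|F|=|E|-|D|$, a minimum $f$-edge cover corresponds exactly to a maximum-cardinality such $D$. Note $b(v)\ge 0$ for all $v$ precisely when the instance is feasible (a vertex $v$ with $f(v)>\deg_G(v)$ can never be covered), which is checkable in linear time, so we may assume $b\ge 0$. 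Thus it suffices to compute a maximum-size subgraph obeying the per-vertex degree upper bounds $b(\cdot)$ — the \emph{degree-constrained subgraph} (simple $b$-matching) problem, which generalizes ordinary maximum matching (the case $b\equiv 1$, giving the classical Gallai identity ``minimum edge cover $=n-$ maximum matching'').

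The core step is to solve this maximization in polynomial time. I would invoke Gabow's reduction~\cite{gabow83efficient}, which transforms a degree-constrained subgraph instance on $G$ into an ordinary maximum-matching instance on an auxiliary graph $H$ whose size is polynomial in $|V|+|E|+\sum_v b(v)$ — and here $\sum_v b(v)\le\sum_v\deg_G(v)=2|E|$, so $H$ has polynomial size — in such a way that maximum matchings of $H$ correspond to maximum $b$-bounded subgraphs of $G$. Running Edmonds' blossom algorithm on $H$, then reading off $D$ and returning $F=E\setminus D$, yields a minimum $f$-edge cover. (Alternatively one cites the direct combinatorial algorithms of~\cite{gabow83efficient,huang17approximate} for $f$-edge cover, whose correctness rests on the same matching-based structure together with an uncrossing/LP-duality argument establishing optimality.)

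The one part that genuinely needs care is the correctness of the gadget reduction: that maximum matchings of $H$ biject (up to value) with maximum $b$-bounded subgraphs of $G$, including the parity gadgets attached at each vertex and the degenerate handling of vertices with $f(v)=0$ (which impose no constraint and can simply be left with $b(v)=\deg_G(v)$). This is exactly the content of~\cite{gabow83efficient}, so the cleanest route is to cite it as a black box; the only thing our setting requires us to verify is that the parameters keep $H$ polynomial-sized, which is immediate from $\sum_v b(v)\le 2|E|$. Finally, for the intended application to \bBDDEfull, I would instantiate $f(v):=\max\{0,\deg_G(v)-d\}$: a deletion set $D$ makes $G\setminus D$ have maximum degree at most $d$ iff $\deg_{G[D]}(v)\ge \deg_G(v)-d$ for all $v$, i.e.\ iff $D$ is an $f$-edge cover, so a minimum-size such $D$ is computed in polynomial time, giving the desired algorithm for \bBDDE.
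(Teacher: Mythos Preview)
Your proposal is correct. Note, however, that the paper does not actually prove this theorem: it is stated as a cited result from~\cite{gabow83efficient,huang17approximate} and used as a black box, with the only argument in the paper being the one-line corollary that $\bBDDE$ reduces to minimum $f$-edge cover via $f(v):=\deg(v)-d$. Your sketch---pass to the complement $D=E\setminus F$ to convert minimum $f$-edge cover into maximum simple $b$-matching with $b(v)=\deg_G(v)-f(v)$, observe $\sum_v b(v)\le 2|E|$ so Gabow's gadget reduction to ordinary matching stays polynomial-sized, then run Edmonds---is exactly the standard route underlying those citations, and your feasibility check and the instantiation $f(v)=\max\{0,\deg_G(v)-d\}$ for the $\bBDDE$ application are both fine (indeed slightly more careful than the paper, which writes $f(v)=\deg(v)-d$ without the truncation).
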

\begin{corollary}
There exists a polynomial time algorithm for $\bBDDE(G)$.
\end{corollary}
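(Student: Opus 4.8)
The plan is to show that \bBDDEfull is polynomial-time equivalent to computing a minimum-size $f$-edge cover for a suitably chosen demand function, and then to invoke Theorem~\ref{thm:editing-bounded-degree}. Recall that an instance of \bBDDE asks for a smallest edge set $X \subseteq E$ whose deletion leaves every vertex with degree at most $\Td$. For a vertex $v$, deleting $X$ decreases its degree by exactly the number of edges of $X$ incident to $v$, i.e.\ $\deg_{G\setminus X}(v) = \deg_G(v) - \deg_{G[X]}(v)$, so the constraint $\deg_{G\setminus X}(v) \leq \Td$ is equivalent to $\deg_{G[X]}(v) \geq \deg_G(v) - \Td$. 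Since $\deg_{G[X]}(v) \geq 0$ always holds, this is in turn equivalent to $\deg_{G[X]}(v) \geq f(v)$, where we set $f(v) := \max\{0,\, \deg_G(v) - \Td\}$.

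Second, I would observe that this makes the feasible solutions of \bBDDE literally coincide with the $f$-edge covers of $G$ (Definition~\ref{def:f-edge-cover}): a set $X \subseteq E$ is a valid edit set iff $\deg_{G[X]}(v) \geq f(v)$ for every $v \in V$. Since the objective value of $X$ in both problems is just $|X|$, a minimum $f$-edge cover is exactly a minimum edit set for \bBDDE, and the optimal values are equal. One small technical point to address is that Definition~\ref{def:f-edge-cover} takes $f\colon V \to \mathbb{Z}^+$, whereas here $f(v) = 0$ for every vertex of degree at most $\Td$; such vertices impose no constraint, so it suffices either to restrict the $f$-edge cover instance to the vertex set $\{v : \deg_G(v) > \Td\}$ (keeping all edges of $G$), or simply to note that the algorithms of \cite{gabow83efficient,huang17approximate} handle nonnegative demands directly. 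Feasibility is never an issue, since $f(v) \leq \deg_G(v)$ for all $v$ (in particular $X = E$ is always a valid edit set).

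Finally, running the polynomial-time minimum $f$-edge cover algorithm of Theorem~\ref{thm:editing-bounded-degree} on $(G, f)$ and returning the resulting edge set yields an optimal solution to $\bBDDE(G)$ in polynomial time. The main (and essentially only) obstacle here is getting the degree bookkeeping exactly right and correctly handling the zero-demand vertices in the translation to Definition~\ref{def:f-edge-cover}; once the correspondence above is established, no further algorithmic work is needed.
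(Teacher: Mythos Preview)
Your proposal is correct and follows essentially the same approach as the paper: reduce \bBDDE to minimum $f$-edge cover with $f(v) = \deg_G(v) - \Td$ and invoke Theorem~\ref{thm:editing-bounded-degree}. You are in fact slightly more careful than the paper, which simply sets $f(v) := \deg(v) - \Td$ without addressing the $\mathbb{Z}^+$ codomain in Definition~\ref{def:f-edge-cover}; your use of $\max\{0,\cdot\}$ and the remark about zero-demand vertices cleanly handles that technicality.
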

\begin{proof}
It follows by an application of Theorem~\ref{thm:editing-bounded-degree} to ($G, f$) where for each vertex $v$, $f(v) := \deg(v) - \Td$.
\end{proof}

\fi

\ifdefined\isstarforest
    \subsection{Treedepth 2: $O(1)$-approximation for Editing}
    \label{section:positive_starforest}
    Star forests are exactly the family of graphs with treedepth equal to two.
Thus, algorithms for editing to star forests may serve as a foundation for editing to larger values of treedepth. We refer to the vertex with degree greater than one in a star as its \emph{center} (if a star is a single edge, arbitrarily pick a vertex to be the center). All other vertices are referred to as \emph{leaves}.

We give polynomial-time $O(1)$-approximations for \SFVfull and \SFEfull based on \textsc{Hitting Set}.

\begin{lemma}
There exists a polynomial-time 4-approximation for \SFVfull and a polynomial-time 3-approximation for \SFEfull.
\end{lemma}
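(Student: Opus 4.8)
The plan is to reduce both editing problems to \Problem{Hitting Set} via a forbidden-subgraph characterization of star forests. First I would establish that a graph $H$ is a star forest if and only if $H$ contains neither a triangle $K_3$ nor a path on four vertices $P_4$ as a subgraph. One direction is immediate, since every subgraph of a star forest is a star forest and neither $K_3$ nor $P_4$ is a star forest. For the converse: a triangle is forbidden outright, and every cycle of length at least $4$ contains $P_4$ as a subgraph, so $H$ must be acyclic; and in a tree whose longest path has at most three vertices every component has diameter at most $2$, hence is a star. Thus $H$ is a disjoint union of stars.

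Given this characterization, $X \subseteq V(G)$ is a feasible solution to \SFVfull exactly when $G[V\setminus X]$ contains no $P_4$ and no $K_3$, i.e.\ exactly when $X$ meets the vertex set of every $P_4$ and of every triangle of $G$; hence a minimum such $X$ is precisely a minimum hitting set of the hypergraph whose hyperedges are these vertex sets. There are only $O(n^4)$ such hyperedges, so they can be enumerated in polynomial time, and each has size at most $4$ (four for a $P_4$, three for a triangle). Solving the natural LP relaxation of \Problem{Hitting Set} and then rounding up every variable of value at least $1/4$ produces an integral solution that is feasible (each constraint sums at most four variables to at least $1$, so some variable is at least $1/4$) and whose cost exceeds the LP optimum, and hence $\optsol{\SFV}{G}$, by a factor of at most $4$. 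This gives the claimed polynomial-time $4$-approximation.

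For \SFEfull the identical reduction applies with the ground set taken to be $E(G)$ rather than $V(G)$: an edge set $E'$ is feasible iff it meets the edge set of every $P_4$ and every triangle of $G$, which again follows by applying the forbidden-subgraph characterization to $G-E'$. The improvement comes from the fact that both forbidden subgraphs have \emph{exactly} three edges, so all hyperedges in this \Problem{Hitting Set} instance have size $3$, and rounding up LP values of at least $1/3$ yields a $3$-approximation. I expect the only delicate point to be nailing down the forbidden-subgraph characterization carefully — in particular the step that forbidding $P_4$ and $K_3$ forces acyclicity, and the verification that hitting all $P_4$ and triangle edge-sets genuinely produces a star forest rather than merely a $P_4$-free subgraph.
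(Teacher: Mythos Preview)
Your proposal is correct and takes essentially the same approach as the paper: both reduce to \Problem{Hitting Set} via the forbidden-subgraph characterization (no $K_3$, no $P_4$) and then invoke the standard $k$-approximation for $k$-\Problem{Hitting Set}. The only cosmetic difference is that the paper cites Hochbaum's $k$-approximation directly, whereas you spell out the LP-rounding argument that underlies it.
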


\begin{proof}
It is straightforward to see that a graph is a star forest if and only if it does not contain $P_4$ or $C_3$ as a subgraph. Therefore for a given instance $G=(V,E), \editnumber$ of \SFV, we must delete at least one vertex from every $P_4$ and $C_3$ in $G$. We can enumerate all size-four and size-three sets of nodes in $O(n^4)$ time, and for each we can check if it is a $P_4$ or $C_3$ in constant time. If we let $S$ be the set of subsets of $V$ which are $P_4$'s or $C_3$'s then we have an instance of \textsc{4-Hitting Set}, as each set in $S$ must have a node from it deleted and the cardinality of each set is at most four. Finding a hitting set equates to finding an edit set \editset~ in our \SFV instance. Hochbaum~\cite{hochbaum1982approximation} showed that there exists a polynomial-time $k$-approximation for $k$-\textsc{Hitting Set}, and thus there exists a polynomial-time $4$-approximation for \SFVfull.

Our approach is similar for \SFE, but here we note there could be multiple $P_4$'s on a set of four nodes and deleting a single edge may not remove all of them (whereas deleting a single node would). Thus we must distinguish each edge-unique $P_4$ in $S$. We check if each of the  permutations of nodes is a $P_4$, and if it is we include the corresponding edges as a set in $S$ (note that checking for only a permutation or its reverse is sufficient). We again use the result from~\cite{hochbaum1982approximation}, and as the cardinality of each set in $S$ is at most three, we obtain a polynomial-time $3$-approximation for \SFE.
\end{proof}

We point out that, because there exists a finite list of forbidden subgraphs that characterizes graphs of treedepth bounded by $\Tp$~\cite{nesetril2012sparsity}, the above hitting-set--based algorithm generalizes for the problem of deleting to any fixed treedepth \Tp.
Such an approximation may take time exponential in the size of the largest forbidden subgraph, but this size is a constant for fixed \Tp, and so the approximation is still poly-time in $|V(G)|$.

\fi

  \section{Editing Hardness Results}
\label{section:negative_results}

\ifdefined\isBDeEhard
    \subsection{Degeneracy: $o(\log (n/\Tr) )$-Inapproximability of Vertex and Edge Editing}
    \label{section:negative_bounded_degeneracy_editing}
    In this section, we prove the following inapproximability results for \bDEV and \bDEE.
\begin{theorem}\label{theorem:hardness-node-degeneracy}
		For graphs $G$ with $n$ nodes and $\Tr$ satisfying $\degener(G) > \Tr \geq 2$, \bDEEfull and \bDEVfull are $o(\log (n/\Tr))$-inapproximable.
\end{theorem}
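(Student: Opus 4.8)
The plan is a strict reduction from \SCfull, which by Theorem~\ref{lem:set-cover-general} is NP-hard to approximate within $(1-\epsilon)\ln|\mcU|$ even when $|\mcF|\le\mathrm{poly}(|\mcU|)$. A strict reduction (Definition~\ref{def:strict-reduction}) that sends Set Cover instances to \bDEV (resp.\ \bDEE) instances, preserving the optimum exactly and mapping edit sets back to covers of no larger size, then transfers this hardness to editing, giving $o(\log(n/\Tr))$-inapproximability of both problems once one checks that $\log(n/\Tr)=\Theta(\log|\mcU|)$ on the produced instances.

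\textbf{Construction.} After a harmless preprocessing of $(\mcU,\mcF)$ (duplicate $\mcF$ so every element lies in $\ge 2$ sets, and add $O(\Tr)$ universal dummy elements so every set has $\ge\Tr+2$ elements; neither step changes the optimum nor polynomiality), build $G$ with one \emph{set vertex} $s_S$ for each $S\in\mcF$ and, for each element $u$, a vertex-disjoint \emph{element gadget} $D_u$ which is a sparse ``almost-$(\Tr+1)$-core'' hung on the $d_u$ set vertices $\{s_S: u\in S\}$. The gadget is engineered so that (i) with all of $u$'s set vertices present, every vertex of $D_u$ has degree $\ge\Tr+1$, while (ii) deleting any single one of $u$'s set vertices (or the edge joining it to $D_u$) drops a distinguished linchpin vertex of $D_u$ to degree $\Tr$, which initiates a peeling cascade removing all of $D_u$ from the $(\Tr+1)$-core; for high-frequency $u$ the linchpin is replaced by a short chain of linchpins, each adjacent to at most $\Tr$ of $u$'s set vertices, so (ii) still holds. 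Each $s_S$ is propped to degree $\ge\Tr+1$ robustly by a private clique of ``brick'' vertices anchored to every element gadget of $S$, arranged so this scaffolding never peels while any $D_u$ with $u\in S$ survives, yet never by itself forms a $(\Tr+1)$-core. One then checks that the $(\Tr+1)$-core of $G$ is all of $V(G)$ (so $\degener(G)>\Tr$) and that $n=|V(G)|=\mathrm{poly}(|\mcU|)\cdot\Theta(\Tr)$, whence $n/\Tr=|\mcU|^{\Theta(1)}$.

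\textbf{Equivalence of optima.} For ``$\le$'': delete the set vertices of an optimal cover; every $D_u$ then loses a support and peels by (ii), after which all set vertices lose their anchors and peel, so the $(\Tr+1)$-core is empty and $\degener(G\setminus X)\le\Tr$. (For \bDEE one instead routes each set through a single \emph{critical} edge whose deletion plays the role of deleting $s_S$, so that $|\text{cover}|$ edge deletions suffice.) For ``$\ge$'': take any edit set $X$ (vertex or edge) with $\degener(G\setminus X)\le\Tr$; the $(\Tr+1)$-core of $G$ must be destroyed, so in particular each $D_u$ is peeled, which by (i) forces $X$ to contain a set vertex $s_S$ with $u\in S$ (resp.\ a critical edge of some $S\ni u$) or an edit private to $D_u$; mapping private edits to an arbitrary set containing $u$ yields a cover of size $\le|X|$, since the gadgets are vertex-disjoint and hence each edit is charged to at most one element. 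Thus the editing optimum equals the Set Cover optimum, and the obvious polynomial-time map $g$ from edit sets to covers makes the reduction strict.

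\textbf{Main obstacle.} The crux is the gadget: one must exhibit, uniformly in $\Tr\ge 2$ and in the frequency $d_u$ (which may be as large as $|\mcF|$), a structure with exactly the right degree profile so that everything sits at degree $\ge\Tr+1$ when fully supported, a single missing support provokes a global collapse, and the shared set-vertex scaffolding is simultaneously robust enough never to peel prematurely (preserving $\degener(G)>\Tr$) and never an auxiliary killable $(\Tr+1)$-core (preserving the exact equality of optima). Adapting everything to edge deletions — where ``removing a set'' cannot be accomplished in a single edit — is what forces the extra device of concentrating each set's influence in one critical edge.
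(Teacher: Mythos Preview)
Your outline is essentially the paper's approach: a strict reduction from \SCfull in which set-side edits trigger a peeling cascade through element gadgets, with optima matching exactly and $n/\Tr=|\mcU|^{\Theta(1)}$. You have correctly identified both the argument structure and the crux, and you explicitly flag the gadget design as the open obstacle.

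The paper's proof is precisely the missing gadgetry. In place of your ``set vertex plus brick scaffolding'', the paper builds a \emph{set gadget} $\gDS_S$ with an internal path $x_1,\dots,x_\Tr$, an independent set $w_1,\dots,w_\Tr$ fully joined to the path, a set vertex $v_S$ joined to the $w_j$, and two \emph{split gadgets} (each an $\Tr$-clique with two extra apex vertices $t,b$) hanging off $v_S$; the split gadgets are then chained to connect $\gDS_S$ to every element gadget $\sG_e$ for $e\in S$ while keeping every $t$ at degree $\le\Tr$ externally. The element gadget $\sG_e$ is a cycle-based structure with $f_e$ ``ports'' $v_e^i$, each attached to two split chains. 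The canonical-solution map is set-indexed rather than element-indexed: for each $S$, if the edit set $\yB$ touches the subgraph $H_S$ (consisting of $\gDS_S$, all $\sG_e$ with $e\in S$, and the split gadgets between), put $v_S$ (or the edge $(v_S,w_1)$) into $\ycan$. This sidesteps the ambiguity in your sketch about how to charge edits landing in shared or scaffolding pieces. Both vertex and edge deletion are handled by the same gadget, with the single edge $(v_S,w_1)$ playing the role of your ``critical edge''.
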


\mypar{Reduction strategy.}
Our proof relies on a strict reduction from \SCfull.
We prove that the graph constructed has degeneracy $\Tr+1$, and that reducing its degeneracy to $\Tr$ requires deleting edges (vertices) from set gadgets that correspond to a solution for $\SC(\mcU,\mcF)$.
We proceed by introducing the gadgets used in this reduction.

\mypar{Mapping an instance of \SCfull to an instance of \bDEfull.}
Given an instance $\SC(\mcU,\mcF)$, we want to define a function $f$ that maps this instance to a graph $G = f(\mcU,\mcF)$ as an instance of $\bDE$.
This reduction relies on three gadgets which we describe here:
\emph{set gadgets} and \emph{element gadgets} represent the sets and elements of the set cover instance, while \emph{split gadgets} enable us to connect the set and element gadgets to encode the containment relationships of $\SC(\mcU,\mcF)$.
\begin{definition}\label{def:sc-de-split-gadget}
	Given a fixed value $\Tr \geq 2$, the corresponding \emph{split gadget}, $\splitg$, consists of $\Tr+2$ vertices: $\Tr$ vertices connected in a clique, and two vertices, labeled $t$ and $b$ (for ``top'' and ``bottom'') each connected to all $\Tr$ clique vertices.
	See Figure~\ref{fig:de-split-gadget} for a visualization.
\end{definition}
\begin{definition}\label{def:sc-de-set-gadget}
  Given a fixed value $\Tr \geq 2$, an instance $\SC(\mcU,\mcF)$, and a set $S \in \mcF$, we define the \emph{set gadget} $\gDS_S$ as follows.
	The gadget consists of a length-$\Tr$ path $x_1, \cdots, x_{\Tr}$;
	an independent $W$ set of $\Tr$ nodes, $w_1, \cdots, w_{\Tr}$ each of which is connected to every vertex $x_j$;
	a \emph{set vertex} $v_S$ connected to $W$;
	and two instances of \splitg, each of which is connected to $v_S$ by a single edge from its $b$ vertex.
	See Figure~\ref{fig:de-set-gadget} for a visualization.
\end{definition}
\begin{definition}\label{def:sc-de-element-gadget}
	Given a fixed value $\Tr \geq 2$, an instance $\SC(\mcU,\mcF)$, and an element $e \in \mcU$,
	we define the corresponding \emph{element gadget} $\sG_e$ as follows.
	Let $f_e$ be the number of sets in $\mcF$ that contain $e$.
	The gadget consists of $2 \Tr f_e$ total vertices:
	\begin{compactenum}
		\item $\Tr  f_e$ vertices $z_j^i$ connected in a cycle, where $j = 1, \cdots, \Tr$ and $i = 1, \cdots, f_e$;
		\item $(\Tr-1) f_e$ independent vertices $a_j^i$, where $j = 1, \cdots, \Tr-1$ and $i = 1, \cdots, f_e$; and
		\item $f_e$ independent vertices $v_e^1, \cdots, v_e^{f_e}$.
	\end{compactenum}
	For each $i$, the vertices $\{z_j^i\}_{j=1}^{\Tr}$ and the vertex $v_e^i$ are completely connected to the vertices $\{a_j^i\}_{j=1}^{\Tr-1}$.
	See Figure~\ref{fig:de-element-gadget} for a visualization.
\end{definition}
The motivation for these gadgets, as we will prove below, is that, once the gadgets are linked together properly, every vertex will have degree at least $\Tr+1$, and each set gadget $\DS_S$ remains in the $(\Tr+1)$-core until either a vertex in $\DS_S$ is deleted or
all element gadgets attached to it have been deleted.
Before proving these properties, we first describe our map $f$ from $\SC$ to $\bDE$.\looseness-1
\begin{definition}\label{def:sc-de-gadget}
		Given an instance $\SC(\mcU,\mcF)$, we construct an instance $G = f(\mcU,\mcF)$ of $\bDE$ as follows.
		For each element $e \in \mcU$, the graph $G$ contains one element gadget $\sG_e$ (Definition~\ref{def:sc-de-element-gadget}).
		For each set $S \in \mcF$, $G$ contains one set gadget $\DS_S$ (Definition~\ref{def:sc-de-set-gadget});
		for each element $e \in S$, an edge connects one of the vertices $v_e^i \in \sG_e$ to both of the
		vertices $t \in \splitg_1$ and $t \in \splitg_2$ in $\DS_S$.
		If a set $S$ contains more than $\Tr$ elements, then the instances $\splitg_j$ in $\DS_S$ are first attached to additional copies of \splitg, such that no instance of \splitg has its vertex $t$ attached to more than $\Tr$ edges.
		See Figure~\ref{fig:de-arrange-splits}.
\end{definition}

Note that the above definitions require $\Tr \geq 2$ and $f_e \geq 2$ in order for the cycle structure in the element gadget to make sense.
This requirement can easily be satisfied by pre-processing the instance $\SC(\mcU,\mcF)$ to remove trivial elements with $f_e=1$.

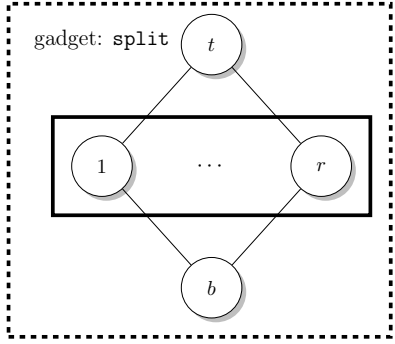
\begin{figure}[!h]
    \centering
	\scalebox{0.7}{
		\begin{tikzpicture}
% Styles
\tikzstyle{node_style} = [state, fill=white, drop shadow, minimum size=1.15cm]
\tikzstyle{edge_style} = [line width=2pt]

%% NODES
\def\cliquewidth{3}
\def\gheight{1.5}

\node (v1) [node_style] {$1$};
\node (vr) [node_style, right = \cliquewidth cm of v1] {$r$};

\node (v_mid) at ($(v1)!0.5!(vr)$) {\large $\cdots$};

\node (t) [node_style, above = \gheight cm of v_mid] {$t$};
\node (b) [node_style, below = \gheight cm of v_mid] {$b$};

%% BOXES
% Clique box
\node (box_clique) [fit=(v1)(vr), draw=black, line width=2pt,  inner sep=10pt] {};

% Gadget label
\node (gadget_label) [above = \gheight cm of v1] {\large gadget: \large$\mathtt{split}$};
% Gadget box
\node (box) [fit=(gadget_label)(vr)(box_clique)(b), draw=black, line width=2pt,  inner sep=10pt, dashed] {};

%% EDGES
\draw (v1) edge (t);
\draw (v1) edge (b);
\draw (vr) edge (t);
\draw (vr) edge (b);

\end{tikzpicture}
	}
    \caption{The gadget \splitg, described in Definition~\ref{def:sc-de-split-gadget}, provides the mechanism by which we connect a set gadget, $\DS_S$, to each of the element gadgets $\sG_e$ for $e \in S$.
    }\label{fig:de-split-gadget}
\end{figure}
\begin{figure}[!h]
    \centering
	\begin{tikzpicture}
% Styles
\tikzstyle{node_style} = [state, fill=white, drop shadow, minimum size=1.15cm]
\tikzstyle{small_node_style} = [state, fill=white, drop shadow]
\tikzstyle{tiny_node_style} = [state, fill=white, drop shadow, minimum size=0.3cm]
\tikzstyle{square_style} = [minimum width=1.3cm, minimum height=1.3cm,
                            fill=white, draw=black, drop shadow, line width=2pt]
\tikzstyle{edge_style} = [line width=2pt]
\tikzstyle{path_style} = [line width=2pt, decorate, decoration={snake}]
\tikzstyle{arrow_style} = [draw, fill=white, single arrow, single arrow head indent=1ex, minimum size=1cm, drop shadow]

%% NODES
\def\distwtox{0.7}
\def\distwtov{0.9}
\def\distvtosplit{0.3}
\def\ydistvtosplit{0.8}
\def\wtoelt{3*\distwtov + 3*\ydistvtosplit + 5.25}

% row of w nodes
\node (w1) [node_style] {$w_1$};
\node (wr) [node_style, right = 4.5cm of w1] {$w_r$};
\node (w2) [node_style] at ($(w1)!0.38!(wr)$) {$w_2$};
\node (w_mid) at ($(w2)!0.5!(wr)$) {$\cdots$};

% row of s nodes
\node (x1) [node_style, below = \distwtox cm of w1] {$x_1$};
\node (xr) [node_style, below = \distwtox cm of wr] {$x_r$};
\node (x2) [node_style] at ($(x1)!0.38!(xr)$) {$x_2$};
\node (x_mid) at ($(x2)!0.5!(xr)$) {$\cdots$};

\node (w_midpoint) at ($(w1)!0.5!(wr)$) {};

% v_S node
\node (vs) [node_style, above = \distwtov cm of w_midpoint] {$v_S$};

%% SPLIT GADGETS

%% Left gadget
\node (split1b) [small_node_style, above left = \ydistvtosplit cm and \distvtosplit cm of vs] {$b$};
\node (split1t) [small_node_style, above = 0.1cm of split1b] {$t$};
% clique nodes
\node (splt1_dummy) at ($(split1t)!0.5!(split1b)$) {};
\node (spltsm1) [tiny_node_style, left = 1.7cm of splt1_dummy] {};
\node (spltsmr) [tiny_node_style, left = 0.5cm of splt1_dummy] {};
\node (spltsmellips) at ($(spltsm1)!0.6!(spltsmr)$) {$\cdots$};
% Box and label
\node (split1gadget_label) [left = 0.1cm of split1t] {$\mathtt{split}_1$};
\node (box1) [fit=(split1b)(split1t)(split1gadget_label), draw=black, line width=2pt,  inner sep=10pt] {};

%% Right gadget
\node (split2b) [small_node_style, above right = \ydistvtosplit cm and \distvtosplit cm of vs] {$b$};
\node (split2t) [small_node_style, above = 0.1cm of split2b] {$t$};
% clique nodes
\node (splt2_dummy) at ($(split2t)!0.5!(split2b)$) {};
\node (splt2sm1) [tiny_node_style, right = 1.7cm of splt2_dummy] {};
\node (splt2smr) [tiny_node_style, right = 0.5cm of splt2_dummy] {};
\node (splt2smellips) at ($(splt2sm1)!0.5!(splt2smr)$) {$\cdots$};
% Box and label
\node (split2gadget_label) [right = 0.1cm of split2t] {$\mathtt{split}_{2}$};
\node (box2) [fit=(split2b)(split2t)(split2gadget_label), draw=black, line width=2pt,  inner sep=10pt] {};

% Gadget label
\node (gadget_label) [above right = 0.5cm and -1cm of w1] {\large gadget: \LARGE$\mathtt{D}_S$};
% Set Gadget Containing box
\node (dummy_corner) [below right = 0.5cm and 0.1cm of xr] {};
\node (box) [fit=(gadget_label)(box2)(xr), draw=black, line width=2pt,  inner sep=10pt, dashed] {};

%% ELEMENT GADGETS
% element gadget nodes
\node (elt1) [node_style, above = \wtoelt cm of w1] {$v_{e_{1}}$};
\node (elt2) [node_style, above = \wtoelt cm of wr] {$v_{e_{|S|}}$};

% ellipses between element gadgets
\node (elt_ellipses) at ($(elt1)!0.5!(elt2)$) {\LARGE $\cdots$};

% element gadget labels
\node (elt1label) [above = 0.1cm of elt1] {\LARGE $\mathtt{G}_{e_1}$};
\node (elt2label) [above = 0.1cm of elt2] {\LARGE $\mathtt{G}_{e_{|S|}}$};

% element gadget boxes
\node (box1) [fit=(elt1)(elt1label), draw=black, line width=2pt,  inner sep=10pt] {};
\node (box2) [fit=(elt2)(elt2label), draw=black, line width=2pt,  inner sep=10pt] {};

%% EDGES
% w edges
\draw (w1) edge (x1);
\draw (w1) edge (x2);
\draw (w1) edge (xr);
\draw (w2) edge (x1);
\draw (w2) edge (x2);
\draw (w2) edge (xr);
\draw (wr) edge (x1);
\draw (wr) edge (x2);
\draw (wr) edge (xr);

\draw (vs) edge (w1);
\draw (vs) edge (w2);
\draw (vs) edge (wr);

% x edges
\draw (x1) edge (x2);
\draw (x2) edge ($(x2)!0.5!(x_mid)$);
\draw (xr) edge ($(xr)!0.5!(x_mid)$);

% v to split edges
\draw (vs) edge (split1b);
\draw (vs) edge (split2b);

% Edges from split gadgets upward
\draw (split1t) edge (elt1);
\draw (split1t) edge (elt2);
\draw (split2t) edge (elt1);
\draw (split2t) edge (elt2);

% Split Gadget internal edges
\foreach \verta in {split1t, split1b}
{
    \foreach \vertb in {spltsm1, spltsmr}
    {
        \draw (\verta) edge (\vertb);
    }
}
\foreach \verta in {split2t, split2b}
{
    \foreach \vertb in {splt2sm1, splt2smr}
    {
        \draw (\verta) edge (\vertb);
    }
}

\end{tikzpicture}
    \caption{$\gDS_S$ is the gadget corresponding to set $S \in \mcF$, described in Definition~\ref{def:sc-de-set-gadget}.
    }\label{fig:de-set-gadget}
\end{figure}
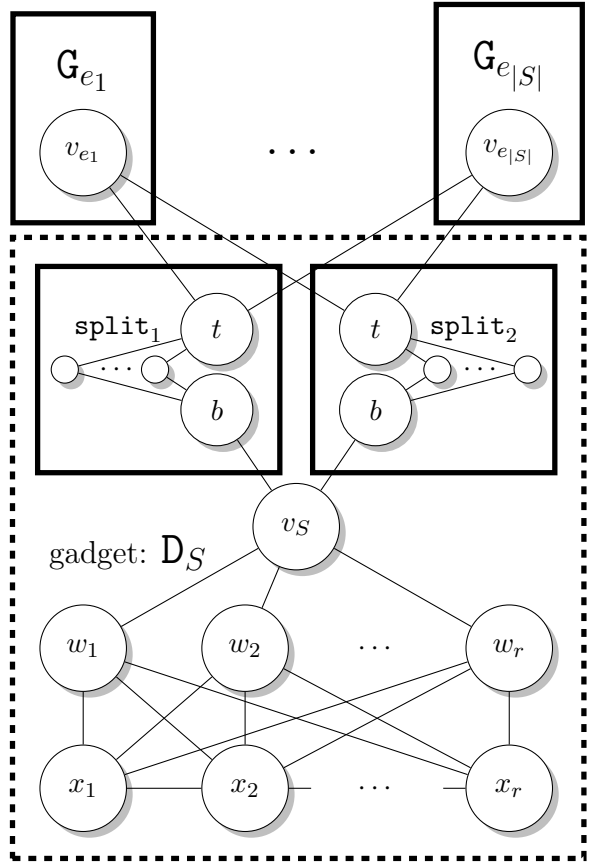
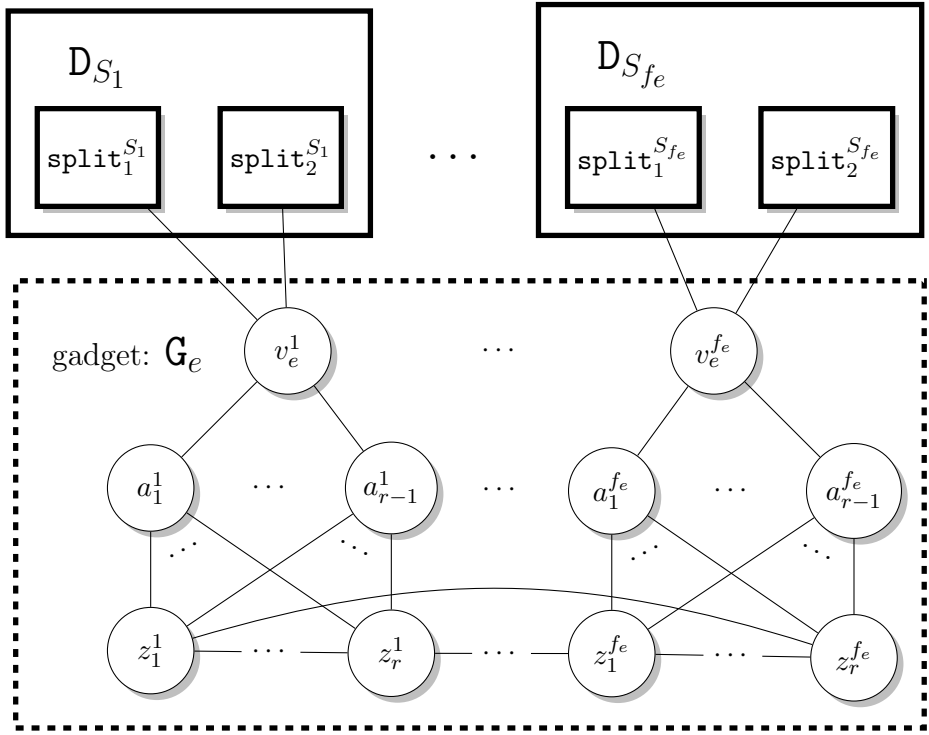
\begin{figure}[!h]
    \centering
  	\begin{tikzpicture}
% Styles
\tikzstyle{node_style} = [state, fill=white, drop shadow, minimum size=1.15cm]
\tikzstyle{square_style} = [minimum width=1.3cm, minimum height=1.3cm,
                            fill=white, draw=black, drop shadow, line width=2pt]
\tikzstyle{edge_style} = [line width=2pt]
\tikzstyle{path_style} = [line width=2pt, decorate, decoration={snake}]
\tikzstyle{arrow_style} = [draw, fill=white, single arrow, single arrow head indent=1ex, minimum size=1cm, drop shadow]

%% NODES
\def\labelshiftX{-0.6}
\def\labelshiftY{-0.25}
\def\vxspace{1}

% v nodes
\node (v1) [node_style] {$v_{e}^1$};
\node (vfe) [node_style, right = 4.5cm of v1] {$v_e^{f_e}$};

% a nodes
\node (a1) [node_style, below left = \vxspace cm and 1cm of v1] {$a^{1}_{1}$};
\node (a2) [node_style, right = 2cm of a1] {$a^{1}_{r-1}$};
\node (a4) [node_style, below right = \vxspace cm and 1cm of vfe] {$a^{f_e}_{r-1}$};
\node (a3) [node_style, left = 2cm of a4] {$a^{f_e}_{1}$};

% z nodes
\node (z1) [node_style, below = 1cm of a1] {$z^1_1$};
\node (z2) [node_style, below = 1cm of a2] {$z^1_{r}$};
\node (z3) [node_style, below = 1cm of a3] {$z^{f_e}_1$};
\node (z4) [node_style, below = 1cm of a4] {$z^{f_e}_{r}$};

% Set Gadget Nodes
\def\distvtosplit{3.3}
\node (split11) [square_style, above left = \distvtosplit cm and -0.55cm of a1] {$\mathtt{split}^{S_{1}}_1$};
\node (split12) [square_style, right = 0.75 cm of split11] {$\mathtt{split}^{S_1}_2$};

\node (split22) [square_style, above left = \distvtosplit cm and -1cm of a4] {$\mathtt{split}^{S_{f_e}}_2$};
\node (split21) [square_style, left = 0.75 cm of split22] {$\mathtt{split}^{S_{f_e}}_1$};

%% ELLIPSES
\node (v_ellipses) at ($(v1)!0.5!(vfe)$) {$\cdots$};

% ellipses between a nodes
\node (center_a_ellipses) at ($(a2)!0.5!(a3)$) {$\cdots$};
\node (left_a_ellipses) at ($(a1)!0.5!(a2)$) {$\cdots$};
\node (right_a_ellipses) at ($(a3)!0.5!(a4)$) {$\cdots$};

% ellipses in z cycle
\node (center_z_ellipses) at ($(z2)!0.5!(z3)$) {$\cdots$};
\node (left_z_ellipses) at ($(z1)!0.5!(z2)$) {$\cdots$};
\node (right_z_ellipses) at ($(z3)!0.5!(z4)$) {$\cdots$};

% ellipses between a-z edges
\node (dummy_left) at ($(a1)!0.5!(z1)$) {};
\node (dummy_center_left) at ($(a2)!0.5!(z2)$) {};
\node (dummy_center_right) at ($(a3)!0.5!(z3)$) {};
\node (dummy_right) at ($(a4)!0.5!(z4)$) {};

% Getting rotation and positioning right for tilted ellipses
\def\azleftoffset{0.17}
\def\azrightoffset{1.00-\azleftoffset}
\node(left_edge_ellipses) at ($(dummy_left)!\azleftoffset!(dummy_center_left)$) {};
\node(center_left_edge_ellipses) at ($(dummy_left)!\azrightoffset!(dummy_center_left)$) {};
\node(right_edge_ellipses) at ($(dummy_right)!\azleftoffset!(dummy_center_right)$) {};
\node(center_right_edge_ellipses) at ($(dummy_right)!\azrightoffset!(dummy_center_right)$) {};

\def\azellipsisdist{1.05}
\def\azellipsisangle{28}
\node(lefta1z1_ellipses) at ($(a1)!\azellipsisdist!(left_edge_ellipses)$) [label={[label distance=0.0cm,rotate=\azellipsisangle]:$\cdots$}] {};
\node(lefta2z2_ellipses) at ($(a2)!\azellipsisdist!(center_left_edge_ellipses)$) [label={[label distance=0.0cm,rotate=-\azellipsisangle]:$\cdots$}] {};

\node(righta3z3_ellipses) at ($(a3)!\azellipsisdist!(center_right_edge_ellipses)$) [label={[label distance=0.0cm,rotate=\azellipsisangle]:$\cdots$}] {};
\node(righta4z4_ellipses) at ($(a4)!\azellipsisdist!(right_edge_ellipses)$) [label={[label distance=0.0cm,rotate=-\azellipsisangle]:$\cdots$}] {};

% ellipses between split gadgets
\node (split_ellipses) at ($(split12)!0.5!(split21)$) {\LARGE$\cdots$};

%% BOXES
% Gadget labels
\node (gadget_label) [above left = 1cm and -1.25cm of a1] {\large gadget: \LARGE$\mathtt{G}_e$};
\node (set_gadget_label1) [above = 0.1cm of split11] {\LARGE $\mathtt{D}_{S_1}$};
\node (set_gadget_label2) [above = 0.1cm of split21] {\LARGE $\mathtt{D}_{S_{f_e}}$};

% Containing box
\node (dummy_corner) [below right = 0.5cm and 0.1cm of a4] {};
\node (box) [fit=(gadget_label)(vfe)(z4), draw=black, line width=2pt,  inner sep=10pt, dashed] {};

\node (box_split1) [fit=(split11)(set_gadget_label1)(split12), draw=black, line width=2pt,  inner sep=10pt] {};
\node (box_split2) [fit=(split21)(set_gadget_label2)(split22), draw=black, line width=2pt,  inner sep=10pt] {};

%% EDGES
% v to a nodes
\draw (v1) edge (a1);
\draw (v1) edge (a2);
\draw (vfe) edge (a3);
\draw (vfe) edge (a4);

% a nodes cycle
\def\bendangle{17}

% z nodes cycle
\draw (z1) edge [bend left = \bendangle] (z4);
\draw (z1) edge (left_z_ellipses);
\draw (left_z_ellipses) edge (z2);
\draw (z2) edge (center_z_ellipses);
\draw (center_z_ellipses) edge (z3);
\draw (z3) edge (right_z_ellipses);
\draw (right_z_ellipses) edge (z4);

% z to a nodes bicliques
\draw (z1) edge (a1);
\draw (z1) edge (a2);
\draw (z2) edge (a1);
\draw (z2) edge (a2);
\draw (z3) edge (a3);
\draw (z3) edge (a4);
\draw (z4) edge (a3);
\draw (z4) edge (a4);

% v to split edges
\draw (v1) edge (split11);
\draw (v1) edge (split12);
\draw (vfe) edge (split21);
\draw (vfe) edge (split22);

\end{tikzpicture}
    \caption{$\sG_e$ is the gadget corresponding to element $e \in \mcU$, described in Definition~\ref{def:sc-de-gadget}.
    }\label{fig:de-element-gadget}
\end{figure}
\begin{figure}[!h]
    \centering
	\scalebox{0.85}{
		\begin{tikzpicture}
% Styles
\tikzstyle{node_style} = [state, fill=white, drop shadow, minimum size=0.85cm]
\tikzstyle{small_node_style} = [state, fill=white, drop shadow]
\tikzstyle{square_style} = [minimum width=1.3cm, minimum height=1.3cm,
                            fill=white, draw=black, drop shadow, line width=2pt]
\tikzstyle{edge_style} = [line width=2pt]
\tikzstyle{path_style} = [line width=2pt, decorate, decoration={snake}]
\tikzstyle{arrow_style} = [draw, fill=white, single arrow, single arrow head indent=1ex, minimum size=1cm, drop shadow]

%% DEFINE ORIGIN
\def\distvtosplit{0.6}
\def\xsplitdist{2.15}
\def\xsplitdistextra{\xsplitdist*1.2}
\def\spltspace{0.6}
\node (vs) {};

%% SPLIT GADGETS in D_S
% Split1
\node (split1b) [small_node_style, left = \distvtosplit cm of vs] {$b$};
\node (split1t) [small_node_style, above = \spltspace cm of split1b] {$t$};
% label and box
\node (split1gadget_label) at ($(split1b)!0.5!(split1t)$) {$\mathtt{split}_1$};
\node (box1) [fit=(split1b)(split1t)(split1gadget_label), draw=black, line width=2pt,  inner sep=10pt] {};

% Split2
\node (split2b) [small_node_style, right = \distvtosplit cm of vs] {$b$};
\node (split2t) [small_node_style, above = \spltspace cm of split2b] {$t$};
% label and box
\node (split2gadget_label) at ($(split2b)!0.5!(split2t)$) {$\mathtt{split}_{2}$};
\node (box2) [fit=(split2b)(split2t)(split2gadget_label), draw=black, line width=2pt,  inner sep=10pt] {};

% Set gadget label and box
\node (set_gadget_label) [below = 1.5cm of split1gadget_label] {\LARGE $\mathtt{D}_S$};
\node (box) [fit=(box1)(split1t)(box2)(split2b)(set_gadget_label), draw=black, line width=2pt,  inner sep=10pt] {};

%% EXTRA SPLIT GADGETS
% LAYER 1
% Left extra split
\node (left_splitb) [small_node_style, left = \xsplitdist cm of split1b] {$b$};
\node (left_splitt) [small_node_style, above = \spltspace cm of left_splitb] {$t$};
% label and box
\node (left_split_label) at ($(left_splitb)!0.5!(left_splitt)$) {$\mathtt{split}_L^1$};
\node (boxL1) [fit=(left_splitb)(left_splitt)(left_split_label), draw=black, line width=2pt,  inner sep=10pt] {};

% Right extra split
\node (right_splitb) [small_node_style, right = \xsplitdist cm of split2b] {$b$};
\node (right_splitt) [small_node_style, above = \spltspace cm of right_splitb] {$t$};
% label and box
\node (right_split_label) at ($(right_splitb)!0.5!(right_splitt)$) {$\mathtt{split}_R^1$};
\node (boxR1) [fit=(right_splitb)(right_splitt)(right_split_label), draw=black, line width=2pt,  inner sep=10pt] {};

% LAYER 2
% Left extra split
\node (left_splitb2) [small_node_style, left = \xsplitdistextra cm of left_splitb] {$b$};
\node (left_splitt2) [small_node_style, above = \spltspace cm of left_splitb2] {$t$};
\node (left_split_label2) at ($(left_splitb2)!0.5!(left_splitt2)$) {$\mathtt{split}_L^{\ell}$};
\node (boxL1) [fit=(left_splitb2)(left_splitt2)(left_split_label2), draw=black, line width=2pt,  inner sep=10pt] {};

% Right extra split
\node (right_splitb2) [small_node_style, right = \xsplitdistextra cm of right_splitb] {$b$};
\node (right_splitt2) [small_node_style, above = \spltspace cm of right_splitb2] {$t$};
% label and box
\node (right_split_label2) at ($(right_splitb2)!0.5!(right_splitt2)$) {$\mathtt{split}_R^{\ell}$};
\node (boxR1) [fit=(right_splitb2)(right_splitt2)(right_split_label2), draw=black, line width=2pt,  inner sep=10pt] {};

%% ELLIPSES
% Left
\node (left_split_ellipsis) at ($(left_splitb2)!0.5!(left_splitt)$) {$\large \cdots$};
% Right
\node (right_split_ellipsis) at ($(right_splitb2)!0.5!(right_splitt)$) {$\large \cdots$};

%% EDGES
% Edges from set gadget split gadgets outward
\draw (split1t) edge (left_splitb);
\draw (split2t) edge (right_splitb);

% Edges from split gadget to ellipses
\draw (left_splitt) edge ($(left_splitt)!0.8!(left_split_ellipsis)$);
\draw (right_splitt) edge ($(right_splitt)!0.8!(right_split_ellipsis)$);
% Edges from ellipses to split gadget
\draw (left_splitb2) edge ($(left_splitb2)!0.8!(left_split_ellipsis)$);
\draw (right_splitb2) edge ($(right_splitb2)!0.8!(right_split_ellipsis)$);

%% Edges upward from split gadgets
\def\horizwidth{0.3}
\def\edgeheight{1.25}
\def\labelheight{0.35}
\foreach \vert in {left_splitt2, right_splitt2}
{
    \node (1) [above left = \edgeheight cm and \horizwidth cm of \vert] {};
    \node (2) [above right = \edgeheight cm and \horizwidth cm of \vert] {};
    \node (dummy_mid) at ($(1)!0.5!(2)$) {};
    \node (ellipsis) [below = -0.01cm of dummy_mid] {$\cdots$};

    \node (label) [above = \labelheight cm of ellipsis] {$\leq r$};

    \draw [decorate,decoration={brace,amplitude=5pt},rotate=90] (1) -- (2);

    \draw (\vert) edge (1);
    \draw (\vert) edge (2);
}

\foreach \vert in {left_splitt, right_splitt, split1t, split2t}
{
    \node (1) [above left = \edgeheight cm and \horizwidth cm of \vert] {};
    \node (2) [above right = \edgeheight cm and \horizwidth cm of \vert] {};
    \node (dummy_mid) at ($(1)!0.5!(2)$) {};
    \node (ellipsis) [below = -0.01 cm of dummy_mid] {$\cdots$};

    \node (label) [above = \labelheight cm of ellipsis] {$r-1$};

    \draw [decorate,decoration={brace,amplitude=5pt},rotate=90] (1) -- (2);

    \draw (\vert) edge (1);
    \draw (\vert) edge (2);
}

\end{tikzpicture}
	}
    \caption{By linking together instances of \splitg, we can connect the gadget $\DS_S$
	to all $|S|$ element gadgets $\sG_e$ with $e \in S$ while also guaranteeing
	no vertex $t$ in an instance of \splitg has more than $\Tr$ external edges.
	An instance of \splitg is connected to an element gadget $\sG_e$ by a single edge from the vertex $t$ in \splitg to a vertex in $\sG_e$.
	In each instance of \splitg, the vertex $b$ is connected via a single edge to either the node $v_S$ in a set gadget $\DS_S$, or to a vertex $t$ in another instance of \splitg.
	See Definition~\ref{def:sc-de-gadget}.
    }\label{fig:de-arrange-splits}
\end{figure}

Next we establish some properties of these gadgets that will be useful in proving Theorem~\ref{theorem:hardness-node-degeneracy}.

\begin{lemma}\label{lemma:number-splits}
		In the graph $G = f(\mcU,\mcF)$ from Definition~\ref{def:sc-de-gadget}, if $|S| > \Tr$, then the set gadget $\DS_S$ requires exactly $2 \ceil{\tfrac{|S| - \Tr }{\Tr-1}}$ additional copies of \splitg to connect $\splitg_1,\splitg_2 \in \DS_S$ to all of the element gadgets $\sG_e$ for $e \in S$.
\end{lemma}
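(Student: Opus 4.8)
The plan is to prove both directions (that $2\ceil{\tfrac{|S|-\Tr}{\Tr-1}}$ extra copies of \splitg are necessary and that they suffice) by a single edge‑counting argument on the auxiliary trees of \splitg‑copies hanging off $\splitg_1$ and $\splitg_2$. First I would make precise the structure in Definition~\ref{def:sc-de-gadget} and Figure~\ref{fig:de-arrange-splits}: the copies of \splitg added inside $\gDS_S$ form two vertex‑disjoint rooted trees $T_1,T_2$ with roots $\splitg_1,\splitg_2$, where in each tree every non‑root copy has its vertex $b$ joined by one edge to the vertex $t$ of its parent copy, the root's vertex $b$ is joined to $v_S$, and each element $e\in S$ contributes exactly one edge from $\sG_e$ to some $t$ in $T_1$ and one edge from $\sG_e$ to some $t$ in $T_2$. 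I would also record, from Definitions~\ref{def:sc-de-split-gadget}, \ref{def:sc-de-set-gadget}, and \ref{def:sc-de-gadget}, that apart from the clique‑internal edges of each \splitg, the only edges leaving a \splitg‑copy are the single edge at $b$ (to $v_S$ or a parent's $t$) and at most $\Tr$ edges at $t$ (to children's $b$‑vertices or to element gadgets), and that no $t$‑vertex is joined to a $v_S$ or to another $t$. Since $T_1$ and $T_2$ are symmetric, it suffices to find the minimum number of non‑root copies in $T_1$ that can host the $|S|$ required edges to element gadgets, then double it.

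Next I would carry out the count for $T_1$ with $N$ copies total (so $N-1$ additional copies). Summed over all copies, the $t$‑vertices are incident to exactly $N\Tr$ edges leaving their own copies; of these exactly $N-1$ go to the $b$‑vertices of the $N-1$ non‑root copies (one per parent–child link), and the root's $b$‑edge goes to $v_S$ rather than to an element gadget. Hence the number of edges from $t$‑vertices of $T_1$ to element gadgets is exactly $N\Tr-(N-1)=N(\Tr-1)+1$, which depends only on $N$, not on the shape of $T_1$. Because each $e\in S$ needs exactly one such edge and distinct elements use distinct edges, feasibility of $T_1$ is equivalent to $N(\Tr-1)+1\ge|S|$, i.e.\ $N\ge\ceil{\tfrac{|S|-1}{\Tr-1}}$ (using $\Tr\ge 2$), and a chain of copies as in Figure~\ref{fig:de-arrange-splits} realizes equality. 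So the minimum number of additional copies in $T_1$ is $\ceil{\tfrac{|S|-1}{\Tr-1}}-1=\ceil{\tfrac{|S|-\Tr}{\Tr-1}}$ (via $\ceil{x}-1=\ceil{x-1}$); doubling for $T_2$ gives $2\ceil{\tfrac{|S|-\Tr}{\Tr-1}}$, and $|S|>\Tr$ forces this to be at least $2$, consistent with the construction always needing at least one extra copy per side once $|S|>\Tr$.

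The argument is essentially bookkeeping rather than ideas, so I do not expect a genuine obstacle; the one point that needs care is making the structural description of the \splitg‑linkage rigorous enough that the count $N(\Tr-1)+1$ is unambiguous — in particular that every non‑element, non‑clique‑internal edge incident to the trees is either one of the $N-1$ parent–child links or the single root‑to‑$v_S$ link. Once that is pinned down, both the lower bound (which holds for any tree shape, so no cleverer branching can beat the chain) and the matching upper bound (the explicit chain) follow immediately, and the remaining step is just the arithmetic identity for the ceiling function.
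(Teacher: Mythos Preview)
Your proposal is correct and follows essentially the same counting argument as the paper: both derive the inequality $(\Tr-1)\ell + \Tr \geq |S|$ (equivalently your $N(\Tr-1)+1\ge|S|$ with $N=\ell+1$) from the fact that each \splitg's $t$-vertex provides at most $\Tr$ external edges and each additional copy consumes one of them for its $b$-link, then solve for the least $\ell$. Your write-up is somewhat more careful in that you note the capacity depends only on $N$ and not on the tree shape, whereas the paper implicitly works with a chain; but this is a minor elaboration of the same idea rather than a different route.
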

\begin{proof}
		Each split gadget $\splitg_j \in \DS_S$ must be able to attach to a vertex of the form $v_e^j \in \sG_e$ for each $e \in S$.
		This requires $|S|$ edges; since each instance of \splitg provides (up to) $\Tr$ edges out of its vertex $t$, but each additional copy of \splitg uses one such edge to connect to its vertex $b$, the total number $\ell$ of additional copies of \splitg is the smallest integer satisfying $(\Tr-1)\ell + \Tr \geq |S|$.
		Choosing $\ell = \ceil{\tfrac{|S| - \Tr }{\Tr-1}}$ guarantees this.
		Since $\DS_S$ has two split gadgets and \textit{each} requires the same number of additional edges, this completes the proof.
\end{proof}

We now give a precise bound on the size of the graph $G = f(\mcU,\mcF)$.
\begin{lemma}\label{lemma:de-graph-size}
		Given $\Tr \geq 2$ and an instance $\SC(\mcU,\mcF)$, the graph $G = f(\mcU,\mcF)$ (Definition~\ref{def:sc-de-gadget}) has size $n \leq 10\Tr |\mcU||\mcF|$, and the map $f$ can be contructed in time polynomial in $|\mcU|, |\mcF|,$ and $\Tr$.
\end{lemma}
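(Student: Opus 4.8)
The statement is a routine counting bound, so the plan is simply to tally the vertices contributed by each family of gadgets and then absorb the lower-order terms using the hypotheses $\Tr \geq 2$ and $|\mcU| \geq 1$ (both certainly available; recall the preprocessing that removed elements with $f_e = 1$, so the element-gadget cycles are well-defined).

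First I would count the element gadgets. By Definition~\ref{def:sc-de-element-gadget}, $\sG_e$ has exactly $2\Tr f_e$ vertices, where $f_e$ is the number of sets containing $e$. Summing over $e$ and using the double-counting identity $\sum_{e \in \mcU} f_e = \sum_{S \in \mcF} |S| \leq |\mcU|\cdot|\mcF|$, the element gadgets contribute $\sum_{e \in \mcU} 2\Tr f_e = 2\Tr \sum_{S \in \mcF} |S| \leq 2\Tr\,|\mcU|\,|\mcF|$ vertices in total.

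Next I would count the set gadgets together with the cascaded split gadgets that Definition~\ref{def:sc-de-gadget} attaches to them. A single $\gDS_S$ has $\Tr$ path vertices, $\Tr$ vertices in $W$, the vertex $v_S$, and two copies of $\splitg$ of size $\Tr+2$ each, i.e.\ $4\Tr+5$ vertices. By Lemma~\ref{lemma:number-splits}, when $|S| > \Tr$ we add $2\ceil{\frac{|S|-\Tr}{\Tr-1}}$ further copies of $\splitg$, each of size $\Tr+2$; the useful estimate is that, for $\Tr \geq 2$, $\ceil{x} \leq x+1$ and $\frac{\Tr+2}{\Tr-1} \leq 4$, which together give that these extra splits add at most $2(|S|-1)\cdot\frac{\Tr+2}{\Tr-1} \leq 8|S|-8$ vertices. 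Hence $|V(\gDS_S)| \leq 4\Tr + 8|S|$ in every case (when $|S| \leq \Tr$ there are no extra splits, and $4\Tr+5 \leq 4\Tr + 8|S|$ since $|S|\geq 1$), so the set gadgets contribute at most $\sum_{S \in \mcF}(4\Tr + 8|S|) \leq 4\Tr|\mcF| + 8|\mcU|\,|\mcF|$ vertices. Adding the two contributions, $n \leq 2\Tr|\mcU||\mcF| + 4\Tr|\mcF| + 8|\mcU||\mcF|$; bounding $4\Tr|\mcF| \leq 4\Tr|\mcU||\mcF|$ (using $|\mcU|\geq 1$) and $8|\mcU||\mcF| \leq 4\Tr|\mcU||\mcF|$ (using $\Tr \geq 2$) yields $n \leq 10\Tr|\mcU||\mcF|$. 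For the running time I would observe that $f$ creates one element gadget per $e$, one set gadget per $S$, and $O(|\mcU||\mcF|)$ additional split gadgets in total, each buildable in time polynomial in its (polynomially bounded) size, with the inter-gadget edges of Definition~\ref{def:sc-de-gadget} added in $O(\sum_S |S|) = O(|\mcU||\mcF|)$ time; hence $f$ is polynomial in $|\mcU|,|\mcF|,\Tr$.

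The only real obstacle here is bookkeeping: the cascaded split gadgets must be bounded tightly enough (via the ceiling estimate and $\frac{\Tr+2}{\Tr-1}\leq 4$) that, once the $4\Tr+5$ vertices per set gadget are also folded in, the total still fits under the stated constant $10$ rather than something larger; choosing the estimates in the order above makes this go through with no slack to spare, so I would be careful to keep the $\Tr \geq 2$ hypothesis explicit wherever it is invoked.
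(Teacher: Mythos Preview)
Your proof is correct and follows essentially the same approach as the paper: count element gadgets via $\sum_e 2\Tr f_e \le 2\Tr|\mcU||\mcF|$, bound each set gadget plus its cascaded splits by $4\Tr + O(|S|)$ using Lemma~\ref{lemma:number-splits} together with $(\Tr+2)/(\Tr-1)\le 4$, and then absorb the remaining terms using $\Tr\ge 2$ and $|\mcU|\ge 1$. The only cosmetic differences are that the paper keeps the slightly tighter per-set bound $4\Tr-3+8|S|$ and applies $\sum_S |S|\le |\mcU||\mcF|$ one line later; your explicit handling of the $|S|\le \Tr$ case is a nice addition.
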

\begin{proof}
		Each $\DS_S$ contains $4\Tr+5$ vertices, plus $\Tr+2$ vertices per additional copy of \splitg needed to connect $\DS_S$ to the gadgets $\sG_e$.
		By Lemma~\ref{lemma:number-splits}, we use $2 \ceil{\tfrac{|S| - \Tr }{\Tr-1}} \leq 2( (|S| - \Tr)/(\Tr-1) + 1) = 2(|S|-1)/(\Tr-1)$ copies of \splitg.
		Since $\Tr\geq 2$, we can upperbound $(\Tr+2)/(\Tr-1)$ by 4, and so the total number of vertices in extra \splitg copies for a $\DS_S$ is bounded by $8|S|-8$.
		Thus, the gadget $\DS_S$ and associated copies of \splitg have at most $4\Tr - 3 + 8|S|$ vertices.
		Each $\sG_e$ contains $2\Tr f_e$ vertices.
		Since $G$ contains a copy of $\sG_e$ for each $e \in \mcU$, and a gadget $\DS_S$ for each $S \in \mcF$, we have
		\[
				n \quad \leq \quad \sum_{e \in \mcU} 2\Tr f_e  + \sum_{S \in \mcF} (4\Tr - 3 + 8|S|)
				  \quad \leq \quad 2\Tr |\mcU||\mcF|  + 4\Tr |\mcF| + 8 \sum_{S \in \mcF} |S|.
		\]
		Observing that $8\sum_{S \in \mcF} |S| \leq 4\Tr|\mcU||\mcF|$ and $4\Tr|\mcF| \leq 4\Tr|\mcU||\mcF|$, we get
		$n \leq 10\Tr |\mcU| |\mcF|$.
\end{proof}

\begin{lemma}\label{lemma:de-degeneracy}\label{lem:sc-de-gadget-degeneracy}
		Given $\Tr \geq 2$ and an instance $\SC(\mcU,\mcF)$, the graph $G = f(\mcU,\mcF)$ (Definition~\ref{def:sc-de-gadget}) has degeneracy $\Tr+1$.
		Moreover, for each set $S$ the induced subgraph $G[\DS_S]$ has degeneracy $\Tr$, and for each element $e$,
		all vertices in $\sG_e$ are not in the $(\Tr+1)$-core if at least one vertex $v_e^j \in \sG_e$ has both of its external edges removed.
		Finally, in any copy of $\splitg$, if either $t$ or $b$ has no external edges attached to it, then the vertices in $\splitg$ are not in the $(\Tr+1)$-core; otherwise, they are in the $(\Tr+1)$-core.
\end{lemma}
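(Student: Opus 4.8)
The plan is to establish the four assertions in turn, in each case reducing to the two elementary facts recorded in Lemma~\ref{lem:degen-properties}: a graph has degeneracy at most $k$ exactly when its vertices admit an elimination order in which every vertex has at most $k$ neighbours appearing later, and a vertex lies in the $(\Tr+1)$-core exactly when it survives the repeated deletion of vertices of degree at most $\Tr$ (and this peeling is order-independent). So the whole proof is degree bookkeeping over the three gadget types plus one propagation argument.

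First I would tabulate the $G$-degree of every vertex type from Definitions~\ref{def:sc-de-split-gadget}--\ref{def:sc-de-gadget}. Inside an element gadget $\sG_e$, each $z_j^i$ has two cycle neighbours and $\Tr-1$ neighbours among $\{a_k^i\}_k$, hence degree $\Tr+1$; each $a_j^i$ has $\Tr$ neighbours among $\{z_k^i\}_k$ plus $v_e^i$, degree $\Tr+1$; each $v_e^i$ has $\Tr-1$ internal neighbours plus its two external edges, degree $\Tr+1$. In a copy of $\splitg$ every clique vertex has degree $\Tr+1$ (the $\Tr-1$ other clique vertices, $t$, $b$), the vertex $b$ has its $\Tr$ clique neighbours plus exactly one external edge (to $v_S$ or to a $t$ of the next split in a chain), degree $\Tr+1$, and $t$ has its $\Tr$ clique neighbours plus between $1$ and $\Tr$ external edges. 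In a set gadget the path endpoints $x_1,x_\Tr$ have degree $\Tr+1$, the interior $x_j$ have degree $\Tr+2$, each $w_i$ has degree $\Tr+1$ ($\Tr$ path vertices and $v_S$), and $v_S$ has degree $\Tr+2$ (all of $W$ and the two $b$-vertices). Thus every vertex of $G$ has degree at least $\Tr+1$, so $G$ is its own $(\Tr+1)$-core, giving $\degener(G)\geq\Tr+1$; this also immediately yields the ``otherwise they are in the $(\Tr+1)$-core'' half of the last sentence, since in $G$ every $t$ and every $b$ does carry an external edge.

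For $\degener(G)\leq\Tr+1$ I would exhibit a single global elimination order. (i) Strip each $\sG_e$: first delete all $v_e^i$ (each still has its $\Tr+1$ neighbours present), then all $a_j^i$ (each has dropped to $\Tr$), then all $z_j^i$ (each has dropped to at most $2$). (ii) Strip the split gadgets from the leaves of each chain inward: a leaf $t$ has lost its element-gadget edges and so retains only its $\Tr$ clique neighbours; after it goes, each clique vertex drops to $\Tr$, and then $b$ retains only its single chain edge; walking this process down each chain reaches $\splitg_1,\splitg_2$, whose $t$-vertices then have only $\Tr$ clique neighbours. (iii) Finish each set gadget: remove $v_S$ (now degree $\Tr$, only $W$), then $W$ (each now degree $\Tr$, only the $x$'s), then $x_1,\dots,x_\Tr$ (degree at most $2$). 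Every removed vertex has at most $\Tr+1$ later neighbours, so $\degener(G)=\Tr+1$. Restricting this same order to $G[\DS_S]$ — where the $t$-vertices lose only external edges leaving $\DS_S$, not internal ones — shows $\degener(G[\DS_S])\leq\Tr$, and for the matching lower bound observe that the $\Tr$ clique vertices of $\splitg_1$ together with its vertex $t$ induce a subgraph in which every vertex has degree exactly $\Tr$, so $\degener(G[\DS_S])\geq\Tr$; hence it equals $\Tr$.

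Finally I would trace the peeling process for the two collapse statements. If $t$ (resp.\ $b$) of some $\splitg$ carries no external edge, it has degree exactly $\Tr$ and is deleted; then every clique vertex of that $\splitg$ drops to degree $\Tr$ and is deleted; and then $b$ (resp.\ $t$) has degree at most $\Tr$ (recall $t$ has at most $\Tr$ external edges) and is deleted — so no vertex of $\splitg$ survives into the $(\Tr+1)$-core. For the element gadget, suppose some $v_e^{i_0}$ has lost both external edges; it then has at most $\Tr-1$ neighbours and is deleted, after which each of the $\Tr-1$ vertices $a_j^{i_0}$ drops to degree $\Tr$ and is deleted, and then all $\Tr$ vertices $z_j^{i_0}$ are left with only their (at most two) cycle neighbours and are deleted. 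The essential point is that deleting block $i_0$'s cycle vertices lowers a boundary vertex of the adjacent block — say $z_1^{i_0+1}$ — to degree $\Tr$, which triggers the identical collapse of block $i_0+1$ (its $a$'s, then its remaining $z$'s and $v_e^{i_0+1}$), and since the $z$-vertices of $\sG_e$ lie on one cycle this avalanche runs all the way around, removing every vertex of $\sG_e$. I expect this propagation step — arguing rigorously that a single broken $v_e^i$ deterministically empties the entire gadget, which is precisely the all-or-nothing behaviour the Set Cover reduction exploits — to be the part requiring the most care; everything else is routine degree counting.
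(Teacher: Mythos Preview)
Your proof is correct and follows essentially the same route as the paper: tabulate degrees to see the minimum degree is $\Tr+1$, exhibit an elimination order for the $\leq\Tr+1$ bound, and run the peeling process for the collapse statements. Your global elimination order (element gadgets, then split chains leaf-inward, then set gadgets) differs cosmetically from the paper's (which starts by stripping the $w_j$'s), and you are in fact more careful than the paper on two points: you supply the $\geq\Tr$ lower bound for $G[\DS_S]$ explicitly via the $K_{\Tr+1}$ sitting inside a $\splitg$, and you spell out the block-by-block avalanche in $\sG_e$ rather than just listing an order on the $z$- and $a$-vertices.
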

\begin{proof}
		By the construction and the assumptions that no $S \in \mcF$ is empty and each element $e \in \mcU$ is contained in at least two sets,
		we know every vertex has degree at least $\Tr+1$ in $G$.
		To see that $G$ has degeneracy exactly $\Tr+1$, we exhibit an ordering to use Lemma~\ref{lem:degen-properties}.
		Note that all vertices $w_j$ in set gadgets $\DS_S$ have degree $\Tr+1$.
		Removing these leaves the vertices $v_S, x_j \in \DS_S$ (for each $j\leq r$) with degree two or less, so they can be removed as well.
		After this, all vertices remaining in $G$ have degree $\Tr+1$ or less, possibly with the exception of the vertices $t$ in copies of $\splitg$. After removing all vertices other than those labeled $t$ in instances of $\splitg$, then those vertices have degree $\leq \Tr$ as well.
		This proves $G$ has degeneracy $\Tr+1$.

		To see that the vertices of $\splitg$ are not in the $(\Tr+1)$-core if $t$ or $b$ has no external edges, consider the vertex ordering startin with either $t$ or $b$ (whichever vertex has no external edges to it), then the $\Tr$ clique vertices in \splitg, and finally the remaining vertex.

		Suppose $v_e^0 \in \sG_e$ has had its edges external to $\sG_e$ removed;
		then $v_e^0$ has degree $\Tr-1$ and
		the vertices of $\sG_e$ are not in the $(\Tr+1)$-core
		as witnessed by the vertex ordering $v_e^0$, $\{a_i^0\}_{i}$, $\{z_i^0\}_{i}$,
		 $\{z_i^j\}_{i,j}$, $\{a_i^j\}_{i,j}$, $\{v_e^j\}_{j}$.
		The degeneracy $\Tr$ of the induced subgraph $G[\DS_S]$ is exhibited by the vertex ordering $t_1, t_2, \splitg_1, \splitg_2$, $v_S, \{w_j\}_j, \{x_j\}_j$.
\end{proof}

\begin{lemma}\label{lemma:set-sol}
	Given instance $\SC(\mcU,\mcF)$, let $G = f(\mcU,\mcF)$ (Definition~\ref{def:sc-de-gadget}).
	Let $\mathcal{T} \subseteq \mcF$ be any set cover for $\SC(\mcU,\mcF)$.
	Let $\yB \subseteq V(G)$  ($\yB \subseteq E(G))$ be any set of vertices (edges) such that for each $S \in \mathcal{T}$,
	the set $\yB\cap V(\DS_S)$ is non-empty (the set $\yB\cap E(\DS_S)$ contains at least one edge incident to a vertex $w_j$).
	Then $\yB$ is a feasible solution for $\bDEV(G)$ ($\bDEE(G)$).
\end{lemma}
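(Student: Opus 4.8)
The plan is to show that $G \setminus \yB$ has degeneracy at most $\Tr$, which is exactly feasibility for $\bDEV(G)$ (resp.\ $\bDEE(G)$) since we are editing to degeneracy $\Tr$. By Lemma~\ref{lem:degen-properties} it suffices to produce an elimination order of the surviving vertices in which every vertex has at most $\Tr$ later neighbours; I would produce one by repeatedly peeling off a vertex whose current degree is at most $\Tr$ and arguing the process empties the graph, invoking the structural facts of Lemma~\ref{lem:sc-de-gadget-degeneracy} throughout.

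The engine of the argument is that, for each $S \in \mathcal{T}$, the deletion $\yB$ makes inside $\DS_S$ lets us peel off all of $\DS_S$ (its body, the two split gadgets, and any additional split gadgets used to connect $\DS_S$ to element gadgets when $|S|>\Tr$). Concretely, performing the $\yB$-deletion drops some vertex of $\DS_S$ below degree $\Tr+1$; from there the body $\{v_S\}\cup\{w_j\}\cup\{x_j\}$ collapses (each $w_j$ and each endpoint of the path $x_1,\cdots,x_\Tr$ has degree exactly $\Tr+1$, and $v_S$ has degree $\Tr+2$), after which each split gadget can be peeled in turn, the vertex labelled $t$ removed last within its own split gadget — which is safe because by construction (Definition~\ref{def:sc-de-gadget}) every $t$ has at most $\Tr$ edges leaving its split gadget, so once its $\Tr$ clique neighbours are gone its remaining degree is at most $\Tr$. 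Peeling all of $\DS_S$ in particular destroys every edge from $\DS_S$ to the element gadgets. For the edge version the same cascade is started by the vertex $w_j$ whose degree dropped to $\Tr$ when the $\yB$-edge incident to it was removed.

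Because $\mathcal{T}$ is a set cover, every $e\in\mcU$ lies in some $S_i\in\mathcal{T}$, and the previous step removed both external edges of the corresponding vertex $v_e^{i}\in\sG_e$, so $v_e^{i}$ now has degree $\Tr-1$; by Lemma~\ref{lem:sc-de-gadget-degeneracy} the entire element gadget $\sG_e$ is then outside the $(\Tr+1)$-core, so it can be peeled off completely. Once all of $\{\DS_S : S\in\mathcal{T}\}$ and all element gadgets are removed, what remains is a disjoint union of the set gadgets $\DS_S$ with $S\notin\mathcal{T}$ (with their split gadgets), now with no edges to element gadgets; each such $G[V(\DS_S)]$ is an isolated component of degeneracy $\Tr$ (Lemma~\ref{lem:sc-de-gadget-degeneracy}) and is peeled using its own degeneracy ordering. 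Concatenating the three stages gives the required elimination order, so $\degener(G\setminus\yB)\le\Tr$ and $\yB$ is feasible, for both $\bDEV$ and $\bDEE$.

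I expect the delicate point to be the second paragraph: a careful case analysis showing that, starting from the $\yB$-deletion, one really can peel all of $V(\DS_S)$ in the full graph despite the external edges at the $t$-vertices — i.e.\ exhibiting, for each place the deletion can sit (in the body, along the path $x_1,\cdots,x_\Tr$, or inside a split gadget), an explicit order in which no surviving vertex ever retains more than $\Tr$ neighbours, and in particular that the cascade reaches $v_S$ and both split-gadget chains so that every element-gadget edge of $\DS_S$ is destroyed. If some deletion triggers only a partial cascade, one must interleave the peels: the partial cascade still strips enough external edges that some element gadget becomes peelable, and peeling element gadgets in turn frees the remaining $t$-vertices. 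Formalising this interleaving cleanly — or, alternatively, replacing the constructive argument by a direct proof that $G\setminus\yB$ contains no subgraph of minimum degree $\Tr+1$ — is the real work; everything else is immediate from Lemma~\ref{lem:sc-de-gadget-degeneracy}.
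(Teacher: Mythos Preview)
Your proposal is correct and follows essentially the same cascade as the paper: peel the set gadgets $\DS_S$ for $S\in\mathcal{T}$, then all element gadgets (using that $\mathcal{T}$ is a cover so some $v_e^i$ loses its external edges), then the remaining set gadgets, concluding that the $(\Tr+1)$-core of $G\setminus\yB$ is empty. The paper carries out the edge case in detail (the deleted edge is at a $w_j$, so the body of $\DS_S$ collapses first and both split chains follow) and then handles vertex deletion in a single sentence by asserting that any vertex deletion in $\DS_S$ removes $\DS_S$ from the $(\Tr+1)$-core; the interleaving you flag as delicate is absorbed into the order-independence of the $(\Tr+1)$-core, so no explicit case analysis is actually required.
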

\begin{proof}
		We first prove the lemma for \bDEE.
		Deleting any edge in a gadget $\DS_S$ that is incident to a vertex $w_j$ makes the degree of that vertex $\Tr$---thus, that vertex is removed from the $(\Tr+1)$-core; this decreases the degree of the neighbors $x_i$ of $w_j$ to $\Tr$ as well, and by induction each $x_i$ in $\DS_S$ is removed from the ($\Tr+1$)-core.
		This removes all $w_j$ from the $(\Tr+1)$-core, which then removes $v_S$, and by Lemma~\ref{lem:sc-de-gadget-degeneracy}, both \splitg gadgets.
		Furthermore, this causes all edges leaving $\DS_S$ to be removed, and so all \splitg and $\sG_e$ gadgets attached to those edges then leave the $(\Tr+1)$-core, by Lemma~\ref{lem:sc-de-gadget-degeneracy}.
		By assumption, $\mathcal{T}$ is a set cover, and so the set of gadgets $\DS_S$ for $S\in \mathcal{T}$ necessarily connect to every element gadget $\sG_e$ in $G$.
		This shows that every $\sG_e$ is removed from the $(\Tr+1)$-core of $G\setminus \yB$.

		With every element gadget $\sG_e$ removed from the $(\Tr+1)$-core of $G\setminus \yB$, every edge going from a vertex in $\sG_e$ to a vertex $t$ in a copy of $\splitg$ is removed; this means all copies of \splitg are removed from the $(\Tr+1)$-core by Lemma~\ref{lemma:de-degeneracy}.
		This means that every set gadget $\DS_S$ (which wasn't already removed because $S \in \mathcal{T}$) has no external edges to element gadgets or \splitg gadgets,
		and so by Lemma~\ref{lem:sc-de-gadget-degeneracy} each $\DS_S$ is removed from the $(\Tr+1)$-core.
		Hence, $G\setminus\yB$ has degeneracy $\Tr$.\looseness-1

		To see that the result also holds for vertex deletions, note that if $\yB$ contains any vertex in a set gadget $\DS_S$, then deleting $\yB$ necessarily removes that instance of $\DS_S$ from the $(\Tr+1)$-core of $G\setminus \yB$.
		The rest of the proof follows as above.
\end{proof}

\mypar{Mapping a graph deletion set to a \SCfull solution.}

\begin{definition}\label{def:map-bde-to-sc}
	Given instance $\SC(\mcU,\mcF)$, let $G = f(\mcU,\mcF)$ (Definition~\ref{def:sc-de-gadget}).
	For any feasible solution $\yB \subset V(G)$  ($\yB \subset E(G)$) to $\bDEV(G)$ ($\bDEE(G)$),
	we construct a \emph{map $g$ from \bDE to \SC} that maps $\yB$ to a solution $g(\yB)$ for $\SC(\mcU,\mcF)$ in two steps:
	\begin{enumerate}
		\itemsep0em
		\item Construct a \emph{canonical edit set} $\ycan$ as follows: for each set $S \in \mcF$, let $H_S \subset G$ be the subgraph formed by $\DS_S$, $\sG_e$ for each element $e \in S$, and every \splitg lying on a path from $\DS_S$ to any such $\sG_e$.
		If $H_S \cap \yB$ is non-empty, then add the vertex $v_S$ (or the edge $(v_S,w_1)$) to $\ycan$.\looseness-1
		\item Set $g(\yB) = \{ S \in \mcU \mid \ycan \cap \DS_S \neq \emptyset \}$.
    \end{enumerate}
\end{definition}

\begin{lemma}\label{lemma:canonical-de}
	In the notation of Definition~\ref{def:map-bde-to-sc}, with $G = f(\mcU,\mcF)$, for a feasible solution $\yB$ to $\bDE(G)$,
	the corresponding set $\ycan$ is a feasible solution to $\bDE(G)$, the set $g(\yB)$ is a feasible solution to $\SC(\mcU,\mcF)$,
	and they satisfy $|g(\yB)| \leq |\ycan| \leq |\yB|$.
\end{lemma}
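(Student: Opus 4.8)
The plan is to verify the three assertions in the order (i) $g(\yB)$ is a feasible solution to $\SC(\mcU,\mcF)$, (ii) $\ycan$ is a feasible solution to $\bDE(G)$, and (iii) $|g(\yB)| = |\ycan| \le |\yB|$. The middle equality in (iii) is pure bookkeeping from Definition~\ref{def:map-bde-to-sc}: the canonical edit set contains exactly one object --- the vertex $v_S$ in the vertex-deletion case, or the edge $(v_S,w_1)$ in the edge-deletion case --- for each set $S$ whose region $H_S$ meets $\yB$, and $g(\yB)$ is by definition exactly the collection of these sets, so $|g(\yB)| = |\ycan|$.

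For (i) I would argue by contradiction. Suppose some element $e\in\mcU$ is not covered by $g(\yB)$; by construction this means $\yB\cap H_S=\emptyset$ for every set $S$ containing $e$, so $\yB$ is disjoint from $\sG_e$, from every $\DS_S$ with $e\in S$, and from every $\splitg$ gadget lying on a path between them. I claim this forces $\sG_e$ to remain inside the $(\Tr+1)$-core of $G\setminus\yB$, contradicting feasibility of $\yB$. The argument follows the core-peeling process that computes the $(\Tr+1)$-core: the only edges leaving $\sG_e$ are the two external edges of each interface vertex $v_e^i$, and by the construction both of these run into the two $\splitg$-trees hanging off the single set gadget $\DS_{S_i}$. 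Removing such an edge during peeling requires first peeling the incident vertex $t$ of some $\splitg$ gadget; since $\yB$ misses that gadget, Lemma~\ref{lem:sc-de-gadget-degeneracy} shows $t$ leaves the $(\Tr+1)$-core only once it has lost \emph{all} of its external edges --- one of which is precisely the edge back to $v_e^i$. Considering the first vertex of $\sG_e$ ever removed by the peeling process, it would have to be some $v_e^i$ whose external edge was removed, which in turn requires $v_e^i$ to have been removed earlier: a contradiction. Hence no vertex of $\sG_e$ is peeled, $\sG_e$ lies in the $(\Tr+1)$-core of $G\setminus\yB$, and $\yB$ is not feasible. I expect this cascade analysis --- tracking exactly which $\splitg$- and element-gadget vertices can drop below degree $\Tr+1$, and using crucially that both external edges of $v_e^i$ target the same $\DS_{S_i}$ --- to be the main obstacle.

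Given (i), assertion (ii) follows by invoking Lemma~\ref{lemma:set-sol} with $\mathcal{T}=g(\yB)$: for each $S\in g(\yB)$ the set $\ycan$ contains $v_S\in V(\DS_S)$, respectively the edge $(v_S,w_1)\in E(\DS_S)$ incident to the vertex $w_1$, which is exactly the hypothesis of Lemma~\ref{lemma:set-sol}, so $\ycan$ is feasible for $\bDEV(G)$ (resp.\ $\bDEE(G)$). Finally, for $|\ycan|\le|\yB|$ I would exhibit an injection from $\{S\in\mcF: \yB\cap H_S\neq\emptyset\}$ into $\yB$ that sends each such $S$ to a chosen witness in $\yB\cap H_S$; the only thing to check is that the witnesses can be taken distinct. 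Here the key structural observation is that the "private" part of a region $H_S$ --- the set gadget $\DS_S$ together with the $\splitg$ gadgets it owns along the paths to its element gadgets --- is disjoint from the private part of every other $H_{S'}$, so any $S$ with a private-part witness is handled immediately, while overlaps between distinct $H_S$'s occur only inside shared element gadgets $\sG_e$; feasibility of $\yB$ (via the same rigidity of $\sG_e$ used in (i)) is what lets one route the remaining witnesses without collision. Assembling these pieces yields $|g(\yB)|=|\ycan|\le|\yB|$ and completes the proof.
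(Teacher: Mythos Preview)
Your injection argument for $|\ycan|\le|\yB|$ has a genuine gap, and in fact the inequality can fail. The regions $H_S$ do not overlap only along a thin boundary: they share the \emph{entire} element gadgets $\sG_e$ for $e\in S$, and a single vertex deletion inside $\sG_e$ (any $z_j^i$ will do) already cascades through the $z$-cycle and collapses all of $\sG_e$. Concretely, take $\mcU=\{e_1,e_2\}$, $\mcF=\{\{e_1\},\{e_1,e_2\},\{e_2\}\}$, and let $\yB$ delete one $z$-vertex from each of $\sG_{e_1}$ and $\sG_{e_2}$. Both element gadgets collapse; every $t$ in every split-tree then loses all of its external edges, so each $\gDS_{S_i}$ collapses as well, and $\yB$ is feasible with $|\yB|=2$. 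But every $H_{S_i}$ contains at least one of $\sG_{e_1},\sG_{e_2}$ and hence meets $\yB$, giving $|\ycan|=3>|\yB|$. Your appeal to ``rigidity'' does not rescue this: feasibility only forces $\yB$ to meet every $H_e$, not to supply a distinct witness for every \emph{set} containing $e$. The paper's own proof at this step says only that the inequalities ``follow directly from the definitions,'' which glosses over exactly the same issue; the inequality as stated is false, though the reduction is easily repaired by redefining $\ycan$ to add, for each $v\in\yB\cap\sG_e$, the vertex $v_S$ for \emph{one} chosen $S\ni e$ rather than for all of them.

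Your treatment of (i)--(ii) takes a different route from the paper --- you argue directly from feasibility of $\yB$, whereas the paper first proves $\ycan$ feasible and then deduces feasibility of $g(\yB)$ from that --- but both orderings are fine. One small imprecision in your peeling argument: Lemma~\ref{lem:sc-de-gadget-degeneracy} does \emph{not} say that $t$ survives in the $(\Tr+1)$-core until it has lost all of its external edges; a $\splitg$ also collapses once $b$ loses its single external neighbor, and your circularity argument as written does not track that side. The cleanest fix is to bypass the dynamic peeling entirely and observe that the induced subgraph $G[H_e]$ has minimum degree $\Tr+1$ and is disjoint from $\yB$, hence lies inside the $(\Tr+1)$-core of $G\setminus\yB$.
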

\begin{proof}

		For each element $e \in \mcU$ let $H_e$ be the subgraph formed by $\sG_e$, the set gadgets $\DS_S$ for each set $S \in \mcF$ containing $e$, and every gadget \splitg lying on a path from each such $\DS_S$ to $\sG_e$.

		Let $\yB$ be a feasible solution to $\bDE(G)$ and suppose the corresponding canonical solution $\ycan$ is infeasible.
		Consider the subgraph $H_e^+ = H_e \cup \left( \bigcup_{S: e \in S} H_S \right)$ of $G$.
		If for each $e \in \mcU$ we have that $\ycan \cap H_e^+$ is non-empty, then $\ycan$ would be feasible.
		To see this, observe that in this situation each $\sG_e$ would be removed from the $(\Tr+1)$-core, and if all $\sG_e$ are removed from the $(\Tr+1)$-core, then each gadget $\DS_S$ is also removed, leaving the $(\Tr+1)$-core empty.
		Thus, $\ycan \cap H_e^+$ is empty for some $e$.
		However, since $\yB$ is feasible, $\yB \cap H_e^+ \neq \emptyset$ for every $e$;
		in particular, $\yB \cap H_S$ must be non-empty for some $S$ containing $e$.
		By Definition~\ref{def:map-bde-to-sc}, $\ycan \cap H_S$ must then be non-empty, a contradiction.\looseness-1

		Given a feasible solution $\yB$ to $\bDE(G)$, let $\ycan$ be the corresponding canonical solution and suppose that $g(\yB)$ is not a feasible solution to $\SC$.
		Then there is at least one element $e \in \mcU$ that is not covered by $g(\yB)$.
		Hence, for each set $S \in \mcF$ containing $e$, we have that $\ycan \cap H_S$ is empty.
		Since $e$ is not covered by $g(\yB)$ by assumption, then by construction $\ycan$ contains no edge (vertex) in $H_e^+$.
		We will show this implies that $H_e^+$ has minimum degree $\Tr+1$.
		Since $H_e^+$ has no edge (vertex) in $\ycan$, in particular $\ycan$ contains no edge (vertex) of $\sG_e$, any $\DS_S$ with $e \in S$, or any instance of \splitg lying on a path between $\sG_e$ and such a $\DS_S$.
		Thus, vertices in each $\DS_S$ have degree at least $\Tr+1$, since the vertices $t \in \splitg_j \in \DS_S$ each have at least one external edge on a path to $\sG_e$;
		clearly the instances \splitg lying on paths from $\DS_S$ to $\sG_e$ each have minimum degree $\Tr+1$;
		furthermore, since $H_e^+$ contains $\DS_S$ for each $S$ containing $e$, each vertex $v_e^j \in \sG_e$ has two external edges connected to instances \splitg.
		Since no edges (vertices) internal to $\sG_e$ are missing, this proves $H_e^+$ has minimum degree $\Tr+1$.
		This contradicts feasibility of $\ycan$, so $g(\yB)$ must be feasible.\looseness-1

		The inequalities follow directly from the definitions of the sets $\ycan$ and $g(\yB)$.
\end{proof}

\mypar{Strict reduction from \SCfull.}\label{sec:sc-de-reduction}
In the notation of Definition~\ref{def:sc-de-gadget}, fix an instance $\SC(\mcU,\mcF)$ and corresponding instance $\bDE(G)$.
We can now prove the main theorem of this section.

\begin{proof}[Proof of Theorem~\ref{theorem:hardness-node-degeneracy}]
		By Lemma~\ref{lemma:de-graph-size}, the map $f$ can be computed in time polynomial in $|\mcU|, |\mcF|,$ and $\Tr$.
		We also observe that, given a feasible solution $\yB$ to $\bDE(G)$, the map $g$ gives a polynomial time construction of a solution $g(\yB)$ to $\SC(\mcU,\mF)$.
		Furthermore, by Lemma~\ref{lemma:canonical-de}, we know that $\cost_{\SC}(g(\yB)) \leq \cost_{\bDE}(\yB)$.
		In particular, this implies $\opt_{\SC}(\mcU,\mcF) \leq \opt_{\bDE}(G)$.

		On the other hand, given any solution $\yA$ to $\SC(\mcU,\mcF)$,
		the set $\ycan = \{ v_S \mid S \in \yA \}$ is a valid solution for $\bDEV(f(\mcU,\mcF))$ by Lemma~\ref{lemma:set-sol} and satisfies
		$\cost_{\SC}(\yA) \geq \cost_{\bDEV}(\ycan)$.
		Additionally, the set $\ycan = \{ (v_S,w_1) \mid S \in \yA \}$ is a valid solution for $\bDEE$ and satisfies
		$\cost_{\SC}(\yA) \geq \cost_{\bDEE}(\ycan)$.
		In particular, for optimal $\yA$ this implies
		$\opt_{\SC}(\mcU,\mcF) \geq \opt_{\bDE}(f(\mcU,\mcF))$
		completing a proof that $\opt_{\SC}(\mcU,\mcF) = \opt_{\bDE}(f(\mcU,\mcF))$ for both vertex and edge deletion.

		This proves that maps $f$ and $g$ satisfy the conditions of Definition~\ref{def:strict-reduction}.
		Finally, by Lemma~\ref{lemma:de-graph-size} the map $f$ creates a graph with $n \leq 10 \Tr |\mcU||\mcF| $.
		By Theorem~\ref{lem-set-cover-sparse}, we can assume that $|\mcF| \leq |\mcU|^C$ for some constant $C > 0$,
		so we have $\log(n/\Tr) \leq \log(|\mcU|^{C+1} )$.
		Thus, there exists a constant $C_1>0$ such that a $C_1 \cdot \ln(n / \Tr)$-approximation for $\bDE(G)$ would give an $(1-\epsilon)(\ln |\mcU| )$-approximation for all \SC instances with $|\mcF| \leq |\mcU|^C$.
\end{proof}

\fi

\ifdefined\iswcolhard
    \subsection{Weak Coloring Numbers: $o(\Tt)$-Inapproximability of Editing}
    \label{section:negative_weak_coloring}
    
In this section, we explore the hardness of editing a graph $G$ so that it has weak $\Tc$-coloring number at most $\Tt$ (see Definition~\ref{definition:wcol}).
Because it is NP-hard to decide whether a graph has weak $\Tc$-coloring number at most $\Tt$, it is trivially hard to guarantee a finite approximation ratio for editing graphs which already have weak $\Tc$-coloring number at most $\Tt$ into the desired class.
By contrast, our hardness-of-approximation results apply when editing graphs that are far from the target class, which is the most favorable scenario for approximation.
Throughout this section, we say a vertex $u$ can {\em reach} another vertex $v$ with respect to an ordering $L$ if $v \in \wreach{\Tc}{G}{L}{u}$.

We examine three variants of the problem---namely vertex-deletion, edge-deletion, and edge-contraction---and prove the following results.

\begin{theorem} \label{theorem:bwcne-hard}
  For each fixed $\Tc > 2$ and constant $\Tt \geq \max\{12, \floor{\Tc/2}+4\}$, \bWCNfull (\bWCN) is $o(\Tt)$-inapproximable.
  Furthermore, there exists a constant $b > 0$ such that
  \bWCN is $o(\Tt)$-inapproximable for $\Tt \in  O( \log^{1/b} n )$,
  unless NP $\subseteq$ DTIME$(n^{O(\log\log n)})$.
\end{theorem}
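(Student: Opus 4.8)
The plan is to mimic the strict reduction behind Theorem~\ref{theorem:hardness-node-degeneracy}, but to start from \uSCfull rather than \SCfull, so that the hardness factor is controlled by the frequency parameter (which the gadgets will tie to the target weak coloring number $\Tt$) instead of by $\log n$. Given a \uSC instance $(\mcU,\mcF)$ in which every element lies in exactly $k$ sets, I would build in polynomial time a graph $G = f(\mcU,\mcF)$ together with element gadgets $\sG_e$, set gadgets $\gDS_S$, and split gadgets \splitg (each $\sG_e$ wired through split gadgets to the $\gDS_S$ for the $k$ sets $S\ni e$), designed so that: (i) $\wcol{\Tc}{G} = \Tt+1$, and (ii) for an edit set $X$ (vertex deletions, edge deletions, or edge contractions), $\wcol{\Tc}{G\setminus X}\le\Tt$ if and only if the collection of set gadgets $\gDS_S$ touched by $X$ corresponds to a set cover of $(\mcU,\mcF)$. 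Here $\Tt$ is a fixed function of $k$ and $\Tc$ — essentially $\Tt=\Theta(k)$ — and the requirement $\Tt\ge\max\{12,\floor{\Tc/2}+4\}$ is exactly what the element gadgets need, both for their length-$\approx\Tc/2$ internal paths and for the smallest frequencies ($k\ge3$, resp.\ $k\ge4$) used by the source hardness results.

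\textbf{Gadget design.} The conceptual switch from the degeneracy proof is that ``lying in the $(\Tr+1)$-core'' is replaced by ``forcing some vertex to weakly $\Tc$-reach at least $\Tt+1$ vertices under \emph{every} vertex ordering $L$''. Each gadget will therefore contain a cluster of $\Theta(\Tt)$ vertices attached to a distinguished vertex so that, in the intact gadget, whichever of them receives the largest label is weakly $\Tc$-reached by all the others (and, across linked gadgets, through paths of length $\le\Tc$), while a single well-chosen edit inside the attached set gadget frees up an ordering of the whole gadget in which no vertex exceeds weak reach $\Tt$. I would first establish the two preparatory lemmas analogous to Lemmas~\ref{lemma:de-graph-size} and~\ref{lemma:de-degeneracy}: a size bound $n\le\mathrm{poly}(k)\cdot|\mcU|\,|\mcF|$ (so that, using $|\mcF|\le k|\mcU|$ for $k$-uniform instances, $\log n=\Theta(\log|\mcU|)$), and the structural claim that $\wcol{\Tc}{G}=\Tt+1$ whereas a set gadget, once a vertex is deleted (an incident edge deleted, an incident edge contracted), can be ordered so as to contribute weak reach at most $\Tt$, with this ``deactivation'' cascading along the split gadgets to the element gadgets downstream.

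\textbf{The two directions and the conclusion.} For the ``if'' direction (the analogue of Lemma~\ref{lemma:set-sol}): given a cover $\mathcal T\subseteq\mcF$, performing the canonical edit on $\gDS_S$ for each $S\in\mathcal T$ deactivates every element gadget, since $\mathcal T$ meets every element; combining the per-gadget orderings into a global one witnesses $\wcol{\Tc}{G\setminus X}\le\Tt$, so $\opt_{\bWCN}(G)\le\opt_{\uSC}(\mcU,\mcF)$. For the ``only if'' direction I would reuse Definition~\ref{def:map-bde-to-sc}: replace any edit falling inside the ``territory'' $H_S$ of a set $S$ by the canonical edit on $\gDS_S$ itself, show (analogue of Lemma~\ref{lemma:canonical-de}) that this produces a canonical edit set $\widehat{X}$ with $|\widehat{X}|\le|X|$ that is still feasible, and output $g(X)=\{S : \widehat{X}\cap\gDS_S\ne\emptyset\}$; feasibility of $g(X)$ follows because an uncovered element $e$ would leave all of $H_e^+$ essentially intact, and then the structural lemma forces some vertex of $\sG_e$ to weakly $\Tc$-reach $\Tt+1$ vertices in every ordering, contradicting $\wcol{\Tc}{G\setminus\widehat{X}}\le\Tt$. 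Together the two directions give $\opt_{\uSC}(\mcU,\mcF)=\opt_{\bWCN}(f(\mcU,\mcF))$ and $\cost_{\uSC}(g(X))\le\cost_{\bWCN}(X)$, i.e.\ a strict reduction; plugging in the $(k-1-\epsilon)$-inapproximability of \uSC for constant $k\ge3$ (Theorem~\ref{lem-set-cover-frequency}) and the $(\floor{k/2}-0.01)$-inapproximability for $4\le k\le(\log|\mcU|)^{1/b}$ (Theorem~\ref{lem-set-cover-frequency-superconstant}), together with $\Tt=\Theta(k)$ and $\log n=\Theta(\log|\mcU|)$, yields the $o(\Tt)$-inapproximability of \bWCN and the refined statement for $\Tt\in O(\log^{1/b} n)$ unless NP $\subseteq$ DTIME$(n^{O(\log\log n)})$.

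\textbf{Main obstacle.} The delicate part is pinning $\wcol{\Tc}{G}$ to exactly $\Tt+1$: unlike the $k$-core, which is canonical, the weak $\Tc$-coloring number is a minimum over all orderings, so one must show that \emph{no} ordering of the intact graph avoids a vertex of weak reach $\ge\Tt+1$, and that no edit other than a canonical one helps. I expect the edge-contraction case to need the most care, since contracting an internal gadget edge shortens paths and can \emph{raise} weak reach elsewhere; the gadgets must be padded so that a single contraction can never push any vertex's weak reach below $\Tt+1$ unless it is precisely the designated hitting contraction, and so that contracting an edge outside the gadgets cannot cheaply simulate a deactivation.
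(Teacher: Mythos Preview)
Your high-level strategy matches the paper's: a strict reduction from \uSCfull with $\Tt=\Theta(k)$, the two directions (cover $\Rightarrow$ edit set; edit set $\Rightarrow$ canonical edit set $\Rightarrow$ cover), and the invocation of Theorems~\ref{lem-set-cover-frequency} and~\ref{lem-set-cover-frequency-superconstant} to obtain both the constant-$\Tt$ and the $\Tt\in O(\log^{1/b} n)$ statements. The size bound and the relation $\log n=\Theta(\log|\mcU|)$ are also as in the paper.

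However, the gadget design you sketch---importing the split-gadget architecture from the degeneracy proof---is not what the paper does, and your own ``main obstacle'' paragraph correctly identifies why the sketch is incomplete: you have not explained how to force $\wcol{\Tc}{G}>\Tt$ against \emph{every} ordering, only that some cluster of $\Theta(\Tt)$ vertices will be mutually reachable. The paper's mechanism is different and quite specific. Its set gadget $\wsg{S}$ is just two cliques of sizes $\Tt$ and $\Tt{+}1$ joined by a single edge $(v_S^1,v_S^2)$; the canonical edit is deleting $v_S^1$ (vertex case) or the edge $(v_S^1,v_S^2)$ (edge-deletion and contraction cases). The element gadget $\weg{x}$ attaches a clique $D_x$ of size $\Tt-3f_x+2$ via length-$\lfloor\Tc/2\rfloor$ paths directly to each $v_{S}^1$ with $x\in S$---no split gadgets at all. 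The lower-bound argument is then a chain of \emph{forced-ordering} lemmas driven by the clique sizes: the $(\Tt{+}1)$-clique forces $L(v_S^1)<L(v_S^2)$; the $\Tt$-clique then forces every path vertex and every vertex of $D_x$ to precede $v_S^1$; and with those constraints in place, a case split on whether the first $D_x$-vertex or the first $p^1_{S,x}$ comes earlier shows some clique vertex weakly $\Tc$-reaches at least $\Tt{+}1$ others. This ``calibrated cliques pin down the order'' trick is the missing idea in your sketch; the degeneracy-style cascade through split gadgets does not obviously supply it, because once one split gadget is ``deactivated'' you would still need to control the ordering of all the remaining ones globally, not merely their minimum degrees. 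Note also that the paper handles all three edit operations with the \emph{same} canonical edit location, so the contraction case requires no separate padding argument of the kind you anticipate.
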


The case where $\Tc = 1$ is equivalent to editing a graph to have \textit{degeneracy} $\Tt$, and it is handled in Section~\ref{section:negative_bounded_degeneracy_editing}.
The case where $\Tc = 2$ uses a similar construction to when $\Tc > 2$ and is explained at the end of this section.

\mypar{Reduction Strategy.}
We proceed via a strict reduction from a variant of \SCfull called \uSCfull (see Section~\ref{sec:prelims-reductions}).

We describe a function $f$ that maps an instance $(\mcU, \mcF)$ of \SCfull to a graph $G = f(\mcU, \mcF)$.
The graph $G$ consists of an element gadget $\weg{x}$ for each element $x \in \mcU$ and a set gadget $\wsg{S}$ for each set $S \in \mcF$ such that the gadget $\weg{x}$ contains $\wsg{S}$ when $x \in S$.
Thus, two gadgets $\weg{x}$ and $\weg{x'}$ overlap on $\wsg{S}$ for every set $S \in \mcF$ with $x, x' \in S$.
For an edit set $\yB$, we show that $\wcol{\Tc}{G \setminus \yB} \leq \Tt$ if and only if $\wcol{\Tc}{\weg{x} \setminus \yB} \leq \Tt$ for all $x \in \mcU$.
Additionally, we show that $\wcol{\Tc}{\weg{x} \setminus \yB} \leq \Tt$ if every $\weg{x}$ contains a $\wsg{S}$ that was edited.
We define a canonical solution to \bWCN so that all of the edits appear in set gadgets and then use a bijective mapping between problems to obtain the necessary inequalities.
We begin by introducing the gadgets in this reduction.

\paragraph{Gadgets.}
Let $f_x$ be the frequency of $x$ in $\mcF$, and let $f_{max} = \max_{x \in \mcU} \left(f_x\right)$.
We note that $f_x \geq 1$ trivially for all $x \in \mcU$ in all hard instances $(\mcU, \mcF)$.
Let $\ell = \floor{\frac{c}{2}}$ and let $\Tt$ be the smallest integer that satisfies $\Tt \geq 4 \cdot f_{max}$ and $\Tt \geq \ell + 3 \cdot f_{max} - 2$.
Note that $\ell \geq 1$ since $\Tc > 2$.

\begin{definition} \label{definition:wcn-gs}
The \emph{set gadget} $\wsg{S}$, depicted in Figure~\ref{fig:wcn-sg}, consists of two vertex-disjoint cliques $D_S^1$ and $D_S^2$ of sizes $\Tt$ and $\Tt + 1$, respectively.
We distinguish one vertex $v_S^1 \in D_S^1$ and another vertex $v_S^2 \in D_S^2$ and connect them by an edge.
\end{definition}

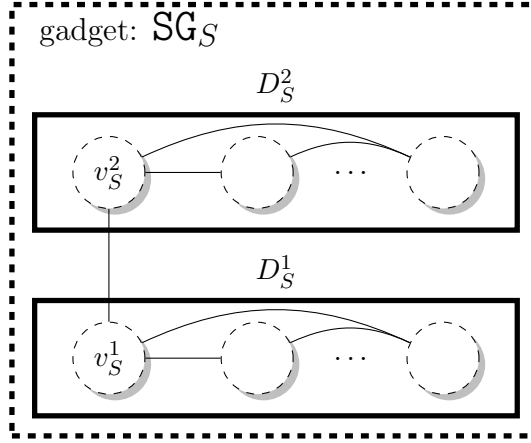
\begin{figure}
  \centering
%\documentclass[tikz, border=3pt]{standalone}
%\usepackage{varwidth}
%\usepackage[utf8x]{inputenc}
%\usepackage[T1]{fontenc}
%\usepackage{tikz, amsmath, amssymb, graphics, setspace, xcolor}
%\usetikzlibrary{shapes, arrows, automata, positioning, fit, shadows, decorations.pathmorphing}

%\begin{document}
\begin{tikzpicture}
% Styles
\tikzstyle{node_style} = [state, fill=white, drop shadow]
\tikzstyle{edge_style} = [line width=2pt]
\tikzstyle{path_style} = [line width=2pt, decorate, decoration={snake}]
\tikzstyle{arrow_style} = [draw, fill=white, single arrow, single arrow head indent=1ex, minimum size=1cm, drop shadow]

\def\edgebend{25}

% sgs
\node (vs1) [node_style, dashed] {$v_S^1$};
\node (vs2) [node_style, dashed, above = 1.5cm of vs1] {$v_S^2$};

\node (ds11) [node_style, dashed, right = 1cm of vs1] {};
\node (ds1t) [node_style, dashed, right = 1.5cm of ds11] {};
\node (ds1d) [dashed, right = 0.4cm of ds11] {$\dots$};

\node (ds21) [node_style, dashed, right = 1cm of vs2] {};
\node (ds2t) [node_style, dashed, right = 1.5cm of ds21] {};
\node (ds2d) [dashed, right = 0.4cm of ds21] {$\dots$};

\draw (vs1) edge (vs2);
\draw (vs1) edge (ds11);
\draw (vs2) edge (ds21);
\draw (vs1) edge [bend left = \edgebend] (ds1t);
\draw (vs2) edge [bend left = \edgebend] (ds2t);
\draw (ds11) edge [bend left = \edgebend] (ds1t);
\draw (ds21) edge [bend left = \edgebend] (ds2t);

\node (dummy1l) [left = -0.4cm of vs1] {};
\node (dummy1r) [right = -0.4cm of ds1t] {};
\node (ds1) [fit=(dummy1l)(dummy1r),
          draw=black,
          line width=2pt,
          label={above:$D_S^1$},
          inner sep=18pt] {};

\node (dummy2l) [left = -0.4cm of vs2] {};
\node (dummy2r) [right = -0.4cm of ds2t] {};
\node (ds2) [fit=(dummy2l)(dummy2r),
        draw=black,
        line width=2pt,
        label={above:$D_S^2$},
        inner sep=18pt] {};

\node (dummyr) [below right = 0cm and 3.88cm of vs1] {};
\node (dummyl) [above left = 1.2cm and 3.9cm of ds2t] {};

\node (label) [above right = 1.2cm and -1.4cm of vs2] {\large gadget: \LARGE $\texttt{SG}_S$};

\node (sgs) [fit=(dummyl)(dummyr),
			draw=black,
			dashed,
			line width=2pt,
			minimum width=7cm,
			inner sep=12pt] {};

\end{tikzpicture}
%\end{document}
  \caption{\label{fig:wcn-sg}
    The set gadget $\wsg{S}$ (Definition~\ref{definition:wcn-gs}).  $\wsg{S}$ contains cliques $D_S^1$ of size $\Tt$ and $D_S^2$ of size $\Tt + 1$ with distinguished vertices $v_S^1$ and $v_S^2$, respectively, connected by an edge.
  }
\end{figure}

\begin{definition} \label{definition:wcn-gx}
Each \emph{element gadget} $\weg{x}$, depicted in Figure~\ref{fig:wcn-eg}, contains a clique $D_x$ of size $\Tt - 3 \cdot f_x + 2$.
Note that, by our choice of $\Tt$, $D_x$ always contains at least one vertex.
For each $S \in \mcF$ which contains $x$, $\weg{x}$ contains $\wsg{S}$ and an additional clique $D_{S,x}$ of size $f_x$.
For each set $S \in \mcF$ for which $x \in S$, there is a vertex $p_{S,x}^1$ connected to all vertices in $D_x$ and $D_{S,x}$.
Additionally, the vertex $p_{S,x}^1$ is in a path of length $\ell$, $\set{p_{S,x}^1 \dots p_{S,x}^\ell}$, and the vertex $p_{S,x}^\ell$ is connected to $v_S^1$ in an instance of $\wsg{S}$.
\end{definition}

\paragraph{Reduction.}
We now define reduction functions $f$ and $g$
(Definition~\ref{def:strict-reduction}).

\begin{definition} \label{definition:wcn-f}
Given a \SCfull instance $(\mcU, \mcF)$, the function \emph{$f(\mcU, \mcF)$} produces the corresponding \bWCN instance $G$.
Specifically, $G$ contains a set gadget $\wsg{S}$ for every $S \in \mcF$ as described in Definition~\ref{definition:wcn-gs}.
Using the set gadgets, $G$ contains an element gadget $\weg{x}$ for every $x \in \mcU$ as described in Definition~\ref{definition:wcn-gx}.
The graph $G$ is the union of these element gadgets.
\end{definition}

\begin{figure}
  \centering
\begin{tikzpicture}
% Styles
\tikzstyle{node_style} = [state, fill=white, drop shadow]
\tikzstyle{edge_style} = [line width=2pt]
\tikzstyle{path_style} = [line width=2pt, decorate, decoration={snake}]
\tikzstyle{arrow_style} = [draw, fill=white, single arrow, single arrow head indent=1ex, minimum size=1cm, drop shadow]

\def\edgebend{25}

% D_x
\node (dx1) [node_style] {};
\node (dx2) [node_style, below = 1cm of dx1] {};
\node (dxt) [node_style, below = 1.5cm of dx2] {};
\node (dxd) [below = 0.4cm of dx2] {\rotatebox{90}{$\dots$}};

\node (dummydxa) [above = -0.4cm of dx1] {};
\node (dummydxb) [below = -0.4cm of dxt] {};
\node (dx) [fit=(dummydxa)(dummydxb),
			draw=black,
			line width=2pt,
			label={left:$D_x$},
			inner sep=18pt] {};

\draw (dx1) edge (dx2);
\draw (dx1) edge [bend right = \edgebend] (dxt);
\draw (dx2) edge [bend right = \edgebend] (dxt);

% psx
\node (psx1) [node_style, above right = -0.5cm and 1cm of dx1] {$p_{S_1,x}^1$};
\node (psxl) [node_style, above right = 1cm and 1cm of psx1] {$p_{S_1,x}^\ell$};

\draw (dx1) edge (psx1);
\draw (dx2) edge (psx1);
\draw (dxt) edge (psx1);

\draw (psx1) edge [path_style] (psxl);

% ptx
\node (ptx1) [node_style, below right = -0.5cm and 1cm of dxt] {$p_{S_{f_x},x}^1$};
\node (ptxl) [node_style, above right = 1cm and 1cm of ptx1] {$p_{S_{f_x},x}^\ell$};

\draw (dx1) edge (ptx1);
\draw (dx2) edge (ptx1);
\draw (dxt) edge (ptx1);

\draw (ptx1) edge [path_style] (ptxl);

% dots
\node (dots) [rotate = 90, above right = 1.8cm and -0.3cm of ptx1] {\Large{\textbf{$\dots$}}};

% dsx
\node (dsx1) [node_style, right = 2cm of psx1] {};
\node (dsx2) [node_style, right = 1cm of dsx1] {};
\node (dsxt) [node_style, right = 1.5cm of dsx2] {};
\node (dsxd) [right = 0.4cm of dsx2] {$\dots$};

\node (dummydsxl) [left = -0.4cm of dsx1] {};
\node (dummydsxr) [right = -0.4cm of dsxt] {};
\node (dsx) [fit=(dummydsxl)(dummydsxr),
			draw=black,
			line width=2pt,
			label={below:$D_{S_1,x}$},
			inner sep=18pt] {};

\draw (dsx1) edge (dsx2);
\draw (dsx1) edge [bend right = \edgebend] (dsxt);
\draw (dsx2) edge [bend right = \edgebend] (dsxt);

\draw (psx1) edge (dsx1);
\draw (psx1) edge [bend left = \edgebend] (dsx2);
\draw (psx1) edge [bend left = \edgebend] (dsxt);

% dtx
\node (dtx1) [node_style, right = 2cm of ptx1] {};
\node (dtx2) [node_style, right = 1cm of dtx1] {};
\node (dtxt) [node_style, right = 1.5cm of dtx2] {};
\node (dtxd) [right = 0.4cm of dtx2] {$\dots$};

\node (dummydtxl) [left = -0.4cm of dtx1] {};
\node (dummydtxr) [right = -0.4cm of dtxt] {};
\node (dtx) [fit=(dummydtxl)(dummydtxr),
			draw=black,
			line width=2pt,
			label={below:$D_{S_{f_x},x}$},
			inner sep=18pt] {};

\draw (dtx1) edge (dtx2);
\draw (dtx1) edge [bend right = \edgebend] (dtxt);
\draw (dtx2) edge [bend right = \edgebend] (dtxt);

\draw (ptx1) edge (dtx1);
\draw (ptx1) edge [bend left = \edgebend] (dtx2);
\draw (ptx1) edge [bend left = \edgebend] (dtxt);

% sgs
\node (vs1) [node_style, dashed, right = 2cm of psxl] {$v_{S_1}^1$};

\node (dummysgsl) [left = -0.6cm of vs1] {};
\node (dummysgsr) [right = -0.6cm of vs1] {};
\node (sgs) [fit=(dummysgsl)(dummysgsr),
			draw=black,
			line width=2pt,
			label={right:$\texttt{SG}_{S_1}$},
			inner sep=18pt] {};

\draw (psxl) edge (vs1);

% sgt
\node (vt1) [node_style, dashed, right = 2cm of ptxl] {$v_{S_{f_x}}^1$};

\node (dummysgtl) [right = -0.6cm of vt1] {};
\node (dummysgtr) [left = -0.6cm of vt1] {};
\node (sgt) [fit=(dummysgtl)(dummysgtr),
			draw=black,
			line width=2pt,
			label={right:$\texttt{SG}_{S_{f_x}}$},
			inner sep=18pt] {};

\draw (ptxl) edge (vt1);

% box
\node (dummyl) [below left = 0cm and 2cm of ptx1] {};
\node (dummyr) [above right = 1.8cm and 0cm of dsxt] {};
\node (egx) [fit=(dummyl)(dummyr),
			draw=black,
			dashed,
			line width=2pt,
			inner sep=18pt] {};

\node (label) [above right = 1cm and -2.1cm of dx] {\large gadget: \LARGE$\texttt{EG}_x$};

\end{tikzpicture}
  \caption{\label{fig:wcn-eg}
    The element gadget $\weg{x}$ (Definition~\ref{definition:wcn-gx}).  It contains a clique $D_x$ of size $\Tt - 3 \cdot f_x + 2$ and a clique $D_{S,x}$ of size $f_x$ for each $S \in \mcF$ that contains $x$.
    $\weg{x}$ also contains the set gadget $\wsg{S}$ for each $S$ that contains $x$, and each vertex in $D_x$ and $D_{S,x}$ is connected to $\wsg{S}$ by a path of length $\ell$ (indicated by the squiggly line).
  }
\end{figure}
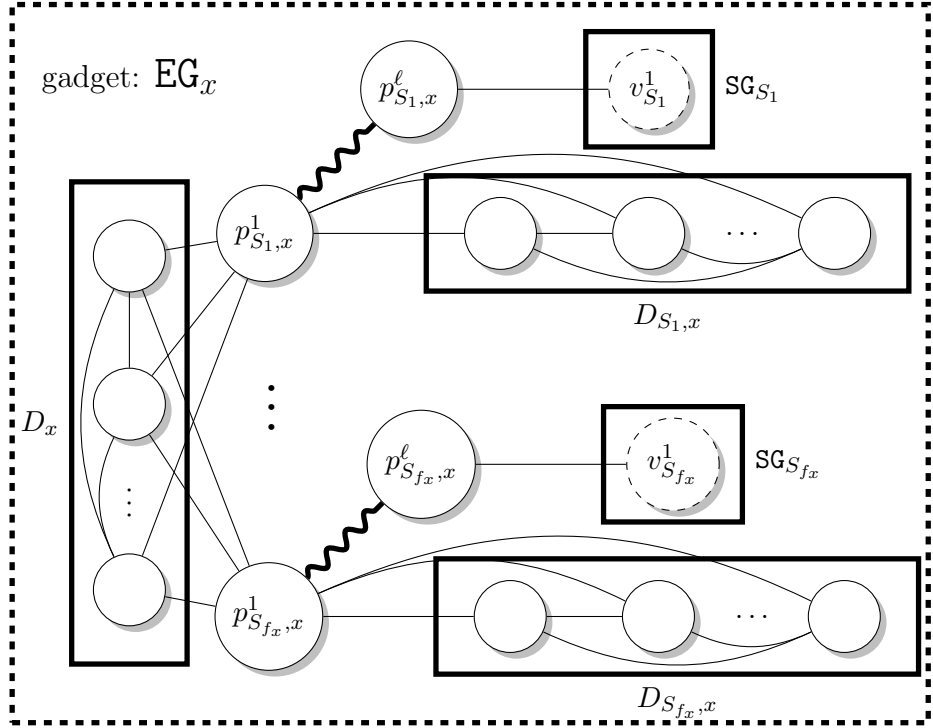

Before we demonstrate the mapping between solutions for \SCfull and \bWCN, we first show that $f$ is polynomial time computable and produces non-trivial instances of \bWCN.

\begin{lemma} \label{lemma:wcn-f-poly}
The function $f$ defined in Definition~\ref{definition:wcn-f} which produces the \bWCN instance $G$ from the \SCfull instance $(\mcU, \mcF)$ is polynomial time computable.
\end{lemma}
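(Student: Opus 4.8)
The plan is a straightforward size-and-bookkeeping argument. First I would observe that, since \Tc is a fixed constant, $\ell = \floor{\Tc/2}$ is a constant; and since every element of \mcU appears in at most $|\mcF|$ sets, $f_{max} \leq |\mcF|$. Consequently $\Tt = \max\{4 f_{max},\ \ell + 3 f_{max} - 2\} = O(|\mcF|)$ is polynomially bounded in the input size. This step is where the hypotheses that \Tc is fixed and that the hard \SCfull instances have bounded frequency actually matter.

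Next I would bound the number of vertices and edges of $G = f(\mcU,\mcF)$ gadget by gadget. A set gadget $\wsg{S}$ consists of two cliques on $\Tt$ and $\Tt+1$ vertices plus one extra edge, hence has $O(\Tt)$ vertices and $O(\Tt^2)$ edges. An element gadget $\weg{x}$ contains one clique $D_x$ of size $\Tt - 3 f_x + 2 = O(\Tt)$, and for each of the $f_x$ sets containing $x$ it contains a clique $D_{S,x}$ of size $f_x$, a set gadget $\wsg{S}$, and a path on $\ell$ new vertices together with the linking vertices $p_{S,x}^i$; so $\weg{x}$ contributes $O(f_x(\Tt + f_x + \ell)) = O(f_x\,\Tt)$ vertices (using $f_x \leq f_{max} \leq \Tt$) and $O(f_x\,\Tt^2)$ edges. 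Since $G$ is the union of the gadgets $\weg{x}$ over $x \in \mcU$, with shared set gadgets $\wsg{S}$ counted only once, its size is at most
\[
\sum_{x \in \mcU} O(f_x\,\Tt) \;=\; O\!\left(\Tt \sum_{x\in\mcU} f_x\right) \;=\; O\!\left(\Tt \sum_{S\in\mcF} |S|\right) \;=\; O(\Tt\,|\mcU|\,|\mcF|) \;=\; O(|\mcU|\,|\mcF|^2),
\]
where we used that $\sum_{x\in\mcU} f_x = \sum_{S\in\mcF} |S| \leq |\mcU|\,|\mcF|$ (both count incidences); the edge count is bounded similarly by $O(|\mcU|\,|\mcF|^3)$.

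Finally I would note that each individual piece of the construction (a clique on at most $\Tt+1$ vertices, a path on $\ell$ vertices, or the constant number of linking edges attaching them) can be written down in time polynomial in $\Tt \leq |\mcF|$, and there are at most $O(|\mcU|\,|\mcF|)$ such pieces. Hence the whole procedure — read $(\mcU,\mcF)$, create every element and set gadget, and add the vertices $p_{S,x}^i$ together with their incident edges — runs in time polynomial in $|\mcU|$ and $|\mcF|$, which proves that $f$ is polynomial-time computable.

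There is no substantive obstacle here; the only points requiring care are (i) verifying that \Tt is polynomially (rather than, say, exponentially) bounded, which is precisely where the ``\Tc fixed'' and ``bounded-frequency \SCfull'' assumptions enter, and (ii) not over-counting the set gadgets $\wsg{S}$ that are shared among the element gadgets of the elements lying in $S$ — but since double-counting would only inflate the bound by a polynomial factor, even a crude count suffices.
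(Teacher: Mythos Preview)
Your proposal is correct and follows essentially the same approach as the paper: a gadget-by-gadget vertex count showing the graph has polynomially many vertices (and hence edges). The only minor difference is that the paper simply concludes the size is polynomial in $|\mcU|$, $|\mcF|$, $\Tt$, and $\Tc$, whereas you additionally observe $\Tt = O(|\mcF|)$ (via $f_{\max}\le |\mcF|$) to express the bound purely in $|\mcU|$ and $|\mcF|$; both are fine for the lemma as stated.
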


\begin{proof}
First, we note that $G$ includes one set gadget, $\wsg{S}$, for each set $S \in \mcF$, and each such gadget has exactly $2\Tt+1$ vertices.
Next, we observe that for each element $x \in \mcU$, $G$ contains an element gadget $\weg{x}$ that has $(f_x^2 + (\ell-3)f_x + \Tt+2)$ vertices:
$f_x$ vertices in each of $f_x$ cliques labelled $D_{S,x^j}$ for $j=1,\cdots, f_x$; $\Tt-3f_x+2$ vertices for the clique $D_x$; and $f_x$ paths of length $\ell$, $p_{S,x}^j$.
Then, $n = |V(G)|$ satisfies
\begin{align*}
  n &= |\mcF|(2\Tt+1) + \sum_{x \in \mcU} \left( f_x^2 + (\ell-3)f_x + \Tt+2 \right) \\
  &\leq |\mcF|(2\Tt+1) + |\mcU|(\Tt+2) + |\mcU|\Tt^2,
\end{align*}
since $\Tt \geq f_x$ and $\Tt \geq \ell+f_x - 3$ by construction.
Thus the number of vertices (and therefore edges) in $G$ is polynomial in $|\mcU|$, $|\mcF|$, $\Tt$, and $\Tc$.
\end{proof}

In order to show that an instance $G = f(\mcU, \mcF)$ of \bWCN is non-trivial, we show that the weak $\Tc$-coloring number of $G$ is greater than $\Tt$.
To prove this, we will use the fact that weak $\Tc$-coloring number is a hereditary property.

\begin{lemma} \label{lemma:wcol-hereditary}
For any graph $H$, if $\wcol{\Tc}{H} \leq \Tt$, then for each induced subgraph $H'$ of $H$, $\wcol{\Tc}{H'} \leq \Tt$.
\end{lemma}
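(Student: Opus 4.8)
The plan is to exploit the fact that $\wcol{\Tc}{\cdot}$ is defined as a minimum over vertex orderings, together with the observation that restricting an ordering of $H$ to a vertex subset can only shrink each weak-reachability set. Concretely, I would begin by fixing an ordering $L \in \Pi(H)$ achieving $\wscore{\Tc}{H}{L} = \wcol{\Tc}{H} \leq \Tt$, and let $L'$ denote the restriction of $L$ to $V(H')$. Since $L$ is injective, so is $L'$, hence $L' \in \Pi(H')$ is a legitimate ordering. It then suffices to prove $\wscore{\Tc}{H'}{L'} \leq \wscore{\Tc}{H}{L}$, because $\wcol{\Tc}{H'} = \min_{L'' \in \Pi(H')} \wscore{\Tc}{H'}{L''} \leq \wscore{\Tc}{H'}{L'} \leq \Tt$ follows immediately.

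The heart of the argument is the containment $\wreach{\Tc}{H'}{L'}{u} \subseteq \wreach{\Tc}{H}{L}{u}$ for every $u \in V(H')$. To establish it, take any $v$ that is weakly $\Tc$-reachable from $u$ in $H'$ with respect to $L'$, witnessed by a $uv$-path $P$ with $|P| \leq \Tc$ all of whose vertices $w$ satisfy $L'(w) \leq L'(v)$. Because $H'$ is a subgraph of $H$ (induced or not), every edge of $P$ is an edge of $H$, so $P$ is also a $uv$-path in $H$ of length at most $\Tc$; moreover every $w \in V(P) \subseteq V(H')$ has $L(w) = L'(w) \leq L'(v) = L(v)$, so the same path $P$ witnesses that $v \in \wreach{\Tc}{H}{L}{u}$. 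Taking cardinalities gives $|\wreach{\Tc}{H'}{L'}{u}| \leq |\wreach{\Tc}{H}{L}{u}| \leq \wscore{\Tc}{H}{L}$ for each $u \in V(H')$, and taking the maximum over $u \in V(H')$ yields $\wscore{\Tc}{H'}{L'} \leq \wscore{\Tc}{H}{L} \leq \Tt$, which completes the proof.

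There is essentially no hard step here; the argument is a direct unfolding of Definition~\ref{definition:wcol}. The only points that need care are (a) checking that the restriction of an injective map remains injective and hence is a valid member of $\Pi(H')$, and (b) confirming that the path certifying reachability in $H'$ transports verbatim to $H$, which is immediate since passing to a subgraph neither deletes vertices from a path it already contains nor alters the values of $L$ on those vertices. I would also note in passing that the statement holds for arbitrary (not necessarily induced) subgraphs, though only the induced case is needed in the reduction that follows.
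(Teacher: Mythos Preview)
Your proof is correct and follows essentially the same approach as the paper: fix an optimal ordering $L$ for $H$, restrict it to $V(H')$, and observe that any path witnessing weak $\Tc$-reachability in $H'$ is already such a witness in $H$, so each $\wreach{\Tc}{H'}{L'}{u}$ is contained in $\wreach{\Tc}{H}{L}{u}$. The paper's proof is terser (it writes $L$ for both orderings and asserts the containment without spelling out the path argument), but the underlying idea is identical.
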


\begin{proof}
Suppose that $\wcol{\Tc}{H} \leq \Tt$ for a graph $H$.
By Definition~\ref{definition:wcol}, there exists an ordering $L$ such that $|\wreach{\Tc}{H}{L}{v}| \leq \Tt$ for all $v \in V(H)$.
Then, $|\wreach{\Tc}{H'}{L}{v}| \leq |\wreach{\Tc}{H}{L}{v}|$ for any $v \in V(H')$ since $H'$ is an induced subgraph.
Thus, $\wscore{\Tc}{H'}{L} \leq \Tt$ and $\wcol{\Tc}{H'} \leq \Tt$ as desired.
\end{proof}

In particular, the contrapositive of Lemma~\ref{lemma:wcol-hereditary} will be useful.

\begin{corollary} \label{corollary:wcol-hereditary}
If $\wcol{\Tc}{H'} > \Tt$ for any induced subgraph $H'$ of $H$, then $\wcol{\Tc}{H} > \Tt$.
\end{corollary}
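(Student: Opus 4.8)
The plan is simply to invoke the contrapositive of Lemma~\ref{lemma:wcol-hereditary}. First I would assume, for the sake of contradiction, that $\wcol{\Tc}{H} \leq \Tt$. Then Lemma~\ref{lemma:wcol-hereditary} immediately yields $\wcol{\Tc}{H'} \leq \Tt$ for \emph{every} induced subgraph $H'$ of $H$, which directly contradicts the hypothesis that some induced subgraph $H'$ of $H$ has $\wcol{\Tc}{H'} > \Tt$. Hence $\wcol{\Tc}{H} > \Tt$, as claimed. No new combinatorial argument about vertex orderings or weakly $\Tc$-reachable sets is needed; the entire content is already contained in Lemma~\ref{lemma:wcol-hereditary}.

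There is essentially no obstacle here: the statement is a purely logical restatement (contrapositive) of an already-proven lemma. The only point to be careful about is the quantifier reading — "for any induced subgraph $H'$" in the corollary is existential ("there is an induced subgraph $H'$ with $\wcol{\Tc}{H'} > \Tt$"), so the contradiction is obtained against that single witness $H'$, which is exactly the case Lemma~\ref{lemma:wcol-hereditary} forbids once $\wcol{\Tc}{H}\leq\Tt$. This corollary is stated separately only because it is the form actually used in the hardness reduction: to certify $\wcol{\Tc}{G} > \Tt$ for a constructed graph $G$, it suffices to exhibit one bad induced subgraph (such as a set gadget $\wsg{S}$ or an element gadget $\weg{x}$, or a suitable subgraph thereof) rather than reasoning about $G$ as a whole.
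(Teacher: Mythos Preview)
Your proposal is correct and matches the paper's approach exactly: the paper simply introduces this corollary with the sentence ``In particular, the contrapositive of Lemma~\ref{lemma:wcol-hereditary} will be useful'' and gives no further proof. Your additional remark about the existential reading of ``for any'' is accurate and reflects how the corollary is subsequently applied.
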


Now, we need only show that $\wcol{\Tc}{\weg{x}} > \Tt$ for some $x \in \mcU$.
To show that $\wcol{\Tc}{\weg{x}} > \Tt$, we establish structural properties that hold for any ordering $L$ that would satisfy $\wscore{\Tc}{\weg{x}}{L} \leq \Tt$ and then show that no such ordering can exist.
We first prove that the vertices $v_S^1$, $v_S^2$, $p_{S,x}^1$ can be assumed to come after every vertex in $D_S^1$, $D_S^2$, and $D_{S,x}$ respectively in the ordering $L$.

\begin{lemma} \label{lemma:wcn-cliques-first}
Given an instance $\SC(\mcU, \mcF)$, let $G$ be the graph $f(\mcU, \mcF)$ (Definition~\ref{definition:wcn-f}), and let $\weg{x}$ be an induced subgraph of $G$ as in Definition~\ref{definition:wcn-gx} for some $x \in \mcU$.
If $\exists L$ such that $\wscore{\Tc}{\weg{x}}{L} \leq \Tt$, then there exists $L'$ with $\wscore{\Tc}{\weg{x}}{L'} \leq \wscore{\Tc}{\weg{x}}{L}$ such that

\begin{enumerate}\itemsep0em
    \item $L'(d_1) < L'(v_S^1) \; \forall d_1 \in D_S^1$,
    \item $L'(d_2) < L'(v_S^2) \; \forall d_2 \in D_S^2$, and
    \item $L'(d) < L'(p_{S,x}^1) \; \forall d \in D_{S,x}$.
\end{enumerate}

\end{lemma}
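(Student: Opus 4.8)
The plan is to prove the lemma by an extremal argument. Among all orderings $L'$ of $V(\weg{x})$ satisfying $\wscore{\Tc}{\weg{x}}{L'}\le\wscore{\Tc}{\weg{x}}{L}$, I would fix one that minimizes the ``inversion count'' $\Phi(L') = \sum_{(h,C)} |\{d\in C : L'(d) > L'(h)\}|$, where the sum ranges over the hub--clique pairs $(h,C)$ of the three forms $(v_S^1,\,D_S^1\setminus\{v_S^1\})$, $(v_S^2,\,D_S^2\setminus\{v_S^2\})$, and $(p_{S,x}^1,\,D_{S,x})$ over all relevant $S$. Since $\Phi(L')=0$ is exactly the conjunction of conditions (1)--(3), it suffices to show that any ordering $L'$ with $\Phi(L')>0$ and $\wscore{\Tc}{\weg{x}}{L'}\le\wscore{\Tc}{\weg{x}}{L}$ can be turned into an ordering $L''$ with $\Phi(L'')<\Phi(L')$ and $\wscore{\Tc}{\weg{x}}{L''}\le\wscore{\Tc}{\weg{x}}{L'}$, contradicting minimality.

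The repair is the obvious one: pick a hub $h$ with some clique vertex after it in $L'$, and relocate $h$ to the position immediately following the $L'$-maximum vertex $d^\ast$ of its clique $C$, shifting the vertices it jumps over one step earlier; call the result $L''$. Because the clique sets $D_S^1\setminus\{v_S^1\}$, $D_S^2\setminus\{v_S^2\}$, and $D_{S,x}$ are pairwise disjoint and none contains the hub of another pair, this move leaves the relative order of every other hub--clique pair untouched, so the repaired pair's term in $\Phi$ drops to $0$ and no other term grows. The real work is checking that the move does not increase the weak-reach count anywhere. Here I would use two facts that are immediate from the construction: first, each non-hub vertex of one of these cliques is a true twin of the others (they share a closed neighborhood) and attaches to the rest of $\weg{x}$ only through the hub, so permuting these twins among their positions changes no weak-reach set and no such vertex is ever a \emph{strict interior} vertex of a weakly reaching path (any walk entering and leaving it through clique vertices can be shortcut, using the hub if needed); second, moving $h$ later only makes $h$ less useful as an interior vertex. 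Together these show that for every target $w\neq h$ and every vertex $u$, $\wreach{\Tc}{\weg{x}}{L''}{u}$ cannot gain $w$, so the \emph{only} possible harm is that some vertex $u$ with $L'(h)<L'(u)\le L'(d^\ast)$ newly acquires $h$ itself in its weakly reachable set.

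Ruling this out is the main obstacle, and it is where the hypothesis $\wscore{\Tc}{\weg{x}}{L}\le\Tt$ gets used. Since $D_S^1$, $D_S^2$, $D_{S,x}$ are cliques of sizes $\Tt$, $\Tt+1$, and $f_x$, the $L'$-maximum vertex of such a clique already reaches its entire clique and thus has only a constant amount of weak-reach ``budget'' left ($0$ for $D_S^2$, $1$ for $D_S^1$, a controlled amount for $D_{S,x}$). I would first convert this into positional constraints on the hubs: for instance, because $\deg(v_S^1)$ is close to $\Tt$ and $v_S^1$ lies on short paths to $v_S^2$ and to the cliques $D_{S,x}$, at most one neighbor of $v_S^1$ outside $D_S^1$ can have $L$-rank at least that of the $L$-maximum of $D_S^1$, and similarly for $v_S^2$ and $p_{S,x}^1$. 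Then, splitting into cases according to which neighbor of $h$ is the last interior vertex of the hypothetical new path from $u$ to $h$, I would argue that such a $u$ either lies in the pendant clique $C$ itself --- in which case after the move $u$ reaches no more of $C$ than $d^\ast$ did and nothing new outside $C$, so $|\wreach{\Tc}{\weg{x}}{L''}{u}| \le |\wreach{\Tc}{\weg{x}}{L'}{d^\ast}| \le \Tt$ --- or is separated from $h$ by $C$, which forces the new weak-reach count of $u$ to coincide with one already attained under $L'$. This case analysis leans on the exact choices $\Tt\ge 4f_{max}$ and $\Tt\ge\ell+3f_{max}-2$ made in the construction; the remainder of the argument is routine bookkeeping.
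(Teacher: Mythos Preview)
The paper's argument is much shorter than yours and never invokes the hypothesis $\wscore{\Tc}{\weg{x}}{L}\le\Tt$. It simply observes that the non-hub vertices of each clique are true twins whose only outside neighbor is the hub, and asserts that placing the hub last among the clique's positions cannot raise the score, on the grounds that (i) a twin in the earliest clique slot reaches no more outside than the hub would from that same slot, and (ii) with the hub last no outside vertex can reach a twin.

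You correctly isolate the one delicate point the paper glosses over: a vertex $u$ with $L'(h)<L'(u)\le L'(d^\ast)$ may newly acquire $h$ in its weak-reach set. But your handling of it has gaps. First, your insertion move (shift everything $h$ jumps over one step earlier) is worse for twins than the natural alternative of \emph{swapping} $h$ with $d^\ast$: under a swap, a twin $u$ strictly between them simply trades its former reach of the twin sitting at position $L'(d^\ast)$ for its new reach of $h$ at that same position, so its clique-reach count is unchanged and its outside reach (via the now-later hub) can only shrink; your shift does not preserve this trade, so such a twin's clique-reach count genuinely goes up by one. Second, your bound in case~(a), $|\wreach{\Tc}{\weg{x}}{L''}{u}|\le|\wreach{\Tc}{\weg{x}}{L'}{d^\ast}|$, compares with the wrong witness: $d^\ast$ in $L'$ is the last twin with $h$ before it and therefore reaches \emph{no} other clique vertex, so this inequality typically goes the wrong way. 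Third, ``$u$ separated from $h$ by $C$'' is vacuous for a pendant clique. Even with the swap in place, the only genuinely residual case---an outside $u$ at distance exactly $\Tc$ from $h$ whose every length-$\Tc$ path to $h$ has maximum rank in $(L'(h),L'(d^\ast)]$---still needs an argument, and you have not shown how your ``positional constraints on the hubs'' actually dispose of it.
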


\begin{proof}
Let $D$ be the relevant clique ($D_S^1$, $D_S^2$, or $D_{S,x}$) and $v$ be the relevant vertex ($v_S^1$, $v_S^2$, or $p_{S,x}^1$).
In any of the three cases, the vertices in $D$ only have edges to $v$.
The first vertex from $\{v\} \cup D$ in $L$ will be able to reach all other vertices in the clique, and so we can choose any of these vertices first \WLOG.
Assuming that the ordering of the other elements is fixed relative to $v$, then the first vertex will also be able to reach the same subset of vertices unless $v$ is the first vertex.
However, since all other vertices in $D$ require an additional step traversing to $v$ first, $v$ can reach at least as many other vertices outside of the clique.
Furthermore, if $v$ is last, then no vertex from $D$ is reachable from another vertex not in $D$.
While neither of these guarantee that $\wscore{\Tc}{\weg{x}}{L}$ decreases, both ensure that it will not increase meaning that it is safe to put $v$ last.
\end{proof}

This property allows us to assume that $D_S^1$, $D_S^2$, and $D_{S,x}$ appear at the beginning of an optimal ordering.
This is because the set of vertices weakly $\Tc$-reachable from a vertex in the clique $D$ only depends on the location of $v$ in the ordering.
Furthermore, since $v$ is last in the ordering, no other vertex can reach a vertex in the clique.
Thus, the placement of the clique does not affect the set of weakly $\Tc$-reachable vertices from a vertex outside a clique.

Given this assumption, next we argue that $v_S^1$ must come before $v_S^2$ in the ordering for all $S$ that contain $x$ due to the sizes of $D_S^1$ and $D_S^2$.

\begin{lemma} \label{lemma:vs2-last}
Given an instance $\SC(\mcU, \mcF)$, let $G$ be the graph $f(\mcU, \mcF)$ (Definition~\ref{definition:wcn-f}), and let $\weg{x}$ be an induced subgraph of $G$ as in Definition~\ref{definition:wcn-gx} for some $x \in \mcU$.
If $L$ is an ordering on the vertices of $\weg{x}$ such that $\wscore{\Tc}{\weg{x}}{L} \leq \Tt$, then $L(v_S^1) < L(v_S^2)$ for all $S$ that contain $x$.
\end{lemma}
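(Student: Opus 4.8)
The plan is to argue by contradiction. Suppose $L$ is an ordering of $V(\weg{x})$ with $\wscore{\Tc}{\weg{x}}{L}\le\Tt$, and yet $L(v_S^2)<L(v_S^1)$ for some $S\in\mcF$ containing $x$. I will exhibit a single vertex of $\weg{x}$ whose weak $\Tc$-reach set has size at least $\Tt+1$, contradicting the score bound. Note first that, since $x\in S$, the whole set gadget $\wsg{S}$ — and in particular the two vertex-disjoint cliques $D_S^1$ (size $\Tt$) and $D_S^2$ (size $\Tt+1$) together with the distinguished vertices $v_S^1,v_S^2$ and the edge $v_S^1v_S^2$ (Definition~\ref{definition:wcn-gs}) — is an induced subgraph of $\weg{x}$, so all the vertices I use below really lie in $\weg{x}$.

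The vertex to focus on is $w$, the $L$-minimum vertex of the clique $D_S^2$. Since $|D_S^2|=\Tt+1$ and $w$ is adjacent to every other vertex of $D_S^2$, and each of those $\Tt$ vertices is $L$-later than $w$, each is weakly $\Tc$-reachable from $w$ via a length-$1$ path; hence $\wreach{\Tc}{\weg{x}}{L}{w}$ already contains those $\Tt$ vertices of $D_S^2$. It then suffices to place one further vertex \emph{outside} $D_S^2$ into $\wreach{\Tc}{\weg{x}}{L}{w}$, and the natural candidate is $v_S^1\in D_S^1$, which is disjoint from $D_S^2$. I split on whether $w=v_S^2$. If $w=v_S^2$, then the edge $v_S^2v_S^1$ together with $L(v_S^1)>L(v_S^2)=L(w)$ shows $v_S^1$ is weakly $\Tc$-reachable from $w$ along a length-$1$ path. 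If $w\neq v_S^2$, then $v_S^2$ is among the $\Tt$ vertices of $D_S^2$ reached from $w$, and along the length-$2$ path $w,v_S^2,v_S^1$ the $L$-values are strictly increasing ($L(w)<L(v_S^2)<L(v_S^1)$ by injectivity of $L$ and the hypothesis), so $v_S^1$ is the $L$-maximum of this path; since $\Tc>2$ this is a valid witness and $v_S^1\in\wreach{\Tc}{\weg{x}}{L}{w}$. In both cases $|\wreach{\Tc}{\weg{x}}{L}{w}|\ge\Tt+1$, contradicting $\wscore{\Tc}{\weg{x}}{L}\le\Tt$. This forces $L(v_S^1)<L(v_S^2)$, and since $S$ was an arbitrary set containing $x$, the claim follows.

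I expect the only real subtlety is bookkeeping: making sure the ``extra'' reachable vertex $v_S^1$ is genuinely new rather than one of the $\Tt$ already-counted vertices of $D_S^2$, which is exactly why Definition~\ref{definition:wcn-gs} insists $D_S^1$ and $D_S^2$ be vertex-disjoint; and checking that the length-$2$ path is admissible, which uses $\Tc>2$. Observe also that no appeal to Lemma~\ref{lemma:wcn-cliques-first} is needed: the argument works for an arbitrary ordering meeting the score bound, and it is precisely the size gap $|D_S^2|=|D_S^1|+1$ that breaks the symmetry and pins down the orientation $v_S^1$-before-$v_S^2$ (a symmetric attempt to force $v_S^2$ before $v_S^1$ fails, since the $L$-minimum of $D_S^1$ reaches only $\Tt-1$ of its clique-mates).
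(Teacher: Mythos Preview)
Your proof is correct and follows essentially the same approach as the paper's: pick the $L$-minimum vertex of $D_S^2$, observe it reaches the other $\Tt$ clique members plus $v_S^1$, and derive a contradiction. Your version is simply more explicit, spelling out the case split on whether the minimum vertex equals $v_S^2$ and verifying that $v_S^1\notin D_S^2$; the paper compresses all of this into a single sentence. One small remark: for the length-$2$ path $w,v_S^2,v_S^1$ you only need $\Tc\ge 2$, not $\Tc>2$, though the latter is indeed the standing assumption of the section.
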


\begin{proof}
Assume $L(v_S^1) > L(v_S^2)$ for some $S$ that contains $x$.
Consider $d$, the first vertex from $D_S^2$ in $L$.
Then, $|\wreach{\Tc}{\weg{x}}{L}{d}| \geq \Tt + 1$ since all $\Tt$ other vertices in $D_S^2$ and $v_S^1$ are weakly $\Tc$-reachable from $d$, a contradiction.
\end{proof}

We apply similar reasoning for the path vertices $p_{S,x}^1 \dots p_{S,x}^\ell$ and the vertices in $D_x$ to $v_S^1$.

\begin{lemma} \label{lemma:vs1-next}
Given an instance $\SC(\mcU, \mcF)$, let $G$ be the graph $f(\mcU, \mcF)$ (Definition~\ref{definition:wcn-f}), and let $\weg{x}$ be an induced subgraph of $G$ as in Definition~\ref{definition:wcn-gx} for some $x \in \mcU$.
If $L$ is an ordering on the vertices of $\weg{x}$ satisfying $\wscore{\Tc}{\weg{x}}{L} \leq \Tt$, then for each $v_S^1$ in $\weg{x}$, $L(u) < L(v_S^1)$ for all $u \in \{D_x \cup p_{S,x}^1 \cup \dots \cup p_{S,x}^\ell\}$.
\end{lemma}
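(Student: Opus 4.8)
The plan is to argue by contradiction. Assume $L$ satisfies $\wscore{\Tc}{\weg{x}}{L} \leq \Tt$ but some vertex $u \in D_x \cup \{p_{S,x}^1, \dots, p_{S,x}^\ell\}$ has $L(u) > L(v_S^1)$ for some $S \ni x$. I will produce a single vertex whose weak $\Tc$-reach set has size at least $\Tt + 1$, contradicting the hypothesis. The witness vertex is $d_1$, the $L$-first vertex of the clique $D_S^1 \subseteq \wsg{S}$ (Definition~\ref{definition:wcn-gs}), which has $\Tt$ vertices; if it happens that $d_1 = v_S^1$ the argument below goes through verbatim with every path one edge shorter.

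First I would establish the baseline count $|\wreach{\Tc}{\weg{x}}{L}{d_1}| \geq \Tt$ unconditionally. Each of the $\Tt - 1$ other vertices of $D_S^1$ is a neighbour of $d_1$ appearing no earlier than $d_1$, so it lies in $\wreach{\Tc}{\weg{x}}{L}{d_1}$ via a single edge (and $1 \leq \Tc$). Moreover, by Lemma~\ref{lemma:vs2-last} we have $L(v_S^1) < L(v_S^2)$, so the length-$2$ path $d_1, v_S^1, v_S^2$ — legal because $\Tc > 2$ and $L(d_1) \leq L(v_S^1) < L(v_S^2)$ — shows $v_S^2 \in \wreach{\Tc}{\weg{x}}{L}{d_1}$. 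Since $v_S^2 \notin D_S^1$, this gives the claimed $\Tt$ vertices. It therefore suffices to exhibit \emph{one} reachable vertex outside $(D_S^1 \setminus \{d_1\}) \cup \{v_S^2\}$.

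Next I would show all path vertices precede $v_S^1$. Suppose not, and let $j$ be the largest index with $L(p_{S,x}^j) \geq L(v_S^1)$, so that $p_{S,x}^{\ell}, \dots, p_{S,x}^{j+1}$ all precede $v_S^1$. Then $d_1, v_S^1, p_{S,x}^\ell, p_{S,x}^{\ell-1}, \dots, p_{S,x}^j$ is a path of length at most $\ell + 1 = \floor{\Tc/2} + 1 \leq \Tc$, and every vertex on it other than the target $p_{S,x}^j$ — namely $d_1$, $v_S^1$, and the $p_{S,x}^i$ with $i > j$ — is strictly earlier than $v_S^1$, hence no later than $p_{S,x}^j$. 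So $p_{S,x}^j \in \wreach{\Tc}{\weg{x}}{L}{d_1}$; as $p_{S,x}^j \notin D_S^1 \cup \{v_S^2\}$, we get $|\wreach{\Tc}{\weg{x}}{L}{d_1}| \geq \Tt + 1$, a contradiction. Hence $L(p_{S,x}^i) < L(v_S^1)$ for every $i$. The same template now finishes $D_x$: if some $d \in D_x$ had $L(d) \geq L(v_S^1)$, then using that $p_{S,x}^1$ is adjacent to every vertex of $D_x$ (Definition~\ref{definition:wcn-gx}) and that all path vertices precede $v_S^1$ (hence precede $d$), the path $d_1, v_S^1, p_{S,x}^\ell, \dots, p_{S,x}^1, d$ has length at most $\ell + 2 = \floor{\Tc/2} + 2 \leq \Tc$ (the last inequality uses $\Tc > 2$) with all internal vertices before $d$, so $d$ is a reachable vertex outside $D_S^1 \cup \{v_S^2\}$ and again $|\wreach{\Tc}{\weg{x}}{L}{d_1}| \geq \Tt + 1$, a contradiction.

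The only delicate point is the arithmetic ensuring that these paths from $d_1$ stay within the reachability radius $\Tc$: the path to the path vertices has length $\leq \floor{\Tc/2} + 1$ and the path into $D_x$ has length $\leq \floor{\Tc/2} + 2$, both of which are $\leq \Tc$ precisely because $\ell = \floor{\Tc/2}$ and $\Tc > 2$ (the bound is tight for $\Tc \in \{3, 4\}$). Everything else — the degenerate cases $d_1 = v_S^1$ and $\ell = 1$, and reading off adjacencies from the gadget construction — is routine. I expect this arithmetic bookkeeping to be the main (minor) obstacle, together with being careful that the ``largest violating index'' choice really makes all intermediate path vertices earlier than the target.
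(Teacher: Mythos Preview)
Your proof is correct and follows essentially the same approach as the paper: both use the $L$-first vertex of $D_S^1$ as witness and exhibit $\Tt+1$ weakly $\Tc$-reachable vertices (the rest of $D_S^1$, $v_S^2$ via Lemma~\ref{lemma:vs2-last}, and one offending vertex $u$). The only organizational difference is that the paper picks $u$ as the $L$-\emph{latest} offending vertex in $D_x \cup \{p_{S,x}^1,\dots,p_{S,x}^\ell\}$, which in one stroke forces all intermediate vertices on the path to $u$ to lie earlier, whereas you split into two cases (path vertices first, then $D_x$) using a ``largest violating index'' argument; both routes yield the same length bounds $\ell+1$ and $\ell+2$ and the same contradiction.
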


\begin{proof}
Suppose $u \in \{D_x \cup p_{S,x}^1 \cup \dots \cup p_{S,x}^\ell\}$ and $L(u) > L(v_S^1)$.
Furthermore, assume \WLOG that $u$ is the last such vertex in $L$.
The vertex $u$ must be weakly $\Tc$-reachable from $v_S^1$ since $u$ is distance at most $\ell + 1 < c$ away from $v_S^1$ and all other vertices on the path must be ordered before $u$ by our selection of $u$.
Let $w$ be the first vertex in $L$ from $D_S^1$.
Then $|\wreach{\Tc}{\weg{x}}{L}{w}| \geq \Tt + 1$ since it can reach $\Tt - 2$ vertices in $D_S^1$ as well as $v_S^1$ and $v_S^2$ by Lemma~\ref{lemma:vs2-last}, and also $u$.
Therefore, no such $u$ can exist.
\end{proof}

It now suffices to consider whether $D_x$ and $p_{S,x}^1 \dots p_{S,x}^\ell$ can be ordered without exceeding $\Tt$ weakly $\Tc$-reachable vertices from any vertex.
The following lemma establishes that there is no way to do this.

\begin{lemma} \label{lemma:wcol-gx}
Given an instance $\SC(\mcU, \mcF)$, let $G$ be the graph $f(\mcU, \mcF)$ (Definition~\ref{definition:wcn-f}), and let $\weg{x}$ be an induced subgraph of $G$ as in Definition~\ref{definition:wcn-gx} for some $x \in \mcU$.
Then $\wcol{\Tc}{\weg{x}} > \Tt$.
\end{lemma}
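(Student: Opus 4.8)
Suppose, for contradiction, that some ordering $L$ of $V(\weg{x})$ satisfies $\wscore{\Tc}{\weg{x}}{L} \leq \Tt$. By Lemmas~\ref{lemma:wcn-cliques-first}, \ref{lemma:vs2-last}, and~\ref{lemma:vs1-next} we may assume that in $L$: each clique $D_S^1$, $D_S^2$, $D_{S,x}$ precedes the associated vertex $v_S^1$, $v_S^2$, $p_{S,x}^1$; $L(v_S^1) < L(v_S^2)$ for every $S$ containing $x$; and every vertex of $D_x$ together with every path vertex $p_{S,x}^1, \dots, p_{S,x}^\ell$ precedes every $v_S^1$ of $\weg{x}$. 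The plan is to pin down a single vertex $u$ and show $|\wreach{\Tc}{\weg{x}}{L}{u}| \geq \Tt + 1$, contradicting the assumption, and hence conclude $\wcol{\Tc}{\weg{x}} > \Tt$.

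The candidate is $u$, the first vertex of $D_x$ under $L$. First I would establish, by one more exchange argument in the spirit of Lemma~\ref{lemma:wcn-cliques-first} (relocating any $p_{S,x}^1$ that precedes $u$ to just after $u$: its own reachable set can only shrink, and by the structural assumptions the vertices it is moved past are either too far from it to gain it or already ordered before it, e.g.\ the clique $D_{S,x}$), that we may assume every $p_{S,x}^1$ comes after $u$. Given this, $u$ is adjacent to and precedes the other $\Tt - 3f_x + 1$ vertices of the clique $D_x$ and all $f_x$ vertices $p_{S,x}^1$, so these $\Tt - 2f_x + 1$ distinct vertices lie in $\wreach{\Tc}{\weg{x}}{L}{u}$. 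Next, for each $S \ni x$ the path $u, p_{S,x}^1, \dots, p_{S,x}^\ell, v_S^1$ is short enough (one checks $\ell + 1 \leq \Tc$ using $\ell = \floor{\Tc/2}$ and $\Tc > 2$) and, by the structural assumptions, all of its vertices precede $v_S^1$; hence $v_S^1 \in \wreach{\Tc}{\weg{x}}{L}{u}$. Extending it by the edge $v_S^1 v_S^2$ yields a $u$–$v_S^2$ path that is still short enough (here $\ell + 2 \leq \Tc$, again using $\ell = \floor{\Tc/2}$, $\Tc > 2$) and all of whose vertices precede $v_S^2$ (by $v_S^1 < v_S^2$ and the structural assumptions), so $v_S^2 \in \wreach{\Tc}{\weg{x}}{L}{u}$ too. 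These contribute a further $2f_x$ distinct vertices, disjoint from the earlier ones since they live in distinct set gadgets, giving $|\wreach{\Tc}{\weg{x}}{L}{u}| \geq (\Tt - 2f_x + 1) + 2f_x = \Tt + 1$.

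The main obstacle I expect is making the normalization rigorous: showing it suffices to consider orderings in which no $p_{S,x}^1$ precedes the first vertex of $D_x$. Naively counting from the first vertex of $D_x$ loses exactly the $p_{S,x}^1$'s that are placed before it, which is why this step is needed; justifying it requires checking, exactly as in Lemma~\ref{lemma:wcn-cliques-first}, that the relocation does not raise the weak $\Tc$-reachable count of any vertex, where the facts that $D_{S,x}$ may be assumed before $p_{S,x}^1$ and that all path vertices lie before every $v_S^1$ are what make the argument close. The remaining reachability bookkeeping and the two path-length inequalities $\ell + 1, \ell + 2 \leq \Tc$ (the tight case being $\Tc = 3$, where $\ell = 1$) are routine, as is noting that $\Tt \geq 4 f_{max}$ keeps $D_x$ nonempty.
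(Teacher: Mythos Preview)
Your exchange step is where the argument breaks. Moving $p := p_{S,x}^1$ from before $u$ to just after $u$ is \emph{not} score-non-increasing: $u$ is one of the vertices that $p$ is moved past, and $u$ is adjacent to $p$, so after the relocation $u$ gains $p$ in its weakly $\Tc$-reachable set. Thus the very vertex whose count you later show to be $\Tt+1$ is precisely the vertex whose count your normalization inflated. Concretely, if $k$ of the $p_{S,x}^1$ precede $u$ in the original ordering $L$, then in $L$ the vertex $u$ reaches only $(\Tt-3f_x+1)+(f_x-k)+f_x+f_x=\Tt+1-k$ of the vertices you list, which is $\le \Tt$ when $k\ge 1$; your relocation adds back exactly those $k$ vertices, so the final count of $\Tt+1$ is an artifact of the move, not a property of $L$. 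Your dichotomy ``too far from $p$, or already ordered before $p$'' misses $u$ (adjacent to $p$, and not before $p$ until after the swap); the example $D_{S,x}$ you cite is irrelevant because those vertices are before $p$'s old position and so are not moved past at all.

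The paper does not attempt this normalization. It keeps the case split you were trying to eliminate: if every $p_{S,x}^1$ lies after $u$, then your count for $u$ goes through; otherwise, take $w$ to be the first $p_{S,x}^1$ in $L$ (so $w$ precedes $u$) and switch the witness to $d$, the first vertex of the associated clique $D_{S,x}$. Since Lemma~\ref{lemma:wcn-cliques-first} places $d$ before $w$, and $w$ is before all of $D_x$, one can count from $d$: the rest of $D_{S,x}$, $w$ itself, the vertices of $D_x$ (via $d\!-\!w\!-\!v$, length~$2$), and the $v_S^1,v_S^2$, giving at least $\Tt+1$. Replacing your normalization by this second-witness case is what is actually needed.
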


\begin{proof}
Assume $\wcol{\Tc}{\weg{x}} \leq \Tt$, and let the ordering $L$ be a witness.
Since $L$ must satisfy $\wscore{\Tc}{\weg{x}}{L} \leq \Tt$, we use Lemmas~\ref{lemma:wcn-cliques-first},~\ref{lemma:vs2-last}, and~\ref{lemma:vs1-next}.
Consider $u$, the first vertex in $L$ from $D_x$.
If $u$ comes before $p_{S,x}^1$ for all $S$, then $|\wreach{\Tc}{\weg{x}}{L}{u}| \geq \Tt + 1$ since it can reach $\Tt - 3 \cdot f_x + 1$ vertices in $D_x$, $f_x$ in each of $p_{S,x}^1$, $f_x$ in each of $v_S^1$, and $f_x$ in each of $v_S^2$.
Thus, $w$, the first $p_{S,x}^1$ in $L$, must come before $u$ in $L$.
Let $d$ be the first vertex from $D_w$.
Then, $|\wreach{\Tc}{\weg{x}}{L}{d}| \geq \Tt + 1$ since $d$ can reach at least $f_x - 1$ vertices in $D_w$, $w$ itself, $\Tt - 3 \cdot f_x + 1$ vertices in $D_x$, $f_x$ in each of $v_S^1$, and $f_x$ in each of $v_S^2$.
Thus, $L$ cannot exist and $\wcol{\Tc}{\weg{x}} > \Tt$.
\end{proof}

\begin{lemma} \label{lemma:wcol-g}
Suppose $G$ is a graph constructed using the function $f$ defined in Definition~\ref{definition:wcn-f} applied to the \SCfull instance $(\mcU, \mcF)$.
Then $\wcol{\Tc}{G} > \Tt$.
\end{lemma}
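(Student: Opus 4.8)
The plan is to reduce the claim for $G$ to the claim for a single element gadget, which we have already handled in Lemma~\ref{lemma:wcol-gx}, via the hereditary property of the weak $\Tc$-coloring number (Corollary~\ref{corollary:wcol-hereditary}). First I would note that in any non-trivial instance $\SC(\mcU,\mcF)$ we have $\mcU \neq \emptyset$, so we may fix an arbitrary element $x \in \mcU$. The goal then reduces to showing that $\weg{x}$ occurs as an \emph{induced} subgraph of $G$, since Lemma~\ref{lemma:wcol-gx} gives $\wcol{\Tc}{\weg{x}} > \Tt$ and Corollary~\ref{corollary:wcol-hereditary} then immediately yields $\wcol{\Tc}{G} > \Tt$.

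The substantive step is verifying that $G[V(\weg{x})] = \weg{x}$, i.e.\ that overlapping element gadgets do not introduce extra edges inside $V(\weg{x})$. By Definition~\ref{definition:wcn-f}, $G$ is the union of the element gadgets $\weg{x'}$ over $x' \in \mcU$, and two such gadgets $\weg{x}$ and $\weg{x'}$ can share vertices only through a common set gadget $\wsg{S}$ with $x, x' \in S$ (Definition~\ref{definition:wcn-gx}). The only edges of $\weg{x'}$ that touch $\wsg{S}$ but are not internal to $\wsg{S}$ are the edges $p_{S,x'}^{\ell}\, v_S^1$ joining the path in $\weg{x'}$ to $v_S^1$; the path vertices $p_{S,x'}^{1},\dots,p_{S,x'}^{\ell}$ and the cliques $D_{x'}$, $D_{S,x'}$ all lie outside $V(\weg{x})$. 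Hence every edge of $G$ with both endpoints in $V(\weg{x})$ is already an edge of $\weg{x}$, so $\weg{x}$ is an induced subgraph of $G$, as required.

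I expect the only point that needs care is precisely this induced-subgraph check: one must make sure that no spurious edge is created at the shared vertices $v_S^1$ (and nowhere inside $\wsg{S}$, which is wholly shared and hence identical in both gadgets). Given the gadget definitions this is routine, so the lemma follows by combining the induced-subgraph observation with Lemma~\ref{lemma:wcol-gx} and Corollary~\ref{corollary:wcol-hereditary}.
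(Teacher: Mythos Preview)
Your proposal is correct and follows essentially the same approach as the paper: combine Lemma~\ref{lemma:wcol-gx} with Corollary~\ref{corollary:wcol-hereditary}. The paper's proof is a single sentence that takes the induced-subgraph property of $\weg{x}$ in $G$ for granted, whereas you spell out that verification explicitly; this extra care is reasonable but not required by the paper.
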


\begin{proof}
Lemma~\ref{lemma:wcol-g} follows directly from Lemma~\ref{lemma:wcol-gx} and Corollary~\ref{corollary:wcol-hereditary}.
\end{proof}

Now, we show that a solution to the \SCfull instance $(\mcU, \mcF)$ can be recovered from an edit set to the \bWCN instance $G = f(\mcU, \mcF)$ using the function $g$.
We do this by first constructing a canonical solution and then converting to a \SCfull solution using a bijective mapping.
The definition of a canonical solution depends on the edit operation.

\begin{definition} \label{definition:wcn-canonical}
A solution $\yB$ to \bWCNV is \emph{canonical} if $\yB \subseteq \{v_S^1:S \in \mcF\}$.
A solution $\yB$ to \bWCNE or \bWCNC is canonical if $\yB \subseteq \{e = (v_S^1, v_S^2):S \in \mcF\}$.
\end{definition}

Canonical solutions all use the same ordering $L$ to exhibit bounded weak $\Tc$-coloring number.
We define that ordering here for convenience in later proofs, starting with the order on a single element gadget.

\begin{lemma} \label{lemma:wcn-canonical-gx}
Given an instance $\SC(\mcU, \mcF)$, let $G$ be the graph $f(\mcU, \mcF)$ (Definition~\ref{definition:wcn-f}), and let $\weg{x}$ be an induced subgraph of $G$ as in Definition~\ref{definition:wcn-gx} for some $x \in \mcU$.
Let $\yCanon$ be a canonical solution to \bWCN (Definition~\ref{definition:wcn-canonical}).
Suppose that $x$ is contained in the sets $S_i \dots S_j$.
Let $L$ be the following ordering:
$$D_{S_i}^1 \dots D_{S_j}^1, D_{S_i}^2 \dots D_{S_j}^2, D_{S_i, x} \dots D_{S_j, x}, D_x, p_{S_i, x}^\ell \dots p_{S_j, x}^\ell, \dots, p_{S_i, x}^1 \dots p_{S_j, x}^1, v_{S_i}^1 \dots v_{S_j}^1, v_{S_i}^2 \dots v_{S_j}^2.$$
Then $\wscore{\Tc}{\weg{x} \setminus \yCanon}{L} \leq \Tt$.
\end{lemma}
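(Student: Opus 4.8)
The plan is to verify the bound directly: for every vertex $u$ of $\weg{x}\setminus\yCanon$ I would bound $|\wreach{\Tc}{\weg{x}\setminus\yCanon}{L}{u}|$ by exploiting the block structure of $L$, together with the fact that every block of $L$ is either a clique (the pieces of $D_S^1$, $D_S^2$, $D_{S,x}$, $D_x$) or an independent set lying on one of the paths $p_{S,x}^1,\dots,p_{S,x}^\ell$. First I would record the one global property of $\yCanon$ that is needed: since $\yCanon$ is a feasible canonical solution, Lemma~\ref{lemma:wcol-gx} together with Corollary~\ref{corollary:wcol-hereditary} force $\yCanon$ to edit inside at least one set gadget $\wsg{S_0}$ with $x\in S_0$ (otherwise $\weg{x}$ survives intact inside $G\setminus\yCanon$ and $\wcol{\Tc}{G\setminus\yCanon}>\Tt$). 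In the vertex-deletion case this deletes $v_{S_0}^1$, and in the edge-deletion/contraction cases it removes (resp.\ contracts) the edge $v_{S_0}^1 v_{S_0}^2$; either way it destroys the only route by which a vertex lying ``below'' a set gadget can weakly reach the corresponding $v_{S_0}^2$.

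Next I would go through $L$ block by block. For the clique blocks it suffices to bound the weak reach of the \emph{first} vertex of each block, since any later vertex of a clique reaches a subset of what the first one does (plus itself). The worst clique block is a copy of $D_S^2\setminus\{v_S^2\}$ (size $\Tt$): its first vertex reaches at most $\Tt-1$ later clique-mates and the single vertex $v_S^2$, and then gets stuck because $D_S^2$ lies near the front of $L$; this gives exactly $\Tt$, which is precisely why $|D_S^2|=\Tt+1$. Copies of $D_S^1\setminus\{v_S^1\}$ are handled the same way and yield strictly less. For the block $D_x$ (size $\Tt-3f_x+2$) the first vertex $u$ reaches its $\Tt-3f_x+1$ clique-mates, the $f_x$ vertices $p_{S,x}^1$ in one step, and, for each $S\ni x$ whose set gadget is \emph{not} edited, both $v_S^1$ and $v_S^2$ along the path $u,p_{S,x}^1,p_{S,x}^2,\dots,p_{S,x}^\ell,v_S^1,(v_S^2)$, whose length is $\le\Tc$ because $\ell=\floor{\Tc/2}$ and $\Tc>2$; crucially $u$ reaches \emph{none} of the intermediate path vertices $p_{S,x}^2,\dots,p_{S,x}^\ell$ individually, since those blocks precede the $p^1$ block in $L$ while every walk from $u$ to them must pass through $p_{S,x}^1$. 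Writing $a\ge1$ for the number of edited sets containing $x$, this sums to $\Tt+1-2a\le\Tt-1$ in the vertex case (and to at most $\Tt+1-a\le\Tt$ in the edge/contraction cases); this is the one place where ``$\yCanon$ covers $x$'' is used.

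For the path-vertex blocks the argument is similar: from $p_{S,x}^\ell$ one reaches $p_{S,x}^{\ell-1},\dots,p_{S,x}^1$ (the $\ell-1$ later path vertices of the same chain), possibly later vertices of the $p^1$ block (at most $f_x-1$ of them, via a hop through $D_x$), and the $v^1$- and $v^2$-blocks (at most $f_x-1$ and $f_x$ vertices, and fewer for edited sets), for a total of at most $\ell+3f_x-3\le\Tt$, where the hypothesis $\Tt\ge\ell+3f_{max}-2$ is exactly what is needed; the same bound dominates the intermediate $p^k_{S,x}$ as well. The remaining blocks are trivial: every neighbour of $v_S^1$ other than $v_S^2$ lies before it in $L$, so $v_S^1$ reaches only $v_S^2$, and $v_S^2$ (whose neighbours all precede it) reaches nothing. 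I would close by remarking that the three edit operations are treated uniformly; in the contraction case the merged vertex takes the slot of $v_S^1$ in $L$ and, since all of its neighbours ($D_S^1\setminus v_S^1$, $D_S^2\setminus v_S^2$, $p_{S,x}^\ell$) precede it, it contributes nothing new.

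The main obstacle will be the careful, case-laden bookkeeping in the two middle paragraphs: one must track, for each vertex type and for every $\Tc>2$ (with $\ell=\floor{\Tc/2}$ behaving slightly differently for even and odd $\Tc$), precisely which vertices are weakly $\Tc$-reachable, simultaneously enforcing the distance-$\le\Tc$ constraint and the label-monotonicity constraint along every path, and then checking that the single forced edit inside $\weg{x}$ strips just enough reachable vertices off the $D_x$ block. Once the block order of $L$ is fixed, each individual count is routine.
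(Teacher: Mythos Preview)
Your approach is essentially the paper's: first argue that $\yCanon$ must hit some $\wsg{S_0}$ inside $\weg{x}$, then walk through the ordering $L$ block by block and count weakly $\Tc$-reachable vertices, with the single edit absorbing the one unit of slack at the $D_x$ block. Two small corrections: the first vertex of a $D_S^1\setminus\{v_S^1\}$ block reaches $\Tt-2$ clique-mates \emph{and both} $v_S^1$ and $v_S^2$ (since $L(v_S^1)<L(v_S^2)$), so that block gives exactly $\Tt$, not ``strictly less''; and you skipped the $D_{S,x}$ blocks entirely---there the first vertex reaches $f_x-1$ clique-mates plus (via $p_{S,x}^1$) at most the $3f_x$ vertices from the $p^1$ block onward, for a total of $4f_x-1<\Tt$, which is where the hypothesis $\Tt\ge 4f_{\max}$ is actually used.
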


\begin{proof}
Note that $\weg{x} \cap \yB$ must be non-empty for any solution $\yB$.
Otherwise, by Lemma~\ref{lemma:wcol-gx}, $\wcol{\Tc}{\weg{x} \setminus \yB} > \Tt$, and so by Corollary~\ref{corollary:wcol-hereditary}, $\wcol{\Tc}{G \setminus \yB} > \Tt$, a contradiction.

From left to right, we show that $|\wreach{\Tc}{\weg{x} \setminus \yCanon}{L}{u}| \leq \Tt$ for each vertex $u$.
There is some $u \in D_{S_i}^1$ that can reach $\Tt - 2$ other vertices in $D_{S_i}^1$.
However, since the path to all other vertices in $\weg{x}$ goes through $v_{S_i}^1$, only $v_{S_i}^1$ and $v_{S_i}^2$ will be reachable from $u$.
Note here that no $v_{S_a}^1$ can reach any other $v_{S_b}^1$ since the shortest path between them has length $2 \cdot \ell + 2 \geq \Tc + 1$.
This also precludes $u$ from reaching any vertex $w$ in some other $D_S^1$.
Thus, $u$ can only reach $\Tt$ vertices.

Similar logic applies to $D_{S_i}^2$.
There is some $u \in D_{S_i}^2$ that can reach $\Tt - 1$ other vertices in $D_{S_i}^2$.
Additionally, $u$ can reach $v_{S_i}^2$ to bring the total to $\Tt$.
Only other $v_S^2$ vertices are reachable now due to the position of $v_{S_i}^2$.
However, again we observe that the shortest path between any two $v_S^2$ vertices is longer than $\Tc$.

There is also some $u \in D_{S_i, x}$ which can reach $f_x - 1$ other vertices.
Again, every path exiting $D_{S_i, x}$ uses $p_{S_i, x}^1$ and so only vertices after $p_{S_i, x}^1$ will be reachable from $u$.
In fact, $u$ can reach every vertex after $p_{S_i, x}^1$.
However, there are only $3 \cdot f_x$ such vertices.
Thus, $u$ can reach a total of $4 \cdot f_x - 1 < \Tt$ vertices.

Let $u$ be the first vertex from $D_x$ in the ordering $L$.
Then $u$ can reach $\Tt - 3 \cdot f_x + 1$ other vertices in $D_x$.
It can also reach the first vertex on each of the $f_x$ paths.
Because $p_{S,x}^1$ comes after $p_{S,x}^2 \dots p_{S,x}^\ell$, $u$ cannot reach any other path vertex.
Additionally, $u$ will be able to reach every $v_S^1$ and every $v_S^2$ assuming that $\Tc \geq 3$.
Superficially, this seems to result in $|\wreach{\Tc}{\weg{x} \setminus \yB}{L}{u}| = \Tt + 1$.
However, since $\weg{x} \cap \yCanon$ must be non-empty, this cannot be the case.
In the case of vertex deletion, at least one $v_S^1$ must be in $\yCanon$, and so there are only $(f_x - 1)$ $v_S^1$ vertices that are weakly $\Tc$-reachable.
Likewise, in the case of edge deletion and contraction, at least one edge connecting $v_S^1$ to $v_S^2$ must be in $\yCanon$ and so there are only $(f_x - 1)$ $v_S^2$ vertices that are weakly $\Tc$-reachable.
Note that $D_S^2$ contains $v_S^1$ after the edge is contracted, but this does not affect the number of vertices weakly $\Tc$-reachable from inside $D_S^2$.

Consider $u = p_{S_i, x}^\ell$ next.
The vertex $u$ can reach at most every $v_S^1$, $v_S^2$, and $p_{S,x}^1$ for a total of $3 \cdot f_x$.
Additionally, it can reach every vertex on its own path for another $\ell - 2$ vertices (since it cannot count itself and we already counted $p_{S_i, x}^1$).
The vertex $u$ cannot reach any vertex on another path since the unique path between them in $\weg{x}$ goes through both $p_{S,x}^1$ vertices.
Thus, $u$ can reach $3 \cdot f_x + \ell - 2 \leq \Tt$ vertices.

Lastly, we note again that there are at most $\Tt$ vertices after $p_{S_i, x}^1$.
Therefore, no vertex appearing after $p_{S_i, x}^1$ can reach more than $\Tt$ other vertices by definition.
\end{proof}

Since the same analysis applies to every element gadget, this shows that $\wcol{\Tc}{\weg{x} \setminus \yCanon} \leq \Tt$ for all $x \in \mcU$.
We use this result to produce an ordering for the vertices of the entire graph.

\begin{lemma} \label{lemma:wcn-canonical-g}
Given an instance $\SC(\mcU, \mcF)$, let $G$ be the graph $f(\mcU, \mcF)$ (Definition~\ref{definition:wcn-f}).
Let $\yCanon$ be a canonical solution to \bWCN (Definition~\ref{definition:wcn-canonical}).
Let $L_x$ be the ordering for $\weg{x}$ given by Lemma~\ref{lemma:wcn-canonical-gx} for each $x \in \mcU$, and let $L'_x$ be the ordering of the vertices produced from $L_x$ by excluding the vertices that are not unique to a single $\weg{x}$ subgraph (specifically $v_S^1$, $v_S^2$, $D_S^1$, and $D_S^2$).
Let $L$ be the following ordering:
$ D_{S_1}^1 \dots D_{S_m}^1, D_{S_1}^2 \dots D_{S_m}^2, L'_{x_1} \dots L'_{x_n}, v_{S_1}^1 \dots v_{S_m}^1, v_{S_1}^2 \dots v_{S_m}^2 $.
Then $\wscore{\Tc}{G \setminus \yCanon}{L} \leq \Tt$.
\end{lemma}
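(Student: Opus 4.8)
The plan is to bootstrap from the single-gadget statement of Lemma~\ref{lemma:wcn-canonical-gx} by showing that, under the global ordering $L$, weak $\Tc$-reachability never escapes a single element gadget. The first step is to observe that $L$, restricted to the vertices of any one element gadget $\weg{x}$, is \emph{exactly} the ordering $L_x$ from Lemma~\ref{lemma:wcn-canonical-gx}: in both orderings the shared cliques $D_S^1$ (for $S \ni x$) come first, then the shared cliques $D_S^2$, then the vertices unique to $\weg{x}$ in the order given by $L'_x$ (which is by construction their order in $L_x$), then the vertices $v_S^1$, and finally the vertices $v_S^2$. Consequently, for any $u \in V(\weg{x})$, a $uv$-path lying entirely inside $\weg{x}$ is a witness of weak $\Tc$-reachability with respect to $L$ if and only if it is one with respect to $L_x$.

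The second, and main, step is the \emph{no-crossing} claim: if $u \in V(\weg{x})$ is not a boundary vertex $v_S^1$ or $v_S^2$, then $\wreach{\Tc}{G \setminus \yCanon}{L}{u} \subseteq V(\weg{x})$. I would prove it by first checking, directly from Definitions~\ref{definition:wcn-gs} and~\ref{definition:wcn-gx}, that the only vertex of $\weg{x}$ with a neighbor outside $\weg{x}$ is $v_S^1$, whose external neighbors are the path endpoints $p_{S,x'}^\ell$ with $x' \in S$, $x' \neq x$. Hence any $uv$-path with $v \notin V(\weg{x})$ must contain some such $v_S^1$. If $v$ is not itself of the form $v_{S'}^1$ or $v_{S'}^2$, then $v$ lies in one of the front blocks of $L$, so $L(v) < L(v_S^1)$, violating the monotonicity of the path. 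If $v = v_{S'}^1$ or $v_{S'}^2$ with $x \notin S'$, then the path must leave $\wsg{S}$ along a length-$\ell$ subpath, cross a clique $D_{x'}$, and re-enter another set gadget $\wsg{S'}$ along a second length-$\ell$ subpath; this already forces at least $2\ell + 2 > \Tc$ edges (using $\ell = \floor{\Tc/2}$, so $2\ell \geq \Tc-1$), a contradiction. Combining the claim with the first step and Lemma~\ref{lemma:wcn-canonical-gx} gives $|\wreach{\Tc}{G \setminus \yCanon}{L}{u}| = |\wreach{\Tc}{\weg{x} \setminus \yCanon}{L_x}{u}| \leq \Tt$ for every non-boundary vertex $u$.

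It remains to treat the boundary vertices $u \in \{v_S^1, v_S^2\}$, for which the ``only $\Tt$ vertices follow it'' shortcut used in Lemma~\ref{lemma:wcn-canonical-gx} fails because $L$ is now long. I would argue directly: every neighbor of $v_S^2$ (namely $v_S^1$ and $D_S^2 \setminus \{v_S^2\}$) precedes it in $L$, and every vertex that follows $v_S^2$ — i.e.\ some $v_{S'}^2$ — is farther than $\Tc$ away by the same length argument as above, so $v_S^2$ weakly $\Tc$-reaches no vertex; and $v_S^1$ weakly $\Tc$-reaches only $v_S^2$ (via the single edge between them, when present), since every other candidate is either earlier in $L$ or more than $\Tc$ away. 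For the edge-deletion and edge-contraction variants of $\yCanon$ these reachable sets only shrink (in the contraction case $v_S^1$ is merged into $D_S^2$, leaving a single merged vertex in the last block to which the distance argument still applies). Taking the maximum of $|\wreach{\Tc}{G \setminus \yCanon}{L}{u}|$ over all $u$ then yields $\wscore{\Tc}{G \setminus \yCanon}{L} \leq \Tt$, as claimed. The hard part is the no-crossing claim: one must be precise about exactly which set-gadget vertices carry edges into neighboring element gadgets, and verify the distance lower bound survives for all small $\Tc > 2$ and all three edit operations — the inequality $2\ell \geq \Tc - 1$ is what makes it go through.
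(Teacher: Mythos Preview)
Your proposal is correct and follows essentially the same approach as the paper: both arguments reduce to the single-gadget analysis of Lemma~\ref{lemma:wcn-canonical-gx} by showing that any witness path leaving $\weg{x}$ must pass through some $v_S^1$, which either violates the ordering constraint (since $v_S^1$ sits in the final blocks of $L$) or forces length at least $2\ell+2>\Tc$. Your treatment is somewhat more explicit than the paper's—you spell out that $L$ restricted to $\weg{x}$ coincides with $L_x$, and you separately dispatch the boundary vertices $v_S^1,v_S^2$—whereas the paper folds these into its reference back to Lemma~\ref{lemma:wcn-canonical-gx}; but the underlying mechanism is identical.
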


\begin{proof}
We refer to Lemma~\ref{lemma:wcn-canonical-gx} for the analysis of vertices in $D_S^1$ or $D_S^2$.
The same analysis applies since again it is true that the shortest path from $v_{S_a}^1$ to $v_{S_b}^2$ is longer than $\Tc$.
Thus, we need only show that $\wreach{\Tc}{G \setminus \yCanon}{L}{u} = \wreach{\Tc}{\weg{x} \setminus \yCanon}{L_x}{u}$ for every $u \in L'_x$ for every $x \in \mcU$.
By Lemma~\ref{lemma:wcn-canonical-gx}, we know that $|\wreach{\Tc}{\weg{x} \setminus \yCanon}{L_x}{u}| \leq \Tt$, and so this shows that $\wscore{\Tc}{G \setminus \yCanon}{L} \leq \Tt$.

Take $u$ from $L'_i$ and $w$ from $L'_j$.
All paths from $u$ to $w$ must pass through at least one $v_{S_i}^1$.
Note that in order for the path to have length less than $\Tc$, it can pass through at most one.
However, since $v_{S_i}^1$ comes after both $u$ and $w$ in $L$, neither can reach the other.
Thus, the reachability of $u$ in $\weg{x}$ with respect to $L_x$ is identical to its reachability in $G$ with respect to $L$.\looseness=-1
\end{proof}

In order to define the function $g$, we must first show that any solution $\yB$ to \bWCN can be converted to a canonical solution $\yCanon$.

\begin{lemma} \label{lemma:wcn-convert-canon}
Given an instance $\SC(\mcU, \mcF)$, let $G$ be the graph $f(\mcU, \mcF)$ (Definition~\ref{definition:wcn-f}), and let $\yB$ be an edit set consisting of either vertex deletions, edge deletions, or edge contractions such that $\wcol{\Tc}{G \setminus \yB} \leq \Tt$.
There exists a canonical solution $\yCanon$ such that $\cost(\yCanon) \leq \cost(\yB)$.
\end{lemma}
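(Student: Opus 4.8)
The plan is to convert an arbitrary solution $\yB$ into a canonical one by \emph{charging} each edit of $\yB$ to a set of $\mcF$ and then editing exactly the corresponding set gadgets canonically; the charging will be onto (so the cost does not grow) and will leave every element gadget covered.

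First I would show that $\yB$ must ``reach'' every element gadget. Fix $x\in\mcU$. If no operation of $\yB$ involved a vertex of $\weg{x}$, then $\weg{x}$ would survive as the induced subgraph of $G\setminus\yB$ on $V(\weg{x})$ (an edge contraction performed entirely outside $\weg{x}$ adds no edge among $V(\weg{x})$), and Corollary~\ref{corollary:wcol-hereditary} with Lemma~\ref{lemma:wcol-gx} would give $\wcol{\Tc}{G\setminus\yB}>\Tt$, contradicting the hypothesis. Hence $\yB$ contains an edit \emph{incident to} $\weg{x}$: a deleted vertex of $\weg{x}$, a deleted edge of $\weg{x}$, or a contracted edge with an endpoint in $\weg{x}$.

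Next I would build a charging map $\phi\colon\yB\to\mcF$. Assign to each $S\in\mcF$ its ``region'': the vertices and internal edges of $\wsg{S}$, the edge $v_S^1 v_S^2$, and for each $x\in S$ the clique $D_{S,x}$, the path $p_{S,x}^1,\dots,p_{S,x}^\ell$, the edges along it, the edge $p_{S,x}^\ell v_S^1$, and the edges from $p_{S,x}^1$ to $D_{S,x}$ and to $D_x$; the clique $D_x$ and its internal edges (which are shared by several regions) are put in the region of an arbitrarily chosen $S\ni x$. Every vertex and edge of $G$ lies in at least one region, so for $\sigma\in\yB$ we may set $\phi(\sigma)$ to be some $S$ whose region contains $\sigma$. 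The point I would verify, by a short case check, is that \emph{every edit incident to $\weg{x}$ is charged to some $S$ with $x\in S$}: the only vertices of $\weg{x}$ having a neighbour outside $\weg{x}$ are the vertices $v_S^1$ (which lie inside $\wsg{S}\subseteq\weg{x}$, so $x\in S$), while every other vertex, and every edge, of $\weg{x}$ sits in the region of a set that contains $x$.

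Finally I would set $\yCanon:=\{\,v_S^1: S\in\phi(\yB)\,\}$ for \bWCNV and $\yCanon:=\{\,(v_S^1,v_S^2): S\in\phi(\yB)\,\}$ for \bWCNE and \bWCNC, which is canonical by Definition~\ref{definition:wcn-canonical}. Since $\phi$ maps $\yB$ onto $\phi(\yB)$, $\cost(\yCanon)=|\phi(\yB)|\le|\yB|=\cost(\yB)$. For feasibility: by the first two steps, for every $x$ some $S\ni x$ lies in $\phi(\yB)$, so $\wsg{S}$ is canonically edited and $\yCanon\cap\weg{x}\ne\emptyset$; thus $\yCanon$ is a canonical edit set that meets every element gadget, which is precisely the input needed by the ordering arguments in the proofs of Lemma~\ref{lemma:wcn-canonical-gx} and Lemma~\ref{lemma:wcn-canonical-g} to exhibit an ordering $L$ with $\wscore{\Tc}{G\setminus\yCanon}{L}\le\Tt$, hence $\wcol{\Tc}{G\setminus\yCanon}\le\Tt$. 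The step I expect to be most delicate is the treatment of edge contractions: one must check that contracting $(v_S^1,v_S^2)$ is a legitimate canonical surrogate (the proof of Lemma~\ref{lemma:wcn-canonical-gx} already observes that this merge does not increase reachability inside $D_S^2$), that a contraction incident to $\weg{x}$ is still charged to a set containing $x$ when only one endpoint lies in $\weg{x}$, and that contractions outside $\weg{x}$ genuinely leave $\weg{x}$'s induced subgraph unchanged.
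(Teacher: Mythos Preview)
Your proposal is correct and follows essentially the same approach as the paper: show (via Lemma~\ref{lemma:wcol-gx} and Corollary~\ref{corollary:wcol-hereditary}) that any feasible $\yB$ must touch every element gadget $\weg{x}$, then replace each edit by a canonical one attached to a set containing~$x$, and invoke the ordering of Lemmas~\ref{lemma:wcn-canonical-gx}--\ref{lemma:wcn-canonical-g} for feasibility. The paper does this by an iterative fill-in (start with the canonical edits already in $\yB$, then for each still-uncovered $x$ add some $v_S^1$ or $(v_S^1,v_S^2)$ with $x\in S$), whereas you do it in one shot via an explicit charging map $\phi\colon\yB\to\mcF$ through the ``regions''. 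Your formulation has the advantage that the inequality $|\yCanon|=|\phi(\yB)|\le|\yB|$ is immediate, while the paper's size argument is terse and implicitly relies on the added $v_S^1$ being chosen consistently with where the witnessing non-canonical edit sits; your region assignment makes that choice explicit.
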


\begin{proof}
Construct $\yCanon$ in the following manner.
For \bWCNV, set $\yCanon = \yB \cap \{v_S^1 : S \in \mcF\}$.
Additionally, if $\yB$ contains $v_S^2$ but not $v_S^1$ for some $S \in \mcF$, then add $v_S^1$ to $\yCanon$ as well.
For \bWCNE and \bWCNC, set $\yCanon = \yB \cap \{e = (v_S^1, v_S^2) : S \in \mcF\}$.
Now, we consider $\weg{x}$ for each $x \in \mcU$.
By Lemma~\ref{lemma:wcol-hereditary} and the fact that $\yB$ is a solution, $\wcol{\Tc}{\weg{x} \setminus \yB} \leq \Tt$.
In conjunction with Lemma~\ref{lemma:wcol-gx}, this shows that $\yB \cap \weg{x}$ must be non-empty.
If $\yCanon \cap \weg{x}$ is empty, then we add any vertex from $\{v_S^1 : x \in S\}$ or any edge from $\{e = (v_S^1, v_S^2) : x \in S\}$ to $\yCanon$ depending on the problem.
The intersection of $\yCanon$ and $\weg{x}$ can only be empty when $\yB$ does not contain any edits in $\{v_S^1 : x \in S\}$ and $\{v_S^2 : x \in S\}$ or $\{e = (v_S^1, v_S^2) : x \in S\}$ depending on the edit operation.
Thus, $\yB \cap \weg{x}$ must contain an edit that is not added to $\yCanon$, and so $\yCanon$ must be smaller than $\yB$.
\end{proof}

\mypar{Mapping a graph edit set to a \SCfull solution.}

\begin{definition} \label{definition:wcn-g}
Given an instance $\SC(\mcU, \mcF)$, let $G$ be the graph $f(\mcU, \mcF)$ (Definition~\ref{definition:wcn-f}).
Suppose $\yB$ is a solution to the \bWCN instance $G$.
The function \emph{$g$} converts $\yB$ to a canonical solution $\yCanon$ and then produces either the set $\{S : v_S^1 \in \yCanon\}$ for an instance of \bWCNV or the set $\{S : e = (v_S^1, v_S^2) \in \yCanon\}$ for an instance of \bWCNE or \bWCNC using a bijective mapping between canonical edit sets and set covers.
\end{definition}

Clearly, these sets can be computed in polynomial time from a canonical solution.
Furthermore, the canonical solution can be computed in polynomial time as described in Lemma~\ref{lemma:wcn-convert-canon}, and so $g$ is polynomial time computable.
In order to prove that the functions $f$ and $g$ satisfy the necessary requirements for a strict reduction, we must show that $g$ produces valid set covers.

\begin{lemma} \label{lemma:wcn-g-valid}
Given an instance $\SC(\mcU, \mcF)$, let $G$ be the graph $f(\mcU, \mcF)$ (Definition~\ref{definition:wcn-f}), and let $\yB$ be a solution to the \bWCN instance $G$.
Then $g(\yB)$ is a solution to $\SC(\mcU, \mcF)$.
\end{lemma}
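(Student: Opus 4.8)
\textbf{Proof proposal for Lemma~\ref{lemma:wcn-g-valid}.}
The plan is to argue by contradiction, using only the already-established lemmas: Lemma~\ref{lemma:wcn-convert-canon} (reduction to a canonical edit set), Lemma~\ref{lemma:wcol-gx} (every element gadget has weak $\Tc$-coloring number exceeding $\Tt$), and Corollary~\ref{corollary:wcol-hereditary} (weak $\Tc$-coloring number is monotone under taking induced subgraphs). First I would invoke Lemma~\ref{lemma:wcn-convert-canon} to replace $\yB$ by a canonical solution $\yCanon$ with $\cost(\yCanon)\le\cost(\yB)$; by Definition~\ref{definition:wcn-g} the family $g(\yB)$ is read off from $\yCanon$, namely $g(\yB)=\{S:v_S^1\in\yCanon\}$ for \bWCNV and $g(\yB)=\{S:(v_S^1,v_S^2)\in\yCanon\}$ for \bWCNE/\bWCNC. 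Since $\yCanon$ is by construction a solution, we have $\wcol{\Tc}{G\setminus\yCanon}\le\Tt$.

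The key intermediate step is a locality claim: a canonical edit intersects the element gadget $\weg{x}$ precisely when the corresponding set $S$ satisfies $x\in S$. This I would establish by unwinding Definition~\ref{definition:wcn-gx}: the only pieces occurring in $\weg{x}$ are the clique $D_x$ and, for each $S\in\mcF$ with $x\in S$, the clique $D_{S,x}$, the path $p_{S,x}^1,\dots,p_{S,x}^\ell$, and the set gadget $\wsg{S}$. In particular the vertices $v_S^1,v_S^2$ and the edge joining them lie in $\weg{x}$ if and only if $x\in S$, and since a canonical solution consists solely of vertices of the form $v_S^1$ (resp.\ edges $(v_S^1,v_S^2)$), this pins down exactly which edits of $\yCanon$ can touch $V(\weg{x})$ (resp.\ $E(\weg{x})$); for the contraction case one checks additionally that contracting an edge $(v_S^1,v_S^2)$ with $x\notin S$ has no endpoint in $V(\weg{x})$ and hence leaves $\weg{x}$ untouched.

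To close the argument, suppose $g(\yB)$ fails to cover some element $x\in\mcU$. By the definition of $g$ this means $\yCanon$ contains no vertex $v_S^1$ (resp.\ no edge $(v_S^1,v_S^2)$) with $x\in S$, so by the locality claim $\yCanon$ is disjoint from $\weg{x}$; hence $\weg{x}$ survives as an induced subgraph of $G\setminus\yCanon$. By Lemma~\ref{lemma:wcol-gx}, $\wcol{\Tc}{\weg{x}}>\Tt$, so Corollary~\ref{corollary:wcol-hereditary} forces $\wcol{\Tc}{G\setminus\yCanon}>\Tt$, contradicting the feasibility of $\yCanon$. Therefore $g(\yB)$ covers every element of $\mcU$ and is a valid set cover. (Alternatively, one could skip the contradiction and quote directly the fact, baked into the proof of Lemma~\ref{lemma:wcn-convert-canon}, that $\yCanon\cap\weg{x}\neq\emptyset$ for every $x$, then pick a witnessing edit to obtain a covering set.)

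The main obstacle I expect is the bookkeeping in the locality claim, carried out uniformly across the three edit operations: verifying that no canonical edit associated with a set $S$ with $x\notin S$ can accidentally affect a vertex or edge shared between $\weg{x}$ and another element gadget, and that in the contraction case the merged vertex sits inside $\weg{x}$ exactly when $x\in S$. Everything else is a short chaining of previously proven statements.
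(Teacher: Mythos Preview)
Your proposal is correct and follows essentially the same route as the paper: pass to the canonical edit set $\yCanon$ via Lemma~\ref{lemma:wcn-convert-canon}, observe that a canonical edit lies in $\weg{x}$ iff the associated set contains $x$, and use Lemma~\ref{lemma:wcol-gx} together with the hereditary property to force $\yCanon\cap\weg{x}\neq\emptyset$ for every $x$. The paper compresses all of this into three sentences (citing Lemma~\ref{lemma:wcol-hereditary} rather than Corollary~\ref{corollary:wcol-hereditary} for the same contrapositive reasoning), and your parenthetical ``alternative'' of reading $\yCanon\cap\weg{x}\neq\emptyset$ straight off the construction in Lemma~\ref{lemma:wcn-convert-canon} is in fact closest to how the paper presents it; your extra bookkeeping for the contraction case is sound but more than the paper bothers to spell out.
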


\begin{proof}
Suppose $\yCanon$ is a canonical edit set found by applying Lemma~\ref{lemma:wcn-convert-canon} to $\yB$.
Since each edit in $\yCanon$ corresponds to a chosen set in $g(\yCanon)$, the edit in $\yCanon \cap \weg{x}$ corresponds to a set $S \in \mcF$ that covers $x$.
By Lemma~\ref{lemma:wcol-hereditary} and the fact that $\yCanon$ is a solution for \bWCN on $G$, we know that $\yCanon \cap \weg{x}$ is non-empty for all $x \in \mcU$.
Thus, for all $x \in \mcU, \exists S \in \mcF: S \in g(\yCanon)$.
\end{proof}

\mypar{Strict reduction from \SCfull.}
\begin{lemma} \label{lemma:wcn-cost}
Given an instance $\SC(\mcU, \mcF)$, let $G=f(\mcU, \mcF)$ (Definition~\ref{definition:wcn-f}).
Then for any solution $\yB$ to \bWCN on $G$, $\cost(\yB) \geq \cost(\yCanon) = \cost(g(\yCanon))$, where $\yCanon$ is the canonical solution corresponding to $\yB$ (Definition~\ref{definition:wcn-canonical}).
\end{lemma}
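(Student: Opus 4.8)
The plan is to chain together two facts that are already established: (i) every feasible edit set can be replaced, without increasing its cost, by a \emph{canonical} edit set, and (ii) on canonical edit sets the map $g$ is a cost-preserving bijection onto feasible set covers. No new combinatorial argument is needed; this lemma is the bookkeeping that closes the strict reduction, so the proof is short and the work is in locating the right earlier statements and combining them.

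First I would establish $\cost(\yB) \ge \cost(\yCanon)$. This is exactly Lemma~\ref{lemma:wcn-convert-canon}: from any solution $\yB$ to \bWCN on $G = f(\mcU,\mcF)$ one constructs a canonical solution $\yCanon$ (consisting only of vertices $v_S^1$ in the vertex-deletion case, or only of edges $(v_S^1,v_S^2)$ in the edge-deletion and edge-contraction cases) with $\cost(\yCanon) \le \cost(\yB)$. I would also record that $\yCanon$ is itself feasible: the ordering exhibited in Lemma~\ref{lemma:wcn-canonical-g} certifies $\wscore{\Tc}{G \setminus \yCanon}{L} \le \Tt$, so $\cost(\yCanon)$ is a genuine \bWCN solution cost.

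Next I would establish the equality $\cost(\yCanon) = \cost(g(\yCanon))$. By Definition~\ref{definition:wcn-g}, when $g$ is applied to an already-canonical $\yCanon$ the conversion step is the identity, and $g$ outputs the set $\{S : v_S^1 \in \yCanon\}$ (resp.\ $\{S : (v_S^1,v_S^2) \in \yCanon\}$). Since the set gadgets $\wsg{S}$ are pairwise vertex-disjoint (Definition~\ref{definition:wcn-gs}) and the only inter-gadget links in $G$ pass through path vertices $p_{S,x}^i$, which never belong to a canonical solution, each canonical edit lies in exactly one $\wsg{S}$ and distinct edits yield distinct sets; hence $|g(\yCanon)| = |\yCanon|$, i.e.\ $\cost(g(\yCanon)) = \cost(\yCanon)$. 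Feasibility of $g(\yCanon)$ as a set cover is Lemma~\ref{lemma:wcn-g-valid}, so the right-hand side is a legitimate \SC solution cost, and combining the two steps gives $\cost(\yB) \ge \cost(\yCanon) = \cost(g(\yCanon))$.

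The only point that needs genuine care — and the closest thing to an obstacle — is checking that the bijection is well-defined simultaneously in all three edit models, in particular that after contracting $(v_S^1,v_S^2)$ the merged vertex is still charged to $S$ alone and is not silently absorbed into the accounting of a neighboring gadget; the vertex-disjointness of the $\wsg{S}$ settles this. Finally I would remark that $g$ is polynomial-time computable (the conversion of Lemma~\ref{lemma:wcn-convert-canon} and the bijection both run in polynomial time), which, together with this lemma, the reverse direction (a canonical solution built from any set cover), and $\wcol{\Tc}{G} > \Tt$ (Lemma~\ref{lemma:wcol-g}), verifies the conditions of Definition~\ref{def:strict-reduction} and yields Theorem~\ref{theorem:bwcne-hard}.
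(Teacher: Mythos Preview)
Your proposal is correct and follows essentially the same approach as the paper: invoke Lemma~\ref{lemma:wcn-convert-canon} for $\cost(\yB)\ge\cost(\yCanon)$ and Definition~\ref{definition:wcn-g} for $\cost(\yCanon)=\cost(g(\yCanon))$. The paper's proof is literally these two citations; your additional remarks on feasibility, the bijection across edit models, and the role of this lemma in the strict reduction are accurate but go beyond what the lemma itself requires.
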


\begin{proof}
Lemma~\ref{lemma:wcn-convert-canon} shows that $\cost(\yB) \geq \cost(\yCanon)$.  By Definition~\ref{definition:wcn-g}, $\cost(\yCanon) = \cost(g(\yCanon))$.
\end{proof}

\begin{lemma} \label{lemma:wcn-opt}
Given an instance $\SC(\mcU, \mcF)$, let $G$ be the graph $f(\mcU, \mcF)$ (Definition~\ref{definition:wcn-f}), $\opt_\bWCN$ be the weight of the optimal solution to the $\bWCN$ instance $G$, and let $\opt_\SC$ be the weight of the optimal solution to $\SC(\mcU, \mcF)$.
Then $\opt_\bWCN \leq \opt_\SC$.
\end{lemma}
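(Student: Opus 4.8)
The plan is to produce, from an optimal set cover, a feasible edit set for \bWCN of exactly the same cost. Let $\mathcal{T} \subseteq \mcF$ be an optimal solution to $\SC(\mcU, \mcF)$, so that $\cost(\mathcal{T}) = \opt_\SC$. Define a canonical edit set $\yCanon$ (in the sense of Definition~\ref{definition:wcn-canonical}) by $\yCanon = \{v_S^1 : S \in \mathcal{T}\}$ in the vertex-deletion case, and $\yCanon = \{(v_S^1, v_S^2) : S \in \mathcal{T}\}$ in the edge-deletion and edge-contraction cases. By construction $\cost(\yCanon) = |\mathcal{T}| = \opt_\SC$, so it suffices to show $\yCanon$ is feasible, i.e.\ $\wcol{\Tc}{G \setminus \yCanon} \leq \Tt$.

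First I would check that $\yCanon$ meets every element gadget: for each $x \in \mcU$, since $\mathcal{T}$ is a set cover there is some $S \in \mathcal{T}$ with $x \in S$, and by Definition~\ref{definition:wcn-gx} the gadget $\weg{x}$ contains the set gadget $\wsg{S}$, hence contains $v_S^1$ and the edge $(v_S^1, v_S^2)$. Thus $\yCanon \cap \weg{x} \neq \emptyset$ for every $x$. This is precisely the hypothesis that drives the explicit ordering arguments of Lemma~\ref{lemma:wcn-canonical-gx} and Lemma~\ref{lemma:wcn-canonical-g}: the word ``solution'' enters those lemmas only through the observation that $\yB \cap \weg{x}$ is non-empty (via Lemma~\ref{lemma:wcol-gx} and Corollary~\ref{corollary:wcol-hereditary}), which we have now established directly from $\mathcal{T}$ being a cover. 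Applying the ordering $L$ constructed in Lemma~\ref{lemma:wcn-canonical-g} to our $\yCanon$ therefore gives $\wscore{\Tc}{G \setminus \yCanon}{L} \leq \Tt$, whence $\wcol{\Tc}{G \setminus \yCanon} \leq \Tt$ and $\yCanon$ is a feasible solution to the \bWCN instance $G$. Consequently $\opt_\bWCN \leq \cost(\yCanon) = \opt_\SC$, which is the claim.

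I expect the main (and essentially only) subtlety to be the mild circularity: Lemma~\ref{lemma:wcn-canonical-g} is stated for ``canonical solutions,'' but here it is being used to certify that $\yCanon$ is a solution in the first place. The fix is to note explicitly that its proof (and that of Lemma~\ref{lemma:wcn-canonical-gx}) never uses feasibility of $\yCanon$ beyond the fact $\yCanon \cap \weg{x} \neq \emptyset$ for all $x$; alternatively one can rerun the short ordering verification directly. A secondary routine check is uniformity across the three edit operations: for edge contraction one must recall, as already handled in Lemma~\ref{lemma:wcn-canonical-gx}, that contracting $(v_S^1, v_S^2)$ merges $v_S^1$ into $D_S^2$ without increasing the number of vertices weakly $\Tc$-reachable from inside $D_S^2$, so the same ordering still works.
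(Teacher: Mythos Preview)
Your proof is correct and matches the paper's approach essentially line for line: take an optimal cover, map it via the canonical bijection to an edit set $\yCanon$, observe that covering all elements forces $\yCanon \cap \weg{x} \neq \emptyset$ for every $x$, and invoke Lemma~\ref{lemma:wcn-canonical-g} (together with Lemma~\ref{lemma:wcol-hereditary}) to conclude feasibility and hence $\opt_\bWCN \leq \opt_\SC$. Your explicit remark about the apparent circularity in citing Lemma~\ref{lemma:wcn-canonical-g} is exactly the point the paper glosses over with its ``if and only if'' phrasing, and you resolve it the same way.
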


\begin{proof}
Let $\yOpt$ be an optimal solution to the $\SC(\mcU, \mcF)$.
Using the bijective mapping $g$ described in Definition~\ref{definition:wcn-g}, we can find a canonical edit set $\yCanon$ from $\yOpt$.
By Lemmas~\ref{lemma:wcol-hereditary} and~\ref{lemma:wcn-canonical-g}, we know that $\yCanon$ is a solution to the \bWCN instance $G$ if and only if $\yCanon \cap \weg{x}$ is non-empty for all $x \in \mcU$,
and this is implied by $\yOpt$ being a solution to the $\SC(\mcU, \mcF)$.
By Definition~\ref{definition:wcn-g}, $\cost(\yCanon) = \opt_\SC$, and furthermore, $OPT_\bWCN \leq \cost(\yB)$ for any potential solution $\yB$ including $\yCanon$.
\end{proof}

\begin{corollary} \label{corollary:wcn-opt-equal}
Given an instance $\SC(\mcU, \mcF)$, let $G$ be the graph $f(\mcU, \mcF)$ (Definition~\ref{definition:wcn-f}), $\opt_\bWCN$ be the weight of the optimal solution to the $\bWCN$ instance $G$, and let $\opt_\SC$ be the weight of the optimal solution to $\SC(\mcU, \mcF)$.
Then $\opt_\bWCN = \opt_\SC$.
\end{corollary}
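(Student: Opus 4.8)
The plan is to derive the two inequalities $\opt_\bWCN \leq \opt_\SC$ and $\opt_\bWCN \geq \opt_\SC$ separately and combine them. The first is exactly Lemma~\ref{lemma:wcn-opt}, which is already established: starting from an optimal set cover $\yOpt$, the bijective mapping $g$ of Definition~\ref{definition:wcn-g} yields a canonical edit set $\yCanon$ of equal cost, and Lemmas~\ref{lemma:wcol-hereditary} and~\ref{lemma:wcn-canonical-g} certify that $\yCanon$ is feasible for \bWCN (since $\yCanon \cap \weg{x}$ is nonempty for every $x\in\mcU$), so $\opt_\bWCN \leq \cost(\yCanon) = \opt_\SC$.

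For the reverse inequality I would take an optimal solution $\yB$ to the \bWCN instance $G = f(\mcU,\mcF)$, so $\cost(\yB) = \opt_\bWCN$. By Lemma~\ref{lemma:wcn-convert-canon} there is a canonical solution $\yCanon$ with $\cost(\yCanon) \leq \cost(\yB)$. By Lemma~\ref{lemma:wcn-g-valid}, $g(\yCanon)$ is a feasible set cover for $\SC(\mcU,\mcF)$, and because $g$ restricted to canonical edit sets is the bijection of Definition~\ref{definition:wcn-g}, $\cost(g(\yCanon)) = \cost(\yCanon)$. Hence $\opt_\SC \leq \cost(g(\yCanon)) = \cost(\yCanon) \leq \cost(\yB) = \opt_\bWCN$. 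Combining the two directions gives $\opt_\bWCN = \opt_\SC$, which is Corollary~\ref{corollary:wcn-opt-equal}.

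Since all of the structural work is packaged into the preceding lemmas, there is no genuine obstacle here: the corollary is essentially a one-line consequence, and the main point to verify is that the cost bookkeeping is consistent across all three edit operations (vertex deletion, edge deletion, edge contraction). This holds because the canonical-conversion step of Lemma~\ref{lemma:wcn-convert-canon} and the bijection of Definition~\ref{definition:wcn-g} are both stated uniformly for the three operations, so the argument applies verbatim in each case. Together with Lemma~\ref{lemma:wcn-cost} (which supplies the second strict-reduction condition $\cost_\SC(g(\yB)) \leq \cost_\bWCN(\yB)$) and the polynomial-size bound of Lemma~\ref{lemma:wcn-f-poly}, this equality of optima completes the strict reduction (Definition~\ref{def:strict-reduction}) and will let us transfer the \uSCfull inapproximability of Theorems~\ref{lem-set-cover-frequency} and~\ref{lem-set-cover-frequency-superconstant} to obtain Theorem~\ref{theorem:bwcne-hard}.
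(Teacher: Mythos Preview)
Your proposal is correct and follows essentially the same approach as the paper: both combine Lemma~\ref{lemma:wcn-opt} for one direction with the chain $\opt_\SC \leq \cost(g(\yCanon)) \leq \cost(\yB)$ (via Lemmas~\ref{lemma:wcn-convert-canon}/\ref{lemma:wcn-cost} and~\ref{lemma:wcn-g-valid}) for the other. The only cosmetic difference is that the paper phrases the second direction as a contradiction (assuming a cheaper \bWCN solution than $\yCanon$ and deriving a cheaper set cover than $\yOpt$), whereas you argue it directly.
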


\begin{proof}
Using the notation of the previous proof, assume that there is another solution $\yB$ for the \bWCN instance $G$ so that $\cost(\yB) < \cost(\yCanon)$.
Then by Lemmas~\ref{lemma:wcn-cost} and~\ref{lemma:wcn-opt}, $\cost(g(\yB)) \leq \cost(\yB) < \cost(\yCanon) \leq \cost(\yOpt)$.
This is a contradiction to the assumption that $\yOpt$ was an optimal solution to the \SCfull instance $(\mcU, \mcF)$, and so $\yCanon$ must be optimal.
\end{proof}

\begin{proof}[Proof of Theorem~\ref{theorem:bwcne-hard}]
  Adapting the proof of Lemma~\ref{lemma:wcn-f-poly} to the case that the instance $\SC(\mcU,\mcF)$ is actually an instance of $\xSC{k}$, i.e. every element $x \in \mcU$ has the same frequency $f_x = k$, we get that the size of the graph in our reduction is
  \begin{align}
    n &= |\mcF|(2\Tt+1) + \sum_{x \in \mcU} \left( f_x^2 + (\ell-3)f_x + \Tt+2 \right)  \nonumber \\
      &= (2\Tt+1)|\mcF| + |\mcU|(k^2 + (\ell-3)k + \Tt + 2) \label{eqn:wcn-graph-size-usc} \\
      &= \Theta(|\mcU| \cdot (k^2 + \Tt)) \label{eqn:simplified-wcn-graph-size},
  \end{align}
  where the last inequality follows from Theorem~\ref{lem-set-cover-frequency-superconstant} on instances where $|\mcF| \leq |\mcU|$.
  Hence, the functions $f$ and $g$ run in polynomial time, and so by Lemma~\ref{lemma:wcn-cost} and Corollary~\ref{corollary:wcn-opt-equal}, we have demonstrated a strict reduction from $\xSC{k}$.

  Note that in the graphs we construct, we require $\Tt \geq \max\{4\cdot f_{\max}, \ell + 3\cdot f_{\max}-2\}$, and in particular for a fixed $\Tc$ (and therefore fixed $\ell = \floor{\Tc/2}$) and an instance of \xSC{k} we can set $\Tt = 4\cdot k$.
  Thus, an $o(\Tt)$ approximation for $\bWCN$ would yield an $o(k)$ approximation for $\SC$.
  Hence, by Theorem~\ref{lem-set-cover-frequency} it is NP-hard to approximate $\bWCN$ within a factor of $o(\Tt)$.
  Note this specifically applies for $k$ constant with respect to $|\mcU|$ and therefore $\Tt = 4k$ constant with respect to $n$, by Equation~\eqref{eqn:wcn-graph-size-usc}.
  Furthermore, we note that because Theorem~\ref{lem-set-cover-frequency} requires $k \geq 3$ and our reduction requires $\Tt \geq 4\cdot f_{\max}$, $\Tt \geq \ell + 3\cdot f_{\max}-2$, our reduction applies only when $\Tt \geq \max\{12, \ell + 7\}$.

  Finally, using Theorem~\ref{lem-set-cover-frequency-superconstant} we can obtain a similar result for super-constant values of $\Tt$
  assuming NP $\nsubseteq$ DTIME($n^{O(\log \log n)})$).
  Since $\Tt = 4k$ in our reduction from $\xSC{k}$ and $|\mcU| \leq n$ by Equation~\eqref{eqn:wcn-graph-size-usc},
  by Theorem~\ref{lem-set-cover-frequency-superconstant} there is a constant $b>0$ such that
  there is no $o(\Tt)$-approximation for any $\Tt/4  = k \leq (\log |\mcU|)^{1/b} \leq (\log n)^{1/b}$.
\end{proof}

\mypar{Handling the case $\Tc = 2$.}

We now show inapproximability results for \bWCNfull when $\Tc = 2$ by using a modification of the reduction described for $\Tc > 2$.

\begin{theorem} \label{theorem:bwcne-hard-v1}
  For $\Tc = 2$ and constant $\Tt \geq 9$, \bWCNVfull is $o(\Tt)$-inapproximable.
  Furthermore, there exists a constant $b > 0$ such that
  \bWCNV is $o(\Tt)$-inapproximable for $\Tt \in  O( \log^{1/b} n )$,
  unless NP $\subseteq$ DTIME$(n^{O(\log\log n)})$.
\end{theorem}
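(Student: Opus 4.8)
\textbf{Proof proposal for Theorem~\ref{theorem:bwcne-hard-v1}.}
The plan is to reuse the strict reduction from \uSCfull developed for Theorem~\ref{theorem:bwcne-hard}, but to rebuild the element gadget for the degenerate parameter $\ell = \floor{\Tc/2} = 1$ that arises when $\Tc = 2$. The set gadget $\wsg{S}$ (two cliques $D_S^1$ of size $\Tt$ and $D_S^2$ of size $\Tt+1$, with the distinguished vertices $v_S^1 \in D_S^1$, $v_S^2 \in D_S^2$ joined by an edge) is kept verbatim from Definition~\ref{definition:wcn-gs}. In the element gadget (Definition~\ref{definition:wcn-gx}) the path $p_{S,x}^1,\dots,p_{S,x}^\ell$ collapses to the single vertex $p_{S,x}^1$, which is now adjacent directly to $v_S^1$. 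The key change in the combinatorics is that, at $\Tc = 2$, a vertex of $D_x$ can weakly $2$-reach $v_S^1$ (distance $2$ through $p_{S,x}^1$) but not $v_S^2$ (distance $3 > \Tc$), so the "dangerous" count from a first vertex of $D_x$ drops by one copy of each $v_S^2$. I would compensate by setting $|D_x| = \Tt - 2f_x + 2$ (instead of $\Tt - 3f_x + 2$), keeping $|D_{S,x}| = f_x$, and requiring $\Tt \geq 3 f_{\max}$ (together with $\Tt \ge 4 f_{\max}$ if needed for the clique sizes inside $\wsg{S}$); since the reduction is from $\xSC{k}$ with $k \geq 3$ (Theorem~\ref{lem-set-cover-frequency}) and $\Tt = 3k$ suffices in the modified gadget, this yields precisely the stated constraint $\Tt \geq 9$.

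Next I would re-prove the structural lemmas in the same order as before. Heredity of $\wcol{\Tc}{\cdot}$ (Lemma~\ref{lemma:wcol-hereditary}) carries over unchanged. The "cliques come first'' normalization (analogue of Lemma~\ref{lemma:wcn-cliques-first}) and the ordering constraints $L(v_S^1) < L(v_S^2)$ and $L(u) < L(v_S^1)$ for $u \in D_x \cup \{p_{S,x}^1\}$ (analogues of Lemmas~\ref{lemma:vs2-last} and~\ref{lemma:vs1-next}) go through with the shorter distances, using the same "first vertex of a big clique can reach too much'' argument. The crucial rigidity fact $\wcol{2}{\weg{x}} > \Tt$ (analogue of Lemma~\ref{lemma:wcol-gx}) is then obtained by the same dichotomy on whether the first vertex of $D_x$ precedes or follows the $p_{S,x}^1$ vertices, recomputing the counts: if $D_x$ comes first it reaches $(\Tt - 2f_x + 1)$ siblings, the $f_x$ vertices $p_{S,x}^1$, and the $f_x$ vertices $v_S^1$, for total $\Tt+1$; otherwise the first vertex of the clique attached to the earliest $p_{S,x}^1$ exceeds $\Tt$. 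For the positive direction I would show that deleting the single vertex $v_S^1$ brings $\wcol{2}{\weg{x}\setminus\{v_S^1\}} \le \Tt$ by exhibiting the collapsed analogue of the canonical ordering of Lemma~\ref{lemma:wcn-canonical-gx}, and that canonical solutions (contained in $\{v_S^1 : S \in \mcF\}$, cf.\ Definition~\ref{definition:wcn-canonical}) are WLOG (analogue of Lemma~\ref{lemma:wcn-convert-canon}). The bijective map $g$ between canonical vertex-edit sets and set covers then gives $\cost_{\SC}(g(\yB)) \le \cost_{\bWCN}(\yB)$ and $\opt_{\bWCN} = \opt_{\SC}$, i.e.\ a strict reduction, with $f$ and $g$ polynomial-time computable exactly as in Lemma~\ref{lemma:wcn-f-poly}.

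Finally, the size bound $n = \Theta(|\mcU|(k^2 + \Tt))$ from Equation~\eqref{eqn:simplified-wcn-graph-size} (with $\ell = 1$ plugged into Equation~\eqref{eqn:wcn-graph-size-usc}) is polynomial, so combining the strict reduction with Theorem~\ref{lem-set-cover-frequency} gives NP-hardness of $o(\Tt)$-approximating \bWCNVfull for constant $\Tt = 3k \ge 9$, and combining it with Theorem~\ref{lem-set-cover-frequency-superconstant} gives the same for $\Tt \in O(\log^{1/b} n)$ under NP $\not\subseteq$ DTIME$(n^{O(\log\log n)})$, which is exactly the statement. The main obstacle I anticipate is pinning down the gadget sizes so that, at $\Tc = 2$, the element gadget is genuinely rigid (the unedited gadget forces exactly one vertex over the threshold $\Tt$) while a single vertex deletion of $v_S^1$ exactly removes that surplus: because every relevant distance is now $1$, $2$, or $3$, the "$+1$'' that drives the contradiction is fragile, and this same fragility is why the argument is restricted to vertex deletion — deleting the edge $v_S^1 v_S^2$ leaves $v_S^1$ in place and does not remove it from the neighborhoods that inflate the count, so the positive-direction ordering fails for edge deletion and edge contraction.
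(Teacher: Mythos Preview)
Your gadget modification differs from the paper's: the paper removes $D_S^2$ entirely and enlarges $D_S^1$ to size $\Tt+1$, whereas you keep $\wsg{S}$ verbatim. Both routes set $|D_x| = \Tt - 2f_x + 2$ and $\Tt = 3k \geq 9$; the paper's removal of $D_S^2$ also eliminates the edge $(v_S^1,v_S^2)$, which is its reason the result is vertex-only, rather than your observation that deleting that edge fails to lower the relevant count.

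More seriously, there is a gap in your rigidity claim (the analogue of Lemma~\ref{lemma:wcol-gx}). At $\Tc=2$, in the second branch where $w = p_{S_0,x}^1$ precedes the first vertex of $D_x$, the first vertex $d$ of $D_{S_0,x}$ does \emph{not} reach more than $\Tt$ vertices: at distance $\leq 2$, $d$ sees only its $f_x - 1$ siblings, $w$, all of $D_x$ (via $w$), and the single $v_{S_0}^1$; every other $v_{S'}^1$ is at distance $\geq 4$. This totals $(f_x-1)+1+(\Tt-2f_x+2)+1 = \Tt - f_x + 3 \leq \Tt$ for $f_x \geq 3$. Concretely, the ordering that places the big cliques first, then $D_{S_1,x}$, $p_{S_1,x}^1$, all of $D_x$, then $(D_{S_i,x},\, p_{S_i,x}^1)$ for $i\geq 2$, then the $v_S^1$'s and $v_S^2$'s, achieves $\wscore{2}{\weg{x}}{L} = \Tt$ exactly (the tight vertices are the first of $D_{S_1,x}$, the first of $D_x$, and $p_{S_1,x}^1$, each hitting $\Tt$ on the nose). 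Hence the unedited element gadget is not rigid with your sizes, and the strict reduction collapses. To make the second branch go through at $\Tc=2$ you would need to enlarge $D_{S,x}$ (to at least $2f_x - 2$, so that $|D_{S,x}|+|D_x|+1 > \Tt$); keeping $|D_{S,x}| = f_x$ as you propose does not suffice.
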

\begin{proof}
    For $\Problem{\Tt-BWE-V-2}$, we change the graph construction as follows.
    We replace each set gadget with a simpler construction in which $D_S^2$ is removed from $\wsg{S}$ for all $S \in \mcF$.
    We set $|D_S^1| = \Tt + 1$, $|D_x| = \Tt - 2 \cdot f_x + 2$, and let $\Tt \geq 3 \cdot f_{\max}$.
    Note that since $\Tc=2$ we have $\ell = 1$, and so $\Tt$ satisfies $\Tt \geq 3 \cdot f_{max}$ and $\Tt \geq \ell + 2 \cdot f_{max} - 2$.

    The result of these modifications is that the graph $G$ produced from $\xSC{k}(\mcU, \mcF)$ has size equal to
    \begin{align}
        n &= |\mcF|(\Tt+1) + \sum_{x \in \mcU} \left( f_x^2 - f_x + \Tt + 2 \right)  \nonumber \\
          &= (\Tt+1)|\mcF| + |\mcU|(k^2 - k + \Tt + 2) \label{eqn:wcnv-graph-size-usc} \\
          &= \Theta(|\mcU| \cdot k^2) \label{eqn:simplified-wcnv-graph-size},
    \end{align}
    where the last equality follows from the assumption that $|\mcF| \leq |\mcU|$ given in Theorem~\ref{lem-set-cover-frequency-superconstant}.
    Hence, the functions $f$ and $g$ can be computed in polynomial time, and so by Lemma~\ref{lemma:wcn-cost} and Corollary~\ref{corollary:wcn-opt-equal}, we have again demonstrated a strict reduction from $\xSC{k}$.

    Note that in the graphs we construct, we require $\Tt \geq 3\cdot f_{\max}$, so for a fixed instance of $\xSC{k}$ we can set $\Tt = 3\cdot k$.
    Thus, an $o(\Tt)$ approximation for $\bWCNV$ when $\Tc = 2$ would yield an $o(k)$ approximation for $\xSC{k}$.
    Hence, by Theorem~\ref{lem-set-cover-frequency} it is NP-hard to approximate $\bWCNV$ within a factor of $o(\Tt)$ when $\Tc = 2$.
    Note this specifically applies for $k$ constant with respect to $|\mcU|$ and therefore $\Tt = 3k$ constant with respect to $n$, by Equation~\eqref{eqn:wcn-graph-size-usc}.
    We note that because Theorem~\ref{lem-set-cover-frequency} requires $k \geq 3$ and our reduction requires $\Tt \geq 3\cdot f_{\max}$, this reduction applies only when $\Tt \geq 9$.

    Finally, using Theorem~\ref{lem-set-cover-frequency-superconstant} we can obtain a similar result for super-constant values of $\Tt$
    assuming NP $\nsubseteq$ DTIME($n^{O(\log \log n)})$).
    Since $\Tt = 3k$ in our reduction from $\xSC{k}$ and $|\mcU| \leq n$ by Equation~\eqref{eqn:wcnv-graph-size-usc},
    by Theorem~\ref{lem-set-cover-frequency-superconstant} there is a constant $b>0$ such that
    there is no $o(\Tt)$-approximation for any $\Tt/3  = k \leq (\log |\mcU|)^{1/b} \leq (\log n)^{1/b}$.

\end{proof}

% check this
\begin{theorem} \label{theorem:bwcne-hard-v2}
  \bWCNE and \bWCNC are $o(\log n)$-inapproximable when $\Tc = 2$ and $\Tt \in \Omega(n^{1/2})$.
\end{theorem}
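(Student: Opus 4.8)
The plan is to build a strict reduction (Definition~\ref{def:strict-reduction}) from general \SCfull, whose $(1-\epsilon)\ln|\mcU|$-inapproximability (Theorem~\ref{lem:set-cover-general}) holds even when $|\mcF|\le\mathrm{poly}(|\mcU|)$. Since the statement permits a polynomially large target $\Tt$, we no longer need the bounded-frequency hypothesis exploited by the \uSCfull reductions of Theorems~\ref{theorem:bwcne-hard} and~\ref{theorem:bwcne-hard-v1}; instead each element may appear in arbitrarily many sets, and we absorb the cost of this freedom by inflating the gadgets until $\Tt=\Omega(n^{1/2})$.

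Concretely, given $\SC(\mcU,\mcF)$ I would set $\Tt$ to be a sufficiently large fixed power of $|\mcU|$ (large enough to meet every ``$\Tt$ sufficiently large'' size constraint forced by the gadgets and to dominate every gadget-internal polynomial term, for all but finitely many instances), and construct $G=f(\mcU,\mcF)$ as a suitable modification of the $\Tc=2$ construction of Theorem~\ref{theorem:bwcne-hard-v1}: one set gadget $\wsg{S}$ per $S\in\mcF$ built around cliques of size $\Theta(\Tt)$ together with a distinguished edge $e_S$, and one element gadget $\weg{x}$ per $x\in\mcU$ built around a clique $D_x$ of size $\Tt-\Theta(f_x)$, the small cliques $D_{S,x}$, and the length-$1$ connectors joining them into the incident set gadgets $\wsg{S}$ (those with $x\in S$). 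The decisive design requirement --- precisely the point where the bounded-frequency reduction breaks for $\Tc=2$ --- is that deleting or contracting the single designated edge $e_S$ must force $\wcol{2}{\weg{x}\setminus\yB}\le\Tt$ for every element gadget with $x\in S$, while $\wcol{2}{\weg{x}}>\Tt$ whenever no incident set gadget has been touched. Because every gadget has $\Theta(\Tt)$ vertices and there are $|\mcU|+|\mcF|$ of them, $n=\Theta(\Tt\cdot(|\mcU|+|\mcF|))=O(\Tt^2)$, so $\Tt=\Omega(n^{1/2})$, and $\log n=\Theta(\log\Tt)=\Theta(\log|\mcU|)$ since $|\mcF|\le\mathrm{poly}(|\mcU|)$.

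From there the argument follows the pattern of Section~\ref{section:negative_weak_coloring}, with $\Tc=2$ throughout. First, using the hereditary property (Lemma~\ref{lemma:wcol-hereditary}, Corollary~\ref{corollary:wcol-hereditary}) together with a per-ordering case analysis of where the large cliques, the connector vertices, and the set-gadget vertices fall in an ordering $L$ --- the analogues of Lemmas~\ref{lemma:wcn-cliques-first}, \ref{lemma:vs2-last}, \ref{lemma:vs1-next} and~\ref{lemma:wcol-gx} --- I would show $\wcol{2}{G}>\Tt$ and, more sharply, that any edit set $\yB$ with $\wcol{2}{G\setminus\yB}\le\Tt$ must hit a set gadget inside every $\weg{x}$. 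Second, define canonical solutions supported on $\{e_S:S\in\mcF\}$ (as in Definition~\ref{definition:wcn-canonical}), exhibit an explicit witness ordering showing that a canonical edit set arising from a set cover is feasible (the analogue of Lemmas~\ref{lemma:wcn-canonical-gx} and~\ref{lemma:wcn-canonical-g}), and show any feasible $\yB$ converts to a canonical one of no greater cost (the analogue of Lemma~\ref{lemma:wcn-convert-canon}), uniformly for edge deletion and edge contraction --- contraction of $e_S$ simply merges its endpoints without increasing the weak $2$-reach of any interior clique vertex. Third, use the bijection between canonical edit sets and set covers (Definition~\ref{definition:wcn-g}) to conclude $\opt_{\bWCNE}(G)=\opt_{\bWCNC}(G)=\opt_{\SC}(\mcU,\mcF)$ with $\cost$ preserved, so $f$ and $g$ form a strict reduction; an $o(\log n)$-approximation for \bWCNE or \bWCNC would then yield an $o(\log|\mcU|)$-approximation for \SC, contradicting Theorem~\ref{lem:set-cover-general}.

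The main obstacle is the first step's robustness over all vertex orderings in the $\Tc=2$ regime: weak $2$-reachability sees only paths of length at most two, so the ``cascading removal from the $(\Tr+1)$-core'' mechanism that powers the $\Tc\ge3$ and degeneracy reductions is unavailable, and an element gadget shared among many sets --- several of whose designated edges have been edited --- could a priori still be reordered to keep some vertex's weak $2$-reach above $\Tt$. Engineering $\weg{x}$ so that its weak $2$-coloring number sits exactly one above the threshold, and so that editing $e_S$ in even one incident set gadget provably suppresses the offending vertex's reach for \emph{every} ordering (and identically for deletion and contraction), is where the careful gadget sizing is spent, and I expect this verification --- essentially re-deriving Lemmas~\ref{lemma:wcn-cliques-first}--\ref{lemma:wcol-gx} and Lemma~\ref{lemma:wcn-canonical-gx} for the modified $\Tc=2$ gadgets --- to be the most delicate part of the proof.
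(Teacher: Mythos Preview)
Your high-level strategy---a strict reduction from general \SCfull with $\Tt$ chosen polynomially in $|\mcU|$ so that $\log n=\Theta(\log|\mcU|)$---is exactly what the paper does. The gap is in the gadget itself.

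With the length-$1$ connectors $p_{S,x}^1$ retained (as you describe), a vertex $u\in D_x$ is at distance exactly $2$ from $v_S^1$ via $u\to p_{S,x}^1\to v_S^1$. Every edge lying strictly inside the set gadget $\wsg{S}$ (in particular any candidate $e_S$ such as $(v_S^1,v_S^2)$ or an edge inside $D_S^1$) is therefore at distance at least $3$ from $u$, and with $\Tc=2$ its deletion or contraction cannot change $|\wreach{2}{\weg{x}}{L}{u}|$ for any ordering $L$. The only edges whose removal lowers $u$'s weak $2$-reach are element-specific (e.g.\ $(p_{S,x}^1,v_S^1)$), so a single per-set edit cannot simultaneously cover all elements of $S$, and the strict reduction collapses. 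The difficulty you flag as ``the most delicate part'' is not a matter of careful sizing---under your sketched construction there is simply no edge $e_S$ with the required property.

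The paper fixes this by \emph{removing} the connectors $p_{S,x}^j$ and the cliques $D_{S,x}$ entirely, wiring every vertex of $D_x$ directly to $v_S^1$. Now $u\in D_x$ reaches $v_S^2$ at distance $2$, so deleting or contracting $(v_S^1,v_S^2)$ drops $u$'s reach by one as needed. But this rewiring creates a complication you did not anticipate: for any two sets $S,T$ sharing an element $x$, the vertices $v_S^1$ and $v_T^1$ are now at distance $2$ through $D_x$, so the first vertex of $D_S^1$ in any ordering can weakly $2$-reach many other $v_T^1$'s. The paper handles this by fixing an arbitrary linear order on $\mcF$, letting $N_S$ be the sets after $S$ that share an element with $S$, and setting $|D_S^1|=\Tt-|N_S|$ and $|D_S^2|=\Tt-|N_S|+1$; this budgets exactly for the extra reachable set-vertices in the witness ordering. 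Ensuring $|D_S^1|\ge 1$ forces $\Tt\ge|\mcF|$, which (not the gadget inflation you describe) is the actual source of the $\Tt=\Omega(n^{1/2})$ restriction.
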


\begin{proof}
    For edge deletion and contraction with $\Tc=2$, we change the graph construction as follows.
    From each element gadget $\weg{x}$, we remove both the path $p_{S,x}^j$ and the clique $D_{S,x}$ for every set $S$ that contains element $x$.
    To reconnect the graph, we add edges from each vertex in $D_x$ directly to $v_S^1$ in each set gadget that $\weg{x}$ contains.
    Additionally, we set $|D_x| = \Tt - 2 \cdot f_x + 2$, then set $\Tt$ to any integer $\geq \max\{ 2 \cdot f_{max}, |\mcF|\}$.
    The constraint $\Tt \geq 2 \cdot f_{max} $ guarantees that there is a vertex in each $D_x$, since we set $|D_x| = \Tt - 2 \cdot f_x + 2$.

    The bound $\Tt \geq |\mcF|$ is more complicated to justify.
    Note that, due to the removal of $p_{S, x}^1$, the length of the path between $v_{S_i}^1$ and $v_{S_j}^1$ for each pair $S_i, S_j \in \mcF$ is no longer greater than $\Tc$.
    To prevent any $v_S^1$ from being able to reach too many other nodes, we pick an ordering and use it to determine the size of $D_S^1$ and $D_S^2$ for each $S \in \mcF$.
    For each set $S$, let $N_S \subseteq \mcF$ be the collection of sets that share an element with $S$ and appear after $S$ in the ordering.
    Note that $v_T^1$ is reachable from $v_S^1$ for all $T \in N_S$.
    Thus, we set $|D_S^1| = \Tt - |N_S|$ and $|D_S^2| = \Tt - |N_S| + 1$.
    Since $|N_S|$ could be as large as $|\mcF| - 1$, we require $\Tt \geq |\mcF|$ so that $|D_S^1| \geq 1$ for all $S \in \mcF$.

    Next we derive the size $n$ of the graph.
    For each element $x$, the graph contains a clique $D_x$ of size $\Tt - 2 \cdot f_x + 2$.
    For each set $S$, the set gadget $\wsg{S}$ has $|D_S^1| + |D_S^2| \leq 1 + 2(\Tt - |N_S|)$ nodes, which is at most $2\Tt$.
    Thus, we have
    \begin{align*}
        n &= \sum_{x\in \mcU} |D_x| + \sum_{S \in \mcF} |\wsg{S}| \\
        &\leq \sum_{x\in \mcU} (\Tt - 2 \cdot f_x + 2) + \sum_{S \in \mcF} 2\Tt,
    \end{align*}
    which is bounded above by $\Tt|\mcU| + 2\Tt|\mcF|$, since $(\Tt - 2 \cdot f_x + 2) \geq \Tt$.
    Finally, since we can choose any $\Tt \geq \max\{ 2 \cdot f_{max}, |\mcF|\}$, we set $\Tt = 2 |\mcF|$ so that $n = O( 2|\mcF||\mcU| + 4|\mcF|^2)$.

    By Theorem~\ref{lem:set-cover-general}, it is NP-hard to approximate $\SC(\mcU,\mcF)$ within a factor of $(1-\epsilon)(\ln |\mcU|)$ for any $\epsilon > 0$, even for instances where $|\mcF|$ is bounded by a polynomial in $|\mcU|$, so we can assume $|\mcF| \leq |\mcU|^s$ for some constant $s \in \mathbb{N}^+$.
    Thus, by Lemma~\ref{lemma:wcn-f-poly}, the graph $G$ has $O(|\mcU|^{2s})$ vertices, and so by the $(1-\epsilon)(\ln |\mcU|)$-inapproximability of \SC,
    $(\log n)$-inapproximability of \bWCN when $\Tc = 2$ follows from these alternative constructions and the analysis above.
    However, we remark that in our reduction all instances satisfy $n \geq \Tt \geq 2|\mcF| = \Theta(|\mcU|^{s}) = \Theta(n^{1/2})$, and so we can only conclude inapproximability of $\bWCNE$ and $\bWCNC$ on instances where $\Tt \in \Omega(n^{1/2})$.

\end{proof}

\fi

\ifdefined\istwhard
    \subsection{Treewidth and Clique Number: $o(\log n)$-Inapproximability of Vertex Editing}
    \label{section:negative_treewidth_and_clique}
    In this section, we address the hardness of \bTWVfull (\bTWV) and \bCNVfull (\bCNV).
In the positive direction, recent work provided an approximation scheme for \bTWVfull and related problems where the approximation ratio depends only on the target treewidth, $\Tw$, and not on $\tw(G)$~\cite{fiorini2010hitting,fomin2012planar}.
Our results here provide the sharpest-known \emph{lower} bound on these approximation ratios.

\begin{theorem}\label{thm:treewidth-hardness}
    \bTWVfull (\bCNVfull) is $o(\log n)$-inapproximable when $\Tw = \Omega(n^\delta)$ ($\Tcn = \Omega(n^\delta)$) for $\delta \geq 1/2$.
\end{theorem}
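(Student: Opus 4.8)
The plan is to give a strict reduction (Definition~\ref{def:strict-reduction}) from \SCfull, using the $(1-\epsilon)\ln|\mcU|$ inapproximability of \SC on instances with $|\mcF|\le\mathrm{poly}(|\mcU|)$ from Theorem~\ref{lem:set-cover-general}. First I would preprocess a hard instance $\SC(\mcU,\mcF)$ so that $|\mcF|\le|\mcU|$: if $|\mcF|\le|\mcU|^{C}$, replace every element by $|\mcU|^{C-1}$ parallel copies belonging to exactly the same sets; this changes neither $\opt_{\SC}$ nor the family of optimal covers, blows $|\mcU|$ up by a polynomial factor, and keeps $\log|\mcU|$ within a constant factor. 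Fix a target $w$ with $|\mcF|\le w$ and $\max_e f_e\le w$, where $f_e=|\{S\in\mcF:e\in S\}|\ge 1$; the choice $w=|\mcU|$ will already yield the case $\delta=1/2$, and larger polynomial $w$ yields larger $\delta$. We then build $G=f(\mcU,\mcF)$: introduce one \emph{set vertex} $v_S$ for each $S\in\mcF$, and for each element $e$ an \emph{element gadget} $\Gamma_e$ which is a clique on $w+2$ vertices, of which $f_e$ are the set vertices $\{v_S:e\in S\}$ and the remaining $w+2-f_e\ (\ge 1)$ are private to $\Gamma_e$. Since private vertices of distinct gadgets are never adjacent, the maximal cliques of $G$ are exactly the $\Gamma_e$; in particular $\tw(G)\ge w+1>w$ and $\omega(G)=w+2$. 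Also $n=|V(G)|=|\mcF|+\sum_{e}(w+2-f_e)=O(w^2)$ (using $|\mcF|\le w$, $|\mcU|\le w$, $f_e\ge1$), so $\Tw=w=\Omega(n^{1/2})$ and $\log n=\Theta(\log|\mcU|)$; the map $f$ is clearly polynomial time.

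The heart of the argument is showing $\opt_{\bTWV}(G)=\opt_{\SC}(\mcU,\mcF)$. For ``$\le$'': given a set cover $\mathcal T\subseteq\mcF$, put $X=\{v_S:S\in\mathcal T\}$ and exhibit a width-$w$ tree decomposition of $G\setminus X$ shaped as a star. The root bag is $R=\{v_S:S\notin\mathcal T\}$, of size $|\mcF|-|\mathcal T|\le w+1$ (here $|\mcF|\le w$ and $|\mathcal T|\ge1$ are used); each leaf bag is $\Gamma_e\setminus X$, a clique on at most $w+1$ vertices because $\mathcal T$ covers $e$. The coverage axioms are immediate (each private vertex and each intra-gadget edge lies in its leaf, and each $v_Sv_{S'}$ edge with $S,S'\notin\mathcal T$ lies in $R$), and running intersection holds because a surviving $v_S$ occurs exactly in $R$ and in the leaves $\Gamma_e$ with $e\in S$, a connected substar. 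Hence $\opt_{\bTWV}(G)\le|\mathcal T|$. For ``$\ge$'': given any $X$ with $\tw(G\setminus X)\le w$, the induced subgraph $\Gamma_e\setminus X$ is a clique of treewidth $\le w$, hence has $\le w+1$ vertices, so $X\cap\Gamma_e\neq\emptyset$ for every $e$; replacing each private vertex of $X$ that lies in some $\Gamma_e$ by an arbitrary $v_S$ with $e\in S$ produces a canonical set $\widehat X\subseteq\{v_S:S\in\mcF\}$ with $|\widehat X|\le|X|$ still meeting every $\Gamma_e$. Then $g(X):=\{S:v_S\in\widehat X\}$ is a set cover with $|g(X)|\le|X|$, so $\cost_{\SC}(g(X))\le\cost_{\bTWV}(X)$ and $\opt_{\bTWV}(G)\ge\opt_{\SC}$. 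By the remark following Definition~\ref{def:strict-reduction}, $(f,g)$ is a strict reduction, and a $\rho$-approximation for \bTWV therefore yields a $\rho$-approximation for \SC; an $o(\log n)$-approximation for \bTWV on instances with $\Tw=\Omega(n^{1/2})$ would give an $o(\log|\mcU|)$-approximation for \SC, contradicting Theorem~\ref{lem:set-cover-general}. For an arbitrary $\delta\in[1/2,1)$, repeat with $w=\lceil|\mcU|^{\delta/(1-\delta)}\rceil\ge|\mcF|$, so that $n=\Theta(|\mcU|\cdot w)=\Theta(w^{1/\delta})$ and $\Tw=w=\Theta(n^{\delta})$ while $\log n=\Theta(\log|\mcU|)$ still holds.

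The clique-number statement is the promised ``small modification'': keep the identical graph $G$ and set the target to $b=w+1$. Then $\omega(G)=w+2=b+1>b$; for a cover $\mathcal T$, deleting $X=\{v_S:S\in\mathcal T\}$ leaves every $\Gamma_e\setminus X$ a clique on at most $w+1=b$ vertices and creates no new $(b+1)$-clique (cliques never span two gadgets, and $\{v_S\}$ induces a clique of size $\le|\mcF|\le w<b+1$), so $\omega(G\setminus X)\le b$; conversely, any $X$ with $\omega(G\setminus X)\le b$ must hit every $\Gamma_e$ (else $\Gamma_e=K_{b+1}$ survives), and the same canonicalization gives a cover of size $\le|X|$. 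Hence $\opt_{\bCNV}(G)=\opt_{\SC}$ and $\Tcn=b=\Theta(n^{1/2})$, giving $o(\log n)$-inapproximability of \bCNV when $\Tcn=\Omega(n^{1/2})$ (and the $\delta$-scaling is identical).

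I expect the main obstacle to be precisely the ``$\le$'' direction — certifying that a cover's worth of vertex deletions actually brings $\tw(G)$ down to $w$ — because it is where the parameters must line up: making every $\Gamma_e$ a clique of the \emph{same} size $w+2$ (via the $w+2-f_e$ private vertices) is exactly what lets the residual gadgets all be packed into leaf bags of one star whose root is the small set $\{v_S:S\notin\mathcal T\}$, and this forces the constraint $|\mcF|\le w$. That constraint is in turn why the universe-padding step (to enforce $|\mcF|\le|\mcU|$) and the choice $w=\Theta(|\mcU|)$ are needed to land the target parameter at $\Theta(n^{1/2})$ while keeping $n$ polynomial in $|\mcU|$; verifying the tree-decomposition axioms and the canonicalization bound themselves are routine once the gadget sizes are set this way.
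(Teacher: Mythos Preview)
Your reduction is correct and is essentially the paper's: element gadgets are equal-size cliques over the relevant set vertices plus private padding, a star-shaped tree decomposition (root bag $=\{v_S:S\notin\mathcal T\}$, leaves $=\Gamma_e\setminus X$) certifies that deleting a cover drops the width, and canonicalizing any edit set to set vertices recovers a cover; the paper additionally wires all set vertices into a \emph{central clique} and fixes $\Tw=|\mcF|-2$, whereas you omit that clique and parameterize $w\ge|\mcF|$ (with the padding step) to tune~$\delta$ explicitly. One harmless imprecision: ``the maximal cliques of $G$ are exactly the $\Gamma_e$'' is false in general (pairwise-intersecting sets with empty common intersection yield a maximal clique of set vertices lying in no single $\Gamma_e$), but since any clique of set vertices has size $\le|\mcF|\le w<w+2$, the conclusions $\omega(G)=w+2$ and $\tw(G)\ge w+1$ that you actually use still hold.
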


Because it is NP-hard to decide whether a graph has treewidth $\leq \Tw$ (or clique number $\leq \Tcn$), it is trivially hard to guarantee a finite approximation ratio for editing to the class.
By contrast, our hardness-of-approximation results apply when editing graphs that are far from the desired class, which is the most favorable scenario for approximation.

\mypar{Reduction strategy.}
Our proof relies on a strict reduction from \SCfull.
We construct a graph $G$ that encodes an instance $\SC(\mcU,\mcF)$ such that a solution to \bTWV (or \bCNV) on $G$ can be mapped to a solution for $\SC(\mcU,\mcF)$.
Intuitively, $G$ consists of a set of outer cliques that overlap with parts of a single ``central'' clique.
For a given instance $\SC(\mcU,\mcF)$, we map each element in $\mcU$ to an outer clique and each set in $\mcF$ to a vertex in the central clique.
We prove that reducing either treewidth or clique number in $G$ requires deleting vertices in the central clique, and that the deleted vertices correspond to sets in $\mcF$ that form an optimal cover of $\mcU$.
We proceed by introducing the gadgets used in this reduction.

\mypar{Mapping an instance of \SCfull to an instance of vertex deletion problems.}

\begin{definition}\label{def:sc-tw-gadget}
    A \emph{set cover gadget} (\scg) for $\SC(\mcU, \mcF)$ is a graph $G = (V,E)$ that consists of overlapping cliques constructed as follows.
    For each set $S \in \mcF$ there is a corresponding vertex $v_S \in V(G)$.
    We call these vertices \emph{central vertices}, and they are all connected together to form the \emph{central clique}.
    For each element $x \in \mcU$ $G$ contains an \emph{outer clique} consisting of $|\mcF|-f_x$ \emph{dummy vertices} as well as the $f_x$ central vertices corresponding to the sets that contain the element $x$.
    We assume that no element is in every set $S \in \mcF$ and that each set contains at least one element.
\end{definition}
For a visualization of Definition~\ref{def:sc-tw-gadget}, see Figure~\ref{fig:tw-sc-graph}.
This construction gives us a function $f$ that maps an instance $(\mcU,\mcF)$ of \SC to an instance $f(\mcU,\mcF)$ of either \bTWV or \bCNV.
In the instance $f(\mcU,\mcF)$ of $\bTWV$ we set the target treewidth $\Tw = |\mcF|-2$, and for $\bCNV$ we set $\Tcn = |\mcF|-1$. As we will see, this corresponds to setting the target parameter so that solving $\SC$ corresponds to reducing treewdith (clique number) by 1.

\begin{figure}[!ht]
    \centering
    \includegraphics[width=\textwidth]{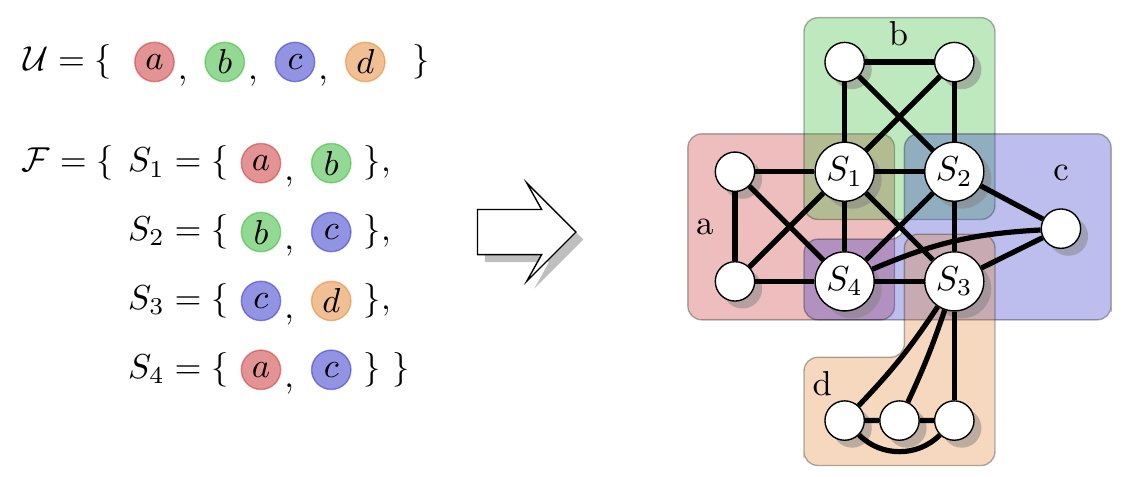}
    \caption{\label{fig:tw-sc-graph}
        (\emph{Left}.) An instance of \SCfull.
        (\emph{Right}.) Our encoding of the \SC instance $(\mcU, \mcF)$ as a graph.
        Each set $S_j \in \mcF$ is mapped to a single vertex in the ``central clique''.
        Each element is mapped to an ``outer clique'' (green, blue, orange, red) that contains each central vertex $v_{S}$ corresponding to a set $S$ that contains $x$ plus
        $|\mcF|-f_x$ ``outer vertices'', where $f_x$ is the number of sets containing $x$.
        The unique optimal solution to the \SC instance above is $\{S_1, S_3\}$.
        Note that the corresponding vertex deletion set consisting of the $S_1$ and $S_3$ central clique vertices is also the unique optimal solution to both 2-\TWV and 3-\CNV.
    }
\end{figure}

\begin{lemma}\label{lemma:set-cover-graph-tw}
    Let $(\mcU, \mcF)$ be an instance of \SCfull
    such that no set $S \in \mcF$ is empty or contains all of $\mcU$,
    and no element $x \in \mcU$ is contained in every set.
    Then the corresponding \texttt{scg} $G$ from Definition~\ref{def:sc-tw-gadget} has $\tw(G) = |\mcF|-1$ and $\omega(G) = |\mcF|$.
\end{lemma}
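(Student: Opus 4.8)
The plan is to prove the two equalities separately, handling $\omega(G)$ first since the treewidth lower bound follows from it. For the clique number, the central clique immediately gives $\omega(G) \geq |\mcF|$. For the matching upper bound, I would use the key structural observation that every dummy vertex introduced for an element $x \in \mcU$ is, by construction, adjacent only to the other $|\mcF|-1$ vertices of $x$'s outer clique. Consequently, any clique $K$ of $G$ containing at least one dummy vertex must lie entirely within a single outer clique (which has $|\mcF|$ vertices), and any clique consisting only of central vertices lies within the central clique (also $|\mcF|$ vertices). Hence $\omega(G) = |\mcF|$, and then, since in any tree decomposition every clique is contained in some bag, $\tw(G) \geq \omega(G) - 1 = |\mcF| - 1$.

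For the treewidth upper bound I would exhibit an explicit tree decomposition of width $|\mcF|-1$. Let the underlying tree $T$ be a star: its center node carries the bag $B_0 = \{v_S : S \in \mcF\}$ (the central clique), and for each element $x \in \mcU$ there is a leaf node carrying the bag $B_x$ equal to the whole outer clique of $x$, i.e.\ its $|\mcF|-f_x$ dummy vertices together with the $f_x$ central vertices $v_S$ with $x \in S$. Every bag has exactly $|\mcF|$ vertices, so the width is $|\mcF|-1$. I would then verify the three axioms of Definition~\ref{definition:treewidth}: (i) each vertex appears in a bag, since central vertices are in $B_0$ and the dummy vertices of $x$ are in $B_x$; (ii) each edge is covered, since every edge of $G$ lies either inside the central clique or inside some outer clique, hence inside $B_0$ or some $B_x$; (iii) for each vertex the bags containing it induce a connected subtree, since a dummy vertex of $x$ occurs only in $B_x$, while a central vertex $v_S$ occurs in $B_0$ and in exactly the bags $B_x$ with $x \in S$ (using that each set is nonempty so $v_S$ at least occurs in $B_0$), and these bags form a connected star around the center of $T$. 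This yields $\tw(G) \leq |\mcF|-1$, and combined with the lower bound, $\tw(G) = |\mcF|-1$.

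I do not expect a genuine obstacle here; the argument is essentially bookkeeping. The two points that require a little care are the adjacency restriction on the dummy vertices (which is exactly what confines every clique to one of the named cliques, and hence drives the $\omega(G)$ computation) and the verification of axiom (iii) for the central vertices; both are immediate from the construction in Definition~\ref{def:sc-tw-gadget}, and the non-triviality hypotheses on $(\mcU,\mcF)$ are used only to keep the gadget from degenerating (e.g.\ ensuring each $|D_x|$-part is nonempty and each $v_S$ appears in $B_0$).
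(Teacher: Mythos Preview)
Your proposal is correct and follows essentially the same approach as the paper: the lower bounds come from the central clique, the $\omega(G)$ upper bound from the fact that each dummy vertex has degree $|\mcF|-1$ and hence confines any clique containing it to a single outer clique, and the $\tw(G)$ upper bound from the same star-shaped tree decomposition with the central clique at the center and one leaf bag per outer clique. Your case analysis for the clique bound is arguably a bit cleaner than the paper's phrasing, but the content is the same.
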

\begin{proof}
    First, note that the central clique in the construction of the set cover gadget $G$ contains exactly $|\mcF|$ vertices,
    and so $\tw(G) \geq |\mcF|-1$ and $\omega(G) \geq |\mcF|$.
    To prove the corresponding upper bound for $\omega(G)$, we show no node is contained in a clique of size greater than $|\mcF|$. By assumption no set contains every element, and so the central clique is maximal---no outer vertex is connected to every vertex in the inner clique.
    On the other hand, all outer vertices have degree exactly $|\mcF|-1$ because they are connected to only vertices in their outer clique, constructed to be size $|\mcF|$.

    To prove $\tw(G) \leq |\mcF|-1$ we describe a tree decomposition $(\mathcal{Y}, T = (I,F) )$ of width $|\mcF|-1$.
    For each outer clique, $C_i$, there is a leaf node in the tree decomposition, $v(B_i) \in I$, such that the bag $B_i \in \mathcal{Y}$ contains all the vertices of $C_i$.
    Additionally, there is one node $v(B) \in I$ such that its bag $B \in \mathcal{Y}$ contains all central vertices of $G$.
    By construction, each outer clique has size $|\mcF|$, as does the central clique, proving that the width of this decomposition is in fact $|\mcF|-1$.
    To see the tree decomposition is valid, first note that, by construction, every vertex and edge of $G$ is contained in at least one bag in $\mathcal{Y}$.
    Furthermore, the tree $T = (I,F)$ is a star graph with the bag $B$ corresponding to the center of the star, $v(B)$.
    For any two pendant bags $B_i, B_j$, $B_i \cap B_j$ consists of central vertices.
    Since all central vertices are in $B$,
    we have that for $v \in G$, the nodes in $T$ containing $v$ form a connected subtree.
\end{proof}

\mypar{Mapping a graph deletion set to a \SCfull solution.}
Once a solution $\yB$ is found in \scg for \bTWV (\bCNV), we want to map $\yB$ to a solution $g(\yB)$ for \SC.
Here, we specify such a function $g$ by first describing how to convert an arbitrary solution to \bTWV (\bCNV) on $G$ to a canonical solution.

\begin{definition}\label{def:map-g-to-sc}
  Given an instance $(\mcU, \mcF)$ of \SC,
  let $G = \scg(\mcU,\mcF)$.
  Given a deletion set $\yB$ in $G$ for \bTWV or \bCNV, we map it to a solution $g(\yB)$ for \SC in two steps:
  \begin{enumerate}
    \itemsep0em
    \item Construct a \emph{canonical edit set} $\ycan$ as follows: for each $v_i \in \yB$ if $v_i$ is in the central clique of $G$, add $v_i$ to $\ycan$;
    otherwise, choose any central vertex $v(S_j)$ in the same outer clique as $v_i$, and add $v(S_j)$ to $\ycan$.
    \item Set $g(\yB) = \{ S_j \in \mcU \mid v(S_j) \in \ycan \}$.
  \end{enumerate}
\end{definition}
Note that such a central vertex $v(S_j)$ in Definition~\ref{def:map-g-to-sc} always exists because every element $x \in \mcU$ is contained in at least one set.
For our strict reduction to work we have to show that the canonical edit set $\ycan$  described above is still valid for \bTWV (\bCNV) and that the set $g(\yB)$ is a valid solution to \SC.
\begin{lemma}\label{lemma:canonical-tw-cn-valid}
    Let $(\mcU, \mcF)$ be an instance of \SCfull, $f(\mcU,\mcF)$ the corresponding instance of \bTWV (\bCNV), $\yB$ any solution to \bTWV (\bCNV) on $f(\mcU,\mcF)$, and $\ycan$ the canonical set described in Definition~\ref{def:map-g-to-sc}.
    Then $\ycan$ is a valid solution to \bTWV (\bCNV) and $|\ycan| \leq |\yB|$.
\end{lemma}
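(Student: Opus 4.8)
The plan is to verify both claims directly from the construction of $\ycan$ in Definition~\ref{def:map-g-to-sc}. The size bound $|\ycan| \leq |\yB|$ is immediate: the construction processes each $v_i \in \yB$ and contributes at most one vertex to $\ycan$ (either $v_i$ itself if it is central, or a single central vertex $v(S_j)$ from $v_i$'s outer clique), so $|\ycan| \leq |\yB|$. Note this may be a strict inequality when two deleted dummy vertices lie in the same outer clique, or when a deleted dummy vertex lies in an outer clique already hit by a central deletion; in either case $\ycan$ is even smaller, which only helps.

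The substantive part is showing $\ycan$ remains feasible. First I would observe that, by Lemma~\ref{lemma:set-cover-graph-tw}, $G = \scg(\mcU,\mcF)$ has $\tw(G) = |\mcF|-1$ (resp.\ $\omega(G) = |\mcF|$), and the only maximal cliques of size $|\mcF|$ are the central clique and the $|\mcU|$ outer cliques. Since the target parameter is set to $\Tw = |\mcF|-2$ (resp.\ $\Tcn = |\mcF|-1$), a deletion set is feasible exactly when its removal destroys every clique of size $|\mcF|$ — i.e., it must hit the central clique and every outer clique. (For the treewidth case one also uses that after deleting one vertex from each of these cliques the graph has a tree decomposition of width $|\mcF|-2$, built as in the proof of Lemma~\ref{lemma:set-cover-graph-tw}: the star decomposition with the reduced central bag and reduced outer bags; since every remaining clique has size at most $|\mcF|-1$ and the intersection pattern is unchanged, the width is $|\mcF|-2$.) The key step is then: whenever $\yB$ hits a clique $C$ (central or outer), so does $\ycan$. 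If $C$ is the central clique and $v_i \in \yB \cap C$, then $v_i$ is central, so $v_i \in \ycan$ by Step~1, and also any replacement central vertex added for a dummy vertex in some outer clique lies in $C$. If $C$ is an outer clique and $v_i \in \yB \cap C$, then either $v_i$ is central (hence $v_i \in \ycan \cap C$) or $v_i$ is a dummy vertex of $C$, in which case Step~1 adds some central vertex $v(S_j)$ belonging to $C$'s outer clique to $\ycan$, so again $\ycan \cap C \neq \emptyset$. Thus $\ycan$ hits every clique that $\yB$ hits; since $\yB$ is feasible it hits the central clique and all outer cliques, so $\ycan$ does too, and $\ycan$ is feasible.

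I do not expect a serious obstacle here; the only point requiring care is making the feasibility characterization precise — namely that feasibility of a vertex set for \bTWV (resp.\ \bCNV) on this particular $G$ with the stated target is \emph{equivalent} to hitting the central clique and all outer cliques. The ``only if'' direction is clear from $\omega(G) = \tw(G)+1 = |\mcF|$ and clique/treewidth monotonicity under vertex deletion; the ``if'' direction for treewidth needs the explicit reduced tree decomposition sketched above, which is a routine modification of the decomposition in Lemma~\ref{lemma:set-cover-graph-tw}. Once that equivalence is in hand, the clique-by-clique argument above closes the proof.
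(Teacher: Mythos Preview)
Your proposal is correct and follows essentially the same approach as the paper: both argue that since every bag of the star-shaped tree decomposition (equivalently, every maximal $|\mcF|$-clique) must be hit by a feasible solution, and since the replacement rule in Definition~\ref{def:map-g-to-sc} always substitutes a central vertex lying in the same outer clique, $\ycan$ hits every bag that $\yB$ hits and therefore remains feasible. Your treatment of the size bound and the ``if'' direction for treewidth (via the reduced star decomposition) is slightly more explicit than the paper's, but the argument is the same.
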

\begin{proof}
    To show that $\ycan$ is a valid edit set, i.e., that $\tw(G[V\setminus \ycan]) < \tw(G)$, we will show that the tree decomposition of $G$ presented in the proof of Lemma~\ref{lemma:set-cover-graph-tw} is a valid tree decomposition of $G[V \setminus \ycan]$ with smaller width.

    Since each bag in the tree decomposition induces a clique in $G$, we know that the initial solution $\yB$ necessarily contains at least one vertex from each bag---otherwise $G[V \setminus \yB]$ would contain a clique of size $|\mcF|$.
    Moreover, we know that for $\ycan$ to be a valid deletion set, it suffices to show that it contains at least one vertex from each bag because then the tree decomposition bags will all have size strictly less than $|\mcF|$.
    For any bag, $B_i$, we know $\yB$ contains at least one vertex from $B_i$, say $v_i$.
    If this vertex is also in the central bag $B$, then $v_i \in \ycan$.
    Otherwise, the construction of $\ycan$ selected a node $\hat{v}_i \in B_i \cap B$ to put in $\ycan$.
    Thus, $\ycan$ contains at least one vertex from each bag, and so is a valid solution for \bTWV.
    Note that this proof also shows that the solution $\ycan$ is a valid solution for \bCNV.

    Finally, to show that $|\ycan| \leq |\yB|$ observe that each element in $\yB$ corresponds to at least one element of $\ycan$.
\end{proof}
\begin{lemma}\label{lemma:canonical-sc-valid}
  In the notation of Lemma~\ref{lemma:canonical-tw-cn-valid}, $g(\ycan)$ is a valid solution to \SC, and $|g(\ycan)| \leq |\ycan|$.
\end{lemma}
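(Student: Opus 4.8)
The plan is to first observe that $g(\ycan)$ has a transparent description. By construction (Definition~\ref{def:map-g-to-sc}), the canonical set $\ycan$ consists only of central vertices $v(S_j)$, so running the canonicalization step of Definition~\ref{def:map-g-to-sc} on $\ycan$ returns $\ycan$ unchanged, and hence $g(\ycan) = \{\, S_j \in \mcF \mid v(S_j) \in \ycan \,\}$. The cardinality bound is then immediate: the assignment $v(S_j)\mapsto S_j$ restricted to $\ycan$ is injective (distinct central vertices correspond to distinct sets), so $|g(\ycan)| = |\{\,S_j : v(S_j)\in\ycan\,\}| \le |\ycan|$.

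For feasibility of $g(\ycan)$ as a set cover, I would combine the structure of $\scg$ with the fact that $\ycan$ is a valid deletion set (Lemma~\ref{lemma:canonical-tw-cn-valid}). Recall that in $f(\mcU,\mcF)$ we take target $\Tw=|\mcF|-2$ for \bTWV and $\Tcn=|\mcF|-1$ for \bCNV, whereas Lemma~\ref{lemma:set-cover-graph-tw} gives $\tw(G)=|\mcF|-1$ and $\omega(G)=|\mcF|$. Thus any feasible $\ycan$ satisfies $\tw(G[V\setminus\ycan])\le|\mcF|-2$ (resp.\ $\omega(G[V\setminus\ycan])\le|\mcF|-1$), so $G[V\setminus\ycan]$ contains no clique on $|\mcF|$ vertices; equivalently, $\ycan$ meets every $|\mcF|$-clique of $G$. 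Now fix an arbitrary $x\in\mcU$ and consider its outer clique $C_x$, which by Definition~\ref{def:sc-tw-gadget} is a clique on exactly $|\mcF|$ vertices, namely $|\mcF|-f_x$ dummy vertices together with the $f_x$ central vertices $v(S)$ with $x\in S$. By the preceding observation $\ycan\cap C_x\neq\emptyset$, and since $\ycan$ contains only central vertices, this intersection contains some $v(S)$ with $v(S)\in C_x$, which by the construction of $C_x$ forces $x\in S$. Hence $S\in g(\ycan)$ and $S$ covers $x$; as $x$ was arbitrary, $g(\ycan)$ covers all of $\mcU$.

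I do not anticipate any real obstacle; the only point requiring care is the bookkeeping that the target parameter is chosen exactly one below the parameter of $G$, which is precisely what forces every valid deletion set — in particular the canonical one — to puncture each $|\mcF|$-vertex outer clique, and those punctures are exactly the sets needed to cover $\mcU$. (In fact the same argument shows $|g(\ycan)| = |\ycan|$, but the inequality suffices for the strict reduction.)
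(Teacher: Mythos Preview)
Your proposal is correct and follows essentially the same approach as the paper: both argue that any valid deletion set must hit every outer clique (since each has $|\mcF|$ vertices), and since $\ycan$ consists only of central vertices, the vertex hit in the outer clique for $x$ corresponds to a set containing $x$. Your presentation is slightly more explicit about why $g(\ycan)$ simplifies and why the cardinality bound holds, but the underlying argument is identical.
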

\begin{proof}
    Given an instance $(\mcU, \mcF)$ of \SC, for any $z \in \mcU$ we will show that there is some set $S_j \in g(\yB)$ such that $z \in S_j$.
    By the construction in Definition~\ref{def:sc-tw-gadget}, there is an outer clique $B_z$ in the graph $G$ that corresponds to the element $z$.
    Any solution $\yB$ to \bTWV (\bCNV) necessarily must contain at least one vertex $v_i$ in the outer clique $B_z$, or else $G[V \setminus \yB]$ would still contain a clique of size $|\mcF|$.
    From the construction of the set $\ycan$, we know that $\ycan$ contains a vertex in $B_z$ that is also a central vertex, call it $v(S_j)$.
    Hence, by the set cover gadget construction, the set $S_j$ must contain the element $z$.
    Then $g(\yB)$ contains the set $S_j$, and so $g(\yB)$ is a valid set cover.
    Finally, since $g(\yB)$ has no more than one set for each vertex in $\ycan$, we have $|g(\yB)| \leq |\ycan|$.
\end{proof}

\mypar{Strict reduction from \SCfull.}\label{sec:tw-l-reduction}
To finish our proof of a strict reduction from \SC to \bTWV (\bCNV), we want to show that the functions $f$ and $g$ introduced earlier satisfy the sufficient conditions given in Section~\ref{sec:prelims-reductions}.
From Lemma~\ref{lemma:canonical-tw-cn-valid} and Lemma~\ref{lemma:canonical-sc-valid} we have the following.
\begin{corollary}\label{lemma:sc-tw-costs}
    In the notation of Lemma~\ref{lemma:canonical-tw-cn-valid},
    $\cost_{SC}(g(\yB)) \leq \cost_{TW,CN}(\yB)$.
\end{corollary}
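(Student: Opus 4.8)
The plan is to derive the inequality by chaining the two cardinality bounds already established for the canonicalization procedure, together with the observation that all three problems involved (\bTWV, \bCNV, and \SC) are unweighted here, so that for any feasible solution the cost coincides with the cardinality of the underlying set; throughout I would replace $\cost_{TW,CN}(\yB)$ by $|\yB|$ and $\cost_{SC}(g(\yB))$ by $|g(\yB)|$.

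First I would record the structural fact that the map $g$ of Definition~\ref{def:map-g-to-sc} depends on its input deletion set only through the canonical form: $g$ first produces the canonical edit set $\ycan$ and then outputs $\{S_j \in \mcF \mid v(S_j) \in \ycan\}$. Since $\ycan$ consists entirely of central vertices, the canonicalization of $\ycan$ is $\ycan$ itself, so $g(\ycan)$ and $g(\yB)$ denote the same set cover and in particular $|g(\yB)| = |g(\ycan)|$. Then I would simply compose the two lemmas: Lemma~\ref{lemma:canonical-sc-valid} gives $|g(\ycan)| \leq |\ycan|$ and Lemma~\ref{lemma:canonical-tw-cn-valid} gives $|\ycan| \leq |\yB|$, whence
\[
\cost_{SC}(g(\yB)) \;=\; |g(\yB)| \;=\; |g(\ycan)| \;\leq\; |\ycan| \;\leq\; |\yB| \;=\; \cost_{TW,CN}(\yB),
\]
which is exactly the claim.

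There is no genuine obstacle in this step — it is pure bookkeeping — and the only points requiring care are (i) invoking the unweighted-cost convention so that costs may be exchanged for cardinalities, and (ii) making explicit the identification $g(\yB) = g(\ycan)$, which is precisely what lets the two earlier lemmas be composed cleanly. I would additionally remark that this inequality, paired with the reverse estimate $\opt_{SC}(\mcU,\mcF) \geq \opt_{TW,CN}(f(\mcU,\mcF))$ — obtained by checking that the deletion set $\{v_S : S \in \mathcal{T}\}$ corresponding to a cover $\mathcal{T}$ strictly lowers the width (clique number) of the $\scg$, using the star-shaped tree decomposition of Lemma~\ref{lemma:set-cover-graph-tw} — yields $\opt_{SC} = \opt_{TW,CN}$ and thereby completes the strict reduction underlying Theorem~\ref{thm:treewidth-hardness}; but establishing that equality lies outside the scope of the present corollary.
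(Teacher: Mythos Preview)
Your proposal is correct and follows exactly the paper's approach: the corollary is stated immediately after Lemmas~\ref{lemma:canonical-tw-cn-valid} and~\ref{lemma:canonical-sc-valid} with the remark that it follows from them, and your chain $|g(\yB)| = |g(\ycan)| \leq |\ycan| \leq |\yB|$ is precisely the intended composition of those two inequalities together with the unweighted-cost convention. Your observation that $g(\yB) = g(\ycan)$ (because $g$ factors through canonicalization) cleanly resolves the small notational mismatch between the lemma statements, and your closing remark about the reverse optimum inequality correctly anticipates Lemma~\ref{lemma:sc-tw-sizes}.
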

\begin{lemma}\label{lemma:sc-tw-sizes}
    In the notation of Lemma~\ref{lemma:canonical-tw-cn-valid},
    $\opt_{TW,CN}(f(\mcU,\mcF)) = \opt_{SC}(\mcU,\mcF)$.
\end{lemma}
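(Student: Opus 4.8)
The plan is to prove the two inequalities $\opt_{TW,CN}(f(\mcU,\mcF)) \leq \opt_{SC}(\mcU,\mcF)$ and $\opt_{SC}(\mcU,\mcF) \leq \opt_{TW,CN}(f(\mcU,\mcF))$ separately. The second of these is essentially already in hand: letting $\yB$ be an optimal deletion set for \bTWV (resp.\ \bCNV) on $G = \scg(\mcU,\mcF)$, Lemma~\ref{lemma:canonical-sc-valid} says $g(\yB)$ is a feasible set cover, and Corollary~\ref{lemma:sc-tw-costs} gives $\cost_{SC}(g(\yB)) \leq \cost_{TW,CN}(\yB) = \opt_{TW,CN}(f(\mcU,\mcF))$, so $\opt_{SC}(\mcU,\mcF) \leq \opt_{TW,CN}(f(\mcU,\mcF))$.

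For the reverse inequality I would exhibit, from an optimal set cover $\mathcal{T} \subseteq \mcF$, a feasible deletion set of the same size. Let $X = \{v_S : S \in \mathcal{T}\}$ be the central vertices corresponding to the chosen sets. The claim is that $X$ is simultaneously feasible for \bTWV with target $\Tw = |\mcF|-2$ and for \bCNV with target $\Tcn = |\mcF|-1$. To see this I would reuse the star-shaped tree decomposition $(\mathcal{Y}, T)$ of $G$ constructed in the proof of Lemma~\ref{lemma:set-cover-graph-tw}: one central bag $B$ containing all $|\mcF|$ central vertices, and one pendant bag $B_x$ for each $x \in \mcU$ containing the size-$|\mcF|$ outer clique of $x$. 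Restricting every bag to $V \setminus X$ still yields a valid tree decomposition of $G[V\setminus X]$. The central bag shrinks to $|\mcF| - |\mathcal{T}| \leq |\mcF|-1$ vertices; and for each $x$, since $\mathcal{T}$ covers $x$ there is some $S \in \mathcal{T}$ with $x \in S$, hence $v_S \in B_x \cap X$ and $|B_x \setminus X| \leq |\mcF|-1$. Thus the restricted decomposition has width at most $|\mcF|-2 = \Tw$, so $\tw(G[V\setminus X]) \leq \Tw$. Since every clique of a graph lies inside some bag of any of its tree decompositions, every clique of $G[V\setminus X]$ has at most $|\mcF|-1 = \Tcn$ vertices, so $\omega(G[V\setminus X]) \leq \Tcn$. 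Hence $X$ is feasible for both problems and $\opt_{TW,CN}(f(\mcU,\mcF)) \leq |X| = |\mathcal{T}| = \opt_{SC}(\mcU,\mcF)$. Combining the two inequalities yields the claimed equality.

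The only mild subtlety, and the part I would be most careful about, is confirming feasibility of $X$ for the clique-number variant, i.e.\ that no clique of size $|\mcF|$ survives the deletion; this follows cleanly once one invokes the standard fact that in any tree decomposition every clique of the graph is contained in a single bag (a consequence of the Helly property for subtrees of a tree), so no extra combinatorial argument about the \scg structure is needed. Everything else is bookkeeping: the bag sizes in the explicit decomposition, and the observation that a set cover hits every outer clique precisely because each element lies in a chosen set.
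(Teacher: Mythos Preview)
Your proof is correct and follows essentially the same two-inequality approach as the paper. The paper's proof is terser: for the direction $\opt_{TW,CN} \leq \opt_{SC}$ it simply asserts that the set of central vertices corresponding to a set cover is a valid deletion set, whereas you spell out the justification via the restricted star-shaped tree decomposition and the Helly property for cliques; both arguments are fine and amount to the same thing.
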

\begin{proof}
    Given a set cover gadget $G$ corresponding to instance $(\mcU,\mcF)$ of \SC, let $\yopt$ be any optimal solution on $G$ to \bTWV (\bCNV).
    Then by Lemma~\ref{lemma:canonical-tw-cn-valid},
    $\opt_{TW,CN}(f(\mcU,\mcF)) \geq |\yopt| \geq |\ycan| = |g(\ycan)| \geq \opt_{SC}(\mcU,\mcF))$.

    On the other hand, given any solution $\yA$ to $\SC(\mcU,\mcF)$, the set $\ycan = \{ v(S_j) \mid S_j \in \yA \}$ is a valid solution for \bTWV (\bCNV), and so we have $\opt_{TW,CN}(f(\mcU,\mcF)) \leq \opt_{SC}(\mcU,\mcF))$.
\end{proof}

\begin{proof}[Proof of Theorem~\ref{thm:treewidth-hardness}]
    The graph \scg consists of $|\mcF|$ central vertices, along with $\sum_{x \in \mcU} (|\mcF| - f_x)$ outer vertices.
    Hence,
    \[
      |V(G)| \leq |\mcF| + |\mcF||\mcU| - \sum_{x \in \mcU} f_x \leq |\mcF| + |\mcF||\mcU| - |\mcU|.
    \]
    By Theorem~\ref{lem:set-cover-general}, we may assume $|\mcF| \leq |\mcU|^s$ for some constant $s \in \mathbb{N}^+$.
    Thus, we have $n \in \Theta( |\mcU|^{s+1})$, and so the maps $f$ and $g$ in our reduction are polynomial time in the size of $\SC(\mcU,\mcF)$.
    Finally, the inequality in Corollary~\ref{lemma:sc-tw-costs} and equation in Lemma~\ref{lemma:sc-tw-sizes} provide the sufficient condition for a strict reduction.

    As noted above, in our reduction we have $n \in \Theta( |\mcU|^{s+1})$ for some constant $s>0$, and so $\log n \in \Theta( \ln |\mcU|)$.
    Hence, an $o(\log n)$-approximation for $\bTWV$ would give an $o(\ln |\mcU|)$-approximation for \SC.
    Finally, we note that in our reduction, $\Tw = |\mcF|-2$, which is $\Tw \in \Theta(|\mcU|^s) = \Theta( n^{s/(s+1)} )$, so we can only conclude inapproximability of $\bTWV$ on instances where $\Tw \in \Omega( n^{\delta})$ for some constant $\delta \geq 1/2$ (since  $s \geq 1$).
\end{proof}

\fi

\ifdefined\isBDVhard
    \subsection{Bounded Degree: $(\ln \Td - C\cdot\ln \ln \Td)$-Inapproximability of Vertex Editing}
    \label{section:negative_bounded_degree}
    \begin{tikzpicture}
% Styles
\tikzstyle{node_style} = [state, fill=white, drop shadow, minimum width=1.2cm]
\tikzstyle{outside_node_style} = [state, fill=white, drop shadow, minimum width=1cm, dashed]

\tikzstyle{edge_style} = [line width=2pt]
\tikzstyle{path_style} = [line width=2pt, decorate, decoration={snake}]
\tikzstyle{arrow_style} = [draw, fill=white, single arrow, single arrow head indent=1ex, minimum size=1cm, drop shadow]

% Nodes
\node (ve) [node_style] {$v_e$};
\node (s1) [outside_node_style, below left = 2cm and 2cm of ve] {$S_{j(1)}$};
\node (s2) [outside_node_style, right = 1cm of s1] {$S_{j(2)}$};
\node (se) [outside_node_style, right = 2cm of s2] {$S_{j(f_e)}$};
\node (x1) [node_style, right = 1cm of se] {$x_{e,1}$};
\node (x2) [node_style, right = 1cm of x1] {$x_{e,2}$};
\node (x_last) [node_style, right = 2cm of x2, label={[label distance=0.15cm]93:$\ell = d+1-f_e$}] {$x_{e,\ell}$};

label={[label distance=1cm]30:label}

% Ellipses
\node (s_ellipses) [right = 0.65cm of s2] {$\cdots$};
\node (x_ellipses) [right = 0.65cm of x2] {$\cdots$};

% Edges
\draw (ve) edge[edge_style] (s1);
\draw (ve) edge[edge_style] (s2);
\draw (ve) edge[edge_style] (se);

\draw (ve) edge[edge_style] (x1);
\draw (ve) edge[edge_style] (x2);
\draw (ve) edge[edge_style] (x_last);

\node (boxall) [fit=(s1)(ve)(x_last), draw=black, line width=2pt, inner sep= 10pt, dashed] {};

\node (gadget_label) [above left = -0.25cm and 1cm of ve] {\large gadget: \LARGE $\texttt{T}_e$};

\end{tikzpicture}
\fi

\ifdefined\isstarforesthard
    \subsection{Treedepth 2: $(2-\epsilon)$-Inapproximability of Vertex Editing }
    \label{section:negative_starforest_v}
    In this section, we prove the following hardness result on \SFVfull.
\begin{theorem}\label{thm:star-forest-inapprox}
\SFV is $(2-\epsilon)$-inapproximable assuming the Unique Games Conjecture.
\end{theorem}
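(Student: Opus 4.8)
The plan is to give a strict reduction (Definition~\ref{def:strict-reduction}) from \VCfull to \SFV; since a strict reduction is an L-reduction, this transfers the $(2-\eps)$-inapproximability of \VC under UGC (Theorem~\ref{lem:vertex-cover}) to \SFV. Recall from Section~\ref{section:positive_starforest} that a graph is a star forest if and only if it contains neither $P_4$ nor $C_3$ as a subgraph, so any feasible \SFV solution must in particular be a hitting set for all triangles.

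The reduction $f$ I would use is the ``triangle-per-edge'' gadget: given a \VC instance $G=(V,E)$, let $G'=f(G)$ have vertex set $V \cup \{w_e : e \in E\}$ and, for each $e=\{u,v\}\in E$, the edges $\{u,v\}$ (kept), $\{u,w_e\}$, and $\{v,w_e\}$, so that $e$ is replaced by a triangle with a fresh apex $w_e$. This is clearly polynomial time. First I would show $\optsol{\SFV}{G'}\le\optsol{\VC}{G}$: if $X$ is a vertex cover of $G$, then $G'[V(G')\setminus X]$ is a star forest. Indeed no original edge survives (the defining property of a vertex cover), so the surviving original vertices form an independent set; each apex $w_e$ with $e=\{u,v\}$ retains as a neighbor only whichever of $u,v$ is \emph{not} in $X$ (at most one vertex, since $e$ is covered); and every triangle lost an endpoint. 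Hence the surviving graph is a disjoint union of stars (a surviving original vertex $z$ centering a star whose leaves are apexes $w_e$ of edges $e$ at $z$ with the other endpoint in $X$) plus isolated vertices — no $P_4$, no $C_3$.

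For the reverse direction I would build a size-nonincreasing map $g$. Given any feasible \SFV solution $Y'\subseteq V(G')$, set $g(Y') = (Y'\cap V)\cup\{\phi(e) : e\in E,\ Y'\cap e=\emptyset\}$ where $\phi(e)$ is, say, the lexicographically smaller endpoint of $e$. For each $e=\{u,v\}$ with $Y'\cap e=\emptyset$, the triangle $\{u,v,w_e\}$ must be hit (no $C_3$ in a star forest), forcing $w_e\in Y'$; thus $e\mapsto w_e$ injects $\{e: Y'\cap e=\emptyset\}$ into $Y'\cap\{w_e:e\in E\}$, so $|g(Y')|\le |Y'\cap V| + |Y'\cap\{w_e:e\in E\}|\le |Y'|$. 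Moreover $g(Y')$ is a vertex cover of $G$: every edge $e$ either has an endpoint in $Y'\cap V\subseteq g(Y')$ or contributes $\phi(e)$. Hence $\optsol{\VC}{G}\le|g(Y')|\le|Y'|$ for all feasible $Y'$, giving $\optsol{\VC}{G}\le\optsol{\SFV}{G'}$; combined with the first direction, $\optsol{\SFV}{G'}=\optsol{\VC}{G}$ and $\cost_{\VC}(g(Y'))\le\cost_{\SFV}(Y')$, which is exactly the criterion noted after Definition~\ref{def:strict-reduction}. Theorem~\ref{lem:vertex-cover} then yields Theorem~\ref{thm:star-forest-inapprox}.

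The only step needing genuine care is verifying that deleting a vertex cover leaves a star forest — in particular ruling out a surviving $P_4$ that mixes a surviving original vertex with surviving apexes. The key observation is that a surviving original vertex $z$ keeps none of its original neighbors (all are in the cover) and each surviving apex adjacent to $z$ has $z$ as its unique surviving neighbor, so no length-$3$ path can form; everything else is routine bookkeeping. A secondary point to state carefully is that $f$ and $g$ do not need $G$ to be connected or triangle-free (apex isolates and preexisting triangles in $G$ are handled automatically by the same argument).
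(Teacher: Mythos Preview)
Your proof is correct and constitutes a strict reduction from \VC to \SFV, but it uses a genuinely different gadget than the paper. The paper's reduction (Definition~\ref{def:vc-sf-gadget}) \emph{subdivides} each edge of $G$ and attaches $2n+1$ pendant vertices to every original vertex; the resulting graph is triangle-free, so the active obstruction is $P_4$, and the pendants are needed to argue (Lemma~\ref{lem:opt-equiv}) that an optimal \SFV solution can be assumed to delete only original vertices. Your triangle-per-edge gadget instead keeps the edge and adds an apex, so each edge of $G$ becomes a $C_3$ in $G'$; the $C_3$ obstruction then does all the work in the reverse direction, and the forward direction is immediate since deleting a vertex cover kills every original edge and leaves each apex with degree at most one. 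Your construction is smaller ($|V|+|E|$ vertices versus $\Theta(n^2)$) and avoids the pendant machinery entirely; the paper's construction, on the other hand, produces a bipartite instance, which shows the slightly stronger statement that \SFV remains $(2-\eps)$-inapproximable even on triangle-free graphs.
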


\mypar{Reduction strategy.}
Given an instance of \VC on a graph $G$ we modify $G$ to create an instance of \SFV $f(G)$.
Then we exhibit a mapping between solutions of the two problems that we show preserves solution sizes, from which a strict reduction follows.
We proceed by introducing the aforementioned mapping.

\mypar{Mapping an instance of \VCfull to an instance of \SFVfull.}
\begin{definition}\label{def:vc-sf-gadget}
		Given an instance $G = (V,E)$ of \VCfull, we construct a \emph{SF graph gadget}
		$G' = (V',E')$ by subdividing the edges of $G$ once.
		We call the vertices in $G'$ created by the subdivision of edges in $G$ the \emph{subdivision vertices}, denoted $S \subset V'$,
		and refer to members of $V$ as \emph{inherited vertices} of $V'$.
		Let $n = |V|$.
		To each inherited vertex in $V'$, we attach $(2n + 1)$ new, pendant vertices in $G'$, which we call the \emph{auxiliary vertices} and denote $A \subset V'$.
\end{definition}

This defines a mapping $f(G) = G' = (V', E')$ (see Figure~\ref{fig:sf-reduction}) where $|V'| = O(|V|^2)$ and $|E'| = O(|E| + |V|^2)$.

\begin{figure}[!ht]
	\centering
	\includegraphics[width=\linewidth]{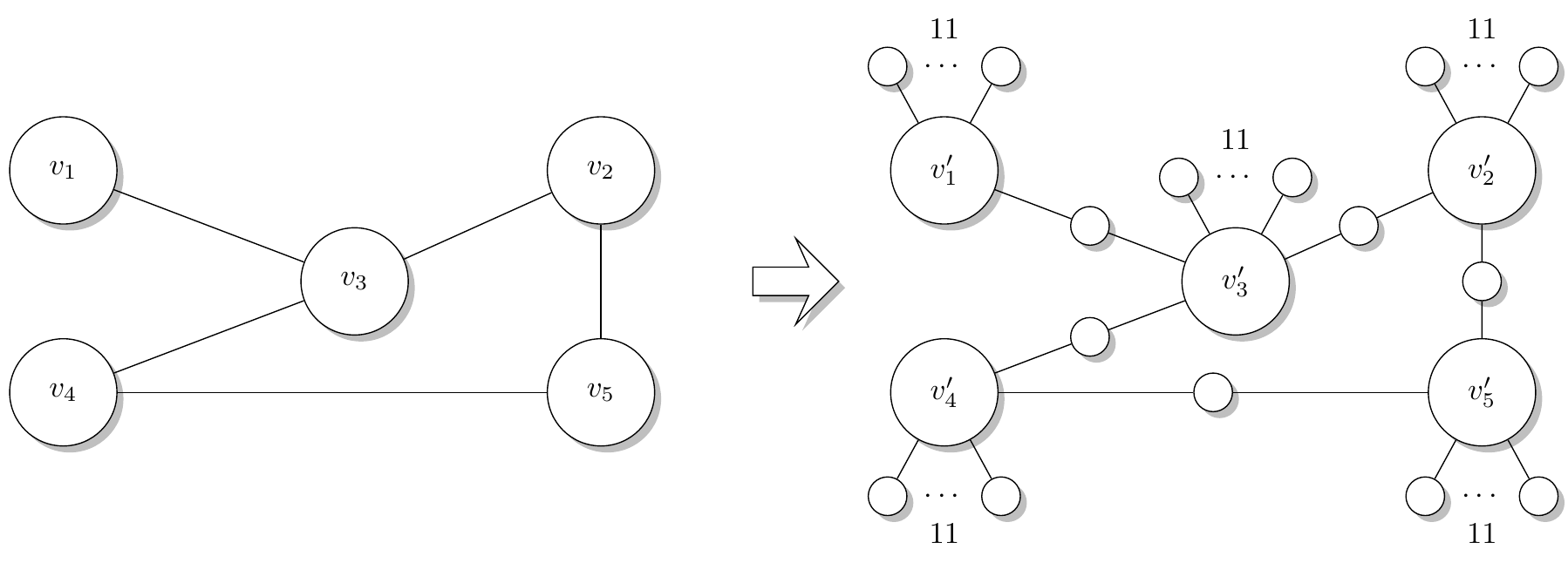}
	\caption{Given a graph $G=(V,E)$ as an instance of \VCfull (\emph{left}), we create a corresponding instance of \SFVfull, the graph $G'=(V', E')$ (\emph{right}), by subdividing every edge in $E$ and attaching $2|V|+1$ pendants to each vertex in $V$.
	}\label{fig:sf-reduction}
\end{figure}

\begin{lemma}\label{lemma:vc-to-sfv-sol}
	Given an instance $G$ of \VC, let $f(G)$ be the SF graph gadget (Definition~\ref{def:vc-sf-gadget}).
	Let $\yA$ be any feasible solution to $\VC(G)$.
	Then $\yA$ is a feasible solution to $\SFV(f(G))$.
\end{lemma}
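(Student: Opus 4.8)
The plan is to invoke the forbidden-subgraph characterization established earlier in Section~\ref{section:positive_starforest}: a graph is a star forest if and only if it contains neither $P_4$ nor $C_3$ as a subgraph. Thus it suffices to show that the induced subgraph $G'[V' \setminus \yA]$ (where $G' = f(G)$) contains no copy of $P_4$ and no copy of $C_3$, given that $\yA \subseteq V$ is a vertex cover of $G$.

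First I would observe that $G'$ is bipartite, with the two sides being the inherited vertices $V$ on one side and the subdivision vertices $S$ together with the auxiliary vertices $A$ on the other: every edge of $G'$ joins an inherited vertex either to a subdivision vertex (an endpoint of the corresponding subdivided edge) or to one of its pendant auxiliary vertices, and there are no edges within $V$ nor within $S \cup A$. Since bipartite graphs have no odd cycles, $G'$ — and hence any induced subgraph, in particular $G'[V' \setminus \yA]$ — contains no $C_3$. This disposes of the triangle condition essentially for free.

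For the $P_4$ condition, I would argue by contradiction: suppose $v_1 v_2 v_3 v_4$ is a path in $G'[V' \setminus \yA]$. By bipartiteness the vertices alternate between $V$ and $S \cup A$, so one of the internal vertices $v_2, v_3$ lies in $S \cup A$ and has two distinct neighbors in $V$, neither of which lies in $\yA$. An auxiliary vertex has degree exactly one in $G'$, so this internal vertex must be a subdivision vertex $s_e$ for some edge $e = (u,w) \in E(G)$, and its two neighbors are exactly $u$ and $w$. But then $u, w \in V \setminus \yA$, so the edge $e$ of $G$ has no endpoint in $\yA$, contradicting that $\yA$ is a vertex cover. (Equivalently, one can phrase this constructively: deleting $\yA$ leaves every subdivision vertex with at most one surviving neighbor and every auxiliary vertex with at most one surviving neighbor, so each connected component of $G'[V'\setminus\yA]$ either is a single vertex from $S\cup A$, or is centered at a surviving inherited vertex all of whose neighbors are leaves — a star.) Hence $G'[V' \setminus \yA]$ has no $P_4$, so it is a star forest, and $\yA$ is a feasible solution to $\SFV(f(G))$.

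There is no serious obstacle here; the only point requiring care is to remember that the edge-subdivision is exactly what prevents two surviving inherited vertices from being adjacent or linked through a degree-two connector, so that the vertex-cover property of $\yA$ translates directly into the absence of $P_4$, and to note that trivial components (isolated vertices, and single edges from an inherited vertex to one leaf) still count as stars under the definition used in the paper.
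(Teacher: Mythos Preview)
Your proposal is correct. The paper's own proof is precisely your parenthetical ``constructive'' phrasing: it argues directly that after deleting $\yA$ every subdivision vertex has at most one surviving neighbor (hence is a leaf or an isolated center) and every auxiliary vertex is a leaf of its inherited vertex or an isolated center, so the result is a star forest. Your primary argument via the $P_4$/$C_3$ characterization and bipartiteness is a clean alternative the paper does not spell out here, but since you also give the constructive version, the two proofs coincide.
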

\begin{proof}
	We first prove that, given a solution $\yA$ to $\VC(G)$, the vertex set $\yB = \yA$ is a solution for instance $\SFV(f(G))$:
	that is, $f(G) \setminus \yB$ is a star forest.

	Using the notation of Definition~\ref{def:vc-sf-gadget},
	for each vertex $w \in S$, since $\yA$ is a vertex cover of $G$, deleting $\yA$ from $g'$ means at least one of the adjacent vertices to $w$ is removed from $G'$.
	Thus, $w$ becomes a leaf of the star graph with center given by the other vertex adjacent to $w$ (or, if both adjacent vertices are removed, $w$ becomes a center).
	For all vertices that are removed, their corresponding neighbors in $A$ become centers.
	For all vertices that are not removed, the corresponding vertices in $A$ that are attached to them are leaves. Thus, $\yB = \yA \subset V \subset V'$ is a feasible solution to $\SFV(G')$.
\end{proof}
Since this applies to all solutions of $\VC(G)$, in particular it applies to optimal solutions.
\begin{corollary}\label{cor:vc-to-sfv-sol}
	In the notation of Lemma~\ref{lemma:vc-to-sfv-sol}, $\opt_{\VC}(G) \geq \opt_{\SFV}(f(G))$.
\end{corollary}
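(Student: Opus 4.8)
\textbf{Proof proposal for Corollary~\ref{cor:vc-to-sfv-sol}.}
The plan is to simply instantiate Lemma~\ref{lemma:vc-to-sfv-sol} at an optimal solution. Concretely, I would let $\yA^\star \subseteq V$ be a minimum vertex cover of $G$, so that $|\yA^\star| = \opt_{\VC}(G)$. By Lemma~\ref{lemma:vc-to-sfv-sol}, the identical vertex set $\yB = \yA^\star$, viewed inside $V' \supseteq V$, is a feasible solution to $\SFV(f(G))$: deleting $\yA^\star$ from $f(G)$ leaves every subdivision vertex with at most one surviving neighbour among the inherited vertices, and each auxiliary pendant is attached to a single inherited vertex, so every connected component of $f(G) \setminus \yB$ is a star. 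Hence $f(G) \setminus \yB$ is a star forest.

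Since $\opt_{\SFV}(f(G))$ is by definition the size of a \emph{minimum} feasible deletion set for $\SFV(f(G))$, and $\yB = \yA^\star$ is \emph{some} feasible deletion set, we get $\opt_{\SFV}(f(G)) \leq |\yB| = |\yA^\star| = \opt_{\VC}(G)$, which is exactly the claimed inequality. There is essentially no obstacle here: all of the combinatorial content (that a vertex cover of $G$ becomes a star-forest deletion set of $f(G)$) has already been discharged in Lemma~\ref{lemma:vc-to-sfv-sol}, and the corollary is just the specialization of that statement to optimal solutions together with the minimality in the definition of $\opt_{\SFV}$.

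I would then note (for the surrounding argument, though not needed for this statement) that this is one of the two directions required to set up the strict reduction from \VC: the companion direction, showing $\opt_{\VC}(G) \leq \opt_{\SFV}(f(G))$ via a canonical-solution argument that pushes deletions off of the $(2n+1)$ auxiliary pendants and off of the subdivision vertices onto inherited vertices, is where the bulk of the remaining work lies, but it is orthogonal to the easy inequality proved here.
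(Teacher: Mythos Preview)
Your proposal is correct and mirrors the paper's own argument exactly: the paper simply notes that Lemma~\ref{lemma:vc-to-sfv-sol} applies to all solutions of $\VC(G)$, so in particular it applies to an optimal one, yielding $\opt_{\VC}(G) \geq \opt_{\SFV}(f(G))$.
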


\mypar{Mapping a graph deletion set to a \VCfull instance.}

\begin{lemma}\label{lem:opt-equiv}
		Given an instance $G$ of \VC, let $G' = f(G)$ from Definition~\ref{def:vc-sf-gadget}.
		Given an optimal solution $\yopt$ to $\SFV(G')$, there exists a solution $g(\yopt)$ to $\VC(G)$ such that $|g(\yopt)| = |\yopt|$.
\end{lemma}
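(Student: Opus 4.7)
My approach is to construct $g(\yopt)$ by canonicalizing $\yopt$ into a set of inherited vertices of the same size, which will directly give a vertex cover of $G$. The plan rests on two structural observations about $G' = f(G)$. First, every inherited vertex $v \in V$ carries $2n+1$ auxiliary pendants, so if $v \notin \yopt$ then in $G' \setminus \yopt$ the vertex $v$ has degree at least $2n+1$; in a star forest this forces $v$ to be the center of its star, and hence every surviving neighbor of $v$ (in particular every surviving subdivision vertex incident to $v$) must be a leaf. Second, since each subdivision vertex $w_{uv}$ has exactly two neighbors $u$ and $v$, if $w_{uv} \notin \yopt$ then at least one of $u, v$ must be in $\yopt$ --- otherwise $u$ and $v$ would simultaneously need $w_{uv}$ as a leaf of their respective stars, a contradiction. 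Consequently, for every edge $(u,v) \in E$ at least one of $\{u, v, w_{uv}\}$ belongs to $\yopt$.

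With these in hand, the canonicalization proceeds in two steps. Auxiliary vertices are leaves of $G'$, and deleting a leaf never affects the star forest property, so by optimality I can assume $\yopt \cap A = \emptyset$. Next, I process each subdivision vertex $w_{uv} \in \yopt$ in turn, replacing it with (say) $u$. Optimality of $\yopt$ ensures that neither $u$ nor $v$ already lies in $\yopt$ (otherwise $w_{uv}$ would be redundant and could be removed, contradicting optimality), so each swap leaves the cardinality unchanged. The resulting set $\ycan$ consists entirely of inherited vertices, and I define $g(\yopt) := \ycan$ viewed as a subset of $V(G)$. The structural identity above then immediately shows $g(\yopt)$ covers every edge of $G$, while $|g(\yopt)| = |\ycan| = |\yopt|$.

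The main obstacle is verifying that each swap of $w_{uv}$ for $u$ preserves feasibility for $\SFV(G')$. I will argue that after the swap: the auxiliary pendants of $u$ become isolated (trivial stars); the edge between $u$ and $w_{uv}$ disappears, so $w_{uv}$ is now a leaf adjacent only to $v$, which remains a valid star center; and for every other edge $(u, y) \in E$ incident to $u$, feasibility of the original $\yopt$ already placed some member of $\{y, w_{u,y}\}$ into $\yopt$, which is untouched by the swap. The only delicate check is that $v$'s other incident subdivision vertices $w_{v,y}$ still appear as leaves of $v$'s star --- this follows because $v \notin \yopt$ originally forced $w_{v,y} \in \yopt$ or $y \in \yopt$, and neither is removed by the swap. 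Iterating the swaps over all subdivision vertices in $\yopt$ then completes the argument, and combined with Corollary~\ref{cor:vc-to-sfv-sol} this will also yield $\opt_{\VC}(G) = \opt_{\SFV}(f(G))$.
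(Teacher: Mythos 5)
Your proposal follows the same high-level structure as the paper's proof (canonicalize $\yopt$ into a set of inherited vertices, then read off a vertex cover), and your handling of the subdivision-vertex case---directly swapping $w_{uv}$ for its endpoint $u$---is arguably cleaner than the paper's somewhat convoluted case analysis, which argues that some optimal solution has $\yopt \cap S = \emptyset$ outright by a counting contradiction. However, there are three concrete gaps.

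First, the claim that $v \notin \yopt$ implies $\deg_{G'\setminus\yopt}(v) \geq 2n+1$ is not correct as stated: pendants of $v$ can belong to $\yopt$. What holds is $\deg_{G'\setminus\yopt}(v) \geq (2n+1) - |\yopt| \geq n+1$, and this requires the bound $|\yopt| \leq \opt_{\SFV}(G') \leq n$ (deleting all $n$ inherited vertices yields a star forest). You never state this bound, yet it is what makes surviving inherited vertices centers, which is the load-bearing fact in both of your structural observations; the paper invokes it explicitly.

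Second, the claim that ``deleting a leaf never affects the star forest property'' is false as a general principle. Adding a pendant $a$ of $u$ back into $G' \setminus \yopt$ breaks the star forest whenever $u$ is itself a leaf of another star (you would get a path of length $3$). The step $\yopt \mapsto \yopt \setminus \{a\}$ is safe here only because the pendant count guarantees $u$ is a center (or deleted), so attaching $a$ as another leaf of $u$'s star, or as an isolated vertex, is harmless. You should replace the false general principle with this concrete argument. (The paper sidesteps this entirely by swapping $a$ for $u$, which makes $a$ trivially isolated and never touches $u$'s neighborhood.)

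Third, ``each swap leaves the cardinality unchanged'' ignores collisions. If $w_{uv}, w_{uy} \in \yopt$ are both mapped to $u$, the cardinality strictly drops after the second swap. Your optimality observation (that no $w_{uv} \in \yopt$ has an endpoint already in $\yopt$) is a statement about the \emph{original} $\yopt$, not about intermediate states of the swapping process. The fix is immediate: the sequence of swaps only ever preserves or decreases cardinality while preserving feasibility, so the final $\ycan$ is a feasible $\SFV$ solution with $|\ycan| \leq |\yopt|$; optimality of $\yopt$ then forces equality. Alternatively, note that a collision would itself give a strictly smaller feasible solution, contradicting optimality, so collisions cannot occur---but either way this needs to be said.
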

\begin{proof}
	Let $\yB$ be any feasible solution to \SFV on $G'$.
	Then for each vertex $v \in \yB$, each vertex adjacent to $v$ either becomes a center of a star or a leaf of a star.
	We will show that no optimal solution $\yopt$ for $\SFV(G')$ contains any vertices in $S \cup A$.

	To start, we show that, given a valid optimal solution $\yopt$ for $\SFV(G')$, we can obtain an optimal solution $C$ for $\SFV(G')$ where $|C| = |\opt_{\SFV}(G')|$.
	We note that $\opt_{\SFV}(G) \leq n$ for all $G$.

	Suppose $\yopt \cap S$ is non-empty and let $v \in \yopt \cap S$.
	Let $e_1$ and $e_2$ be the two neighbors of $v$ in $G'$.
	Then, both $e_1$ and $e_2$ become leaves of stars in $G'[V'-\yopt]$.
	Assume not; then one or both of $e_1$ and $e_2$ are centers.
	If one of $e_1$ and $e_2$ is a leaf then we can delete it instead of $v$, or if they are both centers we can delete either one instead of $v$, resulting in no vertices from $S$ being deleted in some optimal solution, a contradiction.
	Then $e_1$ and $e_2$ have degree 1 in $G'[V'-\yopt]$, they must each have degree $1$.
	By construction of $G'$, $e_1$ and $e_2$ are inherited vertices; thus, each is connected to $2n + 1$ auxiliary vertices in $G'$.
	In order for $e_1$ and $e_2$ to become leaf vertices in $G'[V'-\yopt]$, $4n + 2$ vertices must be deleted.
	Thus, there exists an optimal solution $\yopt$ to $\SFV(G')$ such that $S \cap \yopt$ is empty.

	Next we show that for any optimal solution $\yopt$ to $\SFV(G')$, there exists another optimal solution $\yopt_2$ such that $\yopt_2 \cap A$ is empty.
	If $v \in \yopt$ for some auxiliary vertex $v \in A$, then let $u$ be the inherited vertex to which $v$ is attached, and set $\yopt_2 = \yopt \cup \{u\} \setminus \{v\}$.
	Note that $\yopt_2$ is necessarily still a feasible solution to $\SFV(G')$, and $|\yopt_2| \leq |\yopt|$.

	This shows there exists an optimal solution $\yopt$ to $\SFV(G')$ that contains only inherited vertices, i.e., $\yopt \subset V$.
	Next we show that $C = \yopt$ is a valid vertex cover of $G$.
	In order for $G'[V'-\yopt]$ to be a star forest, all vertices in $S$ must be leaves of $G'[V'-\yopt]$.
	This means that at least one vertex adjacent to each vertex in $S$ is in $\yopt$.
	Equivalently, $\yopt$ touches every edge in $G$ and is a solution to $\VC(G)$ with size $\opt_{\SFV}(f(G))$, as desired.
\end{proof}
Note that Lemma~\ref{lem:opt-equiv} implies $\opt_{\VC}(G) \leq \opt_{\SFV}(f(G))$, so by Corollary~\ref{cor:vc-to-sfv-sol} we have the following.
\begin{corollary}
		In the notation of Lemma~\ref{lemma:vc-to-sfv-sol}, $\opt_{\VC}(G) = \opt_{\SFV}(f(G))$.
\end{corollary}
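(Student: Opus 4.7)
The plan is to prove the equality by simply combining the two inequalities that have already been established just above the corollary statement; no new construction is required. Concretely, I would observe that Corollary~\ref{cor:vc-to-sfv-sol} and Lemma~\ref{lem:opt-equiv} together sandwich $\opt_{\SFV}(f(G))$ between two copies of $\opt_{\VC}(G)$.

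For the direction $\opt_{\VC}(G) \geq \opt_{\SFV}(f(G))$, I would cite Corollary~\ref{cor:vc-to-sfv-sol} directly. Its content is that any feasible vertex cover $\yA$ of $G$ also serves as a feasible star-forest vertex-deletion set in $f(G)$ by Lemma~\ref{lemma:vc-to-sfv-sol} (subdivision vertices become leaves once one endpoint is removed, and untouched inherited vertices anchor their pendants); so applying this to an optimal $\yA$ gives $\opt_{\SFV}(f(G)) \le |\yA| = \opt_{\VC}(G)$.

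For the reverse direction $\opt_{\VC}(G) \leq \opt_{\SFV}(f(G))$, I would appeal to Lemma~\ref{lem:opt-equiv}: given an optimal \SFV solution $\yopt$ for $f(G)$, the lemma produces a vertex cover $g(\yopt)$ of $G$ with $|g(\yopt)| = |\yopt|$, so $\opt_{\VC}(G) \leq |g(\yopt)| = \opt_{\SFV}(f(G))$. Chaining the two inequalities yields the claimed equality.

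There is essentially no obstacle at this step; the substantive work was already carried out in Lemma~\ref{lem:opt-equiv}, whose exchange argument ensures that some optimal \SFV solution consists exclusively of inherited vertices (subdivision vertices are suboptimal because removing them would force deleting $4n+2$ auxiliary vertices to turn both neighbors into leaves, and auxiliary vertices can always be swapped for their unique inherited neighbor). Once that structural fact is available, the corollary is a one-line combination of inequalities.
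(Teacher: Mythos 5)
Your proposal is correct and matches the paper exactly: the paper derives the corollary in one line by noting that Lemma~\ref{lem:opt-equiv} gives $\opt_{\VC}(G) \leq \opt_{\SFV}(f(G))$ while Corollary~\ref{cor:vc-to-sfv-sol} gives the reverse inequality, which is precisely your two-inequality sandwich. Your recap of the exchange argument from Lemma~\ref{lem:opt-equiv} is accurate but not needed beyond citing that lemma.
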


\mypar{Strict reduction from \VCfull.}
To finish our proof of a strict reduction from \VCfull to \SFVfull, we want to prove our reduction satisfies the sufficient conditions given in Definition~\ref{def:strict-reduction}, then use our reduction and Lemma~\ref{lem:opt-equiv} to prove the main theorem of this section.

\begin{proof}[Proof of Theorem~\ref{thm:star-forest-inapprox}]
Let $G' = (V',E') = f(G)$ and let $\yB$ be any feasible deletion set for $\SFV(G')$.
From $\yB$ we will construct a canonical solution $\ycan$ to $\SFV(G')$ consisting only of vertices from $V$, and show that $\ycan$ is then a solution to $\VC(G)$.

Recall from the construction of $G' = f(G)$ that $V'$ is partitioned into $V, A,$ and $S$. Thus, the vertices in $\yB$ must fall into these three sets.
For each $v \in \yB$,
\begin{enumerate}
    \item if $v \in V$ add $v$ to $\ycan$,
    \item if $v \in A$ then $v \in N(u)$ for some $u \in V$; add $u$ to $\ycan$.
    \item if $v \in S$, then $v$'s two neighbors are in $V$; choose one at random and add it to $\ycan$.
\end{enumerate}

Note that this construction guarantees that $\ycan$ contains at most one vertex for each vertex in $\yB$, and so $|\ycan| \leq |\yB|$.
Furthermore, $\ycan \subseteq V$.

It remains to show that $\ycan$ is a solution to $\VC(G)$.
Suppose not; then $G[V\setminus \ycan]$ contains an edge. Let the edge have endpoints $u,v \in V\setminus \ycan$.
Then in $G'$, the nodes $u,v$ are each connected to a subdivided node, $s \in S$. Since neither $u$ nor $v$ is in $\ycan$, by construction $u,s,v$ are not in $\yB$. Moreover, the construction of $\ycan$ also guarantees that, since $u,v \notin \ycan$, none of the auxiliary vertices attached to $u,v$ in $G'$ are contained in $\yB$.
This implies that $G'[V' \setminus \yB$] contains vertices $u,v,s,$ and all auxiliary vertices attached to $u$ and $v$, proving that $\yB$ is not a solution to $\SFV(G')$.

This proves $\cost_{\VC}(g(\yB)) \leq \cost_{\SFV}(\yB)$, completing a strict reduction from $\VC$ to $\SFV$.
\end{proof}

    \subsection{Treedepth 2: APX-hardness of Edge Editing}
    \label{section:negative_starforest_e}
    
Here we prove that deleting \emph{edges} to make a graph a star forest is APX-hard.
We proceed via a reduction from \textsc{Minimum Dominating Set-$B$}.

\begin{theorem}\label{thm:apx-hardness-sfe}
	\SFEfull is APX-complete, i.e.\ there exists a constant $\epsilon > 0$ such that it is NP-hard to approximate \SFE to within a factor of $1 + \epsilon$.
\end{theorem}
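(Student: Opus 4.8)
The plan is to prove APX-completeness via an \emph{L-reduction} (Definition~\ref{definition:lreduction}) from \DSBfull to \SFEfull for a suitably large fixed constant $B$. Membership in APX is already in hand, since Section~\ref{section:positive_starforest} gives a polynomial-time $3$-approximation for \SFE, so it remains to establish APX-hardness. The first step is a clean combinatorial identity: for every graph $G$ with $n$ vertices and $m$ edges,
\[
\optsol{\SFE}{G} \;=\; m - n + \gamma(G),
\]
where $\gamma(G)=\optsol{\DSB}{G}$ is the domination number. For the lower bound, if $Y\subseteq E(G)$ is any feasible edit set then $G[E\setminus Y]$ is a star forest whose components are vertex-disjoint stars of $G$ covering $V(G)$; picking one center per star gives a dominating set, so there are at least $\gamma(G)$ components, hence the forest $G[E\setminus Y]$ has at most $n-\gamma(G)$ edges and $|Y|\ge m-n+\gamma(G)$. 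For the matching upper bound, take a minimum dominating set $D$, assign each $v\notin D$ to an arbitrary neighbor $\sigma(v)\in D$, and delete every edge except the $n-|D|$ assignment edges $\{v\sigma(v):v\notin D\}$; the result is a disjoint union of stars centered at the vertices of $D$, so $\optsol{\SFE}{G}\le m-(n-|D|)=m-n+\gamma(G)$.

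Next I would set up the reduction itself. Fix $B$ large enough that Theorem~\ref{thm:ds-hard-approx} gives constant-factor inapproximability of \DSBfull, so that \DSB is APX-hard for this $B$. Let $f$ be the identity map (every graph is a valid \SFE instance). Given a feasible edit set $Y$ of $\SFE(G)$, let $g(Y)$ be the set of star-centers of the star forest $G[E\setminus Y]$ (for a single-edge component pick either endpoint); this is polynomial-time computable, and by the argument above $g(Y)$ is a dominating set of $G$ with $|g(Y)|$ equal to the number $t$ of components of $G[E\setminus Y]$, so $\cost_{\DSB}(g(Y))=t$.

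Then I would verify the two L-reduction inequalities. Since $G[E\setminus Y]$ is a forest on $n$ vertices with $t$ components, it has exactly $n-t$ edges, so $\cost_{\SFE}(Y)=|Y|=m-n+t$; combined with the identity this gives $\cost_{\SFE}(Y)-\optsol{\SFE}{f(G)}=t-\gamma(G)$, and therefore $\cost_{\DSB}(g(Y))-\optsol{\DSB}{G}=t-\gamma(G)=\cost_{\SFE}(Y)-\optsol{\SFE}{f(G)}$, so the second condition holds with $c_2=1$. For the first condition, each vertex of a dominating set dominates at most $B+1$ vertices, so $n\le (B+1)\gamma(G)$; together with $\gamma(G)\le n$ and $m\le Bn/2$ this yields $\optsol{\SFE}{f(G)}=m-n+\gamma(G)\le m\le \tfrac{B}{2}n\le \tfrac{B(B+1)}{2}\gamma(G)$, so the first condition holds with $c_1=B(B+1)/2$. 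Since L-reductions compose with PTAS-reductions and \DSB is APX-hard, \SFE is APX-hard, and together with the $3$-approximation it is APX-complete.

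The only real subtlety is the first inequality: the identity carries an additive $m-n$ term, and showing it is $O(\gamma(G))$ is exactly where bounded degree is essential — it simultaneously forces $m=O(n)$ and $\gamma(G)=\Omega(n)$. On graphs of unbounded degree this term would dominate and the reduction would fail, which is precisely why one must reduce from \DSBfull rather than from general \DSfull. I would also remark that the degenerate cases (isolated vertices, single-vertex or single-edge stars) require no special treatment, since $n\le(B+1)\gamma(G)$ and $m\le Bn/2$ hold unconditionally.
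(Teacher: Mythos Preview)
Your proposal is correct and takes essentially the same approach as the paper: an L-reduction from \DSBfull with $f$ the identity map, $g$ returning the star centers of $G[E\setminus Y]$, and the same constants $c_1=B(B+1)/2$ and $c_2=1$, all driven by the identity $\optsol{\SFE}{G}=m-n+\gamma(G)$ (the paper's Lemma~\ref{lem:star-dom-set-equiv}). Your write-up is in fact slightly cleaner in stating that identity upfront and in explicitly noting APX membership via the $3$-approximation.
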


\mypar{Reduction strategy.}
We exhibit a mapping between a solution for \DSB and \SFE on the same graph, and prove the mapping meets the requirements of an L-reduction.

\mypar{Mapping an instance of \DSBfull to an instance of \SFEfull.}
Given an instance $G=(V,E)$ for \DSBfull, we define $f(G)=G$. To clarify when we consider $G$ as an instance of $\SFE$, we refer to it as $G' = (V',E')$.

\mypar{Mapping a graph deletion set to a \DSBfull instance.}
To define a map $g$ that converts a solution $\yB$ for $\SFE(f(G))$ into a solution $g(\yB)$ for $\DSB(G)$, first recall that for a feasible solution $\yB$ to $\SFE(G')$, $G'[E' \setminus \yB]$ is a star forest.
We define $g(\yB)$ to be the set of centers of the stars in $G'[E' \setminus \yB]$, including isolates.

We now prove that $g$ does in fact give a feasible solution to $\DSB(G)$.

\begin{lemma}\label{lem:star-dom-set-equiv}
		Given an instance $G = (V, E)$ of \DSBfull, let $n = |V|$,  $m = |E|$, and $G' = f(G)$ denote the corresponding instance of \SFEfull.
		Given any feasible solution $\yB$ for $\SFE(G')$, $g(\yB)$ is a feasible solution to $\DSB(G)$ and satisfies $|g(\yB)| = n - m +|\yB|$.
		In addition, given a valid dominating set $D \subseteq V$, we obtain a valid edge edit set $Z$ for the graph $G$ to a star forest where $|Z| = m - n + |D|$.
\end{lemma}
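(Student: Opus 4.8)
The plan is to establish the two directions of the claimed correspondence between feasible solutions of $\SFE(G')$ and feasible solutions of $\DSB(G)$, tracking sizes carefully, and then combine this with the size equation to conclude an L-reduction. First I would observe the basic structural fact: if $G'$ has $n$ vertices and $m$ edges, then any spanning star forest of $G'$ (i.e.\ the result $G'[E'\setminus \yB]$ after deleting $\yB$) is a forest, and a forest on $n$ vertices with $k$ connected components has exactly $n-k$ edges. Hence $|E' \setminus \yB| = n - k$, so $|\yB| = m - (n-k) = m - n + k$, where $k$ is the number of stars (counting isolated vertices as trivial stars). The set $g(\yB)$ of star centers has exactly one vertex per star, so $|g(\yB)| = k$, giving $|g(\yB)| = |\yB| - m + n = n - m + |\yB|$ as claimed.

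Next I would verify that $g(\yB)$ is a feasible dominating set of $G$. Every vertex of $G'$ lies in some star of $G'[E'\setminus\yB]$; if it is a leaf, it is adjacent (in $G'$, hence in $G$) to the center of its star, which is in $g(\yB)$; if it is itself a center, it is in $g(\yB)$. Since $G' = G$ as a graph, every vertex of $G$ is in $g(\yB)$ or adjacent to a vertex in $g(\yB)$, so $g(\yB)$ is a dominating set. (The degree bound $B$ plays no role in feasibility here; it only matters for quoting the hardness of \DSB.) For the reverse direction, given a dominating set $D \subseteq V$, I would build an edge edit set $Z$ as follows: root the domination relation by assigning each vertex not in $D$ to one neighbor in $D$ (its ``parent''), and each vertex of $D$ to itself; then let the surviving edge set consist of exactly the edges from each non-$D$ vertex to its assigned parent. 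This surviving graph is a disjoint union of stars centered at the vertices of $D$ (a vertex of $D$ with no assigned children is an isolated star). The number of surviving edges is exactly the number of non-$D$ vertices that got assigned to a parent, which is $n - |D|$ — here one must be slightly careful that a vertex of $D$ which happens to be dominated by another vertex of $D$ should still be kept as its own center so that the count is exactly $n-|D|$; this is fine since we are free to let it be an isolated star. Thus $|Z| = m - (n - |D|) = m - n + |D|$.

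The main obstacle — really the only subtlety — is the bookkeeping ensuring the size equations hold \emph{exactly} rather than just up to inequality, since an L-reduction (and the subsequent APX-hardness conclusion) needs the additive relationship $\opt_{\SFE}(f(G)) = \opt_{\DSB}(G) - n + m$ to be tight in both directions. Combining the two constructions: any $\SFE$ solution $\yB$ yields a $\DSB$ solution of size $n - m + |\yB|$, and any $\DSB$ solution $D$ yields an $\SFE$ solution of size $m - n + |D|$, so taking $\yB,D$ optimal in turn gives $\opt_{\DSB}(G) \le n - m + \opt_{\SFE}(f(G))$ and $\opt_{\SFE}(f(G)) \le m - n + \opt_{\DSB}(G)$, hence equality: $\opt_{\SFE}(f(G)) = \opt_{\DSB}(G) - n + m$. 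With this in hand, the pair $(f,g)$ is an L-reduction: condition (4) of Definition~\ref{definition:lreduction} holds with $c_1$ a constant because for connected $G$ with at least one edge we have $m \le B n / 2$ and $\opt_{\DSB}(G) \ge n/(B+1)$, so $\opt_{\SFE}(f(G)) = \opt_{\DSB}(G) + (m-n) \le \opt_{\DSB}(G) + m \le (1 + B(B+1)/2)\opt_{\DSB}(G)$; and condition (5) holds with $c_2 = 1$ since $\cost_{\DSB}(g(\yB)) - \opt_{\DSB}(G) = (n - m + \cost_{\SFE}(\yB)) - (n - m + \opt_{\SFE}(f(G))) = \cost_{\SFE}(\yB) - \opt_{\SFE}(f(G))$. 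Since \DSBfull is APX-hard by Theorem~\ref{thm:ds-hard-approx}, and membership of \SFE in APX follows from the $3$-approximation established earlier, \SFEfull is APX-complete.
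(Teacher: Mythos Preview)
Your proof is correct and follows essentially the same approach as the paper: you count edges in the surviving star forest via the forest formula $n-k$ (the paper phrases this equivalently as ``number of surviving edges equals number of non-center vertices''), and your reverse construction from a dominating set $D$ is the same assignment-of-parents argument. You also go beyond the lemma itself to sketch the L-reduction verification, which the paper handles in the subsequent Lemmas~\ref{lem:alpha-ratio} and~\ref{lem:beta-gap} and the proof of Theorem~\ref{thm:apx-hardness-sfe}, but this extra material is correct as well.
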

\begin{proof}
		Given an edge deletion set $\yB$ for $\SFE(G')$, $g(\yB)$ consists of the centers of the star graphs in $G'[E'\setminus \yB]$.
		Thus, every vertex $u \in G'[E'\setminus \yB]$ is in a star. If $u$ is the center of a star, then $u$ is in $g(\yB)$; otherwise, $u$ is adjacent to a star center, $v$, which is necessarily in $g(\yB)$. Thus, the vertex set $g(\yB)$ dominates all vertices of $G'[E'\setminus\yB]$, and therefore of $G$.

		Next, observe that the number of edges in $G'[E'\setminus\yB]$ is $m - |\yB|$.
		For every vertex $u$ in $G'[E'\setminus\yB]$ that is not a center, we use one edge to connect $u$ to its center.
		Hence, the number of edges in $G'[E'\setminus\yB]$ is equal to the number of vertices which are not centers.
		Since $|g(\yB)|$ equals the number of centers exactly, we have $n - |g(\yB)| = m - |\yB|$.

		Given a dominating set $D$ for $G$, every vertex in $V$ is either in $D$ or adjacent to a vertex in $D$. Thus, we can remove all edges that do not connect a vertex in $V \setminus D$ to a vertex in $D$. If a vertex in $V \setminus D$ is connected to more than one vertex in $D$, remove all but one of the edges to a vertex in $D$. Also remove all edges between vertices $v_1, v_2 \in V \setminus D$ and between vertices $d_1, d_2 \in D$.
		The remaining graph is a star forest since each vertex $v \in V \setminus D$ is connected to at most one vertex and all vertices in $D$ are centers and are not connected to any other centers.
		Finally, let $Z$ be the set of deleted edges. To see that $|D| = n - m + |Z|$, observe that $g(Z) = D$.
\end{proof}

\mypar{L-reduction from \DSBfull.}
To show an L-reduction from \DSB to \SFE we must meet the conditions from Definition~\ref{definition:lreduction}.

\begin{lemma}\label{lem:alpha-ratio}
		For any fixed $B > 0$, there exists a constant $\alpha$ such that for all degree-$B$ bounded graphs $G$ we have $\opt_{\SFE}(f(G)) \leq \alpha \cdot \opt_{\DSB}(G)$.
\end{lemma}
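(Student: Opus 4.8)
The plan is to pin down $\opt_{\SFE}(f(G))$ exactly via Lemma~\ref{lem:star-dom-set-equiv}, and then control the additive gap between it and $\opt_{\DSB}(G)$ using the degree bound~$B$. First I would extract from Lemma~\ref{lem:star-dom-set-equiv} the identity $\opt_{\SFE}(f(G)) = m - n + \opt_{\DSB}(G)$, where $n = |V|$ and $m = |E|$. Indeed, applying the first half of that lemma to an \emph{optimal} edge edit set $\yB$ for $\SFE(f(G))$ gives a dominating set $g(\yB)$ of $G$ with $|g(\yB)| = n - m + \opt_{\SFE}(f(G))$, hence $\opt_{\DSB}(G) \leq n - m + \opt_{\SFE}(f(G))$, i.e.\ $\opt_{\SFE}(f(G)) \geq m - n + \opt_{\DSB}(G)$. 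Applying the second half of the lemma to an optimal dominating set $D$ produces a star-forest edit set $Z$ with $|Z| = m - n + |D| = m - n + \opt_{\DSB}(G)$, giving the reverse inequality. So the two optima differ by exactly the constant $m - n$.

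It then remains to bound $m - n$ in terms of $\opt_{\DSB}(G)$. Since every vertex of $G$ has degree at most $B$, we have $2m = \sum_{v \in V}\deg(v) \leq Bn$, so $m - n \leq \tfrac{B}{2}\,n$. Moreover, a single vertex dominates at most $B+1$ vertices (itself and its at most $B$ neighbors), so every dominating set has size at least $n/(B+1)$, i.e.\ $n \leq (B+1)\,\opt_{\DSB}(G)$. Combining,
\[
\opt_{\SFE}(f(G)) = m - n + \opt_{\DSB}(G) \;\leq\; \tfrac{B(B+1)}{2}\,\opt_{\DSB}(G) + \opt_{\DSB}(G) \;=\; \Big(1 + \tfrac{B(B+1)}{2}\Big)\,\opt_{\DSB}(G),
\]
so $\alpha = 1 + B(B+1)/2$ works and depends only on $B$, as required.

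I do not expect a genuine obstacle here: essentially all the work is already done in Lemma~\ref{lem:star-dom-set-equiv}, and what remains is the elementary counting above. The only point deserving a line of care is degenerate components — isolated vertices and isolated edges of $G$ — but these are handled uniformly by the identity $|Z| = m - n + |D|$: each star of the resulting forest, including a one-vertex star coming from an isolated vertex, contributes its center to the dominating set and its remaining vertices (equivalently, its edges) to the edge count, so the bookkeeping $n - |g(Z)| = m - |Z|$ holds verbatim regardless of such components. (This identity also isolates exactly the constant $m-n$ needed for the second, additive condition of the $L$-reduction.)
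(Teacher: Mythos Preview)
Your proof is correct. It differs from the paper's only in how you upper bound $\opt_{\SFE}(f(G))$: the paper uses the trivial bound $\opt_{\SFE}(f(G)) \leq |E| \leq Bn/2$ (deleting every edge is a feasible solution) and combines it directly with $\opt_{\DSB}(G) \geq n/(B+1)$ to get $\alpha = B(B+1)/2$, whereas you first extract the exact identity $\opt_{\SFE}(f(G)) = m - n + \opt_{\DSB}(G)$ from Lemma~\ref{lem:star-dom-set-equiv} and then bound $m-n$. Both routes rest on the same two elementary counts ($2m \leq Bn$ and $\opt_{\DSB}(G) \geq n/(B+1)$); the paper's is slightly more direct and yields the marginally smaller constant $B(B+1)/2$, while yours has the side benefit of establishing the identity that is needed anyway in the companion lemma for the additive L-reduction condition.
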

\begin{proof}
		Let $G = (V,E)$ be a degree-$B$ bounded graph and set $G' = f(G)$. Then $\opt_{\SFE}(G') \leq |E| \leq Bn/2$ and $\opt_{\DSB}(G) \geq n/(B+1)$ (since each node in the dominating set can cover at most $B$ other nodes).
		Therefore, any constant $\alpha$ satisfying the condition $\alpha \geq \tfrac{Bn/2}{n/(B+1)} = B(B+1)/2$ also satisfies $\opt_{\SFE}(G') \leq \alpha \cdot \opt_{\DSB}(G)$.
\end{proof}

\begin{lemma}\label{lem:beta-gap}
	Given an instance $\DSB(G)$, and any feasible solution $\yB$ to instance $G'=f(G)$ of $\SFV$,  we have $\cost_{\DSB}(g(\yB)) - \opt_{\DSB}(G) = \cost_{\SFE}(\yB) - \opt_{\SFE}(G') $.
\end{lemma}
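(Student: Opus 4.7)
The plan is to exploit the exact additive relationship between $\DSB$-cost and $\SFE$-cost established in Lemma~\ref{lem:star-dom-set-equiv}. That lemma already gives us, for any feasible edge-deletion set $\yB$ to $G' = f(G)$, the identity
\[
\cost_{\DSB}(g(\yB)) \;=\; |g(\yB)| \;=\; n - m + |\yB| \;=\; \cost_{\SFE}(\yB) + (n-m).
\]
Thus the map $g$ shifts cost by the constant $n - m$, which depends only on $G$ and not on the particular solution chosen.

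Next I will show that this same additive shift holds between the two optima, i.e., $\opt_{\DSB}(G) = \opt_{\SFE}(f(G)) + (n-m)$. One direction follows by applying $g$ to an optimal $\SFE$ solution $\yB^{\star}$: Lemma~\ref{lem:star-dom-set-equiv} yields a dominating set of size $n - m + \opt_{\SFE}(G')$, so $\opt_{\DSB}(G) \le \opt_{\SFE}(G') + (n-m)$. The reverse direction uses the second half of Lemma~\ref{lem:star-dom-set-equiv}: starting from an optimal dominating set $D^{\star}$ of $G$, we obtain a feasible edge-edit set $Z$ for $\SFE(G')$ with $|Z| = m - n + \opt_{\DSB}(G)$, giving $\opt_{\SFE}(G') \le \opt_{\DSB}(G) - (n-m)$. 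Combining the two inequalities yields the desired equality.

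Finally, combining the two identities gives
\[
\cost_{\DSB}(g(\yB)) - \opt_{\DSB}(G)
\;=\; \bigl(\cost_{\SFE}(\yB) + (n-m)\bigr) - \bigl(\opt_{\SFE}(G') + (n-m)\bigr)
\;=\; \cost_{\SFE}(\yB) - \opt_{\SFE}(G'),
\]
which is the statement of the lemma (in particular, this verifies the second L-reduction condition with constant $c_2 = 1$).

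I do not expect any real obstacle here: the whole proof is bookkeeping on top of Lemma~\ref{lem:star-dom-set-equiv}. The only mild subtlety is being careful that the $n - m$ shift is truly a constant (independent of which feasible solution we start from) and that both directions of Lemma~\ref{lem:star-dom-set-equiv} are invoked to pin down the equality, rather than just an inequality, between the two optima.
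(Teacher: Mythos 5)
Your proof is correct and follows essentially the same route as the paper's: both rely on the two cost identities from Lemma~\ref{lem:star-dom-set-equiv} and subtract them. You spell out the derivation of $\opt_{\DSB}(G) = n - m + \opt_{\SFE}(G')$ as a pair of inequalities (one from each direction of the lemma), whereas the paper states that equality as following immediately from Lemma~\ref{lem:star-dom-set-equiv} without expanding the argument; your added detail is a legitimate unpacking of the same logic, not a different approach.
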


\begin{proof}
	For any solution $\yB$ to $\SFE(G')$,
	Lemma~\ref{lem:star-dom-set-equiv} yields the equations $\opt_{\DSB}(G) = n - m + \opt_{\SFE}(G')$ and $\cost_{\DSB}(g(\yB)) = n-m+\cost_{\SFE}(\yB)$.
	Combining these, we have
	$\cost_{\DSB}(g(\yB)) - \opt_{\DSB}(G) = \cost_{\SFE}(\yB) - \opt_{\SFE}(G')$, as desired.
\end{proof}

Finally, we use Theorem~\ref{thm:ds-hard-approx} to complete our proof of Theorem~\ref{thm:apx-hardness-sfe}.
\begin{proof}[Proof of Theorem~\ref{thm:apx-hardness-sfe}]
First note that the functions $f$ and $g$ defined earlier in this subsection give a polynomial time reduction from \DSB to \SFE and satisfy the feasibility conditions of an L-reduction.
Next we show $f$ and $g$ satisfy the approximation inequalities.

Given Lemmas~\ref{lem:alpha-ratio} and~\ref{lem:beta-gap}, if \SFE can be approximated to a ratio of $1+\epsilon$ for some constant $\epsilon > 0$, then we show that \DSB can be approximated to a ratio of $(1 + \frac{\epsilon B(B+1)}{2})$.

The approximation ratio for \DSB is $\frac{\cost_{\DSB}(g(\yB))}{\opt_{\DSB}(G)}$. Substituting from Lemmas~ \ref{lem:alpha-ratio} and \ref{lem:beta-gap}, we obtain:
\begin{align*}
\frac{\cost_{\DSB}(g(\yB))}{\opt_{\DSB}(G)} &\leq \frac{\opt_{\DSB}(G) + (\cost_{\SFE}(\yB) - \opt_{\SFE}(G'))}{\opt_{\DSB}(G)}\\
 &\leq 1 + \alpha \left(\frac{\cost_{\SFE}(\yB) - \opt_{\SFE}(G')}{\opt_{\SFE}(G')}\right),
\end{align*}
which is bounded above by $1 + \epsilon B(B+1)/2$, proving that \SFE is APX-hard.

\end{proof}

\fi

  \section{Open Problems}

We hope that our framework for extending approximation algorithms from
structural graph classes to graphs near those classes, by editing to the class
and lifting the resulting solution, can be applied to many more contexts.
Specific challenges raised by this work include the following:

\begin{enumerate}
\item Editing via edge contractions.  Approximation algorithms for this type of
  editing would enable the framework to apply to the many optimization problems
  closed under just contraction, such as \textsc{TSP Tour} and \textsc{Connected Vertex Cover}.
\item Editing to $H$-minor-free graphs.
  Existing results apply only when $H$ is planar \cite{fomin2012planar}.
  According to Graph Minor Theory, the natural next steps are when
  $H$ can be drawn with a single crossing, when $H$ is an apex graph
  (removal of one vertex leaves a planar graph), and when $H$ is an
  arbitrary graph (say, a clique).  $H$-minor-free graphs have many PTASs
  (e.g., \cite{demaine2005bidimensionality,demaine2011contraction})
  that would be exciting to extend via structural rounding.
\item Editing to bounded clique number and bounded weak $c$-coloring number.
  While we have lower bounds on approximability,
  we lack good approximation algorithms.
\end{enumerate}

\else
\fi

\ifdefined\isall
    \section*{Acknowledgments}
    We thank Abida Haque and Adam Hesterberg for helpful initial discussions,
    Nicole Wein for providing helpful references on bounded degeneracy problems,
    and Michael O'Brien for helpful comments on the manuscript.

    This research was supported in part by the Army Research Office
    under Grant Number W911NF-17-1-0271 to Blair D. Sullivan,
    the Gordon \& Betty Moore Foundation's Data-Driven Discovery Initiative
    under Grant GBMF4560 to Blair D. Sullivan, as well as NSF grants CCF-1161626 and IIS-1546290 to Erik D. Demaine.
    Timothy D. Goodrich is partially supported by a National Defense
    Science \& Engineering Graduate Program fellowship.
    Quanquan Liu is partially supported by a National Science Foundation Graduate Research Fellowship
  	under grant 1122374.

    The views and conclusions contained in this
    document are those of the authors and should not be interpreted as representing the official policies, either
    expressed or implied, of the Army Research Office or the U.S. Government. The U.S. Government is
    authorized to reproduce and distribute reprints for Government purposes notwithstanding any copyright
    notation herein.

\fi

%%
%% Bibliography
%%
\bibliography{ref}
\bibliographystyle{abbrv}

  \appendix
  \ifappendix
  \section{Preliminaries: Structural Graph Classes, Hardness Reductions, and Optimization Problems}\label{appendix-prelim}

\subsection{Structural Graph Classes}\label{section:graphclasses}

In this section, we provide the necessary definitions for several structural graph classes
(illustrated in Figure~\ref{fig:classhierarchy}).

\subsubsection{Degeneracy, cores, and shells}
\begin{definition}\label{def:degeneracy}
    A graph $G$ is \emph{$r$-degenerate} if
    every subgraph contains a vertex of degree at most $r$;
    the \emph{degeneracy} is the smallest $r \in \mathbb{N}$ so that
    $G$ is $r$-degenerate, and write $\degener(G) = r$.
\end{definition}

Much of the literature on degeneracy is in the context of the more refined notion of \emph{$k$-cores}.

\begin{definition}\label{def:k-core}\label{def:k-shell}
  For a graph $G$ and positive integer $r$, the \emph{$r$-core} of $G$, \core{r}{G}, is the maximal subgraph of $G$ with minimum degree $r$.
  The \emph{$r$-shell} of $G$ is $\core{r}{G} \setminus \core{r+1}{G}$.
\end{definition}

\begin{lemma}[\cite{chrobak1991planar,lick1970degenerate,matula1983smallest}]\label{lem:degen-properties}
Given a graph $G=(V,E)$, the following are equivalent:
    \begin{enumerate}
      \itemsep0em
        \item{The degeneracy of $G$ is at most $r$.}
        \item{The $(r+1)$-core of $G$ is empty.}
        %\item{There exists an orientation of the edges in $E$ so that it forms a directed acyclic graph and $\outDeg(u)\leq r$ for all $u\in V$.}
        %\item{Every induced subgraph of $G$ contains a vertex with at most $k$ neighbors.}
        \item{There exists an ordering $v_1, \dots, v_n$ of $V$ so the degree of $v_j$ in $G[\{v_j, \dots, v_n\}]$ is at most $r$.}\looseness=-1
    \end{enumerate}
\end{lemma}
\begin{lemma}\label{lem:degen-bounded-outdeg-orientation}
Given a graph $G=(V,E)$, if there exists an orientation of the edges in $E$ so that $\outDeg(u)\leq r$ for all $u\in V$, then the degeneracy of $G$ is at most $2r$.
\end{lemma}
\begin{proof}
It is straightforward to verify that in any induced subgraph $H$ of $G$, the same orientation of edges ensures that out-degree of each vertex $v\in V[H]$, $\outDeg_H(v) \leq r$. This in particular implies that $H$ contains a vertex of degree at most ${|E[H| \over |V[H]|} \leq 2r$. Hence, $\degener(G)\leq 2r$.
\end{proof}

It immediately follows that a graph $G$ has degeneracy $r$ if and only if $r$ is the largest number such that the $r$-core of $G$ is non-empty.
We note that having bounded degeneracy immediately implies bounded clique and chromatic numbers
$\omega(G), \chi(G) \leq \degener(G) + 1$ (the latter follows
from a greedy coloring using the ordering from Lemma~\ref{lem:degen-properties}).

\subsubsection{Weak coloring number}
\label{section:weak-coloring-number}
The \emph{weak $\Tc$-coloring number} was introduced along with the $\Tc$-coloring number by Kierstead and Yang in~\cite{kierstead2003orderings}, and it generalizes the notion of degeneracy in the following sense. As described in Lemma~\ref{lem:degen-properties}, the degeneracy of a graph can be understood as a worst-case bound on the forward degree of a vertex given an optimal ordering of the vertices. The weak $\Tc$-coloring number bounds the number of vertices $v$ reachable from $u$ via a path of length at most $\Tc$ consisting of vertices that occur earlier than $v$ in an ordering of the vertices.  In fact, the weak $1$-coloring number and degeneracy are equivalent notions.

\begin{definition} (Section 2~\cite{kierstead2003orderings}) \label{definition:wcol}
Let $G$ be a finite, simple graph, let $L: V(G) \rightarrow \mathbb{N}$ be an injective function defining an ordering on the vertices of $G$, and let $\Pi(G)$ be the set of all possible such orderings.
A vertex $v$ is \emph{weakly $\Tc$-reachable} from $u$ with respect to $L$ if there exists a $uv$-path $P$ such that $|P| \leq \Tc$ and for all $w \in P$, $L(w) \leq L(v)$;
we use $\wreach{\Tc}{G}{L}{u}$ to denote the set of all such vertices.
Let $\wscore{\Tc}{G}{L}$ be the $\max_{u \in V(G)} |\wreach{\Tc}{G}{L}{u}|$. The
\emph{weak $\Tc$-coloring number} of $G$ is defined as $\wcol{\Tc}{G} = \min_{L \in \Pi(G)} \wscore{\Tc}{G}{L}$.
\end{definition}

The weak $\Tc$-coloring numbers provide useful characterizations for several structural graph classes:

\begin{lemma}[\cite{nesetril2012sparsity,nesetril2011nowhere,zhu2009colouring}] \label{lemma:wcn-characterizations}
The weak $\Tc$-coloring numbers characterize each following class $\mC$:

\begin{enumerate} \itemsep0em
    \item $\mC$ is nowhere dense $\iff$ $\lim\limits_{\Tc \rightarrow \infty} \limsup\limits_{G \in \mC} \frac{\log(\wcol{\Tc}{G})}{\log|G|} = 0$.
    \item $\mC$ has bounded expansion $\iff$ $\exists \textrm{ a function } f, \forall G \in \mC, \forall \Tc \in \mathbb{N}: \wcol{\Tc}{G} \leq f(\Tc)$.
    \item $\mC$ has treedepth bounded by $k$ $\iff$ $\forall G \in \mC, \forall \Tc \in \mathbb{N}: \wcol{\Tc}{G} \leq k$.
\end{enumerate}
\end{lemma}

Additionally, the treewidth of a graph provides an upper bound on its weak coloring numbers.  If a graph $G$ has treewidth $k$, then $\wcol{\Tc}{G} \leq \binom{k + c}{k}$, and there is an infinite family of graphs such that this bound is tight~\cite{grohe2016colouring}.

\subsubsection{Treewidth, pathwidth, and treedepth}
Perhaps the most heavily studied structural graph class is that of \emph{bounded treewidth}; in this subsection
we provide the necessary definitions for treewidth, pathwidth, and treedepth. Bounded treedepth is a stronger structural property than bounded pathwidth, and
intuitively measures how ``shallow'' a tree the graph can be embedded in when edges can only occur between
ancestor-descendent pairs.

\begin{definition}[\cite{robertson1986graph}]\label{definition:treewidth}
    Given a graph $G$, a \emph{tree decomposition} of $G$ consists of a collection $\mathcal{Y}$ of subsets (called \emph{bags}) of vertices in $V(G)$ together with a tree $T = (\mathcal{Y}, \mathcal{E})$ whose nodes $\mathcal{Y}$ correspond to bags which satisfy the following properties:
    \begin{enumerate}
      \itemsep0em
       \item Every $v \in V(G)$ is contained in a bag $B \in \mathcal{Y}$ (i.e.\ $\bigcup_{B \in \mathcal{Y}} B = V$).
       \item For all edges $(u, v) \in E(G)$ there is a bag $B \in \mathcal{Y}$ that contains both endpoints $u,v$.
       \item For each $v \in V(G)$, the set of bags containing $v$ form a connected subtree of $T$ (i.e.\ $\left\{B| v \in B, B \in \mathcal{Y}\right\}$ forms a subtree of $T$).
    \end{enumerate}
    The \emph{width} of a tree decomposition is $\max_{B \in \mathcal{Y}} |B| - 1$, and the \emph{treewidth} of a graph $G$, denoted $\tw(G)$, is the minimum width of any tree decomposition of $G$.
\end{definition}

All graphs that exclude a simple fixed planar minor $H$ have bounded treewidth,
indeed, treewidth $|V(H)|^{O(1)}$ \cite{Chekuri-Chuzhoy-2016}.
Thus, every planar-$H$-minor-free graph class is a subclass of some
bounded treewidth graph class.

\begin{definition}[\cite{robertson1986graph}]\label{definition:pathwidth}
  A \emph{path decomposition} is a tree decomposition in which the tree $T$ is a path. The \emph{pathwidth} of $G$, $\pw(G)$, is the minimum width of any path decomposition of $G$.
\end{definition}

\begin{definition}[\cite{nesetril2006treedepth}] \label{definition:treedepth}
A \emph{treedepth decomposition} of a graph $G$ is an injective mapping $\psi: V(G) \rightarrow V(F)$ to a rooted forest $F$ such that for each edge $(u,v) \in E(G)$, $\psi(u)$ is either an ancestor or a descendant of $\psi(v)$ in $F$.  The \emph{depth} of a treedepth decomposition is the height of the forest $F$.  The \emph{treedepth} of $G$ is the minimum depth of any treedepth decomposition of $G$.
\end{definition}

\subsection{Hardness and Reductions}\label{sec:prelims-reductions}
One of our contributions in this paper is providing hardness of approximation for
several important instances of \Editfull defined in Section~\ref{section:editing-relatedwork}.
Here, we describe the necessary definitions for approximation-preserving reductions as
well as known approximability hardness results for several key problems.

\subsubsection{Approximation preserving reductions}
A classic tool in proving approximation hardness is the \emph{L-reduction}, which linearly preserves
approximability features~\cite{papadimitriou1991optimization}, and implies PTAS reductions.

\begin{definition}\label{definition:lreduction}
Let $A$ and $B$ be minimization problems with cost functions $\cost_A$ and $\cost_B$, respectively.
An \emph{L-reduction} is a pair of functions $f$ and $g$ such that:
\begin{enumerate}
  \itemsep0em
\item $f$ and $g$ are polynomial time computable,
\item for an instance $\xA$ of $A$, $f(\xA)$ is an instance of $B$,
\item for a feasible solution $\yB$ of $B$, $g(\yB)$ is a feasible solution of $A$,
\item there exists a constant $c_1$ such that
\begin{equation}
\textstyle{\optsol{B}{f(\xA)} \leq c_1 \optsol{A}{\xA}},
\end{equation}
\item and there exists a constant $c_2$ such that
\begin{equation}
\textstyle{\cost_A(g(\yB)) - \opt_A(\xA) \leq c_2\left(\cost_B(\yB) - \opt_B(f(\xA))\right)}.
\end{equation}
\end{enumerate}
\end{definition}

\noindent In many cases, we will establish a stronger form of reduction known as a \emph{strict reduction}, which implies an $L$-reduction~\cite{ko1982computational, orponen1987approximation}.

\begin{definition}\label{def:strict-reduction}
Let $A$ and $B$ be minimization problems with cost functions $\cost_A$ and $\cost_B$, respectively.
A \emph{strict reduction} is a pair of functions $f$ and $g$ such that:
\begin{enumerate}
   \itemsep0em
\item $f$ and $g$ are polynomial time computable,
\item for an instance $x$ of $A$, $f(\xA)$ is an instance of $B$,
\item for a feasible solution $y'$ of $B$, $g(\yB)$ is a feasible solution of $A$,
\item and it holds that
\begin{align}
\frac{\cost_A(g(\yB))}{\opt_A(\xA)} \leq \frac{\cost_B(\yB)}{\opt_B(f(\xA))}.
\end{align}
\end{enumerate}
\end{definition}

We note that to prove a strict reduction, it suffices to demonstrate that $\opt_A(\xA) = \opt_B(f(\xA))$ and $\cost_A(g(\yB)) \leq \cost_B(\yB)$.

\subsubsection{Hard problems}
As mentioned earlier, our approximation hardness results for the instances of \Editfull studied in this paper are via reductions from \SCfull, \VCfull and \DSBfull (\Problem{Minimum Dominating Set} in graphs of maximum degree $B$).
We now formally define each of these, and state the
associated hardness of approximation results used.

%%%%%%% Set Cover

\begin{problem}{\SCfull (\SC)}
    \Input & A universe $\mcU$ of elements and a collection $\mcF$ of subsets of the universe. \\
    \Prob  & Find a minimum size subset $\editset \subseteq \mcF$ that covers $\mcU$: $\bigcup_{S\in\editset}S = \mcU$.
\end{problem}

\begin{theorem}[\cite{dinur2014analytical,feige1998threshold,lund1994hardness,moshkovitz2012projection,moshkovitz2010two}]\label{lem:set-cover-general}
It is NP-hard to approximate $\SCfull(\mcU,\mcF)$ within a factor of $(1-\epsilon)\ln( |\mcU| )$ for any $\epsilon > 0$.
Moreover, this holds for instances where $|\mcF| \leq \textrm{poly}(|\mcU|)$.
\end{theorem}
We remark that this result is tight, due to an $(\ln |\mcU|)$-approximation algorithm for \SC~\cite{johnson1974approximation}.

\begin{theorem}[\cite{trevisan2001nonapproximability}]\label{lem-set-cover-sparse}
  There exists a constant $C > 0$ so that it is NP-hard to approximate $\SCfull(\mcU,\mcF)$ within a factor of
  $\left(\ln \Delta - C\ln\ln \Delta \right)$,
  where $\Delta = \max_{S\in \mcF} |S|$.
  Moreover, in the hard instances, $\Delta\geq f_{\max}$ where $f_{\max}$ is the maximum frequency of an element of $\mcU$ in $\mcF$.
\end{theorem}

%%%%%%% Uniform Set Cover

\begin{problem}{\uSCfull (\uSC)}
    \Input & A universe $\mcU$ of elements and a collection $\mcF$ of subsets of the universe such that every element of $\mcU$ is contained in exactly $k$ sets in $\mcF$.\\
    \Prob  & Find a minimum size subset $\editset \subseteq \mcF$ that covers $\mcU$: $\bigcup_{S\in\editset}S = \mcU$.
\end{problem}

\begin{theorem}[Theorem 1.1~\cite{dinur2005new}]\label{lem-set-cover-frequency}
  For any constant $k \geq 3$,
  it is NP-hard to approximate $\uSC$ within a factor of $(k-1-\epsilon)$ for any $\epsilon >0$.
\end{theorem}

Note that $k$ is assumed to be constant with respect to $|\mcU|$ in this result.
However, the same paper provides a slightly weaker hardness result when $k$ is super-constant with respect to $|\mcU|$.

\begin{theorem}[Theorem 6.2~\cite{dinur2005new}]\label{lem-set-cover-frequency-superconstant}
  There exists a constant $b>0$ so that
  there is no polynomial time algorithm for approximating $\uSC$ within a factor of $(\floor{k/2}-0.01)$ when $4 \leq k \leq (\log |\mcU|)^{1/b}$, unless NP $\subseteq$ DTIME$(n^{O(\log\log n)})$. This holds for instances where $|\mcF| \leq |\mcU|$.

\end{theorem}

%%%%%%% Vertex Cover

\begin{problem}{\VCfull (\VC)}
\Input & A graph $G=(V,E)$.\\
\Prob  & Find a minimum size set of vertices $\editset \subseteq V$ s.t. $G[V\setminus\editset]$ has no edge.
\end{problem}

\begin{theorem}[\cite{dinur2005hardness,khot2008vertex}]\label{lem:vertex-cover}
It is NP-hard to approximate $\VCfull$ within a factor of $1.3606$. Moreover, assuming UGC, \VC has no $(2 - \epsilon)$ approximation for $\epsilon > 0$.
\end{theorem}

\begin{problem}{\DSBfull (\DSB)}
\Input & An undirected graph $G = (V, E)$ with maximum degree at most $B$.\\
\Prob & Find a minimum size set of vertices $C \subseteq V$ such that every vertex in $V$ is either in $C$ or is adjacent to a vertex in $C$.
\end{problem}

\begin{theorem}[\cite{trevisan2001nonapproximability}]\label{thm:ds-hard-approx}
There are constants $C > 0$ and $B_0 \geq 3$ so that for every $B \geq B_0$ it is NP-hard to approximate \DSBfull within a factor of $\ln{B} - C \ln{\ln{B}}$.
\end{theorem}

The best known constants $C$ for small $B$ in Theorem~\ref{thm:ds-hard-approx} are given in~\cite{chlebik2008approximation}.

\subsection{Optimization Problems}\label{section:opt-problems}

We conclude our preliminaries with formal definitions of several additional optimization problems for which
we give new approximation algorithms via structural rounding in Section~\ref{sec:structural-rounding}.

\begin{problem}{\LDSfull (\LDS)}
\Input & An undirected graph $G = (V, E)$ and a positive integer \DSradius.\\
\Prob & Find a minimum size set of vertices $C \subseteq V$ s.t.
every vertex in $V$ is either in $C$ or is connected by a path of length at most \DSradius to a vertex in $C$.
\end{problem}

\begin{problem}{\ELDSfull (\ELDS)}
\Input & An undirected graph $G = (V, E)$ and a positive integer \DSradius.\\
\Prob & Find a minimum size set of edges $C \subseteq E$ s.t.
every edge in $E$ is either in $C$ or is connected by a path of length at most \DSradius to an edge in $C$.
\end{problem}

When $\DSradius = 1$, these are \DSfull (\DS) and \EDSfullfixed (\EDS). 

\begin{problem}{\ADSmaybefull (\ADS)}
\Input & An undirected graph $G = (V, E)$, a subset of vertices $B\subseteq V$ and a positive integer.\\
\Prob & Find a minimum size set of vertices $C \subseteq V$ s.t.
every vertex in $B$ is either in $C$ or is connected by a path of length at most $\ell$ to a vertex in $C$.
\end{problem}

Note that when $B = V$, \ADSmaybefull becomes \DSmaybefull\footnote{The \ADSfull problem has also been studied in the literature as \emph{subset dominating set problem} in~\cite{guha1996approximation,har2017approximation}.}.
%\cite{alber2005refined,alber2004polynomialtime,fomin2011bidimensionality} under a different problem statement.}.

\begin{problem}{\LISfull (\LIS)}
\Input & A graph $G=(V,E)$.\\
\Prob  & Find a maximum size set of vertices $X \subseteq V$ s.t. no two vertices in $X$ are connected by a path of length $\leq \ISradius$.
\end{problem}

When $\ISradius = 1$, we call this \ISfull (\IS).

\begin{problem}{\FVSfull (\FVS)}
\Input & A graph $G=(V,E)$.\\
\Prob  & Find a minimum size set of vertices $X \subseteq V$ s.t. $G\setminus X$ has no cycles.
\end{problem}

\begin{problem}{\MMMfull (\MMM)}
\Input & A graph $G=(V,E)$.\\
\Prob  & Find a minimum size set of edges $X \subseteq E$ s.t. $X$ is a maximal matching.
\end{problem}

\begin{problem}{\CNfull (\CN)}
\Input & A graph $G=(V,E)$.\\
\Prob  & Find a minimum size coloring of $G$ s.t. adjacent vertices are different colors.
\end{problem}

\begin{problem}{\MCfull (\MC)}
\Input & A graph $G=(V,E)$.\\
\Prob  & Find a partition of the nodes of $G$ into sets $S$ and $V\setminus S$ such that the number
of edges from $S$ to $V\setminus S$ is greatest.
\end{problem}
  \fi
  \ifappendix
\section{Structural Rounding Proofs}\label{appendix-structural-rounding-proofs}

Here we present the missing proofs from Section~\ref{sec:structural-rounding}.

\subsection{General Framework}

%\begin{reptheorem}{thm:general-edit-sr}
%\end{reptheorem}

\begin{proof}[Proof of Theorem~\ref{thm:general-edit-sr}]
We write $\opt(G)$ for $\optsol{\Pi}{G}$.
Let $G$ be a graph that is $(\delta \cdot \opt(G))$-close to the class~$\mCp$.
By Definition~\ref{definition:bicriteria-approx}, the polynomial-time
$(\alpha,\beta)$-approximation algorithm finds
edit operations $\psi_1, \psi_2, \dots, \psi_k$
where $k \leq \alpha \delta \cdot \opt(G)$
such that $G' = \kPsieditsG \in \mC_{\beta \lambda}$.%
\footnote{We assume that $C_i \subseteq C_j$ for $i \leq j$,
  or equivalently, that $\rho(\lambda)$ is monotonically increasing in~$\lambda$.}
Let $\rho = \rho(\beta \lambda)$ be the approximation factor we can attain
on the graph $G' \in \mC_{\beta \lambda}$.

First we prove the case when $\Pi$ is a minimization problem.
Because $\Pi$ has a $\rho$-approximation in $\mC_{\beta \lambda}$
(where $\rho > 1$), we can obtain a solution $S'$ with cost at most
$\rho \cdot \opt(G')$ in polynomial time.
Applying structural lifting (Definition~\ref{definition:struct-lift-c-psi}), we can use $S'$ to obtain a solution
$S$ for $G$ with $\cost(S) \leq \cost(S') + c k \leq \cost(S') + c \alpha \delta \cdot \opt(G)$ in polynomial time.
Because $\Pi$ is stable under $\psi$ with constant $c'$,
\begin{align*}
\opt(G') &\leq \opt(G) + c' k
\leq \opt(G) + c' \alpha \delta \cdot \opt(G)
= (1 + c' \alpha \delta) \opt(G),
\end{align*}
and we have
\begin{align*}
\cost(S) &\leq \rho \cdot \opt(G') + c \alpha \delta \cdot \opt(G)\\
&\leq \rho (1 + c' \alpha \delta) \opt(G) + c \alpha \delta \cdot \opt(G)\\
&= (\rho + \rho c' \alpha \delta + c \alpha \delta) \opt(G),
\end{align*}
proving that we have a polynomial time $(\rho + (c + c' \rho) \alpha \delta)$-approximation algorithm as required.

Next we prove the case when $\Pi$ is a maximization problem.
Because $\Pi$ has a $\rho$-approximation in $\mC$ (where $\rho < 1$),
we can obtain a solution $S'$ with cost at least
$\rho \cdot \opt(G')$ in polynomial time.
Applying structural lifting (Definition~\ref{definition:struct-lift-c-psi}), we can use $S'$ to obtain a solution
$S$ for $G$ with $\cost(S) \geq \cost(S') - c k \geq \cost(S') - c \alpha \delta \cdot \opt(G)$ in polynomial time.
Because $\Pi$ is stable under $\psi$ with constant~$c'$,
\begin{align*}
\opt(G') &\geq \opt(G) - c' k
\geq \opt(G) - c' \alpha \delta \cdot \opt(G)
= (1 - c' \alpha \delta) \opt(G),
\end{align*}
and we have
\begin{align*}
\cost(S) &\geq \rho \cdot \opt(G') - c \alpha \delta \cdot \opt(G) \\
&\geq \rho (1 - c' \alpha \delta) \opt(G) - c \alpha \delta \cdot \opt(G) \\
&= (\rho - (c + c' \rho) \alpha \delta) \opt(G),
\end{align*}
proving that we have a polynomial-time $(\rho - (c + c'\rho) \alpha \delta)$-approximation algorithm as required.
Note that this approximation is meaningful only when $\rho > (c + c'\rho) \alpha \delta$.
\end{proof}

\subsection{Vertex Deletion}

\begin{lemma}\label{lemma:IS-stable}
  \ISfull is stable under vertex deletion with constant $c'=1$.
\end{lemma}

\begin{proof}
  Given a graph $G$ and any set $X \subseteq V(G)$ with $|X| \leq \gamma$, let $G' = G[V\setminus X]$.
  For any independent set $Y \subset V(G)$, $Y' = Y\setminus X$ is also an independent set in $G'$ with size $|Y'| \geq |Y|-|X|$, which is bounded below by $|Y| - \gamma$.
  In particular, for $Y$ optimal in $G$ we have
  $|Y'| \geq \opt(G) - \gamma$, and so $\opt(G') \geq \opt(G) - \gamma$.
\end{proof}

\begin{lemma}\label{lemma:IS-stable-lift}
  \ISfull can be structurally lifted with respect to vertex deletion with constant $c=0$.
\end{lemma}

\begin{proof}
  An independent set in $G' = G \setminus X$ is also an independent set in~$G$.
  Thus, a solution $S'$ for $G'$ yields a solution $S$ for $G$ such that $\cost_\IS(S') = \cost_\IS(S)$.
\end{proof}

\begin{corollary}[Restatement of Theorem~\ref{thm:vertex-edits-approx}]\label{cor:appendix-is-vertex-edit-approximation}
    For graphs $(\delta \cdot \opt(G))$-close to a graph class $\mC_\lambda$ via vertex deletions, \ISfull has the following approximations.  For degeneracy $\Tr$, \IS has a $(1 - 4 \delta)/(4\Tr+1)$-approximation, for treewidth~$\Tw$ such that $w\sqrt{\log w} = O(\log n)$, \IS has a $(1 - O(\delta \log^{1.5} n))$-approximation, and for planar-$H$-minor-free, \IS has a $(1 - c_H \delta)$-approximation.
\end{corollary}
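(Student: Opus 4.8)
The plan is to apply the structural rounding metatheorem (Theorem~\ref{thm:general-edit-sr}) directly, since \ISfull is a maximization problem that is already known to be stable under vertex deletion with constant $c'=1$ (Lemma~\ref{lemma:IS-stable}) and structurally liftable with constant $c=0$ (Lemma~\ref{lemma:IS-stable-lift}). Substituting $c'=1$ and $c=0$ into the maximization form of Theorem~\ref{thm:general-edit-sr}, the guaranteed approximation factor collapses to $(1-\alpha\delta)\cdot\rho(\beta\lambda)$, where $(\alpha,\beta)$ is the bicriteria ratio of whatever editing-to-$\mC_\lambda$ algorithm we plug in, and $\rho(\cdot)$ is the ratio of the in-class \IS algorithm. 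Thus the whole proof reduces to instantiating $(\alpha,\beta)$ and $\rho$ for each of the three target classes and checking that the in-class algorithm runs in polynomial time \emph{on the edited graph}.

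For degeneracy $\Tr$, I would first recall the elementary fact that a $\Tr$-degeneracy ordering yields a proper $(\Tr+1)$-coloring by greedy coloring along the ordering from Lemma~\ref{lem:degen-properties}, so by pigeonhole the largest color class is an independent set of size at least $n/(\Tr+1)$; this gives a polynomial-time $\rho(\Tr)=1/(\Tr+1)$-approximation, and crucially $\rho$ is decreasing in its argument, so applying it to the (at most $4\Tr$)-degenerate edited graph is still valid. Combining with the $(4,4)$-approximate editing algorithm of Theorem~\ref{theorem:local_ratio_degeneracy} (taking $\beta=4$) gives $\alpha=\beta=4$ and $\rho(\beta\Tr)=1/(4\Tr+1)$, so Theorem~\ref{thm:general-edit-sr} yields the claimed $(1-4\delta)/(4\Tr+1)$-approximation.

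For treewidth $\Tw$, I would use the $(O(\log^{1.5}n),O(\sqrt{\log\Tw}))$-approximate editing algorithm of Section~\ref{section:positive_treewidth} (Theorems~\ref{thm:edit-size} and~\ref{thm:final-width}), so $\alpha=O(\log^{1.5}n)$ and $\beta=O(\sqrt{\log\Tw})$; under the hypothesis $\Tw\sqrt{\log\Tw}=O(\log n)$ the edited graph has treewidth $O(\Tw\sqrt{\log\Tw})=O(\log n)$, so the classical $2^{O(\Tw)}n$-time dynamic program for \IS on a given tree decomposition~\cite{alber01improved} runs in polynomial time and is exact ($\rho=1$), and Theorem~\ref{thm:general-edit-sr} gives $(1-O(\delta\log^{1.5}n))$. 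For planar-$H$-minor-free graphs I would instead plug in Fomin et al.'s (non-bicriteria, so $\beta=1$) $c_H$-approximate editing algorithm~\cite{fomin2012planar}, i.e.\ $\alpha=c_H$; since every planar-$H$-minor-free graph has treewidth $|V(H)|^{O(1)}=O(1)$~\cite{Chekuri-Chuzhoy-2016}, the same exact \IS dynamic program runs in polynomial time with $\rho=1$, yielding $(1-c_H\delta)$.

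The argument is essentially bookkeeping, and the only genuine content is the polynomial-time claim for the in-class \IS algorithm on the edited graph. The subtle point — and the one I would highlight as the main obstacle — is that the treewidth editing algorithm is only bicriteria, returning a graph of treewidth $O(\Tw\sqrt{\log\Tw})$ rather than $\Tw$; the hypothesis $\Tw\sqrt{\log\Tw}=O(\log n)$ is exactly what keeps $2^{O(w)}$ polynomial, and without it one would still obtain the stated approximation ratio but not a polynomial-time algorithm. The analogous (easier) subtlety in the degeneracy case is monotonicity of $\rho$, which lets us absorb the factor-$4$ relaxation in the target degeneracy into the $1/(4\Tr+1)$ form rather than the true $1/(\degener(G_{\mathrm{edit}})+1)$.
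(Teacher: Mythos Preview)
Your proposal is correct and follows essentially the same route as the paper: apply Theorem~\ref{thm:general-edit-sr} with $c'=1$, $c=0$, then instantiate $(\alpha,\beta,\rho)$ for each class exactly as you describe (the $(4,4)$ degeneracy editing, the $(O(\log^{1.5}n),O(\sqrt{\log w}))$ treewidth editing with the exact bounded-treewidth \IS DP, and Fomin et al.'s $c_H$-approximation for planar-$H$-minor-free). Your remarks about monotonicity of $\rho$ and about the hypothesis $w\sqrt{\log w}=O(\log n)$ being precisely what keeps the in-class algorithm polynomial are the same observations the paper relies on implicitly.
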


\begin{proof}
  We apply Theorem~\ref{thm:general-edit-sr} using
  stability with $c'=1$ (Lemma~\ref{lemma:IS-stable})
  and structural lifting with $c=0$ (Lemma~\ref{lemma:IS-stable-lift}).
  The independent-set approximation algorithm and the
  editing approximation algorithm depend on the class~$\mC_\lambda$.

  For degeneracy $\Tr$, we use our $(4,4)$-approximate editing algorithm
  (Section~\ref{section:positive_degeneracy_LP_vertex})
  and a simple $1/(\Tr+1)$-approximation algorithm for independent set:
  the $\Tr$-degeneracy ordering on the vertices of a graph gives a canonical
  $(\Tr+1)$-coloring, and the pigeonhole principle guarantees an independent
  set of size at least $|V|/(\Tr+1)$, which is at least $1/(\Tr+1)$ times the
  maximum independent set.
  Thus $\alpha=\beta=4$ and $\rho(\beta \Tr) = 1/(\beta \Tr + 1)$,
  resulting in an approximation factor of $(1-4 \delta)/(4 \Tr+1)$.

  For treewidth $\Tw$ such that $w\sqrt{\log w} = O(\log n)$, we use our
  $(O(\log^{1.5} n), O(\sqrt{\log \Tw}))$-approximate editing algorithm
  (Section~\ref{section:positive_treewidth}) and an exact algorithm for
  independent set \cite{bodlaender1988dynamic,alber01improved} given a tree decomposition of width $O(\log n)$ of the edited graph.
  Thus $\alpha=O(\log^{1.5} n)$ and $\rho = 1$,
  resulting in an approximation factor of $1 - O(\log^{1.5} n) \delta$.
  %Note the parameter $\beta$ affects only the running time of the exact algorithm
  %by blowing up the treewidth.

  For planar-$H$-minor-free, we use Fomin's $c_H$-approximate editing algorithm
  \cite{fomin2012planar} and the same exact algorithm for \IS in bounded treewidth
  (as any planar-$H$-minor-free graph has bounded treewidth \cite{Chekuri-Chuzhoy-2016}).
  Thus $\alpha=c_H$ and $\rho = 1$,
  resulting in an approximation factor of $1 - c_H \delta$.
\end{proof}

\begin{lemma}\label{lemma:VD-hereditary}
The problems \VCfull, \FVSfull, \MMMfull, and \CNfull are hereditary (closed under vertex deletion).
\end{lemma}

\begin{proof}
Let $G$ be a graph, and $G' = G\setminus X$ where $X \subseteq V(G)$.
Any vertex cover in $G$ remains
a cover in $G'$ because $E(G') \subseteq E(G)$, so \VC is hereditary.

Let $S$ bs a feedback vertex set in $G$ and $S' = S \setminus X$.
For \FVS, we observe that removing vertices can only decrease the number of cycles in the graph.
Deleting a vertex in $S$ breaks
all cycles it is a part of and, thus, the cycles no longer need to be covered by a vertex in the feedback vertex set of $G'$.
Deleting a vertex not in $S$ can only decrease the
number of cycles, and, thus, all cycles in $G'$ are still covered by $S'$.
Hence, \FVS is hereditary.

For \MMM, deleting vertices with adjacent edges not in the matching only decreases the number of edges; thus, the original matching
is a still a matching in the edited graph. Deleting vertices adjacent to an edge in the matching means that at most one edge in the matching
per deleted vertex is deleted. For each edge in the matching with one of its two endpoints deleted, at most one additional edge (an edge adjacent
to its other endpoint) needs to be added to maintain the maximal matching. Thus, the size of the maximal matching does not increase
and \MMM is hereditary.

\CN is trivially hereditary because deleting vertices can only decrease the number of colors necessary  to color the graph.
\end{proof}

\begin{lemma}\label{lemma:VD-sr}
The problems \VCfull, \FVSfull, \MMMfull, and \CNfull can be structurally lifted with respect to vertex
deletion with constant $c = 1$.
\end{lemma}

\begin{proof}
    Let $G$ be a graph, and $G' = G\setminus X$ where $X \subseteq V(G)$.
    Let $S'$ be a solution to optimization problem $\Pi$ on $G'$. We will
    show that $S \subseteq S' \cup X$ is a valid solution to $\Pi$ on $G$ for each $\Pi$ listed in the Lemma.

    Given a solution $S'$ to \VC for the graph $G'$, the only edges not covered by $S'$ in $G'$ are edges
    between $X$ and $G'$ and between two vertices in $X$. Both sets of such edges are covered by $X$. Thus,
    $S = S' \cup X$ is a valid cover for $G$.

    Given a solution $S'$ to \FVS for the graph $G'$, the only cycles not covered by $S'$ in $G'$ are
    cycles that include a vertex in $X$. Thus, $S = S' \cup X$ is a valid feedback vertex set for $G$ since $X$ covers
    all newly introduced cycles in $G$.

    Given a solution $S'$ to \MMM for the graph $G'$, the only edges not in the matching and not adjacent to
    edges in the matching are edges between $X$ and $G'$ and edges between two vertices in $X$. Thus, any additional
    edges added to the maximal matching will come from $X$, and $S \subseteq S' \cup X$ (by picking edges to add to the maximal matching
    greedily for example) is a valid solution.

    Given a solution $S'$ to \CN for the graph $G'$, the only vertices that could violate the coloring of the
    graph $G'$ are vertices in $X$. Making each vertex in $X$ a different color from each other as well as the colors in
    $G'$ creates a valid coloring of $G$.
    Thus, $S = S' \cup X$ is a valid coloring.
\end{proof}

\begin{corollary}[Restatement of Theorem~\ref{thm:vertex-edits-approx}]
  The problems \VCfull, and \FVSfull have
  $(1 + O(\delta \log^{1.5} n))$-approximations for graphs
  $(\delta \cdot \opt(G))$-close to treewidth~$w$ via vertex deletions where $w\sqrt{\log w} = O(\log n)$;
  and $(1 + c_H \delta)$-approximations for graphs
  $(\delta \cdot \opt(G))$-close to planar-$H$-minor-free via vertex deletions.
\end{corollary}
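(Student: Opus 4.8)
The plan is to invoke the general structural rounding metatheorem (Theorem~\ref{thm:general-edit-sr}) with the two combinatorial ingredients already established for these problems: closure under vertex deletion (Lemma~\ref{lemma:VD-hereditary}, giving stability constant $c'=0$) and structural liftability with constant $c=1$ (Lemma~\ref{lemma:VD-sr}). When $c'=0$, the minimization conclusion of Theorem~\ref{thm:general-edit-sr} specializes to an approximation factor of $\rho(\beta\lambda) + c\,\alpha\delta$, where $(\alpha,\beta)$ is the quality of the editing algorithm for the target class and $\rho$ is the approximation ratio obtainable on that class. So the entire argument reduces to choosing the right editing algorithm and observing that \VC and \FVS can be solved exactly on the edited graph.

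For the bounded-treewidth case, I would plug in our $(O(\log^{1.5} n),\,O(\sqrt{\log w}))$-bicriteria editing algorithm for \bTWV from Section~\ref{section:positive_treewidth}, so that $\alpha = O(\log^{1.5} n)$ and the edited graph has treewidth $O(w\sqrt{\log w})$. The key point is that under the hypothesis $w\sqrt{\log w} = O(\log n)$ this edited treewidth is $O(\log n)$, and hence the classical dynamic programs for \VC (the $O(2^{\tw} n)$ algorithm of Alber et al.~\cite{alber01improved}) and \FVS (the $2^{O(\tw)} n^{O(1)}$ algorithm of Cygan et al.~\cite{cygan2011solving}) run in polynomial time on the edited graph given its tree decomposition; thus $\rho = 1$. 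Substituting $\rho=1$, $c=1$, $c'=0$, and $\alpha = O(\log^{1.5} n)$ into Theorem~\ref{thm:general-edit-sr} yields the claimed $(1 + O(\delta\log^{1.5} n))$-approximation.

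For the planar-$H$-minor-free case, I would instead use Fomin et al.'s $c_H$-approximation editing algorithm~\cite{fomin2012planar}, so $\alpha = c_H$. Since every planar-$H$-minor-free graph has treewidth $|V(H)|^{O(1)} = O(1)$ by~\cite{Chekuri-Chuzhoy-2016}, the same exact dynamic programs for \VC and \FVS apply and run in polynomial time on the edited graph, so again $\rho = 1$. Theorem~\ref{thm:general-edit-sr} with $\rho=1$, $c=1$, $c'=0$, $\alpha = c_H$ then gives the $(1 + c_H\delta)$-approximation.

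The only delicate point — and the step I expect needs the most care — is verifying that the exact subroutine genuinely runs in polynomial time, which is precisely why the hypothesis $w\sqrt{\log w} = O(\log n)$ is imposed: it bounds the width of the edited graph so that the $2^{O(\tw)}$-type running times stay polynomial in $n$. Everything else is a direct substitution into the already-proved Theorem~\ref{thm:general-edit-sr} using the stability and lifting lemmas for \VC and \FVS.
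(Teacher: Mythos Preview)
Your proposal is correct and matches the paper's proof essentially line for line: both invoke Theorem~\ref{thm:general-edit-sr} with $c'=0$ from Lemma~\ref{lemma:VD-hereditary} and $c=1$ from Lemma~\ref{lemma:VD-sr}, plug in the $(O(\log^{1.5} n), O(\sqrt{\log w}))$ editing algorithm from Section~\ref{section:positive_treewidth} (respectively, Fomin et al.'s $c_H$-approximation~\cite{fomin2012planar}), and appeal to the exact treewidth-based dynamic programs for \VC and \FVS, noting that the hypothesis $w\sqrt{\log w}=O(\log n)$ (respectively, the bounded-treewidth consequence of~\cite{Chekuri-Chuzhoy-2016}) keeps these subroutines polynomial.
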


\begin{proof}

  We apply Theorem~\ref{thm:general-edit-sr} using
  stability with constant $c'=0$ (Lemma~\ref{lemma:VD-hereditary})
  and structural lifting with constant $c=1$ (Lemma~\ref{lemma:VD-sr}).

  For treewidth $\Tw$, we use our
  $(O(\log^{1.5} n), O(\sqrt{\log \Tw}))$-approximate editing algorithm
  (Section~\ref{section:positive_treewidth}) and an exact polynomial-time algorithm for
  the problem of interest \cite{bodlaender1988dynamic,alber01improved,cygan2011solving} given the tree-decomposition of width $O(w\sqrt{\log w})$ of the edited graph.
  Thus $\alpha=O(\log^{1.5} n)$ and $c = 1$,
  resulting in an approximation factor of $1 + O(\log^{1.5} n) \delta$.
  Note that since the edited graph has treewidth $O(w\sqrt{\log w}) = O(\log n)$, the exact algorithm runs in polynomial-time.
  %Note that the parameter $\beta$ affects only the running time of the exact algorithm
  %by blowing up the treewidth.
  %
  For planar-$H$-minor-free graphs, we use Fomin's $c_H$-approximate editing algorithm
  \cite{fomin2012planar} and the same exact algorithm for bounded treewidth
  (as any planar-$H$-minor-free graph has bounded treewidth \cite{Chekuri-Chuzhoy-2016}).
  Thus $\alpha=c_H$ and $c = 1$,
  resulting in an approximation factor of $1 + c_H \delta$.
\end{proof}

\begin{corollary}[Restatement of Theorem~\ref{thm:vertex-edits-approx}]
  The problems \MMMfull, and \CNfull have
  $(1 + O(\delta \log^{1.5} n))$-approximations for graphs
  $(\delta \cdot \opt(G))$-close to treewidth~$w$ via vertex deletions where $w\log^{1.5} w = O(\log n)$;
  and $(1 + c_H \delta)$-approximations for graphs
  $(\delta \cdot \opt(G))$-close to planar-$H$-minor-free via vertex deletions.
\end{corollary}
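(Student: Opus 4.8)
The plan is to apply the structural rounding metatheorem (Theorem~\ref{thm:general-edit-sr}) exactly as in the preceding corollaries, with the only new ingredient being the availability of a polynomial-time exact algorithm for \MMM\ and \CN\ on graphs of bounded treewidth. First I would invoke Lemma~\ref{lemma:VD-hereditary} to get stability with constant $c'=0$ (both problems are hereditary under vertex deletion) and Lemma~\ref{lemma:VD-sr} to get structural lifting with constant $c=1$. These two facts supply ingredients (a) and (c) of the framework. Ingredient (b), the editing approximation, is our $(O(\log^{1.5} n), O(\sqrt{\log w}))$-bicriteria algorithm for \bTWV\ from Section~\ref{section:positive_treewidth}, so $\alpha = O(\log^{1.5} n)$ and $\beta = O(\sqrt{\log w})$.

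Next I would supply ingredient (d): after editing, the graph has treewidth $O(w\sqrt{\log w})$; when $w\log^{1.5} w = O(\log n)$, this quantity is $O(\log n)$, and so the dynamic-programming algorithm on a tree decomposition runs in polynomial time. For \MMM\ this is an $O(3^{w'} n)$ dynamic program over a width-$w'$ tree decomposition (as noted in Table~\ref{table:structuralrounding}), and for \CN\ it is the $\Tw^{O(\Tw)} n^{O(1)}$ coloring dynamic program; both become $n^{O(1)}$ once $w' = O(\log n)$. I should be a little careful about why the slightly stronger hypothesis $w\log^{1.5} w = O(\log n)$ (rather than $w\sqrt{\log w}=O(\log n)$, as for \VC\ and \FVS) is needed: the running time $3^{w'}$ of the \MMM\ DP is polynomial only when $w' = O(\log n)$, and $w' = O(w\sqrt{\log w})$, so we need $w\sqrt{\log w} = O(\log n)$, which is implied by the stated hypothesis; for \CN\ the $\Tw^{O(\Tw)}$ factor is what forces the $w\log^{1.5}w = O(\log n)$ form, since $(w')^{O(w')} = n^{O(1)}$ requires $w'\log w' = O(\log n)$ and $w' = O(w\sqrt{\log w})$. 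Assembling these, Theorem~\ref{thm:general-edit-sr} with $c'=0$, $c=1$, $\rho=1$, $\alpha=O(\log^{1.5}n)$ yields the approximation factor $(1 + c'\alpha\delta)\rho + c\alpha\delta = 1 + O(\log^{1.5} n)\,\delta$ for graphs $(\delta\cdot\opt(G))$-close to treewidth $w$ via vertex deletions.

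For the planar-$H$-minor-free case I would instead plug in Fomin et al.'s $c_H$-approximation editing algorithm~\cite{fomin2012planar} for ingredient (b), giving $\alpha = c_H$ (a constant), and use the same bounded-treewidth dynamic programs for \MMM\ and \CN\ as ingredient (d), valid because every planar-$H$-minor-free graph has treewidth $|V(H)|^{O(1)} = O(1)$~\cite{Chekuri-Chuzhoy-2016}. Then Theorem~\ref{thm:general-edit-sr} gives the factor $1 + c_H\delta$. The main obstacle is really just the bookkeeping of which treewidth-growth regime makes the exact solver polynomial — i.e., tracking the $3^{w'}$ versus $w'^{O(w')}$ dependence and confirming it is absorbed by the stated hypothesis on $w$ — since the stability, lifting, and framework-application steps are all immediate from the lemmas already proved.
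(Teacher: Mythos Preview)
Your proposal is correct and follows essentially the same approach as the paper's proof: apply Theorem~\ref{thm:general-edit-sr} with $c'=0$ from Lemma~\ref{lemma:VD-hereditary}, $c=1$ from Lemma~\ref{lemma:VD-sr}, the $(O(\log^{1.5} n), O(\sqrt{\log w}))$ editing algorithm of Section~\ref{section:positive_treewidth}, and exact bounded-treewidth DPs for \MMM\ and \CN; then Fomin et al.'s $c_H$-approximation for the planar-$H$-minor-free case. Your explanation of \emph{why} the hypothesis $w\log^{1.5} w = O(\log n)$ is required (namely, the $\Tw^{O(\Tw)}$ runtime of the \CN\ DP forces $w'\log w' = O(\log n)$ with $w' = O(w\sqrt{\log w})$) is in fact more explicit than the paper's own proof, which simply asserts polynomial running time.
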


\begin{proof}

  We apply Theorem~\ref{thm:general-edit-sr} using
  stability with constant $c'=0$ (Lemma~\ref{lemma:VD-hereditary})
  and structural lifting with constant $c=1$ (Lemma~\ref{lemma:VD-sr}).

  For treewidth $\Tw$, we use our
  $(O(\log^{1.5} n), O(\sqrt{\log \Tw}))$-approximate editing algorithm
  (Section~\ref{section:positive_treewidth}) and an exact algorithm for
  the problem of interest \cite{bodlaender1988dynamic} given a tree-decomposition of width $O(w\sqrt{\log w})$ of the edited graph.
  Thus $\alpha=O(\log^{1.5} n)$ and $c = 1$,
  resulting in an approximation factor of $1 + O(\log^{1.5} n) \delta$.
  Note that since the edited graph has treewidth $O(w\log^{1.5} w) = O(\log n)$, the exact algorithm runs in polynomial-time.
  %Note that the parameter $\beta$ affects only the running time of the exact algorithm
  % by blowing up the treewidth.
  %
  For planar-$H$-minor-free graphs, we use Fomin's $c_H$-approximate editing algorithm
  \cite{fomin2012planar} and the same exact algorithm for bounded treewidth
  (as any planar-$H$-minor-free graph has bounded treewidth \cite{Chekuri-Chuzhoy-2016}).
  Thus $\alpha=c_H$ and $c = 1$,
  resulting in an approximation factor of $1 + c_H \delta$.
\end{proof}

\subsection{Vertex Deletion for Annotated Problems ($\text{Vertex}^*$ Deletion)}\label{sec:annotated-vertex-deletion}
In this section, we show that several important variants of {\em annotated} \DSfull (\ALDS) (which include their non-annotated variants as special cases) are closed under a relaxed version of vertex deletion, denoted by $\text{vertex}^*$ deletion, which is sufficient to apply the  structural rounding framework. Given an instance of \ALDSfull with input graph $G=(V,E)$ and a subset of vertices $B$, the resulting \ALDS instance $(G', B')$ after deleting the set $X\subset V$ is defined as follows: $G' = (V\setminus X, E[V\setminus X])$ and $B' = B\setminus N_{\ell}[X]$ where $N_{\ell}[X]$ denotes the set of all vertices at distance at most $\ell$ from $X$ in $G$.

\begin{lemma}\label{lem:annotated-dom-set-stable}
For $\ell\geq 1$, \ALDSfull is stable under $\text{vertex}^*$ deletion with $c' =0$.
\end{lemma}
\begin{proof}
Note that \ALDSfull with $B = V$ reduces to \LDSfull and in particular \LDSfull is stable under $\text{vertex}^*$ deletion with constant $c' = 0$.

Let $(G', B')$ denote the \ALDS instance after performing $\text{vertex}^*$ deletion with edit set $X$; $G' = (V\setminus X, E[V\setminus X])$ and $B' = B\setminus N_{\ell}[X]$ where $N_{\ell}[X]$ denotes the set of all vertices at distance at most $\ell$ from $X$ in $G$.
Moreover, let $\opt(G, B)$ denote an optimal solution of $\ALDS(G, B)$. We show that $\opt(G,B)\setminus X$ is a feasible solution of $\ALDS(G', B')$. %Recall that $G' = (V\setminus X, E[V\setminus X])$ and $S' = (S \cup \neighb(X)) \setminus X$.
Since $X$ $\ell$-dominates $N_{\ell}[X]$, the set $B\setminus N_{\ell}[X]$ is $\ell$-dominated by $\opt(G,S) \setminus X$; hence, $\opt(G,S)\setminus X$ is a feasible solution of $\ALDS(G', S')$. Thus $|\opt(G',S')| \leq |\opt(G,S)\setminus X| \leq |\opt(G,S)|$.
\end{proof}

\begin{lemma}\label{lem:annotated-dom-set-lifting}
For $\ell\geq 1$, \ALDSfull can be structurally lifted with respect to $\text{vertex}^*$ deletion with constant $c = 1$.
\end{lemma}
\begin{proof}
Note that \ALDSfull with $B = V$ reduces to \LDSfull and in particular \LDSfull can be structurally lifted with respect to $\text{vertex}^*$ deletion with constant $c=1$.

Let $(G', B') = ((V\setminus X, E[V\setminus X]), B\setminus N_{\ell}[X])$ denote the \ALDS instance after performing $\text{vertex}^*$ deletion with edit set $X$ on $\ALDS(G,B)$ and let $\opt(G',B')$ denote an optimal solution of $\ALDS(G', B')$ instance. Since the set $X$ $\ell$-dominates $N_{\ell}[X]$, $\opt(G',B')\cup X$ $\ell$-dominates $B' \cup N_{\ell}[X] =B$. Hence, $|\opt(G,B)| \leq |\opt(G',B')| + |X|$.
\end{proof}

\begin{corollary}\label{cor:annotated-dom-set-vertex-degen}
\ADSfull has an $O(\Tr+\delta)$-approximation for graphs $(\delta\cdot \opt(G))$-close to degeneracy $\Tr$ via vertex deletion.
\end{corollary}
\begin{proof}
  We apply Theorem~\ref{thm:general-edit-sr} using
  stability with constant $c'=0$ (Lemma~\ref{lem:annotated-dom-set-stable})
  and structural lifting with constant $c=1$ (Lemma~\ref{lem:annotated-dom-set-lifting}).

  We use a $(O(1), O(1))$-approximate editing algorithm (Section~\ref{section:positive_degeneracy_localratio}/~\ref{section:positive_degeneracy_lp}) and $O(\Tr)$-approximation algorithm for the problem of interest~\cite{bansal2017tight} in $\Tr$-degenerate graphs. Note that although the algorithm of~\cite{bansal2017tight} is for \DSfull, it can easily be modified to work for the annotated variant. Thus, $\alpha = O(1)$ and $c=1$, resulting in an $O(r+\delta)$-approximation algorithm.
\end{proof}

\begin{corollary}\label{cor:annotated-dom-set-vertex-treewidth}
\ADSfull has an $O(1 + O(\delta \log^{1.5} n))$-approximation for graphs $(\delta\cdot \opt(G))$-close to treewidth $\Tw$ via vertex deletion where $w\sqrt{\log w} = O(\log_{\ell} n)$.
\end{corollary}
\begin{proof}
  We apply Theorem~\ref{thm:general-edit-sr} using
  stability with constant $c'=0$ (Lemma~\ref{lem:annotated-dom-set-stable})
  and structural lifting with constant $c=1$ (Lemma~\ref{lem:annotated-dom-set-lifting}).

  We use our $(O(\log^{1.5} n), O(\sqrt{\log \Tw}))$-approximate editing algorithm
  (Section~\ref{section:positive_treewidth}) and an exact polynomial-time algorithm for
  the problem of interest \cite{borradaile_et_al:LIPIcs:2017:6919} given the tree-decomposition of width $O(w\sqrt{\log w})$ of the edited graph. Note that the algorithm of~\cite{borradaile_et_al:LIPIcs:2017:6919} is presented for \LDS; however, by slightly modifying the dynamic programming approach it works for the annotated version as well.
  Thus $\alpha=O(\log^{1.5} n)$ and $c = 1$,
  resulting in an approximation factor of $(1 + O(\log^{1.5} n) \delta)$.
  Moreover, since the edited graph has treewidth $O(w\sqrt{\log w}) = O(\log n)$, the exact algorithm runs in polynomial-time.
\end{proof}

\paragraph{Smarter $\text{Vertex}^*$ Deletion.}
The idea of applying edit operations on annotated problems can also be used for non-annotated problems.
More precisely,  for several optimization problems that fail to satisfy the required conditions of the standard structural rounding under vertex deletion, we can still apply our structural rounding framework with a more careful choice of the subproblem that we need to solve on the edited graph. An exemplary problem in this category is \CDSfull (\CDS). Note that \CDSfull is not stable under vertex deletion and the standard structural rounding framework fails to work for this problem. Besides the stability issue, it is also non-trivial how to handle the connectivity constraint under vertex or edge deletions.
However, in what follows we show that if we instead solve a {\em slightly different problem} (i.e. annotated variant of \CDSfull) on the edited graph, then we can guarantee an improved approximation factor for \CDS on the graphs close to a structural class.

Let $G=(V,E)$ be an input graph that is $(\delta\cdot\opt(G))$-close to the class $\mC$ and let $X\subset V$ be a set of vertices so that $G\setminus X \in \mC$.
For a subset of vertices $X$, $\neighb_G(X)$ is defined to be the set of all neighbors of $X$ excluding the set $X$ itself; $\neighb_{G}(X) := \set{u \;|\; uv\in E(G), v\in X \text{ and } u\notin X}$\footnote{We drop the $G$ in $\neighb_G$ when it is clear from the context.}.
Let $G'= G[V\setminus X]$ be the resulting graph after removing the edit set $X$. The problem that we have to solve on $G'$ is an {\em annotated} variant of \CDS which is defined as follows:

\begin{problem}{\ACDSfull}
\Input & An undirected graph $G = (V, E)$, a subset of vertices $B\subset V$ and $\ell$ vertex-disjoint cliques $K_1 = (V_1, E_1), \cdots, K_\ell = (V_{\ell},E_\ell)$ where for each $i$, $V_i\subset V$.\\
\Prob & Find a minimum size set of vertices $S \subseteq V$ s.t. $S$ dominates all vertices in $B$ and $S$ induces a connected subgraph in $G\cup (\bigcup_{i\in [\ell]} K_i)$.
\end{problem}

To specify the instance of \ACDSfull that we need to solve on the edited graph $G'$, we construct an auxiliary graph $\bar{G} = (\neighb_G(X), \bar{E})$ as follows: $uv\in \bar{E}$ if there exists a $uv$-path in $G$ whose intermediate vertices are all in $X$.

First, we show that \CDS is stable under $\text{vertex}^*$ deletion with constant $c'=0$: the size of an optimal solution of $\ACDS(G', B', K_1,\cdots, K_\ell)$ is not more than the size of an optimal solution of $\CDS(G)$ where $\set{K_1,\cdots, K_{\ell}}$ are the connected components of $\bar{G}$. Note that due to the transitivity of connectivity for each $i\in [\ell]$, $K_i$ is a clique.

\begin{lemma}\label{lem:cds-stable}
\CDSfull is stable under $\text{vertex}^*$ deletion with $c' =0$.
\end{lemma}
\begin{proof}
Let $\opt$ be an optimal solution of $\CDS(G)$. Here, we show that $\opt\setminus X$ is a feasible solution of $\ACDS(G' = G[V\setminus X], B' = V\setminus \neighb_G(X), K_1, \cdots, K_\ell)$ where $K_1, \cdots, K_\ell$ are connected the components of $\bar{G}$ as constructed above. This in particular implies that
\[\opt(G', B', K_1, \cdots, K_\ell) \leq |\opt\setminus X| \leq |\opt| = \opt(G).\]

Since $\opt$ dominates $V$, it is straightforward to verify that $\opt\setminus X$ dominates $B'$ in $G'$. Next, we show that $\opt\setminus X$ is connected in $G'$ when for each $i$, all edges between the vertices of $K_i$ are added to $G'$. Suppose that there exists a pair of vertices $u,v \in \opt\setminus X$  that are not connected in $G'$. However, since $\opt$ is connected, there exists a $uv$-path $P_{uv}$ in $\opt$.
If $P_{uv}$ does not contain any vertices in $X$, then $P_{uv}$ is contained in $\opt\setminus X$ as well and it is a contradiction.
Now consider all occurrences of the vertices of $X$ in $P_{uv}$. We show that each of them can be replaced by an edge in one of the $K_i$s: for each subpath $v_0, x_1, \cdots, x_{q}, v_1$ of $P_{uv}$ where $x_i\in X$ for all $i\in [q]$ and $v_0,v_1\in \neighb_G(X)$, $v_0v_1$ belongs to the same connected component of $\bar{G}$ . Hence, given $P_{uv}$, we can construct a path $P'_{uv}$ in $G'\cup (\bigcup_{i\in [\ell]} K_i)$. Thus, $\opt\setminus X$ is a feasible solution of $\ACDS(G', B', K_1, \cdots, K_\ell)$.
\end{proof}

Next, we show that a solution of the \ACDSfull  instance we solve on the edited graph can be structurally lifted to a solution for \CDSfull on the original graph with constant $c=3$.
\begin{lemma}\label{lem:cds-lifting}
\CDSfull can be structurally lifted under $\text{vertex}^*$ deletion with constant $c = 3$.
\end{lemma}
\begin{proof}
Let $\opt$ be an optimal solution of $\ACDS(G' = G[V\setminus X], B' = V\setminus \neighb_G(X), K_1, \cdots, K_\ell)$ where $K_1, \cdots, K_\ell$ are the connected components of $\bar{G}$ as constructed above.
Here, we show that $\opt\cup X \cup Y$ is a feasible solution of $\CDS(G)$ where $Y$ is a subset of $V\setminus X$ such that $|Y| \leq 2|X|$.
First, it is easy to see that since $X$ dominates $\neighb_G(X) \cup X$ in $G$, $\opt\cup X$ is a dominating set of $G$.
Next, we show that in polynomial time we can find a subset of vertices $Y$ of size at most $2|X|$ such that $\opt\cup X\cup Y$ is a connected dominating set in $G$.

Note that if the subgraph induced by the vertex set $\opt$ on $G'\cup (\bigcup_{i\in [\ell]}K_i)$ contains an edge $uv$ which is not in $E(G')$, the edge can be replaced by a $uv$-path in $G$ whose intermediate vertices are all in $X$. Hence, we can replace all such edges in $\opt$ by including a subset of vertices $X' \subseteq X$ and the set $\opt \cup X'$ remains connected in $G$. At this point, if $X = X'$, we are done: $\opt\cup X$ is a connected dominating set in $G$. Suppose this is not the case and let $X_1:= X\setminus X'$ and $Y_1:= \neighb_{G}(X_1)\setminus \neighb_{G}(X')$. Since $G$ is connected, there exists a path from $X_1$ to $\opt \cup X'$. Moreover, we claim that there exists a path of length at most $4$ from $X_1$ to $\opt\cup X'$. Recall that $\opt\cup X'$ dominates $V\setminus (X_1 \cup Y_1)$. Hence, the shortest path from of $X_1$ to $\opt\cup X'$ has length at most $4$.
We add the vertices on the shortest path which are in $X \setminus X'$ to $X'$ and the vertices in $V \setminus (X \cup \opt \cup Y)$ to $Y$, and update the sets $X_1$ and $Y_1$ accordingly.
Thus we reduce the size of $X_1$ and as we repeat this process it eventually becomes zero. At this point $X = X'$ and $\opt \cup X \cup Y$ is a connected dominating set in $G$. Since, we pick up at most three vertices per each $x\in X_1$ and at least one is in $X$, the set $X \cup Y$ has size at most $3|X|$.
\end{proof}

\begin{corollary}\label{cor:conected-dom-set-vertex-treewidth}
\CDSfull has $O(1 + O(\delta \log^{1.5} n))$-approximation for graphs $(\delta\cdot \opt(G))$-close to treewidth $\Tw$ via vertex deletion where $w$ is a fixed constant.
\end{corollary}
\begin{proof}
  We apply Theorem~\ref{thm:general-edit-sr} using
  stability with constant $c'=0$ (Lemma~\ref{lem:cds-stable})
  and structural lifting with constant $c=3$ (Lemma~\ref{lem:cds-lifting}).

  We use our $(O(\log^{1.5} n), O(\sqrt{\log \Tw}))$-approximate editing algorithm
  (Section~\ref{section:positive_treewidth}) and an exact polynomial-time algorithm for
  \ACDS given the tree-decomposition of width $O(w\sqrt{\log w})$ of the edited graph.
  The FPT algorithm modifies the $\Tw^{O(\Tw)}\cdot n^{O(1)}$ dynamic-programming approach
  of \DS such that it incorporates the annotated sets and cliques $K_1, \cdots, K_\ell$ which then
  runs in $(\Tw + \ell)^{O(\Tw)}\cdot n^{O(1)} = n^{O(\Tw)}$.
  Thus $\Tw = O(1)$, $\alpha=O(\log^{1.5} n)$ and $c = 3$, resulting in an algorithm that runs in
  polynomial time and constructs a $(1 + O(\log^{1.5} n) \delta)$-approximate solution.
\end{proof}

\subsection{Edge Deletion}
\begin{lemma}\label{lemma:IS-edge-stable}
  For $\ell \geq 1$,
  ($\ell$-)\ISfull is stable under edge deletion with constant $c'= 0$.
\end{lemma}
\begin{proof}
  Given $G$ and any set $X \subseteq E(G)$ with $|X| \leq \gamma$, let $G' = G[E\setminus X]$.
  For any ($\ell$-)independent set $Y \subseteq V(G)$, $Y' = Y$ is also an ($\ell$-)independent set in $G'$.
  Then $\opt(G') \geq |Y'| = |Y|$, and so for optimal $Y$, $\opt(G') \geq \opt(G)$.
\end{proof}

\begin{lemma}\label{lemma:IS-edge-lift}
  For $\ell \geq 1$, ($\ell$-)\ISfull can be structurally lifted with respect to edge deletion with constant $c=1$.
\end{lemma}
\begin{proof}
  Given a graph $G$ and $X \subseteq E(G)$, let $G' = G[E\setminus X]$.
  Let $Y' \subseteq V(G')$ be an ($\ell$-)independent set in $G'$,
  and consider the same vertex set $Y'$ in $G$.
  Assume that the edit set is a single edge, $X = \{(u,v)\}$.
  We claim there exists a subset of $Y'$ with size at least $|Y'| - 1$
  which is still an ($\ell$-)independent set in $G$.

  For convenience, we let $d(\cdot, \cdot) := d_G(\cdot, \cdot)$ for the remainder of this proof.  Suppose there are four distinct nodes $a,b,f,g \in Y'$ such that
  $d(a,b)\leq \ell$ and $d(f,g) \leq \ell$ in $G$.
  Since these nodes are in $Y'$, we know $d_{G'}(a,b), d_{G'}(f,g) \geq \ell+1$,
  hence, any shortest path from $a$ to $b$ in $G$ must use the edge $(u,v)$ in order to have length $\leq \ell$.
  \WLOG we can assume the $a$-$b$ path goes from $a$ to $u$ to $v$ to $b$, and so
  $d(a,u) + 1 + d(v,b) \leq \ell$.
  Similarly we can assume the $f$-$g$ path goes from $f$ to $v$ to $u$ to $g$, and so
  $d(f,v) + 1 + d(u,g) \leq \ell$.
  We now argue that the shortest paths in $G$ from $a$ to $u$, $v$ to $b$, $f$ to $v$, and $u$ to $g$ do not use the edge $(u,v)$ and are therefore also paths in $G'$. Suppose
  not and consider \WLOG the case when a shortest path from $a$ to $u$ contains $(u,v)$.
  Then concatenating the subpath from $a$ to $v$ with
  a shortest path from $v$ to $b$ gives an $a,b$-path of length $d(a,u) - 1 + d(v,b) < \ell$, which does not use the edge $(u,v)$ (and is thus a path in $G'$, contradicting $(\ell-)$independence of $Y'$).

  Now consider the paths ($a$ to $u$ to $g$) and ($f$ to $v$ to $b$).
  Let $\ell_A = d(a,u) + d(u,g)$ and $\ell_F = d(f,v) + d(v,b)$,
  and note that
  $\ell_A+\ell_F = d(a,u) + d(v,b) + d(f,v) + d(u,g),$
  which is $\leq 2\ell - 2$.
  So at least one of $\ell_A$ or $\ell_F$ must be $\leq \ell-1$, a contradiction.

  Three cases remain: (1) $Y'$ contains exactly two vertices connected by a path of length $\leq \ell$ in $G$; (2) $Y'$ contains three distinct vertices pair-wise connected by paths of length $\leq \ell$ in $G$; or (3) $Y'$ contains one vertex, $a$, connected to two or more other vertices of $Y'$ by paths of length $\leq \ell$ in $G$.
  In the first case, $Y'$ contains $a,b$ with $d(a,b) \leq \ell$; then removing either endpoint from $Y'$ yields an $(\ell-)$independent set of size $|Y'|-1$ in $G$.

  We now show the second case cannot occur. Suppose that $d(b,c), d(a,b), d(a,c) \leq \ell$ for $a,b,c \in Y'$. Note that each vertex is within distance $\ell/2$ of at least one of the vertices $u$ or $v$. By the pigeonhole principle, some two of $a,b,c$ must be within $\ell/2$ of the same endpoint of $(u,v)$; say vertices $a$ and $b$ are within $\ell/2$ of $u$ \WLOG; this implies $d_{G'}(a,b) \leq \ell$, a contradiction.

  Finally, in the third case, $Y'$ contains a node $a$ and a subset $S$ so that $|S| \geq 2$,
  $d(a,s) \leq \ell$ for all $s \in S$ and $d(s_1, s_2) > \ell$ for all $s_1 \neq s_2$ in $S$. Further, we know no other pair of nodes in $Y'$ is at distance at most $\ell$ in $G$ (since then we would have two disjoint pairs at distance at most $\ell$, a case we already handled). In this setting, $Y'\setminus \{a\}$ is an $(\ell-)$independent set of size $|Y'|-1$ in $G$.
  This proves that adding a single edge to $G'$ will reduce the size of the ($\ell$-)independent set $Y'$ by no more than one, so by induction the lemma holds.
\end{proof}

\begin{corollary}[Restatement of Theorem~\ref{thm:edge-edits-approx}]
  \ISfull has a $(1/(3\Tr+1) - 3\delta)$-approximation for graphs
  $(\delta \cdot \opt(G))$-close to degeneracy~$\Tr$ via edge deletions.
\end{corollary}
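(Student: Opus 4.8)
The plan is to obtain this corollary as a direct instantiation of the Structural Rounding Approximation theorem (Theorem~\ref{thm:general-edit-sr}), taking $\Pi$ to be \ISfull (a maximization problem), $\psi$ to be edge deletion, and $\mCp$ to be the class of $\Tr$-degenerate graphs. To invoke the theorem I need the four standard ingredients, three of which are already in hand. First, stability: Lemma~\ref{lemma:IS-edge-stable} gives that \IS is stable under edge deletion with constant $c'=0$, since deleting edges never shrinks a maximum independent set. Second, structural lifting: Lemma~\ref{lemma:IS-edge-lift} gives that \IS can be structurally lifted under edge deletion with constant $c=1$; concretely, given an independent set $S'$ in the edited graph $G'$, re-inserting a single deleted edge $(u,v)$ is harmless unless both endpoints lie in $S'$, in which case dropping one endpoint restores independence, so we lose at most one vertex per deleted edge. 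Third, an editing algorithm: Corollary~\ref{cor:edge-edit} supplies a polynomial-time $(3,3)$-bicriteria approximation for editing to degeneracy $\Tr$ via edge deletions, so $\alpha=\beta=3$.

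The only remaining piece is the in-class approximation for \IS on bounded-degeneracy graphs, which I would record explicitly (it is the same simple fact used in the vertex-deletion version, e.g.\ Corollary~\ref{cor:is-vertex-edit-approximation}): a degeneracy ordering yields a greedy $(\lambda+1)$-coloring of a $\lambda$-degenerate graph, and the largest color class — an independent set of size at least $n/(\lambda+1)$ — is a $\tfrac{1}{\lambda+1}$-approximation to the maximum independent set. Thus $\rho(\lambda)=\tfrac{1}{\lambda+1}$, and since the edited graph has degeneracy at most $\beta\Tr=3\Tr$ we have $\rho(\beta\Tr)=\tfrac{1}{3\Tr+1}$.

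Finally I would plug these constants into the maximization branch of Theorem~\ref{thm:general-edit-sr}, which produces an approximation factor of $(1-c'\alpha\delta)\,\rho(\beta\Tr)-c\alpha\delta = 1\cdot\tfrac{1}{3\Tr+1}-1\cdot 3\delta = \tfrac{1}{3\Tr+1}-3\delta$ on graphs that are $(\delta\cdot\opt(G))$-close to degeneracy $\Tr$ via edge deletions, which is exactly the claim. There is no real obstacle here: the corollary is pure bookkeeping once the pieces are assembled. If anything, the substantive ingredient is the edge-deletion lifting lemma (Lemma~\ref{lemma:IS-edge-lift}), since that is the source of the additive $-3\delta$ loss (through $c=1$) and is the place where edge deletion behaves genuinely differently from vertex deletion; but that lemma is already established, so the proof of the corollary itself is a two-line application of the framework.
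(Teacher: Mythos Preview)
Your proposal is correct and follows essentially the same approach as the paper: apply Theorem~\ref{thm:general-edit-sr} with $c'=0$ (edge-deletion stability), $c=1$ (edge-deletion lifting), the $(3,3)$ edge-editing algorithm from Corollary~\ref{cor:edge-edit}, and the greedy $1/(\lambda+1)$ independent-set approximation on $\lambda$-degenerate graphs. In fact, you cite the correct edge-deletion lemmas (Lemmas~\ref{lemma:IS-edge-stable} and~\ref{lemma:IS-edge-lift}), whereas the paper's own proof text contains a minor citation slip, pointing to the vertex-deletion Lemmas~\ref{lemma:IS-stable} and~\ref{lemma:IS-stable-lift} even while stating the correct constants $c'=0$, $c=1$.
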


\begin{proof}
  We apply Theorem~\ref{thm:general-edit-sr} using
  stability with constant $c'=0$ (Lemma~\ref{lemma:IS-stable})
  and structural lifting with constant $c=1$ (Lemma~\ref{lemma:IS-stable-lift}).
  We use our $(3,3)$-approximate editing algorithm
  (Corollary~\ref{cor:edge-edit})
  and the $1/(\Tr+1)$-approximation algorithm for independent set described in the proof of Corollary~\ref{cor:appendix-is-vertex-edit-approximation}.
  Thus $\alpha=\beta=3$ and $\rho(\beta \Tr) = 1/(\beta \Tr + 1)$,
  resulting in an approximation factor of $1/(3 \Tr+1) - 3 \delta$.
\end{proof}

\begin{lemma}\label{lemma:ED-stable}
The problems \DSmaybefull and \EDSmaybefull are stable under edge deletion with constant $c' = 1$.
\end{lemma}

\begin{proof}
Given $G$ and any set $X \subseteq E(G)$ with $|X| \leq \gamma$, let $G' = G[E\setminus X]$, and let $Y$ be a minimum (\DSradius-)dominating set on $G$. Each vertex $v$ may be (\DSradius-)dominated by multiple vertices on multiple paths, which we refer to as $v$'s \textit{dominating paths}.

Consider all vertices for which a specific edge $(u,v)$ is on all of their dominating paths in $G$. We refer to each of these vertices as $(u,v)$-dependent. Note that if we traverse all dominating paths from each $(u,v)$-dependent vertex, $(u,v)$ is traversed in the same direction each time. Assume \WLOG $(u,v)$ is traversed with $u$ before $v$, implying $u$ is not $(u,v)$-dependent but $v$ may be. Now if $(u,v)$ is deleted, then $Y \cup {\{v\}}$ is a (\DSradius-)dominating set on the new graph. Therefore for each edge $(u,v)$ in $X$ we must add at most one vertex to the (\DSradius-)dominating set.  Thus if $Y'$ is a minimum (\DSradius-)dominating set on $G'$ then $|Y'| \leq |Y| + \gamma$ and \DSmaybe is stable under edge deletion with constant $c' = 1$.

Now let $Z$ be a minimum edge (\DSradius-)dominating set on $G$.
The proof for \EDSmaybe follows similarly as in the above case when a deleted edge $(u,v)$ is not in $Z$ (though an edge incident to $v$ would be picked to become part of the dominating set instead of $v$ itself). However if $(u,v)$ is in the minimum edge (\DSradius-)dominating set then it is possible that there are edges which are strictly $(u,v)$-dependent through only $u$ or $v$ and no single edge is within distance \DSradius~ of both. In this case we add an edge adjacent to $u$ and an edge adjacent to $v$ to $Z$, which also increases $Z$'s size by one with the deletion of $(u,v)$.
Thus if $Z'$ is a minimum edge (\DSradius-)dominating set on $G'$ then $|Z'| \leq |Z| + \gamma$ and
 \EDSmaybefull is stable under edge deletion with constant $c'=1$.
\end{proof}

\begin{lemma}\label{lemma:ED-sl}
 \DSmaybefull and \EDSmaybefull can be structurally lifted with respect to edge deletion with constants $c = 0$ and $c = 1$ respectively.
\end{lemma}

\begin{proof}

Given $G$ and any set $X \subseteq E(G)$ with $|X| \leq \gamma$, let $G' = G[E\setminus X]$.
A (\DSradius-)dominating set in $G'$ is also a (\DSradius-)dominating set in $G$.
Therefore, a solution $S'$ in $G'$ yields a solution $S$ in $G$ such that $\cost_{\DSmaybe}(S') = \cost_{\DSmaybe}(S)$.

An edge (\DSradius-)dominating set $Y'$ in $G'$ may not be an edge (\DSradius-)dominating set in $G$, as there may be edges in $X$ which are not (\DSradius-)dominated by $Y'$. However  $Y' \cup X$ is an edge (\DSradius-)dominating set in $G$ and $|Y' \cup X| \leq |Y'| + |X|$.
\end{proof}

\begin{corollary}[Restatement of Theorem~\ref{thm:edge-edits-approx}]
  \DSfull has an $O((1+\delta) \Tr)$-approximation for graphs
  $(\delta \cdot \opt(G))$-close to degeneracy~$r$ via edge deletions.
\end{corollary}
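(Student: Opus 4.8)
The plan is to invoke the structural rounding meta-theorem (Theorem~\ref{thm:general-edit-sr}) with $\Pi = \DS$ and $\psi$ the edge-deletion operation, so that the proof reduces to assembling its four required ingredients. First I would observe that \DSfull is exactly the $\ell = 1$ case of \DSmaybefull, so Lemma~\ref{lemma:ED-stable} already gives that \DS is stable under edge deletion with constant $c' = 1$, and Lemma~\ref{lemma:ED-sl} already gives that \DS can be structurally lifted with respect to edge deletion with constant $c = 0$ (a dominating set of $G' = G[E\setminus X]$ remains a dominating set of $G$, since deleting edges only makes domination harder).

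Next I would supply the editing algorithm: Corollary~\ref{cor:edge-edit} furnishes a polynomial-time $(5,5)$-bicriteria approximation for \bDEEfull, so we may take $\alpha = \beta = O(1)$. For the target class I would then invoke the known polynomial-time $O(\Tr)$-approximation for \DS in $\Tr$-degenerate graphs~\cite{bansal2017tight}; since the edited graph lies in the class $\mC_{\beta\Tr}$ with $\beta = O(1)$, this is a $\rho(\beta\Tr) = O(\beta\Tr) = O(\Tr)$ approximation there. Plugging these constants into Theorem~\ref{thm:general-edit-sr} for a minimization problem, on any graph $(\delta\cdot\opt(G))$-close to degeneracy $\Tr$ via edge deletions we obtain a polynomial-time approximation with ratio
\[
(1 + c'\alpha\delta)\,\rho(\beta\Tr) + c\alpha\delta \;=\; (1 + O(\delta))\cdot O(\Tr) + 0 \;=\; O\big((1+\delta)\Tr\big),
\]
as claimed.

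Since all of the substantive work — the stability and lifting lemmas, the constant-factor bicriteria edge-editing algorithm for bounded degeneracy, and the $O(\Tr)$-approximation for \DS on degenerate graphs — is already established, the argument is essentially bookkeeping. The only point that needs care is that the \DS approximation guarantee must be applied to the \emph{edited} graph, whose degeneracy is only bounded by $\beta\Tr = O(\Tr)$ rather than $\Tr$; absorbing the constant $\beta$ into the $O(\cdot)$ is what keeps the final bound at $O((1+\delta)\Tr)$. A secondary check is that the algorithm of~\cite{bansal2017tight}, though stated for \DS rather than the annotated variant, is exactly what is needed here and runs in polynomial time on the $O(\Tr)$-degenerate edited graph.
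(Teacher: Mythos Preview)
Your proposal is correct and essentially identical to the paper's own proof: both invoke Theorem~\ref{thm:general-edit-sr} with the stability constant $c'=1$ from Lemma~\ref{lemma:ED-stable}, the lifting constant $c=0$ from Lemma~\ref{lemma:ED-sl}, the $(5,5)$-bicriteria edge-editing algorithm of Corollary~\ref{cor:edge-edit}, and the $O(\Tr)$-approximation for \DS on degenerate graphs from~\cite{bansal2017tight}, arriving at the same $O((1+\delta)\Tr)$ bound.
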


\begin{proof}
  We apply Theorem~\ref{thm:general-edit-sr} using
  stability with constant $c'=1$ (Lemma~\ref{lemma:ED-stable})
  and structural lifting with constant $c=0$ (Lemma~\ref{lemma:ED-sl}).
  We use our $(5,5)$-approximate editing algorithm
  (Section~\ref{section:positive_degeneracy_LP_edge})
  and a known $O(\Tr)$-approximation algorithm for \DS
  \cite{bansal2017tight}.
  Thus $\alpha=\beta=5$ and $\rho(\beta \Tr) = O(\beta \Tr)$,
  resulting in an approximation factor of $O((1+ \delta) \Tr)$.
\end{proof}

\begin{corollary}[Restatement of Theorem~\ref{thm:edge-edits-approx}]\label{cor:edge-deletion-approx-treewidth}
  \DSmaybefull and \EDSmaybefull have
  $(1 + O(\delta \log n\log\log n))$-approximations for graphs
  $(\delta \cdot \opt(G))$-close to treewidth~$w$ via edge deletions where $w\log w = O(\log_{\ell} n)$.
\end{corollary}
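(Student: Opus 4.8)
The plan is to invoke the structural rounding metatheorem (Theorem~\ref{thm:general-edit-sr}) with the edit operation $\psi$ taken to be edge deletion and the target family $\mCp$ taken to be bounded treewidth. The combinatorial ingredients are already in hand: Lemma~\ref{lemma:ED-stable} gives that \DSmaybefull and \EDSmaybefull are each stable under edge deletion with constant $c'=1$, and Lemma~\ref{lemma:ED-sl} gives that \DSmaybefull can be structurally lifted with constant $c=0$ and \EDSmaybefull with constant $c=1$. For the editing ingredient I would use the $(O(\log n\log\log n),\,O(\log \Tw))$-bicriteria approximation algorithm of Bansal \etal~\cite{bansal2017lp} for \bTWEfull, so $\alpha=O(\log n\log\log n)$ and $\beta=O(\log\Tw)$; this algorithm also yields a tree decomposition of the edited graph of width $w' = O(\Tw\log\Tw)$. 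For the in-class algorithm I would use the exact dynamic program of Borradaile \etal~\cite{borradaile_et_al:LIPIcs:2017:6919}, which solves \DSmaybe (and, with the obvious modification to the DP tables, \EDSmaybe) in time $O((2\ell+1)^{w'}\,n)$ on a graph equipped with a width-$w'$ tree decomposition; this gives $\rho(\lambda)=1$.

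The only point that needs care is verifying that the composite algorithm runs in polynomial time. The edited graph has treewidth $w' = O(\Tw\log\Tw)$, so the dynamic program takes time $(2\ell+1)^{O(\Tw\log\Tw)}\cdot n$. Under the hypothesis $\Tw\log\Tw = O(\log_{\ell} n)$ this becomes $(2\ell+1)^{O(\log_{\ell} n)}\cdot n$, and since $(2\ell+1)^{\log_{\ell} n} = n^{\log_{\ell}(2\ell+1)}$ with $\log_{\ell}(2\ell+1)$ bounded by an absolute constant for every integer $\ell\ge 2$ (and for $\ell=1$ the base $2\ell+1=3$ is already constant so $3^{O(\log n)}=n^{O(1)}$), this is $n^{O(1)}$. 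Hence editing, running the DP, and lifting are each polynomial in $n$.

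Finally I would assemble the guarantee by substituting $\rho=\rho(\beta\Tw)=1$, $c'=1$, $c\in\{0,1\}$, and $\alpha=O(\log n\log\log n)$ into Theorem~\ref{thm:general-edit-sr}: for the minimization problems \DSmaybe and \EDSmaybe this produces approximation factor $(1 + c'\alpha\delta)\cdot\rho + c\alpha\delta = 1 + O(\delta\log n\log\log n)$ on any graph that is $(\delta\cdot\opt(G))$-close to treewidth $\Tw$ via edge deletions, which is exactly the claimed bound. I do not anticipate a genuine obstacle beyond this bookkeeping; the only mildly delicate step is the running-time collapse of the $(2\ell+1)^{w'}$ factor, which is precisely where the hypothesis $\Tw\log\Tw = O(\log_{\ell} n)$ is used.
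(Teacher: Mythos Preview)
Your proposal is correct and follows essentially the same approach as the paper: invoke Theorem~\ref{thm:general-edit-sr} with $c'=1$ from Lemma~\ref{lemma:ED-stable}, $c=0$ (respectively $c=1$) from Lemma~\ref{lemma:ED-sl}, the $(O(\log n\log\log n),O(\log\Tw))$ editing algorithm of Bansal \etal~\cite{bansal2017lp}, and the exact DP of~\cite{borradaile_et_al:LIPIcs:2017:6919} on the edited graph of treewidth $O(\Tw\log\Tw)$. Your justification that the DP runs in polynomial time under the hypothesis $\Tw\log\Tw=O(\log_\ell n)$ is in fact more explicit than what the paper records.
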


\begin{proof}
  We apply Theorem~\ref{thm:general-edit-sr} using
  stability with constant $c'=1$ (Lemma~\ref{lemma:ED-stable})
  and structural lifting with constant $c=0$ for \DSmaybe and constant $c=1$ for \EDSmaybe (Lemma~\ref{lemma:ED-sl}).
  For treewidth $\Tw$, we use the
  $(O(\log n\log\log n), O(\log \Tw))$-approximate editing algorithm
  of Bansal \etal~\cite{bansal2017lp}
  and an exact algorithm for \DSmaybe and \EDSmaybe~\cite{borradaile_et_al:LIPIcs:2017:6919} given a tree-decomposition of width $O(w\log w)$ of the edited graph.

  Thus $\alpha=O(\log n\log\log n)$ and $c' = 1$ for \DS and $c'=c=1$ for \EDS,
  resulting in an approximation factor of $1 + O(\log n\log\log n) \delta$.
  Note that since the edited graph has treewidth $O(w\log w) = O(\log_{\ell} n)$, the exact algorithm runs in polynomial-time.
  %Note that the parameter $\beta$ affects only the running time of the exact algorithm
  %by blowing up the treewidth.
\end{proof}

\begin{lemma}\label{lemma:MC-stable}
The problem \MCfull is stable under edge deletion with constant $c' = 1$.
\end{lemma}
\begin{proof}
Given $G$ and any set $X \subseteq E(G)$ with $|X| \leq \gamma$, let $G' = G[E\setminus X]$, and let $Y$ be a maximum cut in $G$. Then, $Y': = Y\setminus X$ is a cut in $G'$ of size at least $|Y| - |X|$; hence, $c' =1$.
\end{proof}

\begin{lemma}\label{lemma:MC-sl}
 \MCfull can be structurally lifted with respect to edge deletion with constant $c = 0$.
\end{lemma}
\begin{proof}
Given $G$ and any set $X \subseteq E(G)$ with $|X| \leq \gamma$, let $G' = G[E\setminus X]$.
A cut $Y\subseteq E(G')$ is trivially a valid cut in $G$ and consequently $c=0$.
\end{proof}

\begin{corollary}[Restatement of Theorem~\ref{thm:edge-edits-approx}]
  \MCfull has
  $(1 - O(\delta \log n\log\log n))$-approximations for graphs
  $(\delta \cdot \opt(G))$-close to treewidth~$w$ via edge deletions where $w\log w = O(\log n)$.
\end{corollary}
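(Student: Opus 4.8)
The plan is to invoke the general structural rounding metatheorem (Theorem~\ref{thm:general-edit-sr}) in its maximization form, instantiating the four required ingredients for \MCfull under the edge-deletion operation. First I would record the two combinatorial facts that have already been established: \MCfull is stable under edge deletion with constant $c'=1$ (Lemma~\ref{lemma:MC-stable}), since deleting $\gamma$ edges removes at most $\gamma$ edges from any cut, and \MCfull can be structurally lifted with constant $c=0$ (Lemma~\ref{lemma:MC-sl}), because a partition that cuts the edited graph is already a valid cut of the original graph (restoring edges can only increase its value). These supply the problem-dependent constants $c'$ and $c$ that feed into Theorem~\ref{thm:general-edit-sr}.

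Next I would supply the two algorithmic ingredients. For editing to treewidth $w$ under edge deletions I would invoke the $(O(\log n\log\log n),\,O(\log w))$-bicriteria approximation of Bansal et al.~\cite{bansal2017lp}, giving $\alpha = O(\log n\log\log n)$ and $\beta = O(\log w)$; crucially this algorithm also returns a tree decomposition of the edited graph of width $O(w\log w)$. For solving \MC on the edited graph I would use the standard $O(2^{w'} n)$ dynamic program over that decomposition, where $w' = O(w\log w)$; under the hypothesis $w\log w = O(\log n)$ this runs in $n^{O(1)}$ time, so $\rho(\beta w) = 1$ and the whole pipeline is polynomial.

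Finally I would plug these into the maximization branch of Theorem~\ref{thm:general-edit-sr}. With $c'=1$, $c=0$, $\rho = 1$, and $\alpha = O(\log n\log\log n)$, on any graph that is $(\delta\cdot\opt(G))$-close to treewidth $w$ we obtain an approximation factor of $(1 - c'\alpha\delta)\cdot\rho - c\alpha\delta = 1 - O(\delta\log n\log\log n)$, which is exactly the claimed guarantee. The only genuinely delicate point is ensuring that the edited graph admits a width-$O(w\log w)$ tree decomposition computable in polynomial time, rather than merely having small treewidth parameter — but this is guaranteed because the bicriteria editing algorithm of~\cite{bansal2017lp} outputs the decomposition explicitly, and the hypothesis $w\log w = O(\log n)$ keeps the subsequent exact \MC dynamic program polynomial; everything else is a direct substitution into Theorem~\ref{thm:general-edit-sr}.
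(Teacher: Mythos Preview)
Your proposal is correct and follows essentially the same approach as the paper: apply Theorem~\ref{thm:general-edit-sr} with the stability constant $c'=1$ from Lemma~\ref{lemma:MC-stable}, the lifting constant $c=0$ from Lemma~\ref{lemma:MC-sl}, the $(O(\log n\log\log n), O(\log w))$ edge-editing algorithm of Bansal et al.~\cite{bansal2017lp}, and the exact $O(2^{w'}n)$ dynamic program for \MC on the resulting width-$O(w\log w)$ decomposition, which is polynomial under the hypothesis $w\log w = O(\log n)$. Your write-up is in fact slightly more careful than the paper's in explicitly noting that the editing algorithm outputs the tree decomposition needed downstream.
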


\begin{proof}
  We apply Theorem~\ref{thm:general-edit-sr} using
  stability with constant $c'=1$ (Lemma~\ref{lemma:MC-stable})
  and structural lifting with constant $c=0$ for \MC (Lemma~\ref{lemma:MC-sl}).
  For treewidth $\Tw$, we use the
  $(O(\log n\log\log n), O(\log \Tw))$-approximate editing algorithm
  of Bansal~\etal~\cite{bansal2017lp}
  and an exact algorithm for \MC given a tree-decomposition of width $O(w\log w)$ of the edited graph.

  Thus $\alpha=O(\log n\log\log n)$ and $c' = 1$ for \MC ,
  resulting in an approximation factor of $1 + O(\log n\log\log  n) \delta$.
  Note that since the edited graph has treewidth $O(w\log w) = O(\log n)$, the exact algorithm runs in polynomial-time.
\end{proof}
\fi
  \ifappendix
  \section{Bicriteria Approximations to Bounded Degeneracy for Edge Deletions}\label{appendix-degen-lp}

\subsection{$(5,5)$-approximation for edge deletion}
\label{section:positive_degeneracy_LP_edge}
In what follows we formulate an LP-relaxation for the problem of minimizing the number of required edge edits (deletions) to the family of $\Tr$-degenerate graphs. For each edge $uv\in E$, $x$ variables denote the orientation of $uv$; $x_{\arc{uv}}=1$, $x_{\arc{vu}} =0$ if $uv$ is oriented from $u$ to $v$ and $x_{\arc{vu}}=1$, $x_{\arc{uv}} =0$ if $e$ is oriented from $v$ to $u$. Moreover, for each $uv$ we define $z_{uv}$ to denote whether the edge $uv$ is part of the edit set $\editset$ ($z_{uv}=1$ if the edge $uv \in \editset$ and zero otherwise).

\begin{center}
\begin{minipage}{0.75\linewidth}
\begin{probbox}{\lpMinEdge}
	\emph{Input:} \hskip1em $G=(V,E), w, \Tr$
\begin{align*}
	\text{Minimize} &  & \sum_{uv \in E} z_{uv} w_{uv} & &\\
	\text{s.t.}     &  & x_{\arc{vu}} + x_{\arc{uv}}   &\geq 1-z_{uv}  &\forall uv \in E\\
						      &  & \sum_{u \in N(v)}x_{\arc{vu}} 		 &\leq \Tr 			&\forall v \in V\\
						 	    &  & x_{\arc{uv}} 								 &\geq 0			&\forall uv\in V\times V
\end{align*}
\end{probbox}
\end{minipage}
\end{center}

The first set of constraints in the LP-relaxation \lpMinEdge guarantee that for each edge $uv\notin \editset$, it is oriented either from $v$ to $u$ or from $u$ to $v$. The second set of the constraints ensure that for all $v\in V$, $\outDeg(v)\leq \Tr$. Note that if an edge $uv\in \editset$ and thus $z_{uv}=1$, then \WLOG we can assume that both $x_{\arc{uv}}$ and $x_{\arc{vu}}$ are set to zero.
\begin{lemma}\label{lem:lp-relaxation-edge}
$\lpMinEdge(G, w, \Tr)$ is a valid LP-relaxation of $\bDEE(G,w)$.
\end{lemma}

Next, we propose a {\em two-phase} rounding scheme for the \lpMinEdge.
%TODO We should describe the overall strategy here.
%first rounding the $z_{uv}$'s, then pruning the resulting semi-integral solution.
%\drew{address overall strategy comment/fix below sentence}

%$x_{\arc{uv}}$ for edges which are not edited, constructing an \emph{integral} approximate \emph{nearly feasible} solution.

\mypar{First phase.} Let $(x, z)$ be an optimal solution of \lpMinEdge. Note that since the \lpMinEdge has polynomial size, we can find its optimal solution efficiently.
Consider the following {\em semi-integral} solution $(x, \hat{z})$ of \lpMinEdge:

\begin{align}\label{rule:round_first}
\hat{z}_{uv} =
\left\{
	\begin{array}{ll}
		1  & \mbox{if } z_{uv} \geq \eps, \\
		0  & \mbox{otherwise.}
	\end{array}
\right.
\end{align}

\begin{claim}\label{clm:first-round}
$({x \over 1- \eps}, \hat{z})$ as given by Equation~\eqref{rule:round_first} is a $({1\over \eps},{1\over 1-\eps})$-bicriteria approximate solution of $\lpMinEdge(G, w, \Tr)$.
\end{claim}
\begin{proof}
First, we show that $({1\over 1- \eps}x,\hat{z})$ satisfies the first set of constraints. For each edge $uv$,
\begin{align*}
{x_{\arc{uv}}\over 1- \eps} + {x_{\arc{vu}}\over 1- \eps} = {1\over 1- \eps}(x_{\arc{uv}} + x_{\arc{vu}}) \geq {1\over 1- \eps}(1-z_{uv}) \geq 1-\hat{z}_{uv},
\end{align*}
where the first inequality follows from the feasibility of $(x,z)$ and the second inequality follows from Equation~\eqref{rule:round_first}. Moreover, it is straightforward to check that as we multiply each $x_{vu}$ by a factor of $1/(1-\eps)$, the second set of constraints are off by the same factor; that is, $\forall v\in V, \sum_{u \in V}x_{\arc{vu}}/(1-\eps)\leq \Tr/(1-\eps)$. Finally, since for each edge ${uv}$, $\hat{z}_{uv} \leq z_{uv}/\eps$, the cost of the edit set increases by at most a factor of $1/\eps$; that is, $\sum_{uv\in E} \hat{z}_{uv} w_{uv} \leq \frac{1}{\eps}\sum_{uv\in E} z_{uv} w_{uv}$.
\end{proof}

\mypar{Second phase.} Next, we prune the fractional solution further to get an {\em integral} approximate {\em nearly feasible} solution of \lpMinEdge. Let $\hat{x}$ denote the orientation of the surviving edges (edges $uv$ such that $\hat{z}_{uv}=0$) given by:
\begin{align}\label{rule:round-second}
\hat{x}_{\arc{uv}} =
\left\{
	\begin{array}{ll}
		1  & \mbox{if } x_{\arc{uv}} \geq (1-\eps)/2, \\
		0  & \mbox{otherwise.}
	\end{array}
\right.
\end{align}

We say an orientation is \emph{valid} if each surviving edge $(u,v)$ is oriented from $u$ to $v$ or $v$ to $u$.

\begin{lemma}\label{lem:valid-orient}
$\hat{x}$ as given by Equation~\eqref{rule:round-second} is a valid orientation of the set of surviving edges.
\end{lemma}
\begin{proof}
We need to show that for each $uv\in E$ with $\hat{z}_{uv} =0$ at least one of $\hat{x}_{\arc{uv}}$ or $\hat{x}_{\arc{vu}}$ is one. Note that if both are one, we can arbitrarily set one of them to zero.

For an edge $uv$, by Equation~\eqref{rule:round_first}, $\hat{z}_{uv} = 0$ iff $z_{uv}\leq \eps$. Then, using the fact that $(x,z)$ is a feasible solution of \lpMinEdge, $x_{\arc{uv}} + x_{\arc{vu}} \geq 1-z_{uv} \geq 1-\eps$. Hence, $\max(x_{\arc{uv}}, x_{\arc{vu}})\geq (1-\eps)/2$ which implies that $\max(\hat{x}_{\arc{uv}}, \hat{x}_{\arc{vu}}) =1$. Hence, for any surviving edge $uv$, at least one of $\hat{x}_{\arc{uv}}$ or $\hat{x}_{\arc{vu}}$ will be set to one.
\end{proof}
\begin{lemma}\label{lem:edge-edit}
$(\hat{x}, \hat{z})$ as given by Equations \ref{rule:round_first} and \ref{rule:round-second} is an integral $({1\over \eps},{2\over 1-\eps})$-bicriteria approximate solution of $\lpMinEdge(G,w,r)$.
\end{lemma}
\begin{proof}
As we showed in Lemma~\ref{lem:valid-orient}, $\hat{x}$ is a valid orientation of the surviving edges with respect to $\hat{z}$. Moreover, by Equation~\eqref{rule:round-second}, for each $uv \in E$, $\hat{x}_{\arc{uv}} \leq 2 x_{\arc{uv}} / (1-\eps) $. Hence, for each vertex $v\in V$, $\outDeg(v) \leq 2\Tr/(1-\eps)$. Finally, as we proved in Claim~\ref{clm:first-round}, the total weight of the edit set defined by $\hat{z}$ is at most $\frac{1}{\eps}$ times the total weight of the optimal solution $(x,z)$.
\end{proof}

Hence, together with Lemma~\ref{lem:degen-bounded-outdeg-orientation}, we have the following result.
\begin{corollary}\label{cor:edge-edit}
There exists a $({1\over \eps},{4\over 1-\eps})$-bicriteria approximation algorithm for \bDEE.

In particular, by setting $\eps = 1/5$, there exists a $(5,5)$-bicriteria approximation algorithm for the \bDEEfull problem.
\end{corollary}

 \fi

\end{document}